\documentclass{amsart}
\usepackage[margin=3cm]{geometry}
\usepackage{color}  
\usepackage{xcolor}
\usepackage{faktor}
\usepackage{tikz}
\usepackage{tikz-cd}
\usepackage{spverbatim}
\usepackage{float}
\usepackage[hyperfootnotes=false,colorlinks=false,hypertexnames=false]{hyperref}
\usepackage{cancel}
\usepackage{faktor}
\usepackage{ytableau}
\usepackage{subcaption}
\usepackage[textsize=footnotesize]{todonotes}
\presetkeys%
    {todonotes}%
    {color=red}{}%
\usepackage{amsmath,amssymb,anyfontsize,enumerate,stmaryrd,mathtools, thmtools,tikz,txfonts,nccmath}
\usepackage[utf8]{inputenc}
\usepackage[oldstyle]{libertine}
\usepackage[final]{microtype}

\usepackage[libertine,libaltvw,liby]{newtxmath}

\usepackage{csquotes}
\makeatletter
\renewcommand*\libertine@figurestyle{LF}
\makeatother
\makeatletter
\renewcommand*\libertine@figurestyle{OsF}
\makeatother
\usepackage[noerroretextools,backend=biber,style=alphabetic,maxalphanames=5,giveninits,sorting=nyt,%
maxbibnames=99]{biblatex}
\usepackage{mmacells}
\mmaSet{
  morefv={gobble=2},
  linklocaluri=mma/symbol/definition:#1,
  morecellgraphics={yoffset=1.9ex}
}

\usepackage{tikz-cd}
\usepackage[noabbrev,capitalise]{cleveref}
\usepackage{autonum}
\usepackage{braket}
\usepackage{graphicx} 

\newtheorem{prop}{Proposition}[section]
\theoremstyle{plain}
    \newtheorem{theorem}{Theorem}[section]
    \newtheorem{construction/theorem}[theorem]{Construction/Theorem}
    
    \newtheorem{lemma}[theorem]{Lemma}
    \newtheorem{proposition}[theorem]{Proposition}
    
    \newtheorem{algorithm}[theorem]{Algorithm}
\theoremstyle{definition}
    \newtheorem{remark}[theorem]{Remark}
    
    \newtheorem{example}[theorem]{Example}
    \newtheorem{definition}[theorem]{Definition}

\newcommand{\comment}[1]{}

\newcommand{\Le}{\reflectbox{L}}
\newcommand{\LeBox}{\tikz[baseline=-0.1ex]{
    \draw (0,0) rectangle (1.5ex, 1.5ex);
}}
\newcommand{\LeCircle}{\tikz[baseline=-0.1ex]{
    \draw (0,0) rectangle (1.5ex, 1.5ex);
    \draw (0.75ex, 0.75ex) circle (0.55ex); 
}}

\newcommand{\drawcross}[2]{
    \draw (#1+0.5, #2) -- (#1+0.5, #2-1);
    \draw (#1, #2-0.5) -- (#1+1, #2-0.5);
}

\newcommand{\drawbridge}[2]{
    \draw (#1, #2-0.5) arc (90:0:0.5);
    \draw (#1+0.5, #2) arc (180:270:0.5);
    
    \draw (#1+0.353, #2-0.646) -- (#1+0.646, #2-0.353);
    
    \filldraw[black] (#1+0.353, #2-0.646) circle (2.5pt);
    \filldraw[fill=white, draw=black, line width=0.5pt] (#1+0.646, #2-0.353) circle (2.5pt);
}

\newcommand{\drawboundarybridge}[2]{
    \draw (#1, #2-0.5) arc (90:45:0.5);
    
    \draw (#1+0.5, #2) arc (180:225:0.5);
    
    \draw (#1+0.353, #2-0.646) -- (#1+0.646, #2-0.353);
}

\newcommand{\drawwhitestone}[2]{
    \draw (#1, #2) rectangle (#1+1, #2-1);
    
    \filldraw[fill=white, draw=black, line width=0.5pt] (#1+0.5, #2-0.5) circle (0.35);
}

\newcommand{\inlinebridge}{\tikz[baseline=-0.2ex, x=2ex, y=2ex]{
    \draw (0,0) rectangle (1,1);
    
    \draw (0, 0.5) arc (90:0:0.5);
    \draw (0.5, 1) arc (180:270:0.5);
    
    \draw (0.354, 0.354) -- (0.646, 0.646);
    
    \filldraw[black] (0.354, 0.354) circle (0.15ex);
    \filldraw[fill=white, draw=black, line width=0.4pt] (0.646, 0.646) circle (0.15ex);
}}

\newcommand{\inlinecross}{\tikz[baseline=-0.2ex, x=2ex, y=2ex]{
    \draw (0,0) rectangle (1,1);
    
    \draw (0.5, 1) -- (0.5, 0);
    
    \draw (0, 0.5) -- (1, 0.5);
}}

\title{Combinatorial geometry of the 2D Toda lattice and Davey Stewartson equation}
\author[R.~N.~Betancourt]{Rodrigo Navarro Betancourt}
\address{R.~N.~Betancourt: School of Mathematics, 17 Westland Row, Trinity College Dublin, Dublin 2, Ireland}
\email{navarror@tcd.ie}

\author[M.~A.~Hahn]{Marvin Anas Hahn}
\address{M.~A.~Hahn: School of Mathematics, 17 Westland Row, Trinity College Dublin, Dublin 2, Ireland}
\email{hahnma@tcd.ie}

\author[V.~Posch]{Vera Posch}
\address{V.~Posch: School of Mathematics, 17 Westland Row, Trinity College Dublin, Dublin 2, Ireland}
\email{poschv@tcd.ie}

\author[V.~Reda]{Vincenzo Reda}
\address{V.~Reda: School of Mathematics, 17 Westland Row, Trinity College Dublin, Dublin 2, Ireland}
\email{redav@tcd.ie}

\begin{document}

\begin{abstract}
    The KP equation is a prototypical $(2+1)$--dimensional integrable PDE. Its soliton solutions are famously parametrized by the Sato Grassmannian. In seminal work, Kodama and Williams \cite{kodama2014kp} made the surprising discovery that the combinatorics of soliton solutions are intimately related to the combinatorics of the totally positive Grassmannian as pioneered by Postnikov \cite{postnikov2006total}. They introduced novel algorithmic methods inspired by polyhedral structures arising from tropical geometry. Soliton solutions to the 2D Toda lattice and the Davey--Stewartson equation, two closely related integrable systems with soliton solutions, are also classified by the Sato Grassmannian. Kodama suggested in \cite{kodama} that the methods of his work with Williams could generalize to these two integrable equations. In this work, we show that this is indeed the case. We derive algorithms to produce contour plots from elements in the totally nonnegative Grassmannian in both cases. In the asymptotic setting, we recover and refine previous work of Biondini and Wang \cite{Biondini_2010}; as well as Biondini, Kireyev and Maruno \cite{biondini2022soliton}.
\end{abstract}

\maketitle

\section{Introduction}

The ubiquitous Kadomtsev--Petviashvili (KP) equation,
\begin{equation}
    (-4u_t+6uu_x+u_{xxx})_x+3u_{yy}=0,
\end{equation}
is an integrable equation describing waves in shallow water. Here, the function $u(x,y,t)$ yields the wave amplitude at $(x,y)$ at time $t$. The study of its soliton solutions has received much attention and is a prototype for integrable systems in dimension $2+1$.

Work of the Kyoto school \cite{sato1981soliton} revealed that the soliton solutions of the KP equation (or more generally, the KP hierarchy) are parametrized by the \emph{Sato Grassmannian}. A classical and complementary approach to KP solutions is through algebraic geometry: finite-gap solutions of the KP hierarchy are associated with algebraic curves, and (tropical) degenerations of these curves give rise to multi-soliton solutions \cite{Abenda_2017,abenda2025tropicalkptheorybanana,agostini2021kpsolitonstropicallimits}. In the last decade, a series of seminal works by Kodama and Williams \cite{kodama2014kp,kodama2013deodhar,kodama2011combinatorics} revealed an intimate relation between the rich combinatorics of totally positive/non--negative Grassmannians and regular KP solitons.

More precisely, given a soliton solution $\tau_A$ corresponding to a matrix $A$ in the Sato Grassmannian, one may associate to it a "tropical limit" called its \emph{contour plot} $C_A$. The contour plot $C_A$ is a piecewise linearly embedded graph on the plane $\mathbb{R}^2$, which decomposes the plane into the regions of dominance of $\tau_A$. An ingenious insight of Kodama and Williams was an unexpected relationship between the combinatorics of contour plots and the combinatorics of certain decompositions of (totally non--negative) Grassmannians called the \emph{Deodhar decomposition} \cite{deodhar1985some} or \emph{Postnikov's positroid stratification} \cite{postnikov2006total}. The cells in this decomposition are indexed by combinatorial games on Young diagrams called $\Le$--diagrams. Assuming $A$ lies in the cell indexed by a given $\Le$--diagram $\Gamma$, Kodama and Williams proved that the combinatorial shape of $C_A$ is determined by the \emph{reduced pipedream} -- a graph drawn onto a Young diagram -- associated to $\Gamma$. Moreover, they provided an algorithm to determine the embedding of $C_A$ into $\mathbb{R}^2$. Building on these methods, Kodama and Williams made a connection to cluster algebras and studied the inverse problem: Given a contour plot, can we reconstruct the original soliton solution? Fixing the time $t$ and under some genericity assumption, they gave an affirmative answer.

As noted above, the KP equation is considered a prototype for integrable systems for two spatial and one time dimension. Two closely related systems are the \emph{2D Toda lattice} equation, given by
\begin{equation}
    \frac{\partial^2}{\partial x_{-1}\partial x_1}\ln(1-V_n(x_{-1},x_1))=V_{n+1}(x_{-1},x_1)-2V_{n}(x_{-1},x_1)+V_{n-1}(x_{-1},x_1).
\end{equation}
for a discrete parameter $n$, as well as the \emph{Davey--Stewartson equation},
\begin{equation}
    \begin{split}
        &iq_t+\frac{1}{2}q_{xx}-\frac{1}{2}q_{yy}+2qQ+4|q|^2q=0\\
        &Q_{xx}+Q_{yy}=-4(|q|^2)_{xx}.
    \end{split}
\end{equation}
for $q(x,y,t)$ a complex and $Q(x,y,t)$ a real valued function.

The key feature that the 2D Toda lattice and the Davey--Stewartson equation share with the KP equation is that their soliton solutions are parametrized by the Sato Grassmannian. However, the solutions have a different shape. The building blocks of all the soliton solutions considered here are functions $e^{p_1x+p_2y+p_3z}$ for parameters $p_i$, but the three equations admit solutions with different relations between the $p_i$. For the KP equation, we obtain the parabola with $p_2=p_1^2$. The 2D Toda lattice corresponds to the hyperbola $p_1p_2=1$, and the Davey--Stewartson equation yields the unit circle $p_1^2+p_2^2=1$ \cite[\S 3.2]{kodama}.

In \cite[Remark 8.1]{kodama}, Kodama suggests that his work with Williams could extend to the 2D Toda lattice and the Davey--Stewartson (DS) equation. In this work, we confirm this suggestion by devising algorithms that allow us to draw contour plots for both integrable equations directly from a $\Le$--diagram. The algorithm relies on certain analytic behaviors of specific exponential polynomials. Moreover, we find new phenomena, such as the appearance of parallel lines in contour plots for both the 2D Toda lattice and Davey--Stewartson equation. In the asymptotic case, by which we mean that the variables in the 2D Toda Lattice and Davey--Stewartson equations are large enough, we recover and refine previous results obtained in \cite{Biondini_2010,biondini2022soliton}. Finally, whenever the matrix parameter is in the totally positive Grassmannian, we also solve the inverse problem for the two integrable systems in question.

\subsection{Structure} In \cref{sec-preliminaries}, we recall basic notions required for our work. To begin with, we describe the Wronskian type solutions of the 2D Toda lattice and Davey--Stewartson equation. Moreover, we define the notion of \emph{contour plots} for both integrable systems. Lastly, we discuss relevant decompositions of Grassmannians and introduce the notions of $\Le$--diagrams and (reduced) pipedreams. In \cref{sec-dual}, we show how the duality between Grassmannians $\mathrm{Gr}(k,n)$ and $\mathrm{Gr}(n-k,n)$ translates to a duality of $\tau$--functions arising from these respective linear subspaces. In \cref{sec-2DTodalattice}, we first discuss the asymptotic behavior of contour plots for the 2D Toda lattice. Then, we derive the desired algorithm in \cref{algorithm-TL1} in the $(x,t)$--plane. Moreover, we show that a similar discussion produces analogous results in \cref{algorithm-TL2} for the $(x,n)$--plane, which recovers and refines previous work of Biondini and Wang in \cite{Biondini_2010}. We then turn our attention to the Davey--Stewartson equation in \cref{sec:Davey-Stewartson}, where again we first study the asymptotic behavior of contour plots, reproducing and refining previous results of Biondini, Kireyev and Maruno in \cite{biondini2022soliton}, as well as obtaining an algorithm in the general case in \cref{algorithm-TL2}. Finally, we resolve the inverse problem in \cref{sec-inverseproblem}. In the appendix, we provide -- for the sake of completeness -- several proofs of results we require, where the arguments however are similar to the KP case.

\subsection{Acknowledgements}
The authors would like to thank Claudia Fevola, Yuji Kodama, and  Marius de Leeuw for their valuable feedback on an earlier draft of the manuscript. VP was supported by ERC-2022-CoG - FAIM 101088193.

\section{Preliminaries}
\label{sec-preliminaries}
\subsection{Wronskian type \texorpdfstring{$\tau$}{τ}-functions}\label{sec:wronsk}
The KP equation belongs to the family of differential equations that can be bilinearized in the sense of Hirota \cite{Hirota2004}. Two other equations in this family are the 2D Toda lattice equation and the Davey--Stewartson equation. Here we introduce these two equations and show that they allow for Wronskian type solutions. Expressly, these solutions are of the form:
\begin{equation}\label{wronskian}
    \tau^{(n)}(x)=\text{Wr}(f_1^{(n)},\dots,f_N^{(n)})(x)=\det\begin{pmatrix}
        f_1^{(n)}&\dots& f_{1}^{(n+N-1)}\\
        \vdots &\ddots& \vdots\\
        f_N^{(n)} &\dots& f_{N}^{(n+N-1)}
    \end{pmatrix}
\end{equation}
with the boundary conditions $\frac{\partial f^{(n)}}{\partial x_i}=f^{(n+i)}$, $i \in \mathbb{Z}\backslash\{0\}$. Imposing the general ansatz
 $$ f_j^{(n)}=\sum_{i=1}^M a_{ij} \lambda^n_i \; \text{exp}\biggl(\sum_{k\in \mathbf{Z}} \lambda_i^k x_k \biggr),$$
one may write the $\tau$-function as:
\begin{equation}\label{generltau}
    \begin{split}
        \tau^{(n)} 
        =\sum_{I\in \binom{[M]}{N}} \prod_{i\in I} \lambda_i^n \exp\biggl(\sum_{i\in I}\sum_{k\in \mathbf{Z}}  \lambda_i^k x_k \biggr) \Delta_{I}(A) \prod_{i>j}(\lambda_i-\lambda_j),
    \end{split}
\end{equation}
where we have denoted $[M]= \{1,\dots,M\}$ and by $\binom{[M]}{N}$ the set of $N$-element subsets $I$ of $[M]$. We also denoted by $\Delta_I(A)$ the corresponding $N\times N$ minor of $A$, where $I$ selects the columns of $A$. Important bits of initial data needed here are the constant $N\times M$ matrix $A$ and a set of constant,  distinct values $\{\lambda_1<\dots<\lambda_M\}\in \mathbb{R}\backslash \{0\}$. As we will see, specifying these constants determines the soliton graph.
\begin{remark}
     It will be convenient to have the sum in equation \eqref{generltau} not run through the entirety of the set $\binom{[M]}{N}$; rather, only indices in the \emph{matroid} $\mathcal{M}(A)$ contribute, where
    \begin{equation}
        \mathcal{M}(A)=\left\{ I \in \binom{[M]}{N}\, \bigg|\,\Delta_I(A) \neq 0 \right\}.
    \end{equation}
\end{remark}

\begin{definition}
 Let $n$ and $k$ be positive integers with $k\le n$. Then, we define the associated Grassmannian as
 \begin{equation}
     \mathrm{Gr}(k,n)=\{L\subset\mathbb{R}^n\mid\textrm{$L$ is a linear subspace},\,\mathrm{dim}(L)=k\}.
 \end{equation}
\end{definition}

It is well--known that the Grassmannian $\mathrm{Gr}(k,n)$ may be realized as a projective variety. To see this, we note that every linear space $L\in\mathrm{Gr}(k,n)$ may be represented by a full rank $k\times n$ matrix whose rows are basis vectors of $L$. Two matrices $A$ and $A'$ represent the same linear space if and only if there is an invertible $k\times k$ matrix $B$ with $BA=A'$, i.e. $A$ and $A'$ differ by row operations. Thus, we obtain the following well--defined map, which is called the \textbf{Pl\"ucker embedding}

\begin{align}
    \Phi\colon\mathrm{Gr}(k,n)&\to\mathbb{P}_{\mathbb{R}}^{\binom{n}{k}-1}\\
    A&\mapsto(\Delta_I(A))_{I\in\binom{[n]}{k}}
\end{align}

The subspace $\mathrm{Im}(\Phi)$ realizes $\mathrm{Gr}(k,n)$ as a projective variety that is cut out by the Pl\"ucker relations. The vector $(\Delta_I(A))_{I\in\binom{[n]}{k}}$ is called a Pl\"ucker vector and we call its entries Pl\"ucker coordinates. In what follows we will often conflate the matrix $A$ with its corresponding point in the Grassmannian $\mathrm{Gr}(k,n)$.

\begin{definition}\label{def-tnn_and_tp}
  We define the totally nonnegative (positive) Grassmannian  as the subset of $\mathrm{Gr}(k,n)$ consisting of linear subspaces $L\subset\mathbb{R}^n$ represented by a matrix $A$ with all $\Delta_I(A)\ge0$ (resp. $\Delta_I(A)>0$). We will denote the totally nonnegative (positive) Grassmannian by $\mathrm{Gr}(k,n)_{\geq 0}$ (resp. $\mathrm{Gr}(k,n)_{>0}$).
\end{definition}

Equation \eqref{generltau} implies that if two matrices $A$ and $A'$ are related by an invertible transformation, so that $A'=GA$ for $G \in \mathrm{GL}_N(\mathbb{R})$, the $\tau$--functions they parametrize will be the same, up to a scalar. By this reasoning, we think of $\tau$--functions as parametrized by Grassmannian points as opposed to matrices.

\subsubsection{The 2D Toda Lattice}
Let $n\in\mathbb Z$, $x_1,x_{-1}\in \mathbb{R}$ and $V_n: \mathbb{R}^2\rightarrow \mathbb{R}$. The $2$-dimensional Toda lattice (2DTL) equation for the function $V_n(x_{-1},x_1)$ has the following form:
\begin{equation}\label{eq-Toda}
    \frac{\partial^2}{\partial x_{-1}\partial x_1}\ln(1-V_n(x_{-1},x_1))=V_{n+1}(x_{-1},x_1)-2V_{n}(x_{-1},x_1)+V_{n-1}(x_{-1},x_1).
\end{equation}
This is a non-linear differential equation, and therefore hard to handle. Hirota \cite{Hirota2004} introduced the following change of variables, to make life easier: 
\begin{equation}\label{eq-solution}
    V_n(x_{-1},x_1)=\frac{\partial^2}{\partial x_{-1}\partial x_1}\ln\tau^{(n)}(x_{-1},x_1).
\end{equation}
This leads to what is called a \textbf{Hirota bilinear equation}:
\begin{equation}\label{eq-Hirota}
    \tau^{(n)}\frac{\partial^2\tau^{(n)}}{\partial x_{-1}\partial x_1}-\frac{\partial\tau^{(n)}}{\partial x_{-1}}\frac{\partial\tau^{(n)}}{\partial x_1}=(\tau^{(n)})^2-\tau^{(n+1)}\tau^{(n-1)}.
\end{equation}
\begin{remark}\label{rmk-biondini}
    It is important to note that the functions $f^{(n)}(x_{-1},x_1)$ appearing in \cref{wronskian} satisfy boundary conditions that differ slightly from those considered in \cite{Biondini_2010}. More precisely, the derivative with respect to $x_{-1}$ differs by a sign. Consequently, the 2D Toda lattice presented in \cite{Biondini_2010} can be obtained from ours by  simply substituting $x_{-1}\mapsto -x_{-1}$, 
    and then replacing $V_n$ with $-V_n$. Indeed, if $\tilde V_n(x_{-1},x_1)=-V_n(-x_{-1},x_1)$, it is not difficult to show that we can rewrite \cref{eq-Toda} as
    \begin{equation}
        \frac{\partial^2}{\partial x_{-1}\partial x_1}\ln(1+\tilde V_n(x_{-1},x_1))=\tilde V_{n+1}(x_{-1},x_1)-2\tilde V_{n}(x_{-1},x_1)+\tilde V_{n-1}(x_{-1},x_1).
    \end{equation}
\end{remark}

One can show that \eqref{eq-Hirota} allows for a Wronskian type solution \eqref{wronskian}. To do this, we introduce the throughout useful \textbf{Plücker relations}.
\begin{prop}[Plücker relations]
\label{plucker relation} Let $A$ and $B$ be two matrices of size $n\times(n-2)$ and $n\times(n+1)$ respectively and denote $\det(A)=[a_0,\dots, a_{n-2}]$, and similarly for $\det (B)=[b_0,\dots,b_n]$. The following equation, referred to as the Pl\"ucker relations, is satisfied:
    \begin{align}
    \label{plucker}
        \sum_{i=0}^n (-1)^{i-1} [a_0,\dots,a_{n-2}, b_i][b_0,\dots,b_{i-1},b_{i+1},\dots, b_n]=0
    \end{align}
\end{prop}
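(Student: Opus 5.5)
The identity is multilinear and alternating, so I plan to reduce it to a single observation: a vector-valued expression in the $b_i$ that vanishes identically. Write $a_0,\dots,a_{n-2}$ for the columns of $A$ and $b_0,\dots,b_n$ for the columns of $B$, all vectors in $\mathbb{R}^n$. The bracket $[a_0,\dots,a_{n-2},b_i]$ is the determinant of the $n\times n$ matrix with these $n$ columns. The bracket $[b_0,\dots,\widehat{b_i},\dots,b_n]$ is likewise an $n\times n$ determinant, since removing one of the $n+1$ columns of $B$ leaves $n$ columns.

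First I fix the $a$'s and consider the linear functional
\begin{equation}
\varphi(v)=[a_0,\dots,a_{n-2},v],\qquad v\in\mathbb{R}^n .
\end{equation}
By cofactor expansion along the last column, $\varphi(v)=\langle c,v\rangle$ for a fixed vector $c\in\mathbb{R}^n$, namely the vector of signed $(n-1)\times(n-1)$ minors of $A$. The left-hand side of the Plücker relation then equals $\langle c, w\rangle$, where
\begin{equation}
w=\sum_{i=0}^n (-1)^{i-1}[b_0,\dots,\widehat{b_i},\dots,b_n]\, b_i\in\mathbb{R}^n .
\end{equation}
So it suffices to show that $w=0$, and this no longer involves $A$ at all.

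To show $w=0$, I take its $j$-th coordinate, which is $\sum_i (-1)^{i-1}(b_i)_j[b_0,\dots,\widehat{b_i},\dots,b_n]$. Up to a global sign, this is the Laplace expansion along the first row of the $(n+1)\times(n+1)$ matrix obtained by placing the row $((b_0)_j,\dots,(b_n)_j)$ on top of $B$. That matrix has two equal rows, since the $j$-th row of $B$ is repeated, so its determinant is zero. Hence every coordinate of $w$ vanishes, $w=0$, and the relation follows.

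The only real obstacle is sign bookkeeping. The Laplace expansion along the first row gives signs $(-1)^{i}$ when columns are indexed from $0$, and the statement uses $(-1)^{i-1}$. These differ by a global factor of $-1$, which is harmless because the sum is zero. I would state this explicitly so that the reader can check the indexing.

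An equivalent route avoids coordinates. The map $(b_0,\dots,b_n)\mapsto w$ is multilinear and alternating in $n+1$ vectors of the $n$-dimensional space $\mathbb{R}^n$. The $b_i$ are therefore linearly dependent, and I would cite this as the conceptual reason the identity holds.
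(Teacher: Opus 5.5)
Your proof is correct and complete. The paper gives no proof of this proposition. It quotes the Plücker relation as a classical fact and applies it at once, in its three-term form (``for two equal matrices''), to show that the Wronskian $\tau$-function solves the Toda bilinear equation. So your argument supplies what the paper leaves to the literature rather than paralleling a proof in the paper.

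Your route also differs from the proof one usually finds in the soliton literature. That proof Laplace-expands, along its first $n$ rows, the $2n\times 2n$ block determinant with upper block $(a_0,\dots,a_{n-2},b_0,\dots,b_n)$ and lower block $(0,\dots,0,b_0,\dots,b_n)$. This determinant vanishes: after subtracting the lower half from the upper half, the top $n$ rows are supported on only $n-1$ columns. Your route instead separates the two ingredients:
\begin{itemize}
\item the $a$'s enter only through the covector $c$ of signed maximal minors of $A$;
\item what remains is the $A$-independent identity $\sum_i(-1)^i[b_0,\dots,\widehat{b_i},\dots,b_n]\,b_i=0$, i.e.\ Cramer's rule for $n+1$ vectors in $\mathbb{R}^n$, which you check by a repeated-row expansion.
\end{itemize}
This makes the structure more transparent, while the block-determinant proof does everything in one expansion. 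Your sign bookkeeping is right: the two conventions differ by a global factor $-1$.

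You should state one point explicitly rather than fix it silently. As printed, the proposition says $A$ is $n\times(n-2)$, yet the bracket $[a_0,\dots,a_{n-2},b_i]$ uses $n-1$ columns of $A$. Only with $n-1$ columns is this an $n\times n$ determinant. Your argument, including the description of $c$ via $(n-1)\times(n-1)$ minors, assumes $A$ is $n\times(n-1)$. That is the only consistent reading, and it is the one the paper's application needs; say so.

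In the coordinate-free remark, the alternating property of $(b_0,\dots,b_n)\mapsto w$ deserves its one-line check. If $b_k=b_{k+1}$, every term with $i\neq k,k+1$ has a repeated column, and the two remaining terms cancel. Your coordinate proof already suffices without this.
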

The following result is well-known:
\begin{theorem}[\S 3.5.1, \cite{Hirota2004}]
    The $\tau$-function as given in equation \eqref{generltau} solves the $2D$-Toda lattice equation \eqref{eq-Hirota}. 
\end{theorem}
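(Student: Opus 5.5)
The plan is to verify \eqref{eq-Hirota} directly for the Wronskian $\tau^{(n)}=\mathrm{Wr}(f_1^{(n)},\dots,f_N^{(n)})$ using only the boundary conditions $\partial_{x_i}f^{(n)}=f^{(n+i)}$ for $i=\pm1$, and then to reduce the resulting identity to a Pl\"ucker relation (\cref{plucker relation}). The Grassmannian expansion \eqref{generltau} is obtained from the Wronskian by Cauchy--Binet applied to the ansatz for $f_j^{(n)}$. So it suffices to treat the determinant form. We work with the column notation $\tau^{(n)}=[n,n+1,\dots,n+N-1]$, where $[k_1,\dots,k_N]$ denotes the determinant whose columns are the vectors $(f_1^{(k_i)},\dots,f_N^{(k_i)})^T$.

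\textbf{Step 1: derivatives as shifted determinants.} By multilinearity, $\partial_{x_1}$ shifts each column index by $+1$, and only the last column survives, giving
\[
\partial_{x_1}\tau^{(n)}=[n,\dots,n+N-2,n+N].
\]
For $\partial_{x_{-1}}$, each index shifts by $-1$, and only the first column survives, giving
\[
\partial_{x_{-1}}\tau^{(n)}=[n-1,n+1,\dots,n+N-1].
\]
Applying both derivatives gives
\[
\partial_{x_{-1}}\partial_{x_1}\tau^{(n)}=[n-1,n+1,\dots,n+N-2,n+N]+[n,\dots,n+N-1]\cdot(\text{correction}).
\]
The correction comes from differentiating the last column of $\partial_{x_{-1}}\tau^{(n)}$ back to index $n+N-1$, or equivalently from the term in which $\partial_{x_1}$ hits the first column of $[n-1,\dots]$ and returns it to $n$. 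One must track this carefully: differentiating $[n-1,n+1,\dots,n+N-1]$ in $x_1$ yields $[n,n+1,\dots,n+N-1]+[n-1,n+1,\dots,n+N-2,n+N]$. Hence
\[
\tau\,\partial_{-1}\partial_1\tau=\tau^2+\tau\,[n-1,\widehat{n},\dots,\widehat{n+N-1},n+N],
\]
and the $\tau^2$ term matches the $(\tau^{(n)})^2$ on the right-hand side. The remaining terms must satisfy
\[
\tau^{(n)}[n-1,\widehat{n},\dots,n+N]-[n-1,\widehat n,\dots,n+N-1][n,\dots,\widehat{n+N-1},n+N]=-\tau^{(n+1)}\tau^{(n-1)}.
\]
Here $\tau^{(n+1)}=[n+1,\dots,n+N]$ and $\tau^{(n-1)}=[n-1,\dots,n+N-2]$.

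\textbf{Step 2: Pl\"ucker reduction.} The identity above is a three-term Pl\"ucker relation. It follows from \cref{plucker relation} by taking the common $N-2$ columns $a=(n+1,\dots,n+N-2)$ and the $N+2$ vectors $b=(n-1,n,n+N-1,n+N)$ together with $a$. All terms in \eqref{plucker} containing a repeated column vanish, leaving exactly the three products $[a,n-1,n][a,n+N-1,n+N]$, $[a,n-1,n+N-1][a,n,n+N]$ and $[a,n-1,n+N][a,n,n+N-1]$. After reordering columns to match the expressions above, this yields the required identity.

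\textbf{Main obstacle.} The hard part is sign bookkeeping: the permutation signs from moving columns into place, and confirming the overall sign of $\tau^{(n+1)}\tau^{(n-1)}$ given the convention $\partial_{x_{-1}}f^{(n)}=f^{(n-1)}$ (cf.\ \cref{rmk-biondini}). I would fix these signs by checking the case $N=1$, where $\tau=f^{(n)}$ and the identity reads $f\,f^{(n)}-f^{(n-1)}f^{(n+1)}=f^2-f^{(n+1)}f^{(n-1)}$. That case confirms the signs directly.
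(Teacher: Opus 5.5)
Your proposal is correct and follows essentially the same route as the paper: you write $\partial_{x_{\pm1}}\tau^{(n)}$ and $\partial_{x_{-1}}\partial_{x_1}\tau^{(n)}=\tau^{(n)}+[n-1,n+1,\dots,n+N-2,n+N]$ as shifted Wronskian determinants, and you recognize the remaining identity as the three-term Pl\"ucker relation with common columns $n+1,\dots,n+N-2$. The one thing to tidy is the signs. The $N=1$ check does not test them, since no Pl\"ucker relation is used there, but they are easy to settle directly: sorting columns costs $(-1)^{2(N-2)}=1$ for $[a,n-1,n]$, nothing for $[a,n+N-1,n+N]$, and $(-1)^{N-2}$ for each factor of the other two products, so $[a,p,q][a,r,s]-[a,p,r][a,q,s]+[a,p,s][a,q,r]=0$ with $(p,q,r,s)=(n-1,n,n+N-1,n+N)$ is exactly $\tau^{(n-1)}\tau^{(n+1)}-\partial_{x_{-1}}\tau^{(n)}\,\partial_{x_1}\tau^{(n)}+\tau^{(n)}[n-1,n+1,\dots,n+N-2,n+N]=0$.
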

\begin{proof}
    With the notation
    $\tau^{(n)}= \text{Wr}(f^{(n)}_1,...,f^{(n)}_N)=[0,\dots,N-1]$, we find:
    \begin{gather}
        \frac{\partial}{\partial x_1}[0,\dots,N-1]= [0,\dots, N-2, N],\quad
         \frac{\partial}{\partial x_{-1}}[0,\dots,N-1]= [-1,1,\dots, N-1],\\
         \frac{\partial^2}{\partial x_{-1}\partial x_1}[0,\dots,N-1]= [-1,1,\dots, N-2,N]+[0,\dots, N-1],\quad
         \tau^{(n+1)}= [1,\dots, N],\\
         \tau^{(n-1)}=[-1,\dots, N-2].
    \end{gather}
    Substituting for the relevant terms, we find that equation \eqref{eq-Hirota} is equivalent to
    \begin{align}
        [0,\dots,N-2,N ][-1,1,\dots,N-1]-[0,\dots,N-1][-1,1,\dots,N-2,N]=[1,\dots,N][-1,\dots,N-2],
    \end{align}
    which is simply the Plücker relation for two equal matrices.
\end{proof}
Knowing that the $\tau$-function solves the $2$DTL equation, one can return to its explicit form.
Only one adaptation has to be made, and that is to restrict ourselves to the parameters $x_1,x_{-1}$, leading to:
\begin{equation}\label{eq:tau-toda}
    \begin{split}
        \tau^{(n)}(x_1,x_{-1})&=\sum_{I\in \binom{[M]}{N}} \prod_{i\in I} \lambda_i^n \exp\biggl(\sum_{i\in I} (\lambda_i x_1+ \lambda_i^{-1} x_{-1}) + \theta_0 \biggr) \Delta_{I}(A) \prod_{i>j}(\lambda_i-\lambda_j)\\
        &= \sum_{I\in \binom{[M]}{N}}  \exp(\Theta_I+ \theta_0 ) \Delta_{I}(A) \prod_{i>j}(\lambda_i-\lambda_j).
    \end{split}
\end{equation}
where $\theta_0$ is some possible constant, which we will forget about for the rest of the paper. Also note that we have conveniently summarized our parameter--dependent components as 
\begin{align}
    \Theta_I(x_{-1},x_1, n)=\sum_{i\in I} \theta_i (x_{-1}, x_1, n)= \sum_{i\in I} (\lambda_i x_1+\lambda_i^{-1}x_{-1} +n\ln(\lambda_i)).
\end{align}
Recall that soliton solutions are realized as logarithms of $\tau$-functions. So that these solutions remain \emph{regular}, we impose the following constraints on our initial data, guaranteeing the positivity of their associated $\tau$-function:
    \begin{enumerate}\label{pre-tnn}
        \item $\Delta_I(A) \geq 0$ for all $I \in \binom{[M]}{N}$;
        \item The real constants satisfy that $0 < \lambda_1< \dots  <\lambda_M$.
    \end{enumerate}
With a slight abuse of notation, the first constraint can be equivalently phrased as asking that $A \in \mathrm{Gr}(N,M)_{\geq 0}$.

Equation \eqref{eq:tau-toda} is our $\tau$-function, the one that will give us soliton solutions to the $2$D Toda lattice equation. But it is still not very clear how we can see it. Let's do an example.
\begin{example}\label{example-linesoliton}
    We set $N=1$ and $a_{ij}=(1,1)$, and find the following solution:
    \begin{equation}
        \tau^{(n)}=f^{(1)}_n=\lambda_1^n  \exp\biggl[\lambda_1 x_1 +\frac{x_{-1}}{\lambda_1}+\theta_0\biggr]+\lambda_2^n\exp\biggl[\lambda_2 x_1 +\frac{x_{-1}}{\lambda_2}+\theta_0\biggr].
    \end{equation}
    It is convenient as well as illuminating to write: $\theta_k=n\ln \lambda_k+\lambda_k x_1 +\frac{x_{-1}}{\lambda_k}+\theta_0$.\\
    Substituting for the $\tau$-function in the equation for $V_n$, we obtain the expression
    \begin{align}\label{ex-V}
        V_n = \frac{\partial^2}{\partial x_1 \partial x_{-1}} \ln (\exp (\theta_1) + \exp(\theta_2))
        =\frac{1}{4}(\lambda_1-\lambda_2)\biggl(\frac{1}{\lambda_1}-\frac{1}{\lambda_2}\biggr)\mathrm{sech}^2 \frac{1}{2}(\theta_1-\theta_2).
    \end{align}
    Since $\lambda_1<\lambda_2$, $V_n$ is negative at each point of the plane. Its absolute value is negligibly small everywhere apart from a single maximum at $\theta_1=\theta_2$. This equality  $n( \ln \lambda_1-\ln \lambda_2)= x_1 (\lambda_2-\lambda_1) +x_{-1} (\lambda_2^{-1}-\lambda^{-1}_1)$, when keeping one of the variables $x_1,x_{-1},n$ fixed describes a line in the plane. This line is a simple example of a contour plot, defined in \cref{sec-contourplots}.
    \begin{figure}[H]
        \centering
        \includegraphics[width=0.4\linewidth]{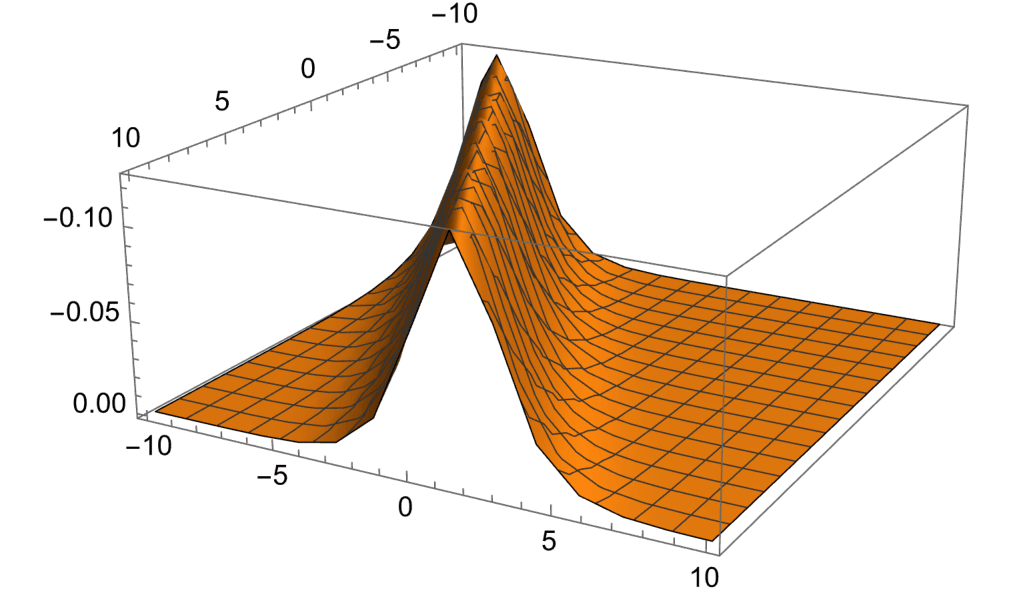}
        \includegraphics[width=0.3\linewidth]{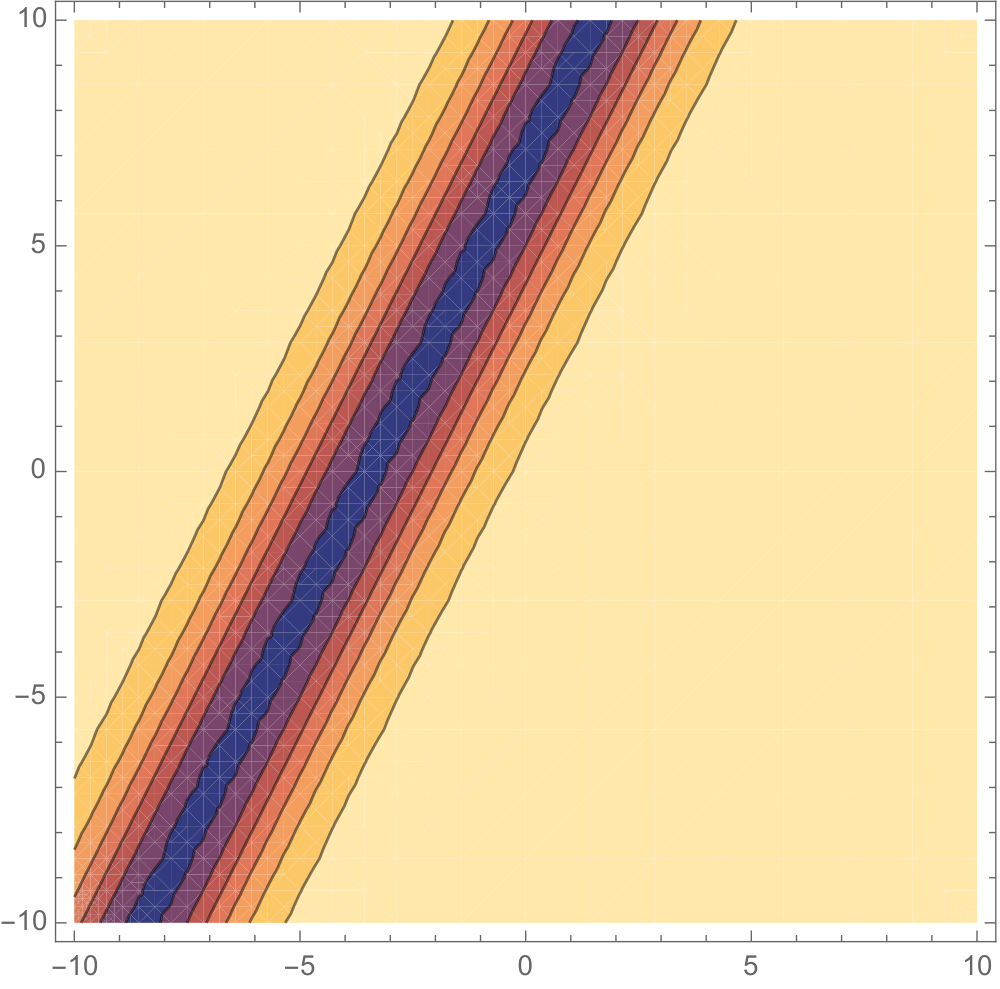}
        \caption{A plot of $V_n$ from equation \eqref{ex-V} (left), and of its corresponding $\tau$--function (right).}
        \label{fig:1sol}
    \end{figure}
\end{example}
For more difficult examples, the structure of these plots will remain: phases $\Theta_I$ domineering sections of the plane and harsh minima where they equate $\Theta_I=\Theta_J$, represented as lines in a contour plot. In order to identify each section in the plane uniquely with a dominant phase $\Theta$, we will demand that these equalities are not trivially satisfied, leading to a genericity condition. This leads us to identifying each region with a subset of indices, so that one might already get the idea that this analytical problem may be translatable into a combinatorial one. Here is the plan on how to see this more clearly and rigorously:
Start in \cref{sec-contourplots} with a formal description of the contour plot. Then an intermission is needed in \cref{sec:grassmannian}, giving some background on Grassmannian representations. Finally, one finds an identification of the soliton graph with some combinatorial object, called a $\reflectbox{L}$-diagram, in \cref{sec-Asymptotic,sec:algorithm}.

In the next subsection we take a break from the 2DTL, and introduce the Davey--Stewartson equation.
\subsubsection{The Davey-Stewartson equation}
In this section, we study the Davey-Stewartson equation II type (DSII) defocusing case and write solutions in terms of $\tau$-functions. 
Moreover, we get a closed expression of the latter using the Wronskian method, as we did for the $2$D Toda lattice, and finally we discuss when such $\tau$-functions provide soliton solutions for this equation.\\
Note that in total, there are four DS models: focusing/defocusing DSI/II, all of which  only differ in some of the signs in equation \eqref{DS}. However, these different signs are physically significant, as only the defocusing DSII experiences resonant line solitons \cite{biondini2022soliton}. Therefore, we only discuss the DSII system of defocusing type in this work.\\
Let $q(x,y,t)$ and $Q(x,y,t)$ be complex and real valued functions, respectively, with $x,y,t\in \mathbb{R}$. We denote by $q^*(x,y,t)$ the complex conjugate of $q(x,y,t)$ and by $|q|^2=qq^*$ the usual norm in $\mathbb C$. Following \cite{biondini2022soliton}, the DSII defocusing case for $q$ and $Q$ is given by:
\begin{equation}
\label{DS}
    \begin{split}
        &iq_t+\frac{1}{2}q_{xx}-\frac{1}{2}q_{yy}+2qQ+4|q|^2q=0\\
        &Q_{xx}+Q_{yy}=-4(|q|^2)_{xx}.
    \end{split}
\end{equation}
We can express $q$ and $Q$ in terms of $\tau$-functions as follows:
\begin{equation}\label{eq-tau-functions}
    q(x,y,t)=\frac{\tau^{(n+1)}(x,y,t)}{\tau^{(n)}(x,y,t)}e^{4it},\qquad Q=\frac{\partial^2}{\partial x^2}\ln(\tau^{(n)}(x,y,t)),
\end{equation}
where  $\tau^{(n+1)}(x,y,t)=(\tau^{(n-1)}(x,y,t))^*$. Similarly to the 2DTL system, one may plug these changes of variables into the differential equations and obtain bilinear expressions. For the sake of brevity and beauty we do not include the direct result, but introduce the Hirota derivatives first:
\begin{equation}
\label{Hirota deriv}
    \begin{split}
        &D_x^mf\cdot g=\left(\frac{\partial}{\partial x}-\frac{\partial}{\partial x'}\right)^mf(x,y,t)g(x',y,t)\bigg|_{x'=x}\\
        &D_y^mf\cdot g=\left(\frac{\partial}{\partial y}-\frac{\partial}{\partial y'}\right)^mf(x,y,t)g(x,y',t)\bigg|_{y'=y}\\
        &D_t^mf\cdot g=\left(\frac{\partial}{\partial t}-\frac{\partial}{\partial t'}\right)^mf(x,y,t)g(x,y,t')\bigg|_{t'=t}
    \end{split}
\end{equation}
Substituting \eqref{eq-tau-functions} in the DSII equation, and using \eqref{Hirota deriv} for further simplifications, one finds:
\begin{align}
    &(2iD_t+D_x^2-D_y^2)\tau^{(n+1)}\cdot\tau^{(n)}=0, \label{eq-Hirota1}\\
    &(D_x^2+D_y^2)\tau^{(n)}\cdot\tau^{(n)}+8(\tau^{(n+1)}\tau^{(n-1)}-(\tau^{(n)})^2)=0 \label{eq-Hirota2}.
\end{align}
In order to identify this $\tau$ with the generic form we defined in \eqref{generltau}, we use the following change of variables:
\begin{equation}
    x_1=ix+y,\qquad x_{-1}=-ix+y,\qquad x_{2}=-it,\qquad x_{-2}=it,
\end{equation}
or, equivalently,
\begin{equation}
    x=\frac{1}{2i}(x_1-x_{-1}),\qquad y=\frac{1}{2}(x_1+x_{-1}),\qquad t=\frac{1}{2i}(x_{-2}-x_2).
\end{equation}
In this setup, \cref{eq-Hirota1,eq-Hirota2} become:
\begin{align}
    &(D_{x_2}-D_{x_1}^2)\tau^{(n+1)}\cdot\tau^{(n)}=0, \label{eq-newHirota1}\\
    &(D_{x_{-2}}-D_{x_{-1}}^2)\tau^{(n+1)}\cdot\tau^{(n)}=0, \label{eq-newHirota2}\\
    &D_{x_1}D_{x_{-1}}\tau^{(n)}\cdot\tau^{(n)}=2((\tau^{(n)})^2-\tau^{(n-1)}\tau^{(n+1)}) \label{eq-newHirota3}.
\end{align}
\begin{remark}
    Note that \eqref{eq-newHirota3} is the Hirota bilinear equation of the $2$D Toda lattice equation, while \cref{eq-newHirota1,eq-newHirota2} are referred to as Bäcklund transformations of the KP hierarchy (see \cite[\S 3.1]{kodama} for details).
\end{remark}
With this remark, it is not hard to believe that:
\begin{lemma}[Lemma 2.1, \cite{biondini2022soliton}]
    The function
    \begin{align}
        \tau^{(n)}(x_{-2},x_{-1},x_1,x_2)=\sum_{I\in \binom{[M]}{N}} \prod_{i\in I} \lambda_i^n \exp\biggl(\sum_{i\in I}\lambda_i^2 x_2 +\lambda_i x_1+ \lambda_i^{-1} x_{-1}+ \lambda_i^{-2} x_{-2} + \theta_0 \biggr) \Delta_{I}(A) \prod_{i>j}(\lambda_i-\lambda_j),
    \end{align}
    of the general form \eqref{wronskian}, is a solution to Equations \eqref{eq-newHirota1}-\eqref{eq-newHirota3}.
\end{lemma}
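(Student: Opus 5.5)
The plan is to prove the statement in Wronskian form, exactly as in the proof of the Toda theorem above, and then read off the explicit sum. Set
\begin{equation}
f_j^{(n)}=\sum_{i=1}^M a_{ij}\lambda_i^n\exp\Bigl(\lambda_i^2x_2+\lambda_ix_1+\lambda_i^{-1}x_{-1}+\lambda_i^{-2}x_{-2}\Bigr).
\end{equation}
Each summand is an exponential in which differentiation in $x_k$ multiplies by $\lambda_i^k$. Hence $f_j^{(n)}$ satisfies $\partial_{x_k}f_j^{(n)}=f_j^{(n+k)}$ for $k\in\{\pm1,\pm2\}$. The Binet--Cauchy expansion of the Wronskian determinant \eqref{wronskian} then produces the stated sum over $I\in\binom{[M]}{N}$. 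Each term is a minor $\Delta_I(A)$ times a Vandermonde factor $\prod_{i>j}(\lambda_i-\lambda_j)$ times $\prod_{i\in I}\lambda_i^n$ times the exponential. This is the same computation as \eqref{generltau}, and the constant $\theta_0$ is irrelevant because all equations are bilinear. It therefore suffices to check that $\tau^{(n)}=[0,1,\dots,N-1]$, with columns $f^{(n)},\dots,f^{(n+N-1)}$, satisfies \eqref{eq-newHirota1}--\eqref{eq-newHirota3}.

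Equation \eqref{eq-newHirota3} is handled first. Expanding the Hirota derivative gives $D_{x_1}D_{x_{-1}}\tau\cdot\tau=2(\tau\,\partial_{x_1}\partial_{x_{-1}}\tau-\partial_{x_1}\tau\,\partial_{x_{-1}}\tau)$. So \eqref{eq-newHirota3} is literally the Toda bilinear equation \eqref{eq-Hirota}. The only variables entering that equation are $x_{\pm1}$ and $n$, and the dependence on $x_{\pm2}$ enters only through the parameters. The theorem from \S 3.5.1 of \cite{Hirota2004} quoted above therefore applies verbatim: it uses the derivative rules $\partial_{x_1}[0,\dots,N-1]=[0,\dots,N-2,N]$ and $\partial_{x_{-1}}[0,\dots,N-1]=[-1,1,\dots,N-1]$ together with one Plücker relation (\cref{plucker relation}).

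Equation \eqref{eq-newHirota1} is the standard KP B\"acklund (modified KP) bilinear identity between $\tau^{(n+1)}=[1,\dots,N]$ and $\tau^{(n)}=[0,\dots,N-1]$. To verify it, I will compute in the shorthand the quantities $\partial_{x_1}$, $\partial_{x_1}^2$ and $\partial_{x_2}$ of both determinants, using $\partial_{x_2}f^{(m)}=f^{(m+2)}$ and the fact that differentiating a Wronskian shifts one column index. For example,
\begin{equation}
\partial_{x_1}^2[0,\dots,N-1]=[0,\dots,N-3,N-1,N]+[0,\dots,N-2,N+1],
\end{equation}
and
\begin{equation}
\partial_{x_2}[0,\dots,N-1]=-[0,\dots,N-3,N-1,N]+[0,\dots,N-2,N+1].
\end{equation}
Substituting into $D_{x_2}-D_{x_1}^2$ collapses the expression to a three-term combination of products of two Wronskians sharing the common columns $1,\dots,N-1$. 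That combination is again an instance of \eqref{plucker}.

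Equation \eqref{eq-newHirota2} follows from \eqref{eq-newHirota1} by symmetry. The substitution $\lambda_i\mapsto\lambda_i^{-1}$, $x_k\mapsto x_{-k}$, $n\mapsto -n$ exchanges the roles of the positive and negative flows. Alternatively, one can redo the computation using negative column shifts, e.g.\ $\partial_{x_{-1}}[0,\dots,N-1]=[-1,1,\dots,N-1]$. Some care is needed here because the pair $(\tau^{(n+1)},\tau^{(n)})$ maps to $(\tau^{(n-1)},\tau^{(n)})$ up to sign conventions. The main obstacle I expect is this bookkeeping in \eqref{eq-newHirota1}/\eqref{eq-newHirota2}: tracking signs and the exact index sets so that the resulting identity matches \eqref{plucker} with the correct choice of the $(N-1)$-column matrix $A$ and the $(N+1)$-column matrix $B$.
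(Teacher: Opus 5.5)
Your route is the paper's route: the paper's proof only points back to the Toda computation and the Pl\"ucker relations. Your handling of \eqref{eq-newHirota3} and \eqref{eq-newHirota1} is correct. For \eqref{eq-newHirota1} one gets
\[
(D_{x_2}-D_{x_1}^2)\tau^{(n+1)}\cdot\tau^{(n)}=-2\bigl([0,\dots,N-1][1,\dots,N-2,N,N+1]+[1,\dots,N][0,\dots,N-2,N+1]-[1,\dots,N-1,N+1][0,\dots,N-2,N]\bigr).
\]
This vanishes by \eqref{plucker} with the $(N-1)$-tuple $(1,\dots,N-1)$ and the $(N+1)$-tuple $(0,1,\dots,N-2,N,N+1)$. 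Note that the two factors in each product share only the columns $1,\dots,N-2$, not $1,\dots,N-1$.

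The genuine problem is \eqref{eq-newHirota2}, and it is not bookkeeping. If you carry out your symmetry argument, \eqref{eq-newHirota1} becomes $(D_{x_{-2}}-D_{x_{-1}}^2)\tau^{(n)}\cdot\tau^{(n+1)}=0$, with the pair in reversed order, as you noticed. Since $D^m g\cdot f=(-1)^m D^m f\cdot g$, swapping the two arguments flips the sign of $D_{x_{-2}}$ but not of $D_{x_{-1}}^2$. What you actually obtain is therefore
\[
(D_{x_{-2}}+D_{x_{-1}}^2)\tau^{(n+1)}\cdot\tau^{(n)}=0,
\]
and no choice of ``sign conventions'' turns this into the printed \eqref{eq-newHirota2}.

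In fact the printed equation is false for the given $\tau$. Take $N=1$, $M=2$, $A=(1\;1)$, $\theta_0=0$, and write $\eta_i=\lambda_ix_1+\lambda_i^{-1}x_{-1}+\lambda_i^2x_2+\lambda_i^{-2}x_{-2}$. Then the coefficient of $e^{\eta_1+\eta_2}$ in $(D_{x_{-2}}-D_{x_{-1}}^2)\tau^{(n+1)}\cdot\tau^{(n)}$ is $-2(\lambda_1-\lambda_2)^2(\lambda_1+\lambda_2)(\lambda_1\lambda_2)^{n-2}$. This is nonzero for distinct $\lambda_1,\lambda_2$ with $\lambda_1+\lambda_2\neq0$, so in particular in both the Toda and the DS parameter ranges.

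The $+$ version is also the one the reduction needs. Under the stated change of variables, $D_x=i(D_{x_1}-D_{x_{-1}})$, $D_y=D_{x_1}+D_{x_{-1}}$ and $D_t=-i(D_{x_2}-D_{x_{-2}})$. Hence
\[
2iD_t+D_x^2-D_y^2=2(D_{x_2}-D_{x_1}^2)-2(D_{x_{-2}}+D_{x_{-1}}^2).
\]
So the statement, and the paper, carry a sign misprint in \eqref{eq-newHirota2}.

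You should state and prove the corrected equation rather than hedge. Your symmetry argument proves it. So does a direct computation for $N\ge 2$: use
\[
\partial_{x_{-1}}^2\tau^{(n)}=[-2,1,\dots,N-1]+[-1,0,2,\dots,N-1],\qquad \partial_{x_{-2}}\tau^{(n)}=[-2,1,\dots,N-1]-[-1,0,2,\dots,N-1],
\]
then finish with a three-term Pl\"ucker relation on the common columns $2,\dots,N-1$ and the extra columns $-1,0,1,N$. As written, your phrase ``up to sign conventions'' is exactly the point where the proposal would assert a false identity.
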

The proof of this lemma proceeds as in the case of the $2$D Toda lattice, whereby the bilinearization of the Davey--Stewartson equation is reduced to the Plücker relations of the matrix A.\\ 
Having convinced ourselves that the $\tau$-function is of Wronskian type, we may note that $x_{-2},x_{-1},x_1,x_2\in \mathbb{C}$ as well as $\lambda_i\in \mathbb{C}$, $\forall i\in \{1,\dots,M\}$. The formalism in \cite{kodama2014kp}, however, lives in the real plane; it is therefore preferable to work in the original real parameters $x,y,t\in \mathbb{R}$ and introduce a set of real parameters $\{\psi_j\}_{j=1}^M$ satisfying $\lambda_j=e^{-i\psi_j}$. This choice of parameters yields real functions $\theta_j(x,y,t)$. Indeed, we may rewrite
\begin{equation}
    \begin{split}
        \theta_j(x,y,t) &=\lambda_jx_1+\lambda_j^{-1}x_{-1}+\lambda_j^2x_2+\lambda_j^{-2}x_{-2}\\
        &=i(\lambda_j-\lambda_j^{-1})x+(\lambda_j+\lambda_j^{-1})y+i(\lambda_j^{-2}-\lambda_j^2)t\\
        &=\frac{1}{i}(e^{i\psi_j}-e^{-i\psi_j})x+(e^{i\psi_j}+e^{-i\psi_j})y-\frac{1}{i}(e^{2i\psi_j}-e^{-2i\psi_j})t \\
        &=2(\sin(\psi_j)x+\cos(\psi_j)y-\sin(2\psi_j)t).
    \end{split}
\end{equation}
This implies that we may re-frame the Davey--Stewartson $\tau$-functions as follows:
\begin{lemma}[\cite{kodama}, \S 3.2.3]\label{lemma-tau-functionExpression}
    Let $A \in \mathrm{Gr}(N,M)$  and $\{\psi_j\}_{j=1}^M$ be a family of real parameters. Then the following $\tau$-functions solve the system of equations \eqref{DS}:
    \begin{equation}
        \tau_A^{(n)}(x,y,t)=(2i)^{\frac{N(N-1)}{2}}\sum_{I\in\mathcal M(A)}\Delta_I(A)S_I(\psi)e^{-i(n+\frac{N-1}{2})\Psi_I}e^{\Theta_I(x,y,t)},
    \end{equation}
    where $\mathcal M(A)$ is the matroid of $A$,
    \begin{align}
        &\Theta_I(x,y,t)=\sum_{i\in I}\theta_i(x,y,t),\qquad\Psi_I=\sum_{i\in I}\psi_i,\qquad S_I(\psi)=\prod_{j<k}\sin\left(\frac12(\psi_{i_j}-\psi_{i_k})\right).
    \end{align}
\end{lemma}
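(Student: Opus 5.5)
The plan is to start from the Wronskian $\tau$-function of the previous lemma (Lemma 2.1 of \cite{biondini2022soliton}), which already solves \eqref{eq-newHirota1}--\eqref{eq-newHirota3}, and then rewrite it term by term after the substitutions $x_{\pm1},x_{\pm2}$ in terms of $x,y,t$ and $\lambda_j=e^{-i\psi_j}$. Multiplying a solution by a nonzero constant preserves all three bilinear equations, since each is homogeneous of degree two in $\tau$ with a common overall factor. Moreover, the change of variables turns \eqref{eq-newHirota1}--\eqref{eq-newHirota3} back into \eqref{eq-Hirota1}--\eqref{eq-Hirota2}. So it suffices to show that the Wronskian expression equals the stated formula up to an $I$-independent constant. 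We also need to check that the reality condition $\tau^{(n+1)}=(\tau^{(n-1)})^*$ holds, so that $q,Q$ defined via \eqref{eq-tau-functions} solve \eqref{DS}.

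First, the exponential part: the computation displayed just before the lemma gives $\sum_{i\in I}\theta_i=\Theta_I(x,y,t)$, which is real. Second, the factor $\prod_{i\in I}\lambda_i^n=e^{-in\Psi_I}$. Third, the Vandermonde factor. For $I=\{i_1<\dots<i_N\}$, write
\[
\lambda_{i_k}-\lambda_{i_j}=e^{-\frac i2(\psi_{i_j}+\psi_{i_k})}\bigl(e^{\frac i2(\psi_{i_j}-\psi_{i_k})}-e^{-\frac i2(\psi_{i_j}-\psi_{i_k})}\bigr)=2i\,e^{-\frac i2(\psi_{i_j}+\psi_{i_k})}\sin\bigl(\tfrac12(\psi_{i_j}-\psi_{i_k})\bigr).
\]
Taking the product over the $\binom N2$ pairs gives $(2i)^{N(N-1)/2}S_I(\psi)$, times the phase $\exp\bigl(-\frac i2\sum_{j<k}(\psi_{i_j}+\psi_{i_k})\bigr)$. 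Each index appears in exactly $N-1$ pairs, so this phase equals $e^{-i\frac{N-1}{2}\Psi_I}$. Combining the three factors gives exactly the summand $\Delta_I(A)S_I(\psi)e^{-i(n+\frac{N-1}2)\Psi_I}e^{\Theta_I}$. The sum may be restricted to $\mathcal M(A)$ because the other minors vanish. The constant $\theta_0$ is dropped as a global rescaling.

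The remaining step, which I expect to be the main subtle point, is the conjugation condition. Since $A$, $\psi$ and $\Theta_I$ are real, conjugating $\tau^{(n-1)}$ flips only the phase:
\[
(\tau^{(n-1)})^*=(-2i)^{\frac{N(N-1)}2}\sum_I\Delta_I S_I\,e^{i(n-1+\frac{N-1}2)\Psi_I}e^{\Theta_I}.
\]
Matching this with $\tau^{(n+1)}$ is not automatic term by term. I would handle it by normalizing the $\psi_j$: absorb a factor $e^{-ic\Psi_I}$ through the gauge $\psi$-shifts, or equivalently rescale the columns of $A$ by unimodular constants, which changes $\tau$ by an overall phase, and fix the choice of $n$-origin as in \cite[\S 3.2.3]{kodama}. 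If that normalization fails to give exact equality, I would fall back on the formulation of \cite{biondini2022soliton}, where the reality condition is imposed after such a reparametrization. That is the step whose bookkeeping I would check most carefully against the source.
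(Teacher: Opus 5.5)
Your computation of the summand is correct and follows the paper's route. The paper likewise starts from the Wronskian $\tau$ of Lemma 2.1 of \cite{biondini2022soliton} and substitutes $\lambda_j=e^{-i\psi_j}$. It displays only the rewriting of $\theta_j$, leaving the conversion of $\prod_{i\in I}\lambda_i^n$ and of the Vandermonde factor implicit. Your pairwise identity for $\lambda_{i_k}-\lambda_{i_j}$, together with the count of $N-1$ pairs per index, supplies exactly that step.

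The gap is the reality step, which you leave open, and the devices you suggest would not close it. Shifting the $\psi_j$ is not a gauge: it changes $\theta_j=2(x\sin\psi_j+y\cos\psi_j-t\sin 2\psi_j)$ and hence the solution itself. Rescaling the columns of $A$ by $e^{-ic\psi_j}$ makes $A$ complex, and is in any case just a shift of $n$.

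Your own conjugation formula shows what is needed. Termwise equality with $\tau^{(n+1)}$ requires $e^{2i(n+\frac{N-1}{2})\Psi_I}=(-1)^{N(N-1)/2}$ for all $I\in\mathcal M(A)$. So even at the natural value $n=-\frac{N-1}{2}$ it fails whenever $N\equiv 2,3\pmod 4$, because the prefactor $(2i)^{N(N-1)/2}$ is then not real. The formula exactly as displayed therefore solves the bilinear system \eqref{eq-Hirota1}--\eqref{eq-Hirota2}, but does not by itself yield a solution of \eqref{DS}.

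The missing idea is the normalization the paper carries out immediately after the lemma. It uses only the homogeneity you already observed:
\begin{itemize}
\item multiply every $\tau^{(m)}$ by $(2i)^{-N(N-1)/2}$;
\item take $n=-\frac{N-1}{2}$. This is legitimate, since $n$ enters only through $\lambda_j^n=e^{-in\psi_j}$ and the bilinear equations only relate $n$ to $n\pm1$.
\end{itemize}
Then $\tau^{(n)}=\sum_I\Delta_I(A)S_I(\psi)e^{\Theta_I}$ is real, and $\tau^{(n\pm1)}=\sum_I\Delta_I(A)S_I(\psi)e^{\mp i\Psi_I}e^{\Theta_I}$. Hence $\tau^{(n+1)}=(\tau^{(n-1)})^*$ holds term by term, and there is no need to fall back on another formulation.
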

We are interested in the combinatorics of solitons in the $(x,y)$-plane arising from $Q$. Soliton solutions are regular in the whole $(x,y)$-plane. Since $Q$ is defined using the logarithm of the $\tau$-function, we ask that $A \in \mathrm{Gr}(N,M)_{\geq 0}$. Finally, we have \textit{reality constraints} $\tau$-functions must satisfy to get solutions $q$ and $Q$:
\begin{enumerate}
    \item $\tau^{(n)}$ must be real;
    \item $\tau^{(n+1)}=(\tau^{(n-1)})^*$.
\end{enumerate}
We satisfy these constraints by replacing $\tau^{(n)}$ by $(2i)^{-\frac{N(N-1)}{2}}\tau^{(n)}$ and imposing $n=-\frac{N-1}{2}$.

\begin{example}\label{example-linesolitonDS}
    Consider $N=1$ and $M=2$. Let $\tau^{(n)}(x,y,t)=e^{\theta_1-in\psi_1}+e^{\theta_2-in\psi_2}$, for $\psi_1<\psi_2$. Note that the reality constraints give us $n=0$. We have that
    \begin{align}
        Q(x,y,t)&=\frac{\partial^2}{\partial x^2}\ln(\tau^{(0)})\\
        &=\frac{\partial^2}{\partial x^2}\ln\left(2e^{\frac{1}{2}(\theta_1+\theta_2)}\cosh\left(\frac{1}{2}(\theta_1-\theta_2)\right)\right)\\
        &=(\sin(\psi_1)-\sin(\psi_2))^2\text{sech}^2\left(\frac12(\theta_1-\theta_2)\right).
    \end{align}
    For fixed $t\in\mathbb R$, $Q$ describes a wave, the crest of which is reached when the argument of the hyperbolic secant is equal to $0$, i.e. along the line $\theta_1=\theta_2$ that separates the regions of the plane where $e^{\theta_1}$ dominates with respect to $e^{\theta_2}$, and conversely. Finally, notice that the amplitude of the wave is given by $\sin(\psi_1)-\sin(\psi_2)$, which is zero for horizontal solitons. 
\end{example}

\subsection{Contour plots}\label{sec-contourplots}
In this section, the \textbf{contour plot} is formally defined for both systems. In Section \ref{sec:wronsk} we found that both the $2$DTL and the DSII equations may be solved in terms of $\tau$-functions of Wronskian type, which are expressible as  a combination of exponential functions with linear arguments, dependent only on a finite number of variables (cf. \cref{generltau}). We also found that plotting the results of these models leads to a subdivision of $\mathbb{R}^2$ into chambers. In each chamber an exponential function appearing in $\tau_A$ dominates with respect to the others. The chambers are bounded by lines that are obtained by the balancing between two dominant exponential functions. More precisely, if we call $\Theta_I$ and $\Theta_J$ two adjacent dominant exponential functions, we can approximate $\tau_A$ as the sum of those two exponential functions. Therefore, $u_A$ behaves as the square of the hyperbolic secant of $\Theta_I-\Theta_J$, whose peak is reached when the argument is $0$, providing the bounding line $\Theta_I=\Theta_J$.

It is interesting to note that the union of the chambers, lines, and vertices is a polyhedral complex satisfying balancing conditions at the vertices. In other words, the contour plot is a tropical curve in the plane \cite{maclagan2015introduction}. Now, we specialize this to the cases of study.

\subsubsection{2D Toda lattice}\label{sec-contourplotTL}

For a fixed $I \in \binom{[M]}{N}$ and any collection of real parameters  $\lambda_1, \dots, \lambda_N$, let us define the constants:
\begin{gather}
    \Lambda_I=\sum_{i\in I}\ln \lambda_i, \hspace{2em} Sh_I(\lambda)=\prod_{j<i\in I}\sinh\frac12(\ln \lambda_i-\ln \lambda_j).
\end{gather}
Then we can use the equality
\begin{equation}
    \prod_{j<i \in I}(\lambda_i-\lambda_j)=2^{\frac{N(N-1)}{2}}e^{\frac{N-1}{2}\Lambda_I}Sh_I(\lambda)
\end{equation}
to rewrite \cref{eq:tau-toda} as
\begin{equation}\label{eq-tauexpression}
    \tau^{(n)}_A(x_{-1},x_1)=2^{\frac{N(N-1)}{2}}\sum_{I\in\mathcal M(A)}e^{\frac{N-1}{2}\Lambda_I}\Delta_I(A)Sh_I(\lambda)e^{\Theta_I(x_{-1},x_1,n)},
\end{equation}
where we have set the constant term $\theta_0$ to be equal to zero. 

Let $A\in\mathrm{Gr}(N,M)_{\geq0}$ and define
\begin{equation}
    f^{(n)}_{\mathcal M(A)}(x_{-1},x_1)=\max_{I\in\mathcal M(A)}\biggl\{\Theta_I(x_{-1},x_1,n)+\biggl(\frac{N-1}{2}\biggr)\Lambda_I+\ln(\Delta_I(A)Sh_I(\lambda))\biggr\}.
\end{equation}
The \emph{corner locus}\footnote{By which we mean, the locus where the function $f^{(n)}_{\mathcal M(A)}(x_{-1},x_1)$ is not linear.} of the function $f^{(n)}_{\mathcal M(A)}(x_{-1},x_1)$ defines a tropical curve. Thus, we define the contour plot of a soliton solution as follows:

\begin{definition}\label{def-TLcontourplot}
    Let $n\in\mathbb Z$ be fixed and $V_{n,A}(x_{-1},x_1)$ be a solution of the $2$D Toda lattice equation constructed from $A\in\text{Gr}(N,M)_{\geq0}$. The \textbf{contour plot} of the solution $V_{n,A}(x_{-1},x_1)$ is the corner locus in the $(x_1, x_{-1})$--plane of the function $f^{(n)}_{\mathcal M(A)}(x_{-1},x_1)$, and is denoted by $\mathcal C(V_{n,A})$.
\end{definition}

\begin{remark}
    We are also interested in the study of the soliton contour plot when $x_{-1}$ is fixed. The definition is the same, but in this case the contour plot is sketched in the $(x_1,n)$-plane instead of the $(x_1,x_{-1})$-plane.
\end{remark}

\begin{remark}
    As pointed out in \cref{rmk-biondini} and illustrated in \cref{example-linesoliton}, our convention makes the real regular soliton solution $V_{n,A}(x_{-1},x_1)$ negative. Consequently, the contour plots of $V_{n,A}(x_{-1},x_1)$ and $|V_{n,A}(x_{-1},x_1)|$ coincide. Hence, the corner locus of $f^{(n)}_{\mathcal M(A)}(x_{-1},x_1)$ provides the contour plot of $|V_{n,A}(x_{-1},x_1)|$, and therefore also of $V_{n,A}(x_{-1},x_1)$. This justifies \cref{def-TLcontourplot}.
\end{remark}

Furthermore, we are interested in describing the \textit{asymptotic} behavior of soliton solutions, by which we mean, for large scale variables $(x_{-1},x_1,n)$. In particular, the constant terms $\displaystyle\frac{N-1}{2}\Lambda_I+\ln(\Delta_I(A)Sh_I(\lambda))$ do not contribute. We therefore approximate the function $f^{(n)}_{\mathcal M(A)}(x_{-1},x_1)$ by

\begin{equation}
    g^{(n)}_{\mathcal M(A)}(x_{-1},x_1)=\max_{I\in\mathcal M(A)}\{\Theta_I(x_{-1},x_1,n)\}=\lim_{\varepsilon\to0}\varepsilon f^{(\frac{n}{\varepsilon})}_{\mathcal M(A)}\biggl(\frac{x_{-1}}{\varepsilon},\frac{x_1}{\varepsilon}\biggr).
\end{equation}

\begin{definition} \label{def_cont_toda}
    Let $n\in\mathbb Z$ fixed and $V_{n,A}(x_{-1},x_1)$ be a solution of the $2$D-Toda lattice equation constructed from $A\in\text{Gr}(N,M)_{\geq0}$. The \textbf{asymptotic contour plot} of the solution $V_{n,A}(x_{-1},x_1)$ is the corner locus in the $(x_1,x_{-1})$-plane of the function $g^{(n)}_{\mathcal M(A)}(x_{-1},x_1)$, and is denoted by $\mathcal C^{(n)}(\mathcal M(A))$.
\end{definition}

\begin{remark}
    The contour plots $\mathcal{C}(V_{n,A})$ and $\mathcal{C}^{(n)}(\mathcal{M}(A))$ have analogues for when we fix the variable $x_{-1}$ instead of $n$, which we denote by $\mathcal{C}(V_{{x_{-1}},A})$ and $\mathcal{C}^{(x_{-1})}(\mathcal{M}(A))$, respectively
\end{remark}

\begin{remark}
    The contour plot locates the precise position of the peaks of the waves generated by the soliton solutions. However, our algorithms can only be applied in the $(x_1,x_{-1})$-plane (resp. in the $(x_1,n)$-plane) when $n$ (resp. $x_{-1}$) is sufficiently large. Hence, the object of study for this paper is the asymptotic contour plot.
\end{remark}

\subsubsection{Davey-Stewartson}

Recall how, once we impose reality constraints, solutions of the DSII system in the defocusing case are of the form
\begin{equation}
    \tau_A^{(-\frac{N-1}{2})}(x,y,t)=\tau_A(x,y,t)=\sum_{I\in\mathcal M(A)}\Delta_I(A)S_I(\psi)e^{\Theta_I(x,y,t)},
\end{equation}
where $S_I(\psi)=\prod_{j<i\in I}\sin\frac12(\psi_i-\psi_j)$ and $\Theta_I(x,y,t)=\sum_{i\in I}2(x\sin\psi_i+y\cos\psi_i-t\sin(2\psi_i))=\sum_{i\in I}\theta_i(x,y,t)$. Let $A\in\text{Gr}(N,M)_{\geq0}$, and define the two functions
\begin{align}
    &f_{\mathcal M(A)}(x,y,t)=\max_{I\in\mathcal M(A)}\{\Theta_I(x,y,t)+\ln(\Delta_I(A)S_I(\psi))\},\\
    &g_{\mathcal M(A)}(x,y,t)=\max_{I\in\mathcal M(A)}\{\Theta_I(x,y,t)\}=\lim_{\varepsilon\to0}\varepsilon f_{\mathcal{M}(A)}\left(\frac{x}{\varepsilon},\frac{y}{\varepsilon},\frac{t}{\varepsilon}\right).
\end{align}

\begin{definition}
    Let $t\in\mathbb R$ fixed and $Q_A(x,y,t)$ be a solution of the DSII equation defocusing case constructed from $A\in\text{Gr}(N,M)_{\geq0}$. The \textbf{contour plot} of the solution $Q_A(x,y,t)$ is the corner locus in the $(x,y)$--plane of the function $f_{\mathcal M(A)}(x,y,t)$, and is denoted by $\mathcal C(Q_{t,A})$.
\end{definition}

\begin{definition}
    Let $t\in\mathbb R$ be fixed and $Q_A(x,y,t)$ be a solution of the DSII equation defocusing case constructed from $A\in\text{Gr}(N,M)_{\geq0}$. The \textbf{asymptotic contour plot} of the solution $Q_A(x,y,t)$ is the corner locus in the $(x,y)$--plane of the function $g_{\mathcal M(A)}(x,y,t)$, and is denoted by $\mathcal {C}^{(t)}(\mathcal{M}(A))$.
\end{definition}

\begin{remark}
    Also in this case, our algorithm applies when $t\ll0$. Thus, we study the asymptotic contour plot of the soliton solution.
\end{remark}

\subsection{Decompositions of Grassmannians}
\label{sec:grassmannian}
In this subsection, we briefly recall the necessary facts on real Grassmannians and their decompositions. For a more detailed account, we defer to \cite{talaska2013network}. We also found \cite[Section 5]{kodama} to be an accessible reference.

We have reviewed how soliton solutions of the 2D Toda lattice and DSII integrable systems are parametrized by points in Grassmannians. As it happens, there exist sufficiently fine decompositions of the Grassmannians whereby knowing the component we source the matrix parameter from determines non-trivial information about the corresponding soliton solution. Grassmannians admit various decompositions that are governed by combinatorial data. We outline the relevant decompositions for our work in the following. 

\subsubsection{Schubert decomposition}
To begin with, we start with the classical Schubert decomposition. Each linear space $L\in\mathrm{Gr}(k,n)$ is represented by a $k\times n$ matrix $A$ of rank $k$ in reduced row echelon form with pivot set $I$, where $I\subset[n]$. For fixed $I\subset[n]$, we define the \textit{Schubert cell} $\Omega_I$ as the set of those linear spaces yielding pivot set $I$. We define the Schubert decomposition as
\begin{equation}
    \mathrm{Gr}(k,n)=\bigcup_{I\in\binom{[n]}{k}}\Omega_I.
\end{equation}

The Schubert decomposition may equivalently be indexed by permutations. Let $I=\{i_1<\dots<i_k\}\in\binom{[n]}{k}$, then we define an associated partition of $\mathrm{dim}(\Omega_I)$, $\lambda=(\lambda_1, \dots, \lambda_k)$, via 
\begin{equation}
    \lambda_j=n-k-(i_j-j).
\end{equation}

Let $S_n$ be the symmetric group on $n$ elements and define $s_j=(j\,j+1)$ for $j=1,\dots,n-1$. Moreover, we define a subgroup 
\begin{equation}
    P_k=\langle s_1,\dots,\hat{s}_{n-k},\dots,s_{n-1}\rangle\cong S_{n-k}\times S_k,
\end{equation}
and the quotient
\begin{equation}
    S_n^{(k)}=\faktor{S_n}{P_k}.
\end{equation}

The quotient $S_n^{(k)}$ may be more explicitly described. To do this, we call an expression $\sigma=s_{i_1}\cdots s_{i_l}$ for $\sigma\in S_n$ \textbf{reduced} if $l$ is minimal. We call $l$ the length of $\sigma$, and denote it by $\ell(\sigma)$.

Then, we may express $S_n^{(k)}$ as
\begin{equation}
    S_n^{(k)}=\{\sigma\in S_n\mid \ell(\sigma\tau)\ge\ell(\sigma)\,\textrm{for all}\,\tau\in P_k\}.
\end{equation}

Next, we introduce a combinatorial game to relate the partitions $\lambda$ to elements of $S_n^{(k)}$. Consider the following $k\times (n-k)$ rectangle:

\begin{table}[H]
    \begin{tabular}{|c|c|c|}
        \hline $s_{n-k}$ &  \makebox[3cm]{$\cdots$}  &  $s_1$\\
          \hline$\vdots$ & $\ddots$ & $\vdots$\\
          \hline $s_{n-1}$ & $\cdots$ & $s_k$\\
          \hline
    \end{tabular}
\end{table}

By construction, we have $\lambda\subset (n-k)^k$. We consider the Young diagram $Y_\lambda$ of $\lambda$ embedded into the rectangle above in English convention.

We define a \textbf{reading order} of $Y_\lambda$ to be a labeling of its boxes by $1,\dots,|\lambda|$, such that the labels increase going from right to left in the rows and from bottom to top in the columns. Given a reading order of $Y_\lambda$ and taking the product over all $s_j$ in the boxes chronologically along the reading order, we obtain a permutation $w$. We have $w\in S_n^{(k)}$ and the thus obtained expression is a reduced expression. In fact, all reduced expressions of $w\in S_n^{(k)}$ arise from a reading order of $Y_\lambda$. We write $\mathbf{w}$ when we want to specify a reading order expressing $w$.

\begin{example}
\label{ex-filling}
Let $n=6$ and $k=3$. Let $I=\{1,3,5\}\in\binom{[6]}{3}$. We obtain $\lambda=(3,2,1)$. We obtain the following Young diagram $Y_\lambda$ with reflections $s_j$ as follows

\begin{equation}
    \ytableaushort{{s_3}{s_2}{s_1},{s_4}{s_3},{s_5}}
\end{equation}

We consider the following two reading orders:

\begin{equation}
    \ytableaushort{{6}{5}{4},{3}{2},{1}}\quad\textrm{and}\quad \ytableaushort{{6}{5}{3},{4}{1},{2}}
\end{equation}

The first reading order yields the following reduced expression
\begin{equation}
    s_5s_3s_4s_1s_2s_3=(5\,6)(3\,4)(4\,5)(1\,2)(2\,3)(3\,4)=(1\,2\,4)(3\,6\,5),
\end{equation}
whereas the second reading order yields
\begin{equation}
    s_3s_5s_1s_4s_2s_3=(3\,4)(5\,6)(1\,2)(4\,5)(2\,3)(3\,4)=(1\,2\,4)(3\,6\,5).
\end{equation}
\end{example}

Indeed, this construction of a permutation $w$ from $\lambda$ induces a bijection between permutations $w$ in $S_n^{(k)}$ and partitions $\lambda_w$ in $(n-k)^k$. From this, we obtain a re-indexing of the Schubert decomposition:

\begin{equation}
    \mathrm{Gr}(k,n)=\bigcup_{w\in S_n^{(k)}}\Omega_w,
\end{equation}

where $\mathrm{dim}(\Omega_w)=l(w)=|\lambda_w|$. We denote the restriction of the decomposition to the totally non--negative Grassmannian as

\begin{equation}
    \mathrm{Gr}(k,n)_{\ge0}=\bigcup_{w\in S_n^{(k)}}\Omega_w^{>0}.
\end{equation}

\subsubsection{Deodhar decomposition}
The Deodhar decomposition is a refinement of the Schubert decomposition. It is indexed by pairs of permutations. We first need to introduce some combinatorial decorations of Young diagrams.

\begin{definition}
    A $\Le$--diagram is a filling of the boxes of a Young diagram by $\LeBox$ and $\LeCircle$ satisfying the $\Le$--property, i.e. a filled  box $\LeCircle$ cannot have    an empty box $\LeBox$ above \emph{and} to the left of it. A $\Le$--diagram is called \emph{irreducible} if each row and column contains at least one blank box $\LeBox$.
\end{definition}

As proved in \cite{postnikov2006total}, the set of irreducible $\Le$--diagrams contained in $k\times (n-k)$ rectangles is in bijection with the permutations without fixed points in $S_n$, called \emph{derangements}. We explain how to construct a derangement from a $\Le$--diagram. For this, we use a combinatorial game on Young diagrams called \emph{reduced pipedreams}. 

Given a $\Le$--diagram, we replace a blank box $\LeBox$ with a box containing \emph{elbow pipes} $\inlinebridge$ and a box filled $\LeCircle$ with the \emph{cross} $\inlinecross$. We label the south east boundary edges by $1,\dots,n$ proceeding from the top to the bottom.
Now, we label the north-west  by its opposing edge on the south-east boundary.
That is, we traverse along a pipe starting at the south--east boundary edge labeled $i$
take a right at a white vertex a left at a black one, giving the north-west boundary its first index. 
Then again, start at a label $i$ at the south--east boundary and follow the pipe with the following adjusted rule: at a white vertex,  turn left, while at a black vertex, turn right.  Again, we end up at a north--west border edge giving it a second label. We write $[i,j]$ at a north--west border edge that obtains the labels $i$ and $j$ through these two procedures. For example, the $\Le$--diagram below
\begin{equation}
\begin{tikzpicture}[scale=0.8, baseline=(current bounding box.center)]
\useasboundingbox (-0.8, 0.5) rectangle (3.2, -3.2);

    \draw (0,0) rectangle (1,-1); 
    \draw (1,0) rectangle (2,-1); 
    \draw (2,0) rectangle (3,-1); 
    \draw (0,-1) rectangle (1,-2); 
    \draw (0,-2) rectangle (1,-3); 
    \draw (1,-1) rectangle (2,-2); 

    \drawwhitestone{0}{0}     
    \drawwhitestone{0}{-1} 

\end{tikzpicture}
\quad \longrightarrow \quad 
\begin{tikzpicture}[scale=0.8, baseline=(current bounding box.center)]
\useasboundingbox (-0.8, 0.5) rectangle (3.2, -3.2);

    \draw (0,0) rectangle (1,-1); 
    \draw (1,0) rectangle (2,-1); 
    \draw (2,0) rectangle (3,-1); 
    \draw (0,-1) rectangle (1,-2); 
    \draw (0,-2) rectangle (1,-3); 
    \draw (1,-1) rectangle (2,-2); 

    \node[right] at (3,-0.5) {$1$};
    \node[below] at (2.5,-1) {$2$};
    \node[right] at (2,-1.5) {$3$};
    \node[below] at (1.5,-2) {$4$};
    \node[right] at (1,-2.5) {$5$};
    \node[below] at (0.5,-3) {$6$};

    \node[above] at (2.5,0) {$2$};
    \node[above] at (1.5,0) {$4$};
    \node[above] at (0.5,0) {$6$};
    \node[left] at (0,-0.5) {$1$};
    \node[left] at (0,-1.5) {$3$};
    \node[left] at (0,-2.5) {$5$};

    \drawbridge{1}{0}       
    \drawcross{0}{0}
    \drawbridge{2}{0}
    \drawcross{0}{-1}
    \drawbridge{0}{-2}
    \drawbridge{1}{-1}
\end{tikzpicture}
\quad \longrightarrow \quad 
\begin{tikzpicture}[scale=0.8, baseline=(current bounding box.center)]
\useasboundingbox (-0.8, 0.5) rectangle (3.2, -3.2);

    \draw (0,0) rectangle (1,-1); 
    \draw (1,0) rectangle (2,-1); 
    \draw (2,0) rectangle (3,-1); 
    \draw (0,-1) rectangle (1,-2); 
    \draw (0,-2) rectangle (1,-3); 
    \draw (1,-1) rectangle (2,-2); 

    \node[right] at (3,-0.5) {$1$};
    \node[below] at (2.5,-1) {$2$};
    \node[right] at (2,-1.5) {$3$};
    \node[below] at (1.5,-2) {$4$};
    \node[right] at (1,-2.5) {$5$};
    \node[below] at (0.5,-3) {$6$};

    \node[above] at (2.5,0) {$[2,1]$};
    \node[above] at (1.5,0) {$[4,2]$};
    \node[above] at (0.5,0) {$[6,5]$};
    \node[left] at (0,-0.5) {$[1,3]$};
    \node[left] at (0,-1.5) {$[3,4]$};
    \node[left] at (0,-2.5) {$[5,6]$};

    \drawbridge{1}{0}       
    \drawcross{0}{0}
    \drawbridge{2}{0}
    \drawcross{0}{-1}
    \drawbridge{0}{-2}
    \drawbridge{1}{-1}
\end{tikzpicture}
%
%
\end{equation}
produces the derangement $(1\,3\,4\,2)(5\,6)$.

  In fact, through this procedure each edge segment obtains two labels, except for the starting edges. Removing these starting segments with only one label as well as the adjacent vertices one obtains the \emph{reduced pipedreams}.
  These reduced pipedreams become important
  for our later purposes, as they are closely connected to contour plots. Additionally, we can label each region of the reduced pipedream as follows: label the southeast region by I; lines $[i,j]$ act by transposing the elements $i$ and $j$ in their respective index sets, so that adjacent regions always share $k-1$ indices\footnote{As we will see in \cref{sec-2DTodalattice} and \cref{sec:Davey-Stewartson}, these labels are related to the labels of dominant exponentials in asymptotic contour plots.}. These notions are illustrated for our running example in \cref{fig:region-labels}. We now relate these combinatorics to the permutation setting above.
\begin{figure}[h]
    \centering
   \begin{tikzpicture}[scale=0.8, baseline=(current bounding box.center)]
\useasboundingbox (-0.8, 0.5) rectangle (3.2, -3.2);

    \draw (0,0) rectangle (1,-1); 
    \draw (1,0) rectangle (2,-1); 
    \draw (2,0) rectangle (3,-1); 
    \draw (0,-1) rectangle (1,-2); 
    \draw (0,-2) rectangle (1,-3); 
    \draw (1,-1) rectangle (2,-2);

    \node[above] at (2.5,0) {$[1,2]$};
    \node[above] at (1.5,0) {$[2,4]$};
    \node[above] at (0.5,0) {$[5,6]$};
    \node[left] at (0,-0.5) {$[1,3]$};
    \node[left] at (0,-1.5) {$[3,4]$};
    \node[left] at (0,-2.5) {$[5,6]$};

    \drawbridge{1}{0}       
    \drawcross{0}{0}
    \drawboundarybridge{2}{0}
    \drawcross{0}{-1}
    \drawboundarybridge{0}{-2}
    \drawboundarybridge{1}{-1}
\end{tikzpicture}
\hspace{2 em}
\begin{tikzpicture}[scale=0.8, baseline=(current bounding box.center)]
\useasboundingbox (-0.8, 0.5) rectangle (3.2, -3.2);

   \node at (2,-2) {$135$};
   \node at (2,0) {$235$};
   \node at (1,0) {$345$};
   \node at (0,0) {$346$};
   \node at (0,-1) {$146$};
   \node at (0,-2) {$136$};
   \node at (1,-1) {$145$};

    \drawbridge{1}{0}       
    \drawcross{0}{0}
    \drawboundarybridge{2}{0}
    \drawcross{0}{-1}
    \drawboundarybridge{0}{-2}
    \drawboundarybridge{1}{-1}
\end{tikzpicture}
    \caption{A reduced pipedream with its corresponding labeling of regions.}
    \label{fig:region-labels}
\end{figure}

\begin{definition}
    Let $w\in S_n^{(k)}$ and consider a reduced expression $\mathbf{w}=s_{i_1}\cdots s_{i_l}$. Setting some factors to $1$, we obtain a subexpression $\mathbf{v}$ of a permutation $v$. In particular, we have $v\le w$ w.r.t. the Bruhat order.

    Moreover, we define for $1\le k\le l$ the expression $\mathbf{v}_{(k)}$ as the product of the $k$ leftmost factors. We call $\mathbf{v}$ a \emph{distinguished subexpression} if

    \begin{equation}
        v_{(j-1)}\le v_{(j-1)}s_{i_j}.
    \end{equation}
    for all $j=1,\dots,l$ and write $\mathbf{v}\prec \mathbf{w}$. If the inequalities are strict, we call $\mathbf{v}$ a \emph{positive distinguished subexpression (PDS)}.
\end{definition}

If $v\le w$ and $\mathbf{w}$ is a reduced expression of $w$, then there is a unique PDS of $\mathbf{w}$ expressing $v$, which we denote by $\mathbf{v}_+$ (see \cite[Lemma 3.5]{marsh2004parametrizations}).

Let $w\in S_n^{(k)}$ be a permutation and $\Gamma$ a $\Le$--diagram on the corresponding Young diagram. We fix a reading order $\mathbf{w}=s_{i_1}\cdots s_{i_l}$. We obtain a PDS $\mathbf{v}$ by setting all factors corresponding to blank boxes to $1$. If $\Gamma$ is irreducible, we  obtain the corresponding derangement as $\pi=vw^{-1}$.

In \cite{talaska2013network} an explicit description of Deodhar decomposition via $\Le$--diagrams is derived. We need the following notions to state it.

\begin{definition}
    Let $\Gamma$ be a $\Le$--diagram on the Young diagram $Y$. Moreover, let $\mathbf{v}\prec\mathbf{w}$ be the corresponding expressions for permutations $v,w$ with $w\in S_n^{(k)}$ and $v\le w$.

    Let $b$ be a box in $\Gamma$. We denote the Young diagram weakly south--east of $b$ by $Y_b^{\textrm{in}}$, and its complement by $Y_b^{\textrm{out}}$. Recall that $\mathbf{w}$ corresponds to a reading order of $Y$. Thus, restricting to $Y_b^{\textrm{in}}$ and $Y_b^{\textrm{out}}$, we obtain reduced expressions $\mathbf{w}_b^{\textrm{in}},\mathbf{w}_b^{\textrm{out}}$ for permutations ${w}_b^{\textrm{in}},{w}_b^{\textrm{out}}$ respectively, together with PDS ${v}_b^{\textrm{in}},{v}_b^{\textrm{out}}$. (When the box $b$ is clear from the context, we may drop the notation.)
\end{definition}

The Deodhar decomposition restricted to the TNN Grassmannian is a refinement of the Schubert decomposition of the shape

\begin{equation}
    \mathrm{Gr}(k,n)_{\ge0}=\bigcup_{w\in S_n^{(k)}} \Omega_w^{>0}\quad\textrm{and}\quad \Omega_w^{>0}=\bigcup_{v\le w} \mathcal{P}_{v,w}^{>0},
\end{equation}

where $\mathcal{P}_{v,w}^{>0}$ is the \emph{Deodhar component} indexed by a permutations $v,w$ with $w\in S_n^{(k)}$ and $v\le w$. We will sometimes denote $\mathcal{P}_{v,w}^{>0}$ by $\mathcal{P}_\pi^{>0}$ if $\pi=vw^{-1}$, whenever we want to emphasize the derangement $\pi$. Deodhar components are characterized by the non--vanishing of Pl\"ucker coordinates. We use this characterization as the definition:

\begin{definition}[{\cite[Theorem 7.8]{talaska2013network}}]
    Let $w\in S_n^{(k)}$, $\mathbf{w}$ a reduced expression with PDS $\mathbf{v}\prec\mathbf{w}$. Let $\lambda$ be the partition corresponding to $w$. Moreover, let $\Gamma$ be the respective $\Le$--diagram. Let $I(\lambda)=w\{n-k+1,\dots,n\}$ the pivot set and $b$ a box in $\Gamma$. If $b$ is a blank box, we define $I_b=v^{\textrm{in}}(w^{\textrm{in}})^{-1}I(\lambda)$. If $b$ contains a white stone, we set $I_b=v^{\textrm{in}}s_b(w^{\textrm{in}})^{-1}I(\lambda)$, $s_b$ is the transposition corresponding to the box.

    Then, we define $\mathcal{P}_{v,w}^{>0}$ to be the subset of $\mathrm{Gr}(k,n)_{\ge0}$ consisting of Pl\"ucker vectors $\Delta_I$ satisfying
    \begin{enumerate}
        \item $\Delta_{I^b}=0$ if $b$ is a box with a white stone,
        \item $\Delta_{I^b}\neq0$ if $b$ is a blank box,
        \item $\Delta_{I_0}\neq0$ and
        \item $\Delta_{J}=0$ for all $J$ lexicographically smaller than $I_0$.
    \end{enumerate}
\end{definition}

\section{Duality}\label{sec-dual}

The goal of this section is to describe a correspondence between soliton solutions constructed from a point in the Grassmannian $\text{Gr}(N, M)$, and solutions constructed from a point in the dual Grassmannian $\text{Gr}(M-N,M)$.

\begin{definition}\label{def-irrmatrix}
    Let $A$ be an $N\times M$ matrix. We say that $A$ is \textbf{irreducible} if

    \begin{enumerate}
        \item there is at least one non-zero element in each column of $A$;
        \item once we put $A$ in reduced row echelon form (RREF), there is at least one non-zero element other than the pivot per row. 
    \end{enumerate}

    Equivalently, we can state the definition in terms of the matroid $\mathcal{M}(A)$: $A$ is irreducible if

    \begin{enumerate}
        \item for any $i\in[M]$, there exists $I\in\mathcal{M}(A)$ such that $i\in I$;
        \item there is no common index among the elements of $\mathcal{M}(A)$.
    \end{enumerate}
\end{definition}

Let $A \in \mathrm{Gr}(N,M)$ be any irreducible matrix. It is always expressible as
\begin{equation}
    A=
    \begin{pmatrix}
        I_N|G
    \end{pmatrix}
    P,
\end{equation}
where $I_N$ is the $N \times N$ identity matrix of pivot columns of $A$, $G$ is the $N \times (M-N)$ matrix of non-pivot columns of $A$, and $P$ is a permutation matrix satisfying that $P^T=P^{-1}.$ We define $B_A \in \mathrm{Gr}(M-N, M)$, the \textit{dual} matrix to $A$, by
\begin{equation}
    B_A=(-1)^{\sigma} \det P\cdot
    \begin{pmatrix}
        -G^T|I_{M-N}
    \end{pmatrix}
    PD, 
\end{equation}
where $\sigma=M(M+1)/2+N(N+1)/2$ and $D=\text{diag}(-1,1,\dots,(-1)^M)$. 
\begin{proposition}[Proposition 7.1, \cite{kodama}; Lemma C.4, \cite{KP2}]
    Let $A\in \mathcal{P}_{\pi}^{>0} \subset \mathrm{Gr}(N,M)$ be an irreducible matrix. Then the matrix $B_A$ satisfies that: 
    \begin{enumerate}
        \item $B_A \in \mathcal{P}_{{\pi}^{-1}}^{>0} \subset \mathrm{Gr}(M-N, M) $.
        \item $B_A$ is irreducible.
        \item If $I \in \binom{[M]}{N}$ and $J=[M]\backslash I$, then $\Delta_I(A) = \Delta_J(B_A)$.
        \item If $\{ j_1, \dots, j_{M-N}\}$ is the non-pivot set of $A$, then the pivot set of $B_A$ is given by $\{\pi(j_1), \dots, \pi(j_{M-N})\}$. 
    \end{enumerate}
\end{proposition}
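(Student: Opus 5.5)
The plan is to prove item (3) first, since the other items follow from it. We reduce to the case $P=\mathrm{id}$ by a column permutation and then invoke the classical complementary-minor identity. For $A_0=(I_N|G)$ and $B_0=(-G^T|I_{M-N})$, the rows of $B_0$ span the orthogonal complement of the row space of $A_0$, because $A_0B_0^T=-G+G=0$. The standard identity for a point of $\mathrm{Gr}(N,M)$ and its orthogonal complement gives
\begin{equation}
\Delta_J(B_0)=\varepsilon(I)\,\Delta_I(A_0),\qquad \varepsilon(I)=(-1)^{\sum_{i\in I} i - N(N+1)/2},\quad J=[M]\setminus I.
\end{equation}
I would verify this directly. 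Expand $\Delta_I(A_0)$ as a minor of $G$ with rows indexed by $[N]\setminus I$ and columns by $I\cap([M]\setminus[N])$. Do the same for $\Delta_J(B_0)$: it is the transposed minor of $-G$, up to the sign of the shuffle, and the two shuffle signs combine to $\varepsilon(I)$.

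Next I would reinstate $P$ and $D$. Right multiplication by $P$ permutes the column labels. It therefore changes each maximal minor by the sign of the induced reordering, and the product of those signs for $I$ and for $J$ is $\det P$ times a factor that recombines with $\varepsilon$. Right multiplication by $D$ multiplies $\Delta_J(B)$ by $\prod_{j\in J}(-1)^j=(-1)^{M(M+1)/2-\sum_{i\in I}i}$. This cancels the $I$-dependence in $\varepsilon(I)$ and leaves the global sign $(-1)^{M(M+1)/2+N(N+1)/2}=(-1)^\sigma$. The prefactor $(-1)^\sigma\det P$ in the definition of $B_A$ is chosen exactly to kill these remaining constants, which yields $\Delta_I(A)=\Delta_J(B_A)$. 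The main obstacle is this sign bookkeeping with $P$. Rather than track it abstractly, I would check that the $I$-dependent parts cancel, and then fix the global sign on the single pivot set $I_0$, where $\Delta_{I_0}(A)=1$ and $\Delta_{[M]\setminus I_0}(B_A)=(-1)^\sigma\det P\cdot(\pm1)$ can be computed by hand.

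Item (3) immediately implies that $B_A$ is totally nonnegative, and that $\mathcal M(B_A)=\{[M]\setminus I: I\in\mathcal M(A)\}$. Irreducibility in the matroid formulation of \cref{def-irrmatrix} is self-dual under complementation: an index lying in no basis of $A$ corresponds to an index common to all bases of $B_A$, and conversely. This gives item (2).

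For item (1), I would use the fact from \cite{talaska2013network} (equivalently Postnikov) that the positroid/Deodhar cell of a TNN point is determined by its matroid, with $\pi$ read off from the Grassmann necklace. Under complementation of bases, the Grassmann necklace $(I_1,\dots,I_M)$ of $A$, where $I_a$ is the $\le_a$-lex-minimal basis, goes to the necklace of complements. These complements are $\le_a$-lex-maximal, which corresponds to reversing the cyclic order of the minimal bases. Chasing the necklace-to-permutation rule $\pi(a)=b$ iff $I_{a+1}=I_a\setminus\{a\}\cup\{b\}$ shows that the dual necklace encodes $\pi^{-1}$. This is the step I would cite as in \cite[Lemma C.4]{KP2} rather than redo; the cancelling signs above ensure that no $D$-twist alters positivity.

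For item (4), the pivot set of $B_A$ is its lex-minimal basis $I_1(B_A)$, which is the complement of the lex-maximal basis of $A$. The necklace relation identifies this set with $\pi$ applied to the complement of $I_1(A)$, the non-pivot set $\{j_1,\dots,j_{M-N}\}$, with $\pi$ mapping each $j_r$ to its partner.
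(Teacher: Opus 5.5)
The paper gives no proof of this statement; it is quoted from the two references in its header. Your argument is therefore a genuinely different and largely self-contained route.

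The $I$-dependent part of your sign bookkeeping for (3) is right. Suppose $P$ sends column $k$ of $(I_N|G)$ to column $p(k)$, and $s_I,s_J$ are the induced reordering signs. Then $\det P=s_Is_J(-1)^{\sum I+\sum p^{-1}(I)}$. Combining this with $\varepsilon(p^{-1}(I))$ and the factor $(-1)^{\sum J}$ from $D$ gives exactly $\Delta_J\bigl((-G^T|I_{M-N})PD\bigr)=(-1)^\sigma\det P\,\Delta_I(A)$. Item (2), via complementation exchanging loops and coloops, is also fine.

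Two steps need repair. The first is the prefactor, which does not do what you say. The scalar $(-1)^\sigma\det P$ multiplies an $(M-N)\times M$ matrix, so it multiplies every maximal minor by $\bigl((-1)^\sigma\det P\bigr)^{M-N}$. This gives $\Delta_J(B_A)=\bigl((-1)^\sigma\det P\bigr)^{M-N+1}\Delta_I(A)$. For $M-N$ even and $(-1)^\sigma\det P=-1$ this is $-\Delta_I(A)$. For example, $N=1$, $M=3$, $P=\mathrm{id}$, $A=(1,g_1,g_2)$ gives $\sigma=7$, and $\Delta_{\{2,3\}}(B_A)=-1$ while $\Delta_{\{1\}}(A)=1$. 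Your proposed calibration on $I_0$ would expose this rather than confirm the sign.

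So (3) holds literally only if the sign is put on a single row of $B_A$, or if it is read projectively. Total nonnegativity, (1), (2), (4) and the later uses in the paper only see $B_A$ projectively, so this is a normalisation fix. You should make that fix explicitly rather than claim the prefactor cancels the constant.

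The second issue is that your necklace chase for (1) conflates two different necklaces. Complementation reverses lex order, so $[M]\setminus I_a$ is the $\le_a$-lex-\emph{maximal} basis of $\mathcal M(B_A)$, not the lex-minimal one. These sets are therefore not the Grassmann necklace of $B_A$. Indeed, passing from $[M]\setminus I_a$ to $[M]\setminus I_{a+1}$, the element $a$ enters rather than leaves, so the rule $I_{a+1}=I_a\setminus\{a\}\cup\{b\}$ cannot be applied to them.

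The missing input is the positroid fact that, for every shifted order, the lex-maximal basis is the image of the lex-minimal one under $\pi=vw^{-1}$. For $a=1$ this is the identity $I'=vw^{-1}(I)$ used in Appendix~\ref{app-fan}. With it:
\begin{enumerate}
\item The Grassmann necklace of $B_A$ is $\bigl(\pi([M]\setminus I_a)\bigr)_a$.
\item The necklace rule applied to this necklace returns the inverse of the permutation it returns for $A$.
\item The dictionary between $vw^{-1}$ and the necklace permutation is an inversion, which commutes with taking inverses.
\item Hence (1) follows from the identification of the $\mathcal P^{>0}_{v,w}$ with positroid cells.
\end{enumerate}
The case $a=1$ of the same fact is exactly what your (4) needs: the pivot set of $B_A$ is $[M]\setminus\pi(I_0)=\pi(\{j_1,\dots,j_{M-N}\})$. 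Since you defer this step to the citation, you are at the paper's level of rigour here. However, the chase as you wrote it does not by itself produce $\pi^{-1}$.
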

We mean to exploit the relationship between $A$ and $B_A$ to classify the asymptotic structure of solitons.
\begin{proposition}\label{prop-dual}
    Let $\tau^{(n)}_A(x_{-1},x_1)$ (resp. $\tau_A(x,y,t)$) be the $\tau$-function for the $2$D Toda lattice (resp. the Davey--Stewartson) equation  associated to a matrix $A \in \text{Gr}(N,M)_{\geq 0}$. In the limit $|x_{-1}|\gg 0$ (resp. $|t|\gg0$), $\tau^{(-n)}_A(-x_{-1},-x_1)$ ( resp. $\tau_A(-x,-y,-t)$) is associated to the matrix $B_A \in \text{Gr}(M-N, M)_{\geq 0}$.
\end{proposition}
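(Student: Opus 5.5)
The idea is to rewrite $\tau^{(-n)}_A(-x_{-1},-x_1)$ as a sum over complements $J=[M]\setminus I$, and then to use item (3) of the preceding proposition, $\Delta_I(A)=\Delta_J(B_A)$. The key identity is additivity of the phases. For the 2D Toda lattice, set $\Theta_{[M]}=\sum_{i=1}^M\theta_i$. Then for $I\in\binom{[M]}{N}$ with complement $J$,
\begin{equation}
\Theta_I(-x_{-1},-x_1,-n)=-\Theta_I(x_{-1},x_1,n)=\Theta_J(x_{-1},x_1,n)-\Theta_{[M]}(x_{-1},x_1,n).
\end{equation}
This holds because each $\theta_i$ is linear in $(x_{-1},x_1,n)$. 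For Davey--Stewartson the analogous identity is $\Theta_I(-x,-y,-t)=\Theta_J(x,y,t)-\Theta_{[M]}(x,y,t)$, since $\theta_i$ is linear in $(x,y,t)$.

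Substituting into \eqref{eq-tauexpression} gives
\begin{equation}
\tau^{(-n)}_A(-x_{-1},-x_1)=2^{\frac{N(N-1)}{2}}e^{-\Theta_{[M]}}\sum_{J\in\mathcal M(B_A)}e^{\frac{N-1}{2}\Lambda_{[M]\setminus J}}Sh_{[M]\setminus J}(\lambda)\Delta_J(B_A)e^{\Theta_J(x_{-1},x_1,n)}.
\end{equation}
Here item (3) turns $I\in\mathcal M(A)$ into $J\in\mathcal M(B_A)$. The prefactor $e^{-\Theta_{[M]}}$ is the exponential of a linear function. It therefore does not affect $V_n=\partial_{x_{-1}}\partial_{x_1}\ln\tau$ or $Q=\partial_x^2\ln\tau$, and it adds the same linear term to every branch of the max defining the contour plot. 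So the corner locus is unchanged.

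Next we pass to the asymptotic regime. Under the scaling $\varepsilon\to0$ of \cref{sec-contourplots}, the constants $\frac{N-1}{2}\Lambda_{[M]\setminus J}+\ln Sh_{[M]\setminus J}(\lambda)$ drop out, and so do the corresponding constants for $B_A$, namely $\frac{M-N-1}{2}\Lambda_J+\ln Sh_J(\lambda)$. This uses that all these constants are finite and real, because $0<\lambda_1<\dots<\lambda_M$ (resp. $\psi_1<\dots<\psi_M$ in a range where the sines are positive), and that the minors are nonnegative. The tropical function then becomes
\begin{equation}
\max_{J\in\mathcal M(B_A)}\Theta_J-\Theta_{[M]}.
\end{equation}
This is $g_{\mathcal M(B_A)}$ up to a global linear term, so the asymptotic contour plot of $\tau_A^{(-n)}(-x_{-1},-x_1)$ coincides with that of $\tau_{B_A}$. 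This is the precise sense of ``is associated to $B_A$''. The regime $|x_{-1}|\gg0$ (resp. $|t|\gg0$) is exactly where dropping the constants is justified. The Davey--Stewartson case is identical, with $S_I(\psi)$ in place of $Sh_I(\lambda)$ and the reality normalisation $n=-\frac{N-1}{2}$ absorbed into the constants.

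The main obstacle is the bookkeeping of these constant prefactors and signs. In particular, I must check that $B_A\in\mathrm{Gr}(M-N,M)_{\ge0}$ with the stated sign conventions; this is supplied by item (1), or by item (3) together with $A$ being TNN. I must also check that the products $Sh_J$ and $S_J$ have a consistent sign, so that $\tau_{B_A}$ is again a positive sum and the logarithm is well defined. I would first verify positivity directly from the ordering of the $\lambda_i$, respectively the $\psi_i$. Having done so, I would state that the exact (non-asymptotic) identification holds only up to these bounded multiplicative constants per term, which is why the proposition is formulated in the asymptotic limit.
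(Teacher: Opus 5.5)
Your proposal is correct and follows essentially the same route as the paper: pass to complements $J=[M]\setminus I$, use $\Delta_I(A)=\Delta_J(B_A)$ and the linearity of the phases to factor out $e^{-\Theta_{[M]}}$, and then discard the bounded per-term constants in the large-variable limit. The paper first absorbs the Vandermonde mismatch into constant phase shifts $\Theta_J^0$ and only then drops them, whereas you drop the constants directly in the tropical limit; your version also keeps the global linear term $-\Theta_{[M]}$ in the tropical function instead of silently dropping it, which makes it slightly more careful than the paper's accompanying remark.
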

\begin{proof}
It is fundamental to notice first that both the $2$DTL equation and the DS equations are, respectively, invariant under the transformations
\begin{align}
    (x_{-1},x_1,n)&\mapsto(-x_{-1},-x_1,-n),\\
    (x,y,t)&\mapsto(-x,-y,-t).
\end{align}
This implies that $\tau^{(-n)}_A(-x,-y)$ and $\tau_A(-x,-y,-t)$ are again solutions to the $2$DTL and DS systems of equations, respectively. We will only explicitly prove the proposition for the 2D Toda lattice, as the proof for the Davey--Stewartson system is analogous. Using $\Delta_I(A)=\Delta_J(B_A)$ from above, and denoting $\Theta_J=\sum_{j\in J} (x_1 \lambda_j +\frac{x_{-1}}{\lambda_j}+n\ln{\lambda_j}) $, we may write:
\begin{align}
    \tau_A(-x_1,-x_{-1},-n)&=e^{-\Theta_{[M]}} \sum_{J\in \binom{[M]}{M-N}} \Delta_J(B_A) \prod_{i>j\in I}(\lambda_i-\lambda_j) e^{\Theta_J}\\
    &=e^{-\Theta_{[M]}} \tau'(x_1,x_{-1},n),
\end{align}
where $I=[M] \backslash J$ and the implicitly defined $\tau'$ matches $\tau_{B_A}$ up to the Vandermonde term $V_I=\prod_{i>j\in I} (\lambda_i-\lambda_j)$. 
In order to correct this in the large variable limit, let us define the following constant:
\begin{align}
    e^{\Theta_J^0}= \prod_{i\in J} \prod_{i\neq j\in [M]} (\lambda_i-\lambda_j)= V_JV_I^{-1}V_{[M]}.
\end{align}
This lets us rewrite $\tau_{B_A}$ as
\begin{align}
    \tau_{B_A}(x_1,x_{-1},n)&=V_{[M]}^{-1}\sum_{J\in \binom{[M]}{M-N}} \Delta_J(B_A) \prod_{i>j\in I}(\lambda_i-\lambda_j) e^{\Theta_J+\Theta_J^0}\\
    &\underset{|x_1| \to \infty}{=} V_{[M]}^{-1}\tau'(x_1,x_{-1},n)
\end{align}
Leading in the large variable limit to:
\begin{align}
    \tau_A(-x_1,-x_{-1},-n)= V_{[M]} e^{-\Theta_{[M]}} \tau_{B_A}(x_1,x_{-1},n)
\end{align}
\end{proof}

\begin{remark}
Note that 
\begin{align}
    \max_{I\in\mathcal M(A)}\{\Theta_I(-x_{-1}, -x_1, -n)\}&=\max_{J\in\mathcal M(B_A)}\{\Theta_{[M]}(-x_{-1},-x_1, -n)+\Theta_J(x_{-1},x_1,n)\}=\max_{J\in\mathcal M(B_A)}\{\Theta_J(x_{-1},x_1,n)\},
\end{align}
so that the asymptotic contour plots of $V_{-n,A}(-x_{-1}, -x_1)$ and $V_{n,B_A}(x_{-1}, x_1)$ match exactly (cf. \cref{def_cont_toda}). An analogous statement holds for the asymptotic contour plots of of the Davey--Stewartson system. Heuristically, the overall term $V_{[M]} e^{-\Theta_{[M]}}$ in front of $\tau_{B_A}$ does not affect the soliton graph, as $V_{n,A}$ is a second derivative of $\ln(\tau)$ (cf. equation \eqref{eq-solution}). This means that the soliton graphs given by $\tau_A(-x_1,-x_{-1},-n)$ and $ \tau_{B_A}(x_1,x_{-1},n)$ agree in the large variable limit, and similarly for $\tau_A(-x, -y, -t)$ and $\tau_{B_A}(x, y,t)$. 
\end{remark}

\section{2D Toda lattice}\label{sec-2DTodalattice}

\subsection{Asymptotics of soliton solutions}\label{sec-Asymptotic}
In the following, we will always assume that the $N\times M$ matrix $A$ is irreducible. The motivation behind this assumption concerns the form of the solution
    \begin{equation}
        V_{n,A}(x_{-1},x_1)=\frac{\partial^2}{\partial x_{-1}\partial x_1}\ln(\tau^{(n)}_A(x_{-1},x_1)).
    \end{equation}
Indeed, suppose that all the entries of the $j$-th column of $A$ are zero. Then, we can get the same solution by considering the matrix $\tilde A$, obtained from $A$ by removing the $j$-th column. If we assume that $A$, once in RREF, has a row in which the only nonzero element is the pivot, then we get the same solution by considering a new $(N-1)\times (M-1)$ matrix $\tilde A$, obtained from $A$ removing the row and column containing the pivot.

In \cref{sec-contourplotTL}, we defined the asymptotic contour plot $\mathcal C^{(n)}(\mathcal M(A))$ as the corner locus of the function
\begin{equation}
    g^{(n)}_{\mathcal M(A)}(x_{-1},x_1)=\max_{I\in\mathcal M(A)}\{\Theta_I(x_{-1},x_1,n))\}.
\end{equation}
One could find that, for a specific choice of parameters $\lambda_1, \dots, \lambda_M$, $\Theta_I=\Theta_J$ for two different multi-indices $I, J \in \binom{[M]}{N}$. This would be unlucky, as we are looking to unambiguously label regions of the plane by the multi-index extremizing $g^{(n)}_{\mathcal{M}(A)}$. Looking at equation \eqref{eq:tau-toda}, one finds that this is hardly ever the case, but in order to avoid it,  we will always assume that the parameters $\{ \lambda_1, \dots, \lambda_M  \}$ are \textit{generic}.
\begin{definition}\label{def-generic}
    A set of positive real parameters $\{\lambda_1, \dots, \lambda_M\}$ is called \textit{generic} if at least \textbf{one} of the following conditions is satisfied:
    \begin{enumerate}
        \item For all $I,J\in\binom{[M]}{m}$ with $I\neq J$, we have that $\sum_{j\in J}\lambda_{j}\neq\sum_{i\in I}\lambda_i$, for all $m=1, \dots, M$;
        \item For all $I,J\in\binom{[M]}{m}$ with $I\neq J$, we have that $\sum_{j\in J}\lambda_{j}^{-1}\neq\sum_{i\in I}\lambda_i^{-1}$, for all $m=1, \dots, M$;
        \item For all $I,J\in\binom{[M]}{m}$ with $I\neq J$, we have that $\sum_{j\in J} \ln \lambda_{j}\neq\sum_{i\in I} \ln \lambda_i$ for all $m=1, \dots, M$.\footnote{Equivalently, we can ask that
 $\prod_{j\in J}\lambda_j\neq\prod_{i\in I}\lambda_i$.}
    \end{enumerate}
\end{definition}
\begin{remark}
 Setting $m=1$, we find from the above condition that all $\lambda_i$ need to be distinct. Therefore, we will always assume them to be ordered:  $\lambda_1<\dots<\lambda_M$.
\end{remark}

The genericity of the parameters ensures that the plot $\mathcal C^{(n)}(\mathcal M(A))$ is always \textit{well-behaved}, in a sense we make precise presently. For simplicity, sometimes we will write $\phi_i$ instead of $\ln\lambda_i$.
\begin{proposition}\label{prop-dominantphases}
    Let $\{\lambda_1<\dots<\lambda_M\}$ be generic. Then the phases that dominate two adjacent regions of the soliton graph $\mathcal{C}^{(n)}(\mathcal{M}(A))$ are of the form $\Theta_I$ and $\Theta_J$, where $I,J\in\mathcal{M}(A)$ differ by only one element. 
\end{proposition}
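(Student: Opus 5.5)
The plan is to combine the symmetric basis exchange property of the matroid $\mathcal M(A)$ with the additivity of the phases, $\Theta_I=\sum_{i\in I}\theta_i$. This is essentially the argument used in the KP setting of \cite{kodama2014kp}, adapted to the hyperbola $(\lambda,\lambda^{-1})$.

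\emph{Setup.} Fix $n$. Suppose two adjacent regions of $\mathcal C^{(n)}(\mathcal M(A))$ are dominated by $\Theta_I$ and $\Theta_J$ with $I\neq J$, and assume for contradiction that $|I\setminus J|\ge 2$. Pick a point $p$ in the relative interior of the common edge. Near $p$, the edge is an open segment $\sigma$ of the line $\ell=\{\Theta_I=\Theta_J\}$. On $\sigma$ we have $\Theta_I=\Theta_J=g^{(n)}_{\mathcal M(A)}$, and every other phase $\Theta_K$ with $K\in\mathcal M(A)$ satisfies $\Theta_K\le\Theta_I$.

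\emph{Exchange step.} Take $i\in I\setminus J$. By symmetric basis exchange for the matroid $\mathcal M(A)$, there is $j\in J\setminus I$ such that both
$$I'=(I\setminus\{i\})\cup\{j\}\quad\text{and}\quad J'=(J\setminus\{j\})\cup\{i\}$$
lie in $\mathcal M(A)$. Additivity of phases gives the identity $\Theta_{I'}+\Theta_{J'}=\Theta_I+\Theta_J$ identically in $(x_{-1},x_1)$. On $\sigma$ the right-hand side equals $2\max$, and each summand on the left is at most $\max$. Hence $\Theta_{I'}=\Theta_{J'}=\Theta_I$ on all of $\sigma$.

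Because $|I\setminus J|\ge 2$, the set $I'$ is different from both $I$ and $J$. Also $\Theta_{I'}-\Theta_I=\theta_j-\theta_i$ is an affine function vanishing on an open segment of $\ell$. So $\ell$ is exactly the line $\{\theta_i=\theta_j\}$, and $\Theta_J-\Theta_I=c\,(\theta_j-\theta_i)$ as affine functions for some constant $c$.

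\emph{Second exchange.} Run the exchange step for a second element $i_2\in I\setminus J$, which exists since $|I\setminus J|\ge 2$. This produces a partner $j_2$ with $\ell=\{\theta_{i_2}=\theta_{j_2}\}$. The linear parts must then be parallel:
$$(\lambda_j-\lambda_i,\ \lambda_j^{-1}-\lambda_i^{-1})\ \parallel\ (\lambda_{j_2}-\lambda_{i_2},\ \lambda_{j_2}^{-1}-\lambda_{i_2}^{-1}).$$
Since $\lambda_j^{-1}-\lambda_i^{-1}=-(\lambda_j-\lambda_i)/(\lambda_i\lambda_j)$, this forces $\lambda_i\lambda_j=\lambda_{i_2}\lambda_{j_2}$. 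The pairs $\{i,j\}$ and $\{i_2,j_2\}$ are distinct because $i\neq i_2$. This violates condition (3) of \cref{def-generic} with $m=2$.

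\emph{Main obstacle.} \cref{def-generic} only guarantees that at least one of its three conditions holds, so condition (3) is not available in general. To cover conditions (1) or (2), I would also exploit the constant terms. Both lines coincide, so after normalising the linear parts, the offsets $n(\ln\lambda_j-\ln\lambda_i)$ and $n(\ln\lambda_{j_2}-\ln\lambda_{i_2})$ must match as well. Combining this with the equality of slopes and the equality of the full affine functions $\Theta_J-\Theta_I=c(\theta_j-\theta_i)$ should produce an equality of subset sums of $\lambda$'s (or of $\lambda^{-1}$'s) between two distinct index sets of equal size, contradicting (1) or (2) respectively.

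I expect this bookkeeping to be the delicate part. The constant $c$ must be pinned down, for example by comparing the $x_1$- and $x_{-1}$-coefficients of $\Theta_J-\Theta_I$ against those of $\theta_j-\theta_i$, before the coincidence of the two lines can be turned into a clean subset-sum equality. Possibly one also needs to iterate the exchange to reduce to the case $c=1$.
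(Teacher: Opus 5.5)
Your exchange step is correct, and it is cleaner than the paper's version of that step. Symmetric basis exchange gives $\Theta_{I'}=\Theta_{J'}=\Theta_I$ on the common edge for any $|I\setminus J|\ge 2$, without using total nonnegativity. Hence $\ell=\{\theta_i=\theta_j\}=\{\theta_{i_2}=\theta_{j_2}\}$ for two distinct pairs. The paper handles the corresponding step with a continuity argument and a Pl\"ucker relation, set up only for $|I\setminus J|=2$. Under condition (3) your contradiction is complete and matches the paper's.

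The gap is the case where only (1) or (2) holds, and the route you sketch there cannot work. For $n\neq0$, coincidence of the two lines means equal slopes, $\phi_i+\phi_j=\phi_{i_2}+\phi_{j_2}$, together with equal intercepts
\[
\frac{\phi_j-\phi_i}{\lambda_j-\lambda_i}=\frac{\phi_{j_2}-\phi_{i_2}}{\lambda_{j_2}-\lambda_{i_2}}.
\]
Neither relation is additive in the $\lambda$'s or the $\lambda^{-1}$'s. Your constant $c$ is not an integer in general: for $|I\setminus J|=2$ it equals $1+(\lambda_{j_2}-\lambda_{i_2})/(\lambda_j-\lambda_i)$. And $c=1$ would force $\theta_{j_2}\equiv\theta_{i_2}$. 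So no subset-sum identity comes out, and iterating the exchange cannot reach $c=1$.

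In fact (1) and (2) are the wrong target. Take $\lambda=(1,2,3,6)$: it satisfies both (1) and (2), yet $\lambda_1\lambda_4=\lambda_2\lambda_3$. At $n=0$ the lines $\{\theta_1=\theta_4\}$ and $\{\theta_2=\theta_3\}$ coincide. Now take the irreducible totally nonnegative matrix with rows $(1,1,1,0)$ and $(0,1,1,1)$, for which $\Delta_{23}=0$ and all other minors equal $1$. Its regions labelled $\{1,2\}$ and $\{3,4\}$ are adjacent along the ray $x_1=x_{-1}/6<0$. So when (3) fails the statement genuinely needs $n\neq0$, and your argument must use it.

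The missing idea is the analytic lemma the paper uses, which needs only $n\neq0$ and distinct $\lambda$'s.

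1. Write $\phi_i+\phi_j=2m=\phi_{i_2}+\phi_{j_2}$, with $\{\phi_i,\phi_j\}=\{m-t,m+t\}$ and $\{\phi_{i_2},\phi_{j_2}\}=\{m-s,m+s\}$, where $t,s>0$.
2. Then $\lambda_j-\lambda_i=\pm2e^{m}\sinh t$ and $\phi_j-\phi_i=\pm2t$ with the same sign. So the left-hand intercept ratio is $e^{-m}\,t/\sinh t$ and the right-hand one is $e^{-m}\,s/\sinh s$.
3. The function $t\mapsto t/\sinh t$ is strictly decreasing on $(0,\infty)$, since its derivative has the sign of $\tanh t-t<0$. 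Hence $t=s$.
4. Therefore $\{i,j\}=\{i_2,j_2\}$, contradicting the distinctness of the pairs you already established.

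Put this lemma in place of your \emph{Main obstacle} paragraph and assume $n\neq0$, as the paper implicitly does; your proof is then complete. The lemma makes no use of (1) or (2).
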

\begin{proof}
    Firstly, notice that, given some $K\in \binom{[M]}{N}$, 
    \begin{equation}\label{Theta-Toda}
        \Theta_K(x_{-1}, x_1, n)= x_1\sum_{k\in K} \lambda_k+x_{-1}\sum_{k\in K} \frac{1}{\lambda_k}+n\sum_{k\in K} \ln \lambda_k.
    \end{equation}
    Therefore, our genericity assumption implies that the functions $\Theta_{K_1}$ and $\Theta_{K_2}$ are never equal for $K_1 \neq K_2$.

    We will prove the desired result by contradiction. Suppose that the multi-indices of two adjacent regions differ by at least two indices. Set $I=\{i,j, m_3, \dots m_N\}$ and $J=\{k,l, m_3, \dots, m_N\}$, and suppose these subsets label dominant phases in the adjacent regions $R$ and $R'$, respectively.
    It is clear that the dominance relation between $\Theta_I(x_{-1},x_1,n)$ and $\Theta_J(x_{-1},x_1,n)$ remains fixed for finite values of $n$. We will then ignore constant terms, so that the common boundary of $R$ and $R'$ is given by the line
    \begin{equation}
        L:\hspace{1 em} \theta_i +\theta_j=\theta_k+\theta_l.
    \end{equation}
    To solve the equations defining $L$, suppose that 
    \begin{equation}
        \theta_i-\theta_k=\theta_j - \theta_l=0.
    \end{equation}
    If condition 3 of \cref{def-generic} held, the slopes of the lines $\theta_i=\theta_k$ and $\theta_j=\theta_i$ would be different, and we would immediately have a contradiction. Therefore, we suppose that either condition 1 or condition 2 must hold\footnote{Or possibly both.}. Since the lines $\theta_i=\theta_k$ and $\theta_j=\theta_l$ coincide, their slopes must be equal (i.e., $\phi_i+\phi_k=\phi_j+\phi_l$) \textbf{and} there must exist $n\in\mathbb Z$ such that
    \begin{equation}\label{eq-translation}
        \frac{\phi_k-\phi_i}{\lambda_i-\lambda_k}n=\frac{\phi_l-\phi_j}{\lambda_j-\lambda_l}n.
    \end{equation}
    Assume by contradiction that there exists $n\neq0$ such that \cref{eq-translation} is satisfied. Notice that, because of our genericity assumption $\phi_i+\phi_k=\phi_j+\phi_l$ if and only if the indices are ordered in a way such that either $(\phi_j,\phi_l)\subset(\phi_i,\phi_k)$ or $(\phi_i,\phi_k)\subset(\phi_j,\phi_l)$ as intervals in the real line.\\
    Let $m\in\mathbb R$ be such that $\phi_i+\phi_k=2m=\phi_j+\phi_l$ and assume that we have the following order: $i<j<l<k$. Then, there exist $t,s\geq0$ such that
    \begin{align}
        &\phi_i=m-t,\qquad\phi_k=m+t,\\
        &\phi_j=m-s,\qquad\phi_l=m+s.
    \end{align}
    We can rewrite \cref{eq-translation} as
    \begin{equation}
        \frac{t}{\sinh(t)}=\frac{s}{\sinh(s)}.
    \end{equation}
    If we prove that the function $h(t)=\frac{t}{\sinh(t)}$ is strictly decreasing for $t\geq0$ we are done. Indeed, we have
    \begin{equation}
        h'(t)=\frac{\sinh(t)-t\cosh(t)}{\sinh^2(t)}
    \end{equation}
    and $h'(t)<0$ if and only if $\sinh(t)-t\cosh(t)<0$, which is equivalent to $\tanh(t)-t<0$. Let us consider $\tilde h(t)=\tanh(t)-t$ and notice $\tilde h(0)=0$ and $\tilde h'(t)=\text{sech}^2(t)-1\leq0$ for all $t\in\mathbb R$. In particular, $\tilde h(t)$ is a strictly decreasing function and $0=\tilde h(0)>\tilde h(t)$ for all $t>0$. Hence $h(t)$ is strictly decreasing for $t\geq0$ and we have a contradiction.

    It must then be that $\theta_i(x_{-1},x_1,n)-\theta_k(x_{-1},x_1,n)$ and $\theta_j(x_{-1},x_1,n) - \theta_l(x_{-1},x_1,n)$ have opposing signs along $L$. Suppose $\Delta_{\{k,j,m_3,\dots, m_N\}}(A)>0$ and $\Delta_{\{i,l,m_3,\dots, m_N\}}(A)>0$. In this case, both phases $\Theta_{\{k,j,m_3,\dots, m_N\}}$ and $\Theta_{\{k,j,m_3,\dots, m_N\}}$ appear in the $\tau$-function, meaning the inequalities
    \begin{gather}
        \theta_i -\theta_k = \Theta_I -\Theta_{\{k,j,m_3,\dots, m_N\}}>0,\\
        \theta_j-\theta_l= \Theta_I - \Theta_{\{i,l,m_3,\dots, m_N\}}>0
    \end{gather}
    hold in the interior of the region $R$. In this case, continuity would imply that $\theta_i -\theta_k$ and $\theta_j -\theta_l$ cannot have opposite signs along the boundary $L$ of $R$. This means that either $\Delta_{\{k,j,m_3,\dots, m_N\}}(A)$ or $\Delta_{\{i,l,m_3,\dots, m_N\}}(A)$ are null. Now we arrive at our final contradiction; the Pl\"ucker relations \ref{plucker relation} imply
    \begin{equation}
        \Delta_I(A)\Delta_J(A)=\Delta_{\{k,j,m_3,\dots, m_N\}}(A)\Delta_{\{i,l,m_3,\dots, m_N\}}(A),
    \end{equation}
    but the left-hand side cannot be zero, as $I$ and $J$ both label dominant phases.
\end{proof}

\begin{remark}\label{parallel-lines}
    An interesting feature of the contour plots arising in the proof of \cref{prop-dominantphases} is the appearance of parallel line solitons. This phenomenon is particular to the $2$D Toda lattice (and, as we will see later, also to the Davey--Stewartson equation). For the KP equation, the same phenomenon cannot occur as generic parameters immediately imply different slopes for all possible line solitons.
\end{remark}

From \cref{prop-dominantphases}, it follows that if two dominant phases differ by indices $i$ and $j$, the soliton line $L_{i,j}$ dividing them in the asymptotic contour plot is given by $\theta_{i}(x_{-1}, x_1,n)= \theta_{j}(x_{-1}, x_1,n)$. That is, $L_{i,j}$ is given by
\begin{equation}\label{eq:toda-line}
    \begin{split}
        L_{i,j}: & \hspace{1em} x_1=\frac{1}{\lambda_i \lambda_j}x_{-1}+\frac{\ln \lambda_j-\ln \lambda_i}{\lambda_i-\lambda_j}n.
    \end{split}
\end{equation}

\begin{remark}\label{label-contour}
    Notice that we are able to label the regions of $\mathcal{C}^{(n)}(\mathcal{M}(A))$ by their corresponding dominant exponential, analogously to how we labeled the regions of a reduced pipedream in \cref{fig:region-labels}. More precisely, looking at \cref{Theta-Toda}, we can see that in the region $x_1 \ll 0$ the dominant exponential is labeled by the lexicographical minimum in $\mathcal{M}(A)$\footnote{This coincides with the pivot index of $A$.}. This information is enough to fully determine all dominant labels on $\mathcal{C}^{(n)}(\mathcal{M}(A))$, as \cref{prop-dominantphases} shows that soliton lines $L_{i,j}$ act by transposing the indices $i$ and $j$ in the labels of adjacent regions.
\end{remark}

\begin{remark}
    Notice how whenever $A\in \mathrm{Gr}(N,M)_{>0}$, the proof of Proposition \ref{prop-dominantphases} follows from simple genericity arguments, as no minors of $A$ are ever null.
\end{remark}

\begin{lemma}\label{lem-dominantrel}
    Let $\{\lambda_1<\dots<\lambda_M\}$ be generic and $L_{i,j}$ be the $[i,j]$-soliton line for $i<j$.  Then,  along $L_{i,j}$, we have the following dominant relations
for $x_{-1}\gg 0$,
    \begin{itemize}
        \item $\theta_i=\theta_j<\theta_m$ for $1\leq m<i$ or $j<m\leq M$;
        \item $\theta_i=\theta_j>\theta_m$ for $i<m<j$.
    \end{itemize}
\end{lemma}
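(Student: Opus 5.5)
The plan is a direct computation: parametrize the line $L_{i,j}$ by $x_{-1}$, then read off the sign of each difference $\theta_m-\theta_i$ from its leading coefficient as $x_{-1}\to+\infty$, with $n$ held fixed.

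First, substitute the explicit form \eqref{eq:toda-line} of $L_{i,j}$ into $\theta_m-\theta_i$. Along $L_{i,j}$ we have
\begin{equation}
x_1=\frac{x_{-1}}{\lambda_i\lambda_j}+c_{ij}n, \qquad c_{ij}=\frac{\ln\lambda_j-\ln\lambda_i}{\lambda_i-\lambda_j}.
\end{equation}
Since $\theta_i=\theta_j$ on this line, it suffices to compare $\theta_m$ with $\theta_i$ for $m\neq i,j$. Substituting gives
\begin{equation}
\theta_m-\theta_i=(\lambda_m-\lambda_i)x_1+\Bigl(\frac{1}{\lambda_m}-\frac{1}{\lambda_i}\Bigr)x_{-1}+n(\ln\lambda_m-\ln\lambda_i),
\end{equation}
which, restricted to $L_{i,j}$, has the form $\alpha_m x_{-1}+\beta_m n$.

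Second, compute the coefficient of $x_{-1}$:
\begin{equation}
\alpha_m=\frac{\lambda_m-\lambda_i}{\lambda_i\lambda_j}-\frac{\lambda_m-\lambda_i}{\lambda_i\lambda_m}
=\frac{(\lambda_m-\lambda_i)(\lambda_m-\lambda_j)}{\lambda_i\lambda_j\lambda_m}.
\end{equation}
The remaining coefficient $\beta_m=(\lambda_m-\lambda_i)c_{ij}+\ln\lambda_m-\ln\lambda_i$ is an explicit constant.

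Third, determine the sign of $\alpha_m$. All $\lambda$'s are positive and distinct, the latter by the remark following \cref{def-generic}, so $\alpha_m\neq0$. Moreover $\alpha_m>0$ exactly when $\lambda_m$ lies outside $[\lambda_i,\lambda_j]$, i.e.\ when $m<i$ or $m>j$. Similarly $\alpha_m<0$ exactly when $i<m<j$, by the ordering $\lambda_1<\dots<\lambda_M$.

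Finally, for fixed $n$ the term $\beta_m n$ is a bounded constant. Hence once $x_{-1}>\max_{m\neq i,j}|\beta_m n|/|\alpha_m|$, which is a finite maximum over $m$, the sign of $\theta_m-\theta_i$ equals the sign of $\alpha_m$. This yields both bullet points.

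The only delicate point is making the meaning of ``$x_{-1}\gg0$'' precise. The threshold depends on $n$ and on the $\lambda$'s, so the statement must be read with $n$ fixed, or more generally with $|n|=o(x_{-1})$. The argument requires this caveat but no other input, not even genericity beyond distinctness of the $\lambda_i$.
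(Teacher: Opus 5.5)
Your proof is correct and is essentially the paper's argument: substitute the equation of $L_{i,j}$ into $\theta_m-\theta_i$, note that the coefficient of $x_{-1}$ is $(\lambda_m-\lambda_i)(\lambda_m-\lambda_j)/(\lambda_i\lambda_j\lambda_m)$, and read off its sign. The paper phrases the last step via the quadratic $\lambda^2-\lambda(\lambda_i+\lambda_j)+\lambda_i\lambda_j$ with roots $\lambda_i,\lambda_j$, whereas you give an explicit threshold for $x_{-1}$ and state that $n$ must be held fixed, which makes precise what the paper leaves implicit.
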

\begin{proof}
    Let $k \in \{1, \dots, M\}$. We may write the difference $\theta_k - \theta_i$ as
    \begin{equation}\label{theta_dif}
        \theta_k-\theta_i=(\lambda_k - \lambda_i)x_1+ \left(\frac{1}{\lambda_k}-\frac{1}{\lambda_i} \right)x_{-1}+(\ln \lambda_k - \ln \lambda_i) n.
    \end{equation}
    Recall that the equality $\theta_i = \theta_j$ defines a line $L_{i,j}$ in the $(x_1, x_{-1})$-plane, specified by the equality \eqref{eq:toda-line}. If we use the preceding equation to substitute for $x_1$ in (\ref{theta_dif}), then we have that, along the line $L_{i,j}$,
    \begin{equation}
        \theta_k-\theta_i=\left( \frac{\lambda_k^2-\lambda_k(\lambda_i+\lambda_j)+\lambda_i\lambda_j}{\lambda_k\lambda_i\lambda_j} \right)x_{-1} + \left( \ln \lambda_k - \ln \lambda_i + \frac{\lambda_k-\lambda_i}{\lambda_i-\lambda_j}(\ln \lambda_j - \ln \lambda_i)  \right)n.
    \end{equation}
    For a fixed $n$, and in the limit $|x_{-1}| \gg 0$, the sign of the difference $\theta_k-\theta_i$ along $L_{i,j}$ is fully determined by the coefficient of $x_{-1}$ in the preceding equation. Therefore, to prove the desired result, it is enough to study the behavior of the quadratic function $\alpha: \mathbb{R} \rightarrow \mathbb{R}$, $\alpha(\lambda)=\lambda^2 -\lambda(\lambda_i+\lambda_j)+\lambda_i\lambda_j$. We conclude the proof by noting that the two zeroes of $\alpha$ are $\lambda_i$ and $\lambda_j$, and that $\lim_{\lambda \to \infty} \alpha(\lambda)=\infty.$
\end{proof}

\begin{theorem}\label{asymptotic}
        Let $A \in \mathcal{P}_{v,w}^{>0} \subset \mathrm{Gr}\,(N,M)_{\geq0}$ be an irreducible matrix, with pivot set $\{i_1, \dots i_N\}$ and non-pivot set $\{j_1, \dots j_{M-N}\}$. Assume that the parameters $\{\lambda_1< \dots < \lambda_M\}$ of the Toda lattice soliton $V_{n,A}(x_{-1},x_1)$ are generic. Then $\mathcal{C}^{(n)}(\mathcal{M}(A))$ has the following asymptotic structure: 
        \begin{enumerate}
            \item For $x_{-1}\gg 0$, there exist $N$ line solitons of type $[i_n, p_n]$, where $p_n>i_n$.
            \item For $x_{-1}\ll 0$, there exist $(M-N)$ line solitons of type $[q_m, j_m]$, where $q_m < j_m$.
        \end{enumerate}
\end{theorem}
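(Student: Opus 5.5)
Part (1) comes from a direct analysis of the region $x_{-1}\gg 0$. Part (2) then follows from part (1) by the duality of \cref{prop-dual}.

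For $x_{-1}\gg 0$ with $n$ fixed, the phases become asymptotically linear in the rescaled variable $x_1/x_{-1}$. We intersect the contour plot with a horizontal line $x_{-1}=c$, $c\gg0$, and let $x_1$ run from $-\infty$ to $+\infty$. Since the plot is a tropical curve whose unbounded edges are rays, the number of unbounded solitons in the upper half-plane equals the number of crossings of this line for $c$ large.

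\begin{itemize}
\item At $x_1\ll0$ the dominant label is the lexicographic minimum of $\mathcal{M}(A)$, namely the pivot set $I_0=\{i_1,\dots,i_N\}$ (\cref{label-contour}).
\item At $x_1\gg0$ the dominant label is the set maximizing $\sum_{k\in I}\lambda_k$, namely the lexicographically maximal element of $\mathcal{M}(A)$.
\item Each crossing is a soliton $L_{i,j}$ with $i<j$, transposing $i$ and $j$ (\cref{prop-dominantphases}).
\item Along $L_{i,j}$, \cref{lem-dominantrel} says that for $x_{-1}\gg0$ the line can only border regions whose labels, among the indices between $i$ and $j$, do not contain those indices.
\end{itemize}

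Crossing the line left to right, $\sum\lambda_k$ strictly increases, because the coefficient of $x_1$ in $\Theta_I$ is $\sum_{k\in I}\lambda_k$ and moving right makes the larger slope win. So each crossing replaces a smaller index $i$ by a larger index $j$. The first job is to show each pivot $i_m$ is moved exactly once, producing a soliton $[i_m,p_m]$ with $p_m>i_m$, and that no other crossings occur. This gives exactly $N$ solitons.

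The key point is that an index, once inserted, is never removed again. Suppose index $j$ is inserted by $[i,j]$ and later removed by $[j,l]$ with $l>j$. Then along $L_{j,l}$ we would need $\theta_j$ dominating, but by \cref{lem-dominantrel}, for indices $m$ with $j<m<l$ we have $\theta_m<\theta_j=\theta_l$. One can instead argue via slopes. In the rescaled picture every $L_{i,j}$ is asymptotically the ray $x_1=x_{-1}/(\lambda_i\lambda_j)$, so the crossings are ordered by decreasing $\lambda_i\lambda_j$. This ordering, combined with the requirement that $\sum\lambda$ increase and with irreducibility (every row has a non-pivot entry, so each pivot must eventually be exchanged), forces each of the $N$ initial indices to be exchanged exactly once.

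The main obstacle is proving that each pivot is exchanged exactly once, i.e.\ ruling out chains $i\to j\to l$. I would first attempt this via the convexity argument of \cref{lem-dominantrel}. Concretely, restricted to the ray direction $x_1=s\,x_{-1}$, each $\theta_k\approx(\lambda_k s+\lambda_k^{-1})x_{-1}$. As $s$ increases from $0$, for each $k$ there is a moment after which $\theta_k$ beats every smaller index, and this monotone structure forbids reinserting and then removing the same index.

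For part (2), apply \cref{prop-dual} and the remark following it. The asymptotic contour plot of $V_{n,A}$ for $x_{-1}\ll0$ is the point reflection of the plot of $V_{-n,B_A}$ for $x_{-1}\gg0$, and the dual labels are complements $J=[M]\setminus I$. Applying part (1) to the irreducible $B_A\in\mathrm{Gr}(M-N,M)_{\ge0}$ yields $M-N$ solitons starting at the pivots of $B_A$. Since a soliton $[a,b]$ for the complement labels is the same line $L_{a,b}$, taking complements shows these correspond to solitons of $A$ whose larger index is a non-pivot $j_m$ of $A$. Reading the index sets in the original orientation then gives types $[q_m,j_m]$ with $q_m<j_m$. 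Consistency with item (4) of the duality proposition, which says the pivots of $B_A$ are $\pi(j_m)$, serves as a check on this identification.
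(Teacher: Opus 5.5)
Your set-up is sound. On a high horizontal line $x_{-1}=c$, convexity of $g^{(n)}_{\mathcal M(A)}$ in $x_1$ forces every crossing $[a,b]$ to replace $a$ by some $b>a$, and the crossings occur in order of decreasing $\lambda_a\lambda_b$. However, the heart of part (1) is left open. You flag ``each pivot is exchanged exactly once'' as the main obstacle, and your attempt via \cref{lem-dominantrel} along $L_{j,l}$ states a fact but derives no contradiction.

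Half of the claim follows at once from what you already have. Suppose $j$ enters at $[i,j]$ and later leaves at $[j,l]$, so $i<j<l$. The left-to-right ordering then forces $\lambda_i\lambda_j>\lambda_j\lambda_l$, i.e.\ $i>l$, a contradiction. These two lines share $j$, so they are never parallel, whichever condition of \cref{def-generic} holds. Hence an inserted index is never removed. It follows that every removed index is a pivot not touched before, and each pivot leaves at most once. This single observation does the work of two steps of the paper's proof:
\begin{itemize}
\item that the smaller index of an $[i,j]$-soliton is a pivot, which the paper shows via linear dependence on earlier pivot columns plus \cref{lem-dominantrel};
\item uniqueness of the $[i,p]$-soliton, which the paper shows by essentially the same slope comparison.
\end{itemize}

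The genuine gap is that every pivot leaves \emph{at least} once. Saying ``irreducibility, so each pivot must eventually be exchanged'' is an assertion, not an argument. Comparing the two end labels cannot supply it, because a pivot may lie in both. In \cref{example-algorithm-Toda} the labels along a high horizontal line are $\{1,3,5\}\to\{1,3,6\}\to\{1,4,6\}\to\{3,4,6\}$, and the pivot $3$ leaves and re-enters.

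You need an interior point. Take $P=(c/\lambda_i^2,c)$. There $\theta_m-\theta_i=c(\lambda_m-\lambda_i)^2/(\lambda_i^2\lambda_m)+n(\phi_m-\phi_i)>0$ for all $m\neq i$ once $c\gg|n|$. Irreducibility means $i$ is not a coloop (condition (2) of \cref{def-irrmatrix}). So if $i\in K\in\mathcal M(A)$, basis exchange with a basis avoiding $i$ gives some $m\notin K$ with $(K\setminus\{i\})\cup\{m\}\in\mathcal M(A)$ and strictly larger phase at $P$. Thus no dominant label at $P$ contains $i$, and $i$ must have left somewhere to the left of $P$. The paper instead constructs the partner explicitly through the rank conditions defining $j^*$ (resp.\ $r^*$).

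With both directions you get exactly $N$ crossings $[i_k,p_k]$, and your duality argument for part (2) then goes through. Identifying the larger indices as the non-pivots $j_m$ also relies on the never-removed property. The inserted indices in the plot of $B_A$ survive to its right-end label. That label maximizes $\sum\lambda$ over $\mathcal M(B_A)=\{[M]\setminus I : I\in\mathcal M(A)\}$, so it equals $[M]\setminus\{i_1,\dots,i_N\}$.
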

The proof of \cref{asymptotic} is completely analogous to the proof of the corresponding statement for KP solitons in \cite[Theorem 6.1]{kodama}. That is, the underlying combinatorial arguments are the same, and the analytics of the $2$D Toda lattice equation feature almost exclusively through Lemma \ref{lem-dominantrel}. A proof of \cref{asymptotic} is included in Appendix \ref{app-asymptotic}, for completeness.

\begin{proposition}\label{fan_is_der}
    Let $A \in \mathcal{P}_{v,w}^{>0}$ be an irreducible matrix. In the notation of Proposition \ref{asymptotic}, the permutation $\pi \in S_M$ defined by the asymptotic contour plot of $\tau_A$,
    \begin{equation}
    \begin{cases}
        \pi(i_k)=p_k & k=1, \dots, N\\
        \pi(j_k)=q_k & k=1, \dots, M-N,
    \end{cases}
    \end{equation}
    is given by $\pi = vw^{-1}$.
\end{proposition}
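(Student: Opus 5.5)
The key observation is that, asymptotically, the unbounded line solitons are determined only by the matroid $\mathcal M(A)$ and by the sign relations of \cref{lem-dominantrel}. Those sign relations are exactly the ones appearing in the KP setting of \cite[Theorem 6.1 and its corollary]{kodama}. So the plan is to reduce the statement to the combinatorial KP result, whose proof runs entirely through the positroid/Deodhar combinatorics. In KP one has $\theta_i=\lambda_i x+\lambda_i^2 y+\dots$. Along the $[i,j]$-line and for $y\gg0$, the sign of $\theta_m-\theta_i$ is governed by the quadratic $(\lambda_m-\lambda_i)(\lambda_m-\lambda_j)$. This is exactly the numerator $\alpha(\lambda_m)$ of \cref{lem-dominantrel}, up to the positive factor $1/(\lambda_m\lambda_i\lambda_j)$. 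The dominance pattern is therefore identical, and any combinatorial argument that uses only this pattern transfers verbatim.

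First, I will characterise the unbounded solitons for $x_{-1}\gg0$ directly from $\mathcal M(A)$. On a ray $x_{-1}\to+\infty$ along $L_{i,j}$, consider a region $\Theta_{K\cup\{i\}}$ adjacent to a region $\Theta_{K\cup\{j\}}$. By \cref{lem-dominantrel}, this is an unbounded soliton exactly when, for every $m\notin K\cup\{i,j\}$ with $m<i$ or $m>j$, the sets $K\cup\{m\}$ are not in $\mathcal M(A)$. This gives a purely matroidal description of the ``fan'' at $x_{-1}\gg0$. The same argument applied to $x_{-1}\ll0$, or equivalently the duality of \cref{prop-dup} in the form of \cref{prop-dual} with the remark following it, which exchanges $A$ with $B_A$ and $x_{-1}$ with $-x_{-1}$, gives the description at $x_{-1}\ll0$. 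By item 4 of the duality proposition, together with $B_A\in\mathcal P^{>0}_{\pi^{-1}}$, it is enough to establish the $x_{-1}\gg0$ half, $\pi(i_k)=p_k$ on the pivots. The non-pivot half then follows by applying that half to $B_A$ and inverting.

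Second, I will identify $p_k$ with $vw^{-1}(i_k)$. For this I will use the reduced pipedream of the $\Le$-diagram $\Gamma$ of $A$, together with the Deodhar characterisation by vanishing and nonvanishing of the Plücker coordinates $\Delta_{I_b}$ \cite[Theorem 7.8]{talaska2013network}. The approach is the Kodama--Williams one: induct on the number of boxes of $\Gamma$. The last row of $Y_\lambda$ corresponds to the last pivot. I will show that the soliton emanating from pivot $i_k$ at $x_{-1}\gg0$ is $[i_k,j]$, where $j$ is the endpoint reached by following the pipe from $i_k$ with the rule (white: right, black: left). The $\LeCircle$ boxes force $\Delta_{I_b}=0$, and these vanishings rule out the intermediate indices $m$ in the criterion above. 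The blank boxes give $\Delta_{I_b}\neq0$, which produces the nonvanishing minor realising the adjacent region. Since the pipedream labelling computes $vw^{-1}$, as explained in the preliminaries, this gives $\pi=vw^{-1}$.

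The main obstacle is the second step: matching the matroidal fan criterion with the pipe-following rule in full generality for non-totally-positive cells, where many minors vanish. I would first try simply to invoke \cite[Theorem 6.1]{kodama} as a black box, arguing that its proof uses the analytic input only through the sign pattern above. Only if that reduction is not clean would I rerun the induction on $\Gamma$ explicitly, deleting a box to pass from $(v,w)$ to a smaller pair. This is the place where irreducibility, via the requirement that each row and column contains a blank box, must be used to guarantee $p_k\neq i_k$.
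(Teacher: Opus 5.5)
Your route is viable, but it is genuinely different from the paper's. The paper's proof (Appendix~B) never uses pipedreams or Talaska's box conditions. It starts from the pivot set $I_1$, which is the lexicographic minimum of $\mathcal M(A)$ and labels the region $x_1\ll0$. It then invokes the Deodhar fact that for $A\in\mathcal P^{>0}_{v,w}$ the lexicographic maximum of $\mathcal M(A)$ is $vw^{-1}(I_1)$. Finally it crosses the $N$ unbounded solitons at $x_{-1}\gg0$ one at a time, in order of slope. At each crossing, \cref{lem-dominantrel} shows that the entering index $p_k$ is the lexicographically largest admissible replacement for $i_k$, whence $p_k=vw^{-1}(i_k)$; the non-pivot half is declared analogous.

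You instead make precise the paper's remark that the argument is ``functionally the same'' as for KP. The fan at $x_{-1}\gg0$ depends only on $\mathcal M(A)$ and on which indices beat $\theta_i=\theta_j$ along $L_{i,j}$. Because $\alpha(\lambda)=(\lambda-\lambda_i)(\lambda-\lambda_j)$, that set is the same as for KP at $y\gg0$. So the Toda and KP fans of the same $A$ coincide, and the KP theorem \cite{kodama,kodama2014kp} gives $\pi=vw^{-1}$. Duality handles the non-pivots; alternatively, the reversed pattern at $x_{-1}\ll0$ matches KP at $y\ll0$.

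What your route buys:
\begin{itemize}
\item It is shorter.
\item It does not depend on how the unbounded solitons are ordered by slope, whereas the paper's walk is organized around that ordering.
\item It gives \cref{thm-asymptotic_n} and \cref{DS-asymptotic} at no extra cost, since there the sign pattern is the reversed one and matches KP at $y\ll0$.
\end{itemize}
The price is that nothing is proved internally. Everything rests on the KP statement in exactly this convention: the same $\mathcal P^{>0}_{v,w}$, the same reading order, and pivots appearing at $y\gg0$. The paper, by contrast, exhibits the mechanism directly.

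What you must fix is the matroidal criterion, because the whole transfer rests on it. As stated, your condition is that no outer $m$ has $K\cup\{m\}\in\mathcal M(A)$. That only tests exchanges of $i$ for a fixed $K$. It is necessary, but not sufficient, for $K\cup\{i\}$ to be dominant.

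The correct statement runs as follows. Far out along $L_{i,j}$, the dominant labels are the maximum-weight bases of $\mathcal M(A)$ for the weights $\theta_m$. These bases depend only on the order of the weights, by the greedy algorithm. Put $S_+=\{m: m<i \text{ or } m>j\}$, and let $V$ be the span of the columns $A_m$ for $m\in S_+$. Then $L_{i,j}$ carries an unbounded soliton if and only if $A_i,A_j\notin V$ and $A_j\in V+\mathbb R A_i$.

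With this lemma the black box is clean, and the fallback induction on $\Gamma$ is unnecessary. Note also that, as sketched, that fallback has the roles inverted. The white-stone vanishings would have to exclude the outer indices $m<i$ and $m>j$, not intermediate ones.
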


\begin{remark}
    Since each irreducible Deodhar component corresponds to a derangement, we deduce that the permutation $\pi$ has no fixed points.  
\end{remark}

As before, the proof of Proposition \ref{fan_is_der} is functionally the same as the proof appearing in \cite[Theorem 6.1]{kodama}. We refer the reader to Appendix \ref{app-fan} for details.

\subsection{Algorithm}\label{sec:algorithm}

The goal of this section is to provide an algorithm, based on the one presented in \cite[Chapter 8]{kodama} and \cite[Algorithm 8.3]{kodama2014kp}, whose input consists of a matrix $A\in\mathrm{Gr}(N,M)_{\geq0}$ and a parameter $n\gg0$, and which produces the contour plot of the solution $V_{n,A}(x_{-1},x_1)$. We conclude the section with an example where the algorithm is applied.

For the rest of the section we denote by $A$ a matrix in the totally nonnegative part of the Grassmannian $\mathrm{Gr}(N,M)_{\geq0}$ and we fix a set of generic parameters $\{\lambda_1,\dots,\lambda_M\}$. 

\begin{remark}
    Notice that, for all large enough $n$, the topology of the contour plot $\mathcal C^{(n)}(\mathcal M(A))$ does not change. In other words, for $n\gg0$, the bounded regions appearing in $\mathcal{C}^{(n)}(\mathcal{M}(A))$ do not change as we increase $n$. Hence, for all $n\gg0$, the resulting contour plots are equivalent, or \textbf{self-similar}. In particular, for $n\gg0$ the contour plot is denoted by $\mathcal C^+(\mathcal M(A))$. The same reasoning applies when $n\ll0$ and in this case the contour plot is denoted by $\mathcal C^-(\mathcal M(A))$.
\end{remark}

Recall from \cref{prop-dominantphases} that if $I,J\in\mathcal M(A)$ label two adjacent dominant exponentials, then there are two indices $a,b\in[M]$ such that $J=(I\setminus\{a\})\cup\{b\}$. The line separating those regions is given by the $[a,b]$-soliton and has the following form

\begin{equation}\label{eq-line-Toda}
    L_{[a,b]}^{(n)}: x_1-\frac{1}{\lambda_a\lambda_b}x_{-1}+n\frac{\phi_a-\phi_b}{\lambda_a-\lambda_b}=0.
\end{equation}
\begin{remark}
    Note that we are ignoring constant terms in the equation of the $[a,b]$-soliton since we consider $n\gg0$ or $n\ll0$.
\end{remark}

When the only interaction points appearing in $\mathcal{C}^+(\mathcal{M}(A))$ are either trivalent vertices or $X$-crossings, we will refer to $\mathcal C^+(\mathcal M(A))$ as a \textbf{generic contour plot}. Trivalent vertices are also called \textbf{resonant interaction points} and they are denoted by $v_{[a,b,c]}$ where $a,b,c\in[M]$ and $a<b<c$. This notation comes from the fact that the vertex is obtained by intersecting three line solitons that are labeled by $[a,b]$, $[b,c]$ and $[a,c]$. The coordinates of a resonant interaction point are denoted by $(x_{1,[a,b,c]},x_{-1,[a,b,c]})$ and direct computations show that
\begin{align}
    &x_{-1,[a,b,c]}=-\frac{\lambda_a\lambda_b\lambda_c(\lambda_a(\phi_b-\phi_c)+\lambda_b(\phi_c-\phi_a)+\lambda_c(\phi_a-\phi_b))}{(\lambda_b-\lambda_a)(\lambda_c-\lambda_a)(\lambda_c-\lambda_b)}n,\\
    &x_{1,[a,b,c]}=-\frac{\lambda_a\lambda_b(\phi_a-\phi_b)+\lambda_a\lambda_c(\phi_c-\phi_a)+\lambda_b\lambda_c(\phi_b-\phi_c)}{(\lambda_a-\lambda_b)(\lambda_a-\lambda_c)(\lambda_b-\lambda_c)}n.
\end{align}
However, not all triplets of indices $a<b<c$ produce trivalent vertices in $\mathcal C^+(\mathcal M(A))$. Our next goal is to understand what conditions $a<b<c$ must satisfy so that the vertex $v_{[a,b,c]}$ appears in the contour plot. To this end, we fix indices $a<b<c$ and consider planes in the coordinates $(x_{1},x_{-1},z)$ given by the equations
\begin{equation}
    z=\theta_i(x_{-1},x_1,n),\quad i=1,\dots,M.
\end{equation}
We call $z_d=\theta_d(x_{-1,[a,b,c]},x_{1,[a,b,c]},n)$ the height of the vertex $v_{[a,b,c]}$ in the hyperplane $z=\theta_d(x_{-1},x_1,n)$.
\begin{remark}
    Since $\theta_a(x_{-1,[a,b,c]},x_{1,[a,b,c]},n)=\theta_b(x_{-1,[a,b,c]},x_{1,[a,b,c]},n)=\theta_c(x_{-1,[a,b,c]},x_{1,[a,b,c]},n)$, we have

    \begin{equation}
        z_{[a,b,c]}=z_a=z_b=z_c=\frac{\phi_c\sinh(\phi_a-\phi_b)-\phi_b\sinh(\phi_a-\phi_c)+\phi_a\sinh(\phi_b-\phi_c)}{\sinh(\phi_a-\phi_b)-\sinh(\phi_a-\phi_c)+\sinh(\phi_b-\phi_c)}n.
    \end{equation}
\end{remark}
\begin{definition}\label{def-visible}
    Let $a,b,c\in[M]$ such that $a<b<c$ and consider $v_{[a,b,c]}$. We say that $v_{[a,b,c]}$ is \textbf{visible} if $z_d<z_{[a,b,c]}$ for all $d\in[M]\setminus\{a,b,c\}$.
\end{definition}
In other words, a vertex $v_{[a,b,c]}$ is visible if the height $z_d$ at the vertex is smaller than the height $z_{[a,b,c]}$ for all $d\in[M]\setminus\{a,b,c\}$. Because of \cref{def-visible}, it becomes fundamental to study the sign of the function $F_{a,b,c}(\phi_d)=z_d-z_{[a,b,c]}$. After substituting in the expressions of $z_d$ and $z_{[a,b,c]}$, we can write

\begin{equation}
    F_{a,b,c}(\phi_d)=-\frac{f_{a,b,c}(\phi_d)}{\lambda_d(\lambda_a-\lambda_b)(\lambda_a-\lambda_c)(\lambda_b-\lambda_c)}n
\end{equation}
where
\begin{align}\label{eq-functionf}
    f_{a,b,c}(\phi_d)&=\phi_c\lambda_c(\lambda_a-\lambda_b)(\lambda_a-\lambda_d)(\lambda_b-\lambda_d)+\phi_b\lambda_b(\lambda_c-\lambda_a)(\lambda_a-\lambda_d)(\lambda_c-\lambda_d)\\
    &\quad-\phi_d\lambda_d(\lambda_b-\lambda_c)(\lambda_a-\lambda_b)(\lambda_a-\lambda_c)+\phi_a\lambda_a(\lambda_b-\lambda_c)(\lambda_b-\lambda_d)(\lambda_c-\lambda_d).
\end{align}
Recall that $\lambda_i=e^{\phi_i}$. Additionally $F_{a,b,c}(\phi_a)=F_{a,b,c}(\phi_b)=F_{a,b,c}(\phi_c)=0$ and we have the following:
\begin{proposition}\label{prop-signF}
    Let $a,b,c\in[M]$ such that $a<b<c$ and $n>0$, then

    \begin{enumerate}
        \item $F_{a,b,c}(\phi_d)>0$ for $d<a$ and $b<d<c$;
        \item $F_{a,b,c}(\phi_d)<0$ for $a<d<b$ and $c<d$.
    \end{enumerate}
\end{proposition}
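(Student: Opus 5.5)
The plan is to replace the explicit formula \eqref{eq-functionf} for $f_{a,b,c}$ by a root-counting argument in the spectral parameter. Write $(x_1^*,x_{-1}^*)=(x_{1,[a,b,c]},x_{-1,[a,b,c]})$ and consider the one-variable function
\begin{equation}
    g(\lambda)=\lambda x_1^*+\frac{x_{-1}^*}{\lambda}+n\ln\lambda-z_{[a,b,c]},\qquad \lambda>0.
\end{equation}
By definition $g(\lambda_d)=\theta_d(x_{-1}^*,x_1^*,n)-z_{[a,b,c]}=z_d-z_{[a,b,c]}=F_{a,b,c}(\phi_d)$. The statement therefore reduces to determining the sign of $g$ on the four intervals $(0,\lambda_a)$, $(\lambda_a,\lambda_b)$, $(\lambda_b,\lambda_c)$ and $(\lambda_c,\infty)$, using that $\lambda_1<\dots<\lambda_M$ preserves the ordering of the indices.

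First, $g$ vanishes at $\lambda_a,\lambda_b,\lambda_c$, by the remark preceding \cref{def-visible}. Next,
\begin{equation}
    \lambda^2 g'(\lambda)=x_1^*\lambda^2+n\lambda-x_{-1}^*,
\end{equation}
which is a polynomial of degree at most $2$. By Rolle's theorem $g'$ has a zero in $(\lambda_a,\lambda_b)$ and another in $(\lambda_b,\lambda_c)$. Hence this polynomial has two distinct positive roots, so it is not identically zero, and in particular $x_1^*\neq0$ (otherwise it is linear with $n\neq0$). Consequently $g'$ has exactly these two zeros, both simple. It follows that $g$ has no zeros other than $\lambda_a,\lambda_b,\lambda_c$, since a fourth zero would force a third critical point. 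Moreover, each of these three zeros is simple: a double zero would be a zero of $g'$ at $\lambda_a$, $\lambda_b$ or $\lambda_c$, giving three zeros of $g'$ in total. Thus $g$ changes sign exactly at $\lambda_a,\lambda_b,\lambda_c$.

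It then suffices to find the sign of $g$ near $0$ or near $\infty$. Vieta's formulas for the two positive roots give $-n/x_1^*>0$ and $-x_{-1}^*/x_1^*>0$. Since $n>0$, this forces $x_1^*<0$ and $x_{-1}^*>0$. Hence $g(\lambda)\to+\infty$ as $\lambda\to0^+$, because the term $x_{-1}^*/\lambda$ dominates $n\ln\lambda$. Likewise $g(\lambda)\to-\infty$ as $\lambda\to\infty$, because $x_1^*\lambda$ dominates. Alternating signs then give $g>0$ on $(0,\lambda_a)$, $g<0$ on $(\lambda_a,\lambda_b)$, $g>0$ on $(\lambda_b,\lambda_c)$ and $g<0$ on $(\lambda_c,\infty)$. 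Evaluating at $\lambda_d$ yields exactly (1) and (2).

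The only delicate point is excluding degenerate cases: $x_1^*=0$, and a tangency of $g$ at one of $\lambda_a,\lambda_b,\lambda_c$. Both are handled by the count of zeros of the quadratic $\lambda^2g'(\lambda)$ above, so no use of the explicit coordinates of $v_{[a,b,c]}$ or of $f_{a,b,c}$ is needed. One could cross-check the signs of $x_1^*$ and $x_{-1}^*$ against the displayed formulas, but the Vieta argument already determines them.
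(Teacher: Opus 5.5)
Your proof is correct, and it takes a genuinely different (though closely related) route from the paper's.

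\textbf{The paper's route.} The paper works with the explicit numerator $f_{a,b,c}$, viewed as a function of $\lambda_d$. It first checks that the prefactor of $F_{a,b,c}$ is positive for $n>0$. It then bounds the number of zeros of $f_{a,b,c}$ by three, because the second $\lambda_d$-derivative is a constant plus a multiple of $1/\lambda_d$. Finally it fixes the sign by computing $\lim_{\lambda_d\to0}f_{a,b,c}=\lambda_a\lambda_b\lambda_c\,g_{a,b}(\phi_c)$ and proving this positive via convexity of $\phi\mapsto e^{\phi}$ and the mean value theorem.

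\textbf{Your route.} You treat $F_{a,b,c}$ itself as your function $g(\lambda)$ of a continuous spectral parameter. The zero count then comes from the first derivative, since $\lambda^2g'$ is a quadratic. The boundary sign comes from Vieta's formulas, which force $x_1^*<0<x_{-1}^*$. This never uses the coordinates of $v_{[a,b,c]}$ or the formula for $f_{a,b,c}$.

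\textbf{How they connect.} For $n>0$, $f_{a,b,c}$ is a positive multiple of $\lambda g(\lambda)$, and $\lambda g(\lambda)\to x_{-1}^*$ as $\lambda\to0^+$. So the paper's positivity of $g_{a,b}(\phi_c)$ is exactly your statement $x_{-1}^*>0$. The paper gets it from convexity of the exponential; you get it from the location of the two critical points of $g$.

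\textbf{What each buys.} Your version is more self-contained. It also handles simplicity of the three zeros more cleanly than the paper's remark that one extremum is positive and the other negative. It carries over verbatim to the sign statement for $G_{a,b,c}$ in the $(x_1,n)$-plane: there, Vieta with fixed $x_{-1}>0$ gives $x_1^*<0<n_{[a,b,c]}$. The paper's version instead isolates the explicit inequality $g_{a,b}(\phi_c)>0$, which it then reuses for that $(x_1,n)$-plane statement.
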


\begin{proof}
    Note that $\displaystyle -\frac{1}{\lambda_d(\lambda_a-\lambda_b)(\lambda_a-\lambda_c)(\lambda_b-\lambda_c)}n>0$ for each $d\in[M]$. Therefore, it is enough to study the sign of $f_{a,b,c}(\phi_d)$. First of all, observe that

    \begin{equation}
        \frac{\partial}{\partial\phi_d}=\frac{\partial}{\partial e^{\phi_d}}\frac{\partial e^{\phi_d}}{\partial d}=e^{\phi_d}\frac{\partial}{\partial e^{\phi_d}}=\lambda_d\frac{\partial}{\partial \lambda_d}.
    \end{equation}
    Hence, the derivative with respect to $\phi_d$ differs from the derivative with respect to $\lambda_d$ by a factor $\lambda_d$, which is an increasing function in the variable $\phi_d$. The second derivative with respect to $\lambda_d$ of the function $f_{a,b,c}(\phi_d)$ is given by
    \begin{equation}
        \frac{\partial^2 f_{a,b,c}(\phi_d)}{(\partial \lambda_d)^2}=2\phi_c\lambda_c(\lambda_a-\lambda_b)-2\phi_b\lambda_b(\lambda_a-\lambda_c)-\frac{1}{\lambda_d}(\lambda_b-\lambda_c)(\lambda_a-\lambda_b)(\lambda_a-\lambda_c)+2\phi_a\lambda_a(\lambda_b-\lambda_c).
    \end{equation}
    Imposing the second derivative to be zero, we have
    \begin{equation}
        \frac{1}{\lambda_d}=\frac{2\phi_c\lambda_c(\lambda_a-\lambda_b)-2\phi_b\lambda_b(\lambda_a-\lambda_c)+2\phi_a\lambda_a(\lambda_b-\lambda_c)}{(\lambda_b-\lambda_c)(\lambda_a-\lambda_b)(\lambda_a-\lambda_c)}.
    \end{equation}
    Thus, the second derivative has at most one zero, meaning that the first derivative has at most one extreme point and so at most two zeroes, namely the function $f_{a,b,c}(\phi_d)$ has at most two extreme points and therefore at most three zeroes. Since $f_{a,b,c}(\phi_a)=f_{a,b,c}(\phi_b)=f_{a,b,c}(\phi_c)=0$, the function $f_{a,b,c}(\phi_d)$ has exactly three zeroes and two extreme points. The extreme points must be a local minimum and a local maximum. One should be negative and the other positive, otherwise we have a contradiction with the fact that the function has exactly three zeroes.

    Furthermore, we get
    \begin{align}
        \lim_{\phi_d\to-\infty}f_{a,b,c}(\phi_d)&=\lambda_a\lambda_b\lambda_c(\lambda_a(\phi_c-\phi_b)+\lambda_b(\phi_a-\phi_c)+\lambda_c(\phi_b-\phi_a)).
    \end{align}
    We want to prove that this constant is positive for each $a<b<c$, so that it forces $f_{a,b,c}(\phi_d)$ to satisfy the inequalities in the statement. Fix $a<b$ and consider the function
    \begin{equation}
        g_{a,b}(\phi_c)=\lambda_a(\phi_c-\phi_b)+\lambda_b(\phi_a-\phi_c)+\lambda_c(\phi_b-\phi_a).
    \end{equation}
    We have that $g_{a,b}(\phi_a)=g_{a,b}(\phi_b)=0$. Moreover, the first derivative
    \begin{equation}
        g_{a,b}'(\phi_c)=\lambda_a-\lambda_b+\lambda_c(\phi_b-\phi_a)    
    \end{equation}
    is greater than $0$ as long as $\phi_c>\displaystyle\ln\biggl(\frac{\lambda_b-\lambda_a}{\phi_b-\phi_a}\biggr)$, meaning that $g_{a,b}(\phi_c)$ is increasing for $\phi_c>\displaystyle\ln\biggl(\frac{\lambda_b-\lambda_a}{\phi_b-\phi_a}\biggr)$. Additionally, $g_{a,b}$ is a convex function since its second derivative is greater than $0$. It remains to prove that $\displaystyle\ln\biggl(\frac{\lambda_b-\lambda_a}{\phi_b-\phi_a}\biggr)<\phi_b$: indeed, by the \textit{mean value theorem} there exists $\zeta\in(\phi_a,\phi_b)$ such that 
    \begin{equation}
        \lambda_b-\lambda_a=e^{\phi_b}-e^{\phi_a}=e^\zeta(\phi_b-\phi_a)\implies \ln\biggl(\frac{\lambda_b-\lambda_a}{\phi_b-\phi_a}\biggr)=\zeta<\phi_b.
    \end{equation}
    Hence, $g_{a,b}(\phi_c)>0$ for each $a,b,c\in[M]$ such that $a<b<c$. In other words,
    \begin{align}
        \lim_{\phi_d\to-\infty}f_{a,b,c}(\phi_d)&=\lambda_a\lambda_b\lambda_c(\lambda_a(\phi_c-\phi_b)+\lambda_b(\phi_a-\phi_c)+\lambda_c(\phi_b-\phi_a))>0\qquad\text{for all }a<b<c.\qedhere
    \end{align}
\end{proof}
\begin{remark}
    Since the functions $F_{a,b,c}$ have at most three zeroes, notice that, for generic parameters, intersection points in $\mathcal{C}^{(n)}(\mathcal{M}(A))$ will consist of either trivalent vertices or $X$-crossings.
\end{remark}

Before establishing which are the visible vertices, we attach a color to each resonant interaction point $v_{[a,b,c]}$ that appears in the contour plot. The color of a vertex could be white or black, and it is assigned following the rules outlined below:
\begin{itemize}
    \item $v_{[a,b,c]}$ is a \textbf{white vertex}, denoted by $v^{\circ}_{[a,b,c]}$, if $v_{[a,b,c]}$ is obtained by the intersection of three lines bounding regions in which the functions $\Theta_{B\cup\{a\}},\Theta_{B\cup\{b\}},\Theta_{B\cup\{c\}}$ are dominant, for some $B\subseteq[M]$ having $N-1$ elements;
    \item $v_{[a,b,c]}$ is a \textbf{black vertex}, denoted by $v^\bullet_{[a,b,c]}$, if $v_{[a,b,c]}$ is obtained by the intersection of three lines bounding regions in which the functions $\Theta_{T\setminus\{a\}},\Theta_{T\setminus\{b\}},\Theta_{T\setminus\{c\}}$ are dominant, for some $T\subseteq[M]$ having $N+1$ elements.
\end{itemize}

\begin{proposition}\label{vis_ver}
    Let $n>0$ and $A\in\mathcal P^{>0}_{v,w}$ be an irreducible matrix. Then the trivalent vertices in the reduced pipedream associated to $\mathcal{P}_{v,w}^{>0}$ are visible. In particular, if the vertex in the pipedream is white (resp. black), then it corresponds to a resonant interaction point marked by a white (resp. black) vertex.
\end{proposition}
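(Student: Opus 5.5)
The plan is to follow the KP argument of \cite[Chapter 8]{kodama} and \cite{kodama2014kp}, with \cref{prop-signF} replacing the KP sign lemma. First I would describe the trivalent vertices of the reduced pipedream combinatorially. Each elbow box $\inlinebridge$ of $\Gamma$ carries a white and a black vertex. The pipes through such a vertex are labelled $[a,b]$, $[b,c]$, $[a,c]$ by the labelling rule of \cref{sec:grassmannian}. The three adjacent regions carry labels of the form $B\cup\{a\},B\cup\{b\},B\cup\{c\}$ at a white vertex, and $T\setminus\{a\},T\setminus\{b\},T\setminus\{c\}$ at a black vertex. This follows from the transposition rule for region labels. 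Moreover, the $\Le$--property and the Deodhar characterization (\cite[Theorem 7.8]{talaska2013network}) give that these three labels lie in $\mathcal M(A)$.

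The core step is visibility of $v_{[a,b,c]}$, i.e.\ $z_d<z_{[a,b,c]}$ for every $d\notin\{a,b,c\}$. By \cref{prop-signF}, with $n>0$, this fails exactly when some $d$ lies in $\{d<a\}\cup\{b<d<c\}$. So I must show that for a triple $a<b<c$ coming from a pipedream vertex, no index $d$ with $d<a$ or $b<d<c$ is relevant at that vertex. In the white case, the relevant $d$ are those with $B\cup\{d\}\in\mathcal M(A)$. In the black case, they are those with $T\setminus\{e\}\cup\{d\}\in\mathcal M(A)$, i.e.\ the phases that could dominate near the vertex.

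I would first make this reduction precise. At $v_{[a,b,c]}$ the competing phases $\Theta_K$ with $K\in\mathcal M(A)$ differ from $\Theta_{B\cup\{a\}}$ by $\sum_{k\in K\setminus B'}\theta_k-\sum\theta_{\cdot}$, so it suffices to compare single $\theta_d$ with $z_{[a,b,c]}$. This is only needed locally, since vertex dominance is local. Next I would use the structure of the box in $\Gamma$. The labels $a<b<c$ are read off from the pipes entering the box from the south-east boundary of $Y_b^{\mathrm{in}}$, and the indices $d<a$ and $b<d<c$ correspond to boundary steps whose pipes stay weakly north-west of $b$. Using $v^{\mathrm{in}}(w^{\mathrm{in}})^{-1}$ as in the definition of $I_b$, I would show that exchanging such a $d$ into $B$ (resp.\ $T$) yields either a vanishing Plücker coordinate or a phase that is dominated. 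This is the same combinatorial lemma as in the KP case, where positivity of $f$ on the same index intervals occurs for the parabola. The transfer is clean because the sign pattern in \cref{prop-signF} matches the KP one exactly.

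The main obstacle is this combinatorial step. It requires showing that the index pattern forced by the pipedream at a box is exactly the complement of the ``bad'' intervals of \cref{prop-signF}. I would try induction on the number of boxes, adding boxes along the reading order as in the proof of \cref{fan_is_der} (Appendix \ref{app-fan}). Finally, the color statement follows from the region labels established in the first step, since the dominant phases around the visible vertex are exactly the three pipedream region labels.
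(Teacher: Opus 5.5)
Your overall plan is the paper's: Kodama's Proposition 8.1 with \cref{prop-signF} as the sign lemma. You are also right to read visibility, as the paper's proof does, as dominance over the phases $\Theta_K$ with $K\in\mathcal M(A)$ rather than literally as in Definition~\ref{def-visible}. The first gap is your reduction ``it suffices to compare single $\theta_d$ with $z_{[a,b,c]}$'', which is false. At the vertex point $P$, the set $B\cup\{a\}$ maximizes $\Theta_K(P)$ over $\mathcal M(A)$ iff $\theta_e(P)\le\theta_q(P)$ for every $q\in B\cup\{a\}$ and $e\notin B\cup\{a\}$ with $(B\cup\{a\})\setminus\{q\}\cup\{e\}\in\mathcal M(A)$. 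This includes $q\in B$, and elements of $B$ can sit where \cref{prop-signF} gives $\theta_q(P)<z_{[a,b,c]}$. Already in \cref{example-algorithm-Toda} the white vertex is $v_{[1,2,4]}$ with $B=\{3,5\}$. There $\theta_5(P)<z_{[1,2,4]}$ because $5>c$, and $\{1,3,6\}\in\mathcal M(A)$. So you must show $\theta_6(P)<\theta_5(P)$, which is not a comparison with $z_{[1,2,4]}$ at all. Swaps of an element $q\in L_0$ for some $i\notin L_0$ are exactly what the second case of the paper's proof addresses.

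The second gap is that the combinatorial core is deferred rather than supplied. ``Induction on the number of boxes'' comes with no inductive claim, and Appendix~\ref{app-fan} concerns the asymptotic fan, not dominance at an interior vertex. The missing idea is an intrinsic description of the labels. The paper characterizes $I_a=L_0\cup\{a\}$ as the lexicographically maximal element of $\mathcal M(A)$ among sets of the form $\{i_1<\dots<i_k<a<p_1<\dots<p_l<i_{k+l+2}<\dots<i_N\}$, where the $i$'s are pivots. The two bad intervals are then handled by different mechanisms:
\begin{itemize}
\item An index $d<a$ outside $L_0$ is a non-pivot whose column lies in the span of the preceding pivots, all of which are in $L_0$; hence $L_0\cup\{d\}\notin\mathcal M(A)$.
\item For $q\in L_0$ and $i\notin L_0$ with $a<q<i$ and $\theta_q<\theta_i$, the swap $(I_a\setminus\{q\})\cup\{i\}$ is lexicographically larger than $I_a$, so its minor vanishes. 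One may therefore assume $i<q<c$, and the Plücker relations then carry nonvanishing from $J_a$ to $J_b,J_c$.
\item All remaining indices are disposed of by \cref{prop-signF}.
\end{itemize}
Finally, the paper only needs, and in its second case only obtains, dominant labels of the form $L_0'\cup\{a\},L_0'\cup\{b\},L_0'\cup\{c\}$, possibly with $L_0'\neq L_0$; this already gives visibility and the colour. Your closing claim that the three pipedream labels themselves dominate at $v_{[a,b,c]}$ is stronger and would need its own proof.
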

\begin{proof}
     Let there be a white vertex on the box of the $\reflectbox L$-diagram labeled $ac$, where $a$ is a pivot index and $c$ is a non pivot index. Consider the set of indices that label the regions adjacent to the vertex\footnote{See \cref{fig:region-labels}. Essentially, we label the southeast region of the reduced pipedream by the set of pivot indices of $A$, and let lines $[i,j]$ act as transpositions.}: $I_a$, $I_b$, and $I_c$; with $a <b<c$. These index sets are of the form
    \begin{equation}
        I_a=L_0 \cup \{a\}, \hspace{1em} I_b=L_0 \cup \{b \}, \hspace{1em} I_c=L_0 \cup \{c\}, 
    \end{equation}
    for some subset $L_0$. These indices admit an intrinsic definition independent of the $\reflectbox L$-diagram. In particular, $I_a$ is the lexicographically maximal element of the indices of the form
    \begin{equation}
        \{i_1< \cdots <i_k<a=i_{k+1}<p_1<\cdots <p_l<i_{k+l+2}<\cdots <i_{N}\} \in \mathcal{M}(A),
    \end{equation}
    where the $i_n$'s are the pivot indices of the matrix $A$. We consider two separate cases:
    \begin{itemize}
        \item Suppose that there exists no $i$ such that $i \notin L_0$ and $a<i<c$. In this instance, we claim that $I_a$, $I_b$, and $I_c$ are index sets labeling dominant phases that coincide on the relevant vertex. Indeed, if  $i \notin L_0$ is such that $i<a$, then $\Delta_I(A)=0$ for any $I$ such that $I \cap \{i\} \neq\varnothing$, since the set of pivot indices of $A$ is the lexicographically minimal element of $\mathcal{M}(A)$. By assumption, there is no index $i \notin L_0$ such that $b<i<c$. Lastly, if either $a<i<b$ or $c<i$, $F_{a,b,c}(\phi_i)<0$ by Proposition \ref{prop-signF}.
        \item Suppose now that there exists an $i$ such that $i\notin L_0$ and $a<i<c$. If $\theta_i<\theta_p$ for all $p \in I_a$, then similarly to before, the indices $I_a$, $I_b$, and $I_c$ label the required dominant regions. Otherwise, suppose that there exists $q \in I_0$ such that $a<q<i$ and $\theta_q < \theta_i$. In this case,
        \begin{equation}
            (I_a -\{q\}) \cup \{i\} >I_a
        \end{equation}
        in the lexicographical order, so that 
        \begin{equation}
            \Delta_{(I_a -\{q\}) \cup \{i\}}(A)=0
        \end{equation}
        by the maximality of $I_a$. This means that if $\theta_q < \theta_i$ for $q \in L_0$, we are free to assume that $i<q<c$. We set $J_a=
        (I_a-\{q\}) \cup \{i\}$. If $\Delta_{J_a}(A)\neq 0$, then we can show that $J_a$ and the analogously defined $J_b$ and $J_c$ label the desired dominant regions. That is, the Pl\"ucker relations imply that also $\Delta_{J_b}(A) \neq 0$ and $\Delta_{J_c}(A) \neq 0$.\qedhere  
    \end{itemize}
\end{proof}

\begin{remark}
    The previous proof is entirely analogous to the one given in \cite[Proposition 8.1]{kodama}. However, it must be highlighted that we need the hypothesis $n>0$ to prove the visibility of the vertices. Here is the point in which the shape of the equation arises in the proof (for the KP equation, $t<0$ is needed for the analogous Proposition to hold).
\end{remark}
    
We are now ready to provide the algorithm. Let $n\gg0$ and $A\in\mathcal P^{>0}_{v,w}$. We adapt the algorithm proposed in \cite{kodama} and \cite[\S 8]{kodama2014kp} to construct the soliton graph $\mathcal C^{+}(\mathcal M(A))$ starting from the $\reflectbox L$-diagram associated to $A$:

\begin{algorithm}\label{algorithm-TL1}
    \begin{enumerate}
        \item[]
        \item Draw the \reflectbox{$\mathrm{L}$}-diagram from the pair $(v,w)$ or the derangement $\pi=vw^{-1}$.
        \item Draw the reduced pipedream.
        \item Plot in the $(x_{1},x_{-1})$-plane all the trivalent vertices $v_{[a,b,c]}=(x_{1,[a,b,c]},x_{-1,[a,b,c]})$, $a<b<c$, in the \reflectbox{$\mathrm{L}$}--diagram, where
        \begin{align}
            &x_{1,[a,b,c]}=-\frac{\lambda_a\lambda_b(\phi_a-\phi_b)+\lambda_a\lambda_c(\phi_c-\phi_a)+\lambda_b\lambda_c(\phi_b-\phi_c)}{(\lambda_a-\lambda_b)(\lambda_a-\lambda_c)(\lambda_b-\lambda_c)}n,\\
            &x_{-1,[a,b,c]}=-\frac{\lambda_a\lambda_b\lambda_c(\lambda_a(\phi_b-\phi_c)+\lambda_b(\phi_c-\phi_a)+\lambda_c(\phi_a-\phi_b))}{(\lambda_b-\lambda_a)(\lambda_c-\lambda_a)(\lambda_c-\lambda_b)}n.
        \end{align}
        We have two cases:
        
        \begin{itemize}
            \item if the vertex is white: draw the line $L_{[a,c]}$ for $x_{-1}>x_{-1,[a,b,c]}$ and the lines $L_{[a,b]}$ and $L_{[b,c]}$ for $x_{-1}<x_{-1,[a,b,c]}$;
            \item if the vertex is black: draw the line $L_{[a,c]}$ for $x_{-1}<x_{-1,[a,b,c]}$ and the lines $L_{[a,b]}$ and $L_{[b,c]}$ for $x_{-1}>x_{-1,[a,b,c]}$.
        \end{itemize}

        The equation of the line $[a,b]$ is given in \cref{eq-line-Toda}.
    
        \item Draw an infinite line $L_{[i,j]}$ every time there is a path labeled $[i,j]$ that does not connect two vertices in the pipedream.
    \end{enumerate}
\end{algorithm}
The proof that the previous algorithm faithfully reproduces the corresponding contour plot proceeds by induction along the number of rows of the matrix $A$. Informally, if we denote by $A'$ the matrix resulting from \textit{deleting} the top row of $A$, we prove that $\mathcal{C}^+(\mathcal{M}(A'))$ comprises a polyhedral subset that is almost the entirety of $\mathcal{C}^+(\mathcal{M}(A))$. The difference between the two is fully determined by Proposition \ref{asymptotic}, which describes the asymptotics of $\mathcal{C}^+(\mathcal{M}(A))$, and by Proposition \ref{vis_ver}. The proof of the validity of Algorithm \ref{algorithm-TL1} is functionally the same as the proofs appearing in \cite[Theorem 8.2]{kodama} and \cite[Theorem 8.5]{kodama2014kp}. Before we proceed, we need to discuss the notion of \emph{trips} along reduced pipedreams:

\begin{definition}
    Let $P_A$ be the reduced pipedream corresponding to a matrix $A \in \mathrm{Gr(N,M)_{ \geq 0}}$. Recall from \cref{sec:grassmannian} how each pipe in $P_A$ is labeled by $[i,j]$ for some $i,j \in N$. For any $k \in [N]$, we refer to the collection of pipes labeled by $k$ as the \emph{trip} $T_k$ (see Figure \ref{fig:trips}).
\end{definition}

Let $A\in \mathrm{Gr}(N,M)_{\geq 0}$ be an irreducible matrix, and let $L$ be its corresponding $\reflectbox{L}$-diagram. We denote by $L^{(k)}$ the $\reflectbox{L}$-diagram resulting from filling every box in the first $N-k$ rows of $L$ with white stones, so that $L^{(N)}=L$. Similarly, let $A^{(k)}$ be the matrix that results from setting every entry in the first $N-k$ rows of $A$ to zero, except for the pivots. We will show that $\mathcal{C}^+(\mathcal{M}(A^{(k)}))$ can be combinatorially recovered from $\mathcal{C}^+(\mathcal{M}(A^{(k+1)}))$\footnote{Note that, for $k<N$, the matrix $A^{(k)}$ will not be irreducible. The results detailed in the body of the article still hold; the caveat is that $\mathcal{C}^+(\mathcal{M}(A^{(k)}))$ may be reproduced by a matrix in a \textit{smaller} Grassmannian.}. More precisely, we show that $\mathcal{C}^+(\mathcal{M}(A^{(k)}))$ corresponds to a polyhedral subset of $\mathcal{C}^+(\mathcal{M}(A^{(k+1)}))$, the boundary of which corresponds to the trip $T_{N-k}$ on the reduced pipedream of $L^{(k+1)}$. The asymptotics of $\mathcal{C}^+(\mathcal{M}(A^{(k+1)}))$ are fully determined by Proposition \ref{asymptotic}. Thanks to this, if we assume that $\mathcal{C}^+(\mathcal{M}(A^{(k)}))$ can indeed be algorithmically derived from $L^{(k)}$, the same will have to be true of $\mathcal{C}^+(\mathcal{M}(A^{(k+1)}))$ and $L^{(k+1)}$. For clarity, we focus on the relationship between $\mathcal{C}^+(\mathcal{M}(A^{(N)}))$ and $\mathcal{C}^+(\mathcal{M}(A^{(N-1)}))$. Since $A$ is irreducible, $i_1=1.$

\begin{lemma}
    There exists a polyhedral subset $R$ of $\mathcal{C}^+(\mathcal{M}(A))$ such that every region in $R$ is labeled by an index $J$ satisfying $1 \in J$.
\end{lemma}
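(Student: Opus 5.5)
The natural candidate is $R=$ the closure of the union of the regions of $\mathcal{C}^+(\mathcal{M}(A))$ whose dominant label contains $1$. The work is to show this set is polyhedral and that its boundary consists of soliton lines of type $[1,j]$. I will show this boundary is exactly the trip $T_1$ of the reduced pipedream, so that its complement is governed by $A^{(N-1)}$.

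\textbf{Step 1 (nonemptiness).} By \cref{label-contour}, the region $x_1\ll 0$ is labeled by the pivot set $\{i_1,\dots,i_N\}$. Since $A$ is irreducible, $i_1=1$, so this unbounded region already contains $1$ in its label.

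\textbf{Step 2 (polyhedrality).} The set of points where some dominant label contains $1$ is the closure of the locus where
\[
\max_{J\in\mathcal{M}(A),\,1\in J}\Theta_J \;\ge\; \max_{J\in\mathcal{M}(A),\,1\notin J}\Theta_J .
\]
Both sides are maxima of finitely many affine functions in $(x_1,x_{-1})$ for fixed $n$. Hence this locus is a finite union of polyhedra, and it is a union of closed regions of the contour plot. By \cref{prop-dominantphases}, adjacent labels differ by a single transposition. So a wall of $R$ separating a region with $1\in J$ from one with $1\notin J'$ must be a line $L_{[1,j]}$.

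\textbf{Step 3 (identifying the boundary).} This is the main obstacle. I need to show that the boundary is connected and traced by the trip $T_1$, i.e.\ the pipes of the pipedream carrying label $1$. The plan is as follows.
\begin{itemize}
\item Use \cref{asymptotic} and \cref{fan_is_der} to fix the unbounded ends. For $x_{-1}\gg0$ there is the line $[1,\pi(1)]$ coming from pivot $1$. For $x_{-1}\ll0$ there is the line $[\pi^{-1}(1),1]$; if $\pi^{-1}(1)$ is not a pivot, this is a non-pivot line $[q_m,j_m]$ with $q_m=1$.
\item Use \cref{vis_ver} to show that each trivalent vertex along $T_1$ is visible. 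This means the line $[1,\cdot]$ bends there exactly as the pipedream prescribes, and the $1$-containing side stays on the same side of the bend.
\end{itemize}
Along each wall $L_{[1,j]}$, \cref{lem-dominantrel} tells us that $\theta_1$ is dominated by every $\theta_m$ with $m>j$ and dominates every $\theta_m$ with $1<m<j$. This fixes which side contains $1$, and so it is consistent with orienting the trip.

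\textbf{Step 4 (no stray regions).} Finally I exclude bounded regions labeled by sets containing $1$ that are cut off from the main region. Suppose $J\ni1$ is dominant in such a region and is surrounded only by labels not containing $1$. Then a Plücker relation of the type used in \cref{prop-dominantphases}, combined with the fact that the first row of $A$ in RREF controls every minor involving column $1$, should produce a neighbouring dominant label containing $1$. This gives a contradiction. I expect this step, together with the exact matching with $T_1$ in Step 3, to carry the real difficulty. I would mirror \cite[Theorem 8.2]{kodama}, replacing the KP sign lemmas by \cref{lem-dominantrel} and \cref{prop-signF}.
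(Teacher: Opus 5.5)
Your Steps 1--2 already prove the lemma, and they follow the paper's own argument. The region at $x_1\ll 0$ carries the pivot set, which contains $i_1=1$ by irreducibility; the union of the closed regions whose label contains $1$ is a finite union of polyhedra, and its walls against the remaining regions are lines $L_{[1,j]}$ by \cref{prop-dominantphases}. Your description via $\max_{1\in J}\Theta_J\ge\max_{1\notin J}\Theta_J$ makes the polyhedrality more explicit than the paper does.

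Steps 3--4 are not needed for this statement. Matching $\partial R$ with the trip $T_1$ is the content of the subsequent proof of \cref{algorithm-TL1}, and Step 4 is moot because a finite union of polyhedra is polyhedral whether or not it is connected. Drop both steps rather than leave the heuristic ``should produce'' inside this proof.
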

\begin{proof}
    Recall that $\theta_i(x_{-1},x_1,n)=\frac{1}{\lambda_i}x_{-1}+\lambda_ix_1+ (\ln \lambda)n$. It follows that, for any fixed $x_{-1}$ and $n$, there exists a large enough $x_1$ such that
    \begin{equation}
        \theta_1(x_{-1}, -x_1,n)> \theta_2(x_{-1}, -x_1,n)+> \dots >\theta_N(x_{-1}, -x_1,n).
    \end{equation}
    Therefore, any index $J$ labeling a dominant region where the $x_1$ coordinate is sufficiently negative must satisfy that $1 \in J$. The subset $R$ defined as the union of such regions is polyhedral because its boundary is necessarily comprised of soliton lines of the type $[1,j]$ for some $j$. 
\end{proof}

\begin{figure}[h]
    \centering
     \begin{tikzpicture}[x=0.75pt,y=0.75pt,yscale=-0.75,xscale=0.75]
            
            \draw   (400,526) -- (450,526) -- (450,576) -- (400,576) -- cycle ;
            \draw   (450,526) -- (500,526) -- (500,576) -- (450,576) -- cycle ;
            \draw   (450,476) -- (500,476) -- (500,526) -- (450,526) -- cycle ;
            \draw   (400,476) -- (450,476) -- (450,526) -- (400,526) -- cycle ;
            \draw   (400,576) -- (450,576) -- (450,626) -- (400,626) -- cycle ;
            \draw   (500,476) -- (550,476) -- (550,526) -- (500,526) -- cycle ;
            \draw    (427,456) .. controls (429,540) and (422,580) .. (441,597) ;
            \draw    (384,598) .. controls (417,596) and (420,601) .. (426,609) ;
            \draw    (426,609) -- (441,597) ;
            \draw    (383,546) .. controls (432,545) and (449,542) .. (470,557) ;
            \draw    (470,557) -- (487,544) ;
            \draw [color={rgb, 255:red, 208; green, 2; blue, 27 }, very thick  ,draw opacity=1 ]   (523,460) .. controls (524,478) and (519,491) .. (536,493) ;
            \draw  [color={rgb, 255:red, 208; green, 2; blue, 27 }, very thick  ,draw opacity=1 ]  (517,509) -- (536,493) ;
            
            \draw  [color={rgb, 255:red, 208; green, 2; blue, 27 }, very thick  ]  (467,512) -- (476,496) ;
            \draw  [color={rgb, 255:red, 208; green, 2; blue, 27 }, very thick, draw opacity=1]  (386.33,492.67) .. controls (430.33,491.67) and (463.33,494.67) .. (467,512) ;
            \draw    (467,512) .. controls (468.33,539.67) and (465.33,538.67) .. (487,544) ;
            \draw [color={rgb, 255:red, 208; green, 2; blue, 27 }, very thick]   (476,496) .. controls (482.33,509.67) and (502.33,494.67) .. (517,509) ;
            \draw    (465.33,458.67) .. controls (465.33,482.67) and (466.33,484.67) .. (476,496) ;
                
            \draw (527,443) node [anchor=north west][inner sep=0.75pt]    {$[ 1,2]$};
            \draw (470,442) node [anchor=north west][inner sep=0.75pt]    {$[ 2,4]$};
            \draw (381,447) node [anchor=north west][inner sep=0.75pt]    {$[ 5,6]$};
            \draw (337,484) node [anchor=north west][inner sep=0.75pt]    {$[ 1,3]$};
            \draw (336,533) node [anchor=north west][inner sep=0.75pt]    {$[ 3,4]$};
            \draw (336,586) node [anchor=north west][inner sep=0.75pt]    {$[ 5,6]$};
            \draw  [fill={rgb, 255:red, 255; green, 255; blue, 255 }  ,fill opacity=1 ] (471.5,496) .. controls (471.5,493.51) and (473.51,491.5) .. (476,491.5) .. controls (478.49,491.5) and (480.5,493.51) .. (480.5,496) .. controls (480.5,498.49) and (478.49,500.5) .. (476,500.5) .. controls (473.51,500.5) and (471.5,498.49) .. (471.5,496) -- cycle ;
            \draw  [fill={rgb, 255:red, 0; green, 0; blue, 0 }  ,fill opacity=1 ] (462.5,512) .. controls (462.5,509.51) and (464.51,507.5) .. (467,507.5) .. controls (469.49,507.5) and (471.5,509.51) .. (471.5,512) .. controls (471.5,514.49) and (469.49,516.5) .. (467,516.5) .. controls (464.51,516.5) and (462.5,514.49) .. (462.5,512) -- cycle ;
            
        \end{tikzpicture} \hspace{2em} 
    \begin{tikzpicture}[x=0.75pt,y=0.75pt,yscale=-1,xscale=1]

\draw    (252.07,188.79) -- (193.73,246.6) ;
\draw [color={rgb, 255:red, 208; green, 2; blue, 27 }  ,draw opacity=1 ][line width=2.25]    (252.07,188.79) -- (139.44,208.92) ;
\draw [color={rgb, 255:red, 208; green, 2; blue, 27 }  ,draw opacity=1 ][line width=2.25]    (252.07,188.79) -- (282.05,169.95) ;
\draw    (282.05,169.95) -- (357.41,106.29) ;
\draw [color={rgb, 255:red, 208; green, 2; blue, 27 }  ,draw opacity=1 ][line width=2.25]    (282.05,169.95) -- (397.12,150.46) ;
\draw    (324.19,101.09) -- (299.88,244) ;
\draw  [fill={rgb, 255:red, 255; green, 255; blue, 255 }  ,fill opacity=1 ] (248.63,188.79) .. controls (248.63,187.26) and (250.17,186.03) .. (252.07,186.03) .. controls (253.97,186.03) and (255.52,187.26) .. (255.52,188.79) .. controls (255.52,190.31) and (253.97,191.55) .. (252.07,191.55) .. controls (250.17,191.55) and (248.63,190.31) .. (248.63,188.79) -- cycle ;
\draw  [fill={rgb, 255:red, 0; green, 0; blue, 0 }  ,fill opacity=1 ] (278.61,169.95) .. controls (278.61,168.42) and (280.15,167.19) .. (282.05,167.19) .. controls (283.96,167.19) and (285.5,168.42) .. (285.5,169.95) .. controls (285.5,171.47) and (283.96,172.71) .. (282.05,172.71) .. controls (280.15,172.71) and (278.61,171.47) .. (278.61,169.95) -- cycle ;

\draw (378.64,159.55) node [anchor=north west][inner sep=0.75pt]  [font=\scriptsize] [align=left] {[1,3]};
\draw (269.38,178.24) node [anchor=north west][inner sep=0.75pt]  [font=\scriptsize] [align=left] {[1,4]};
\draw (161.66,207.23) node [anchor=north west][inner sep=0.75pt]  [font=\scriptsize] [align=left] {[1,2]};
\draw (226.83,217.87) node [anchor=north west][inner sep=0.75pt]  [font=\scriptsize] [align=left] {[2,4]};
\draw (311.93,196.26) node [anchor=north west][inner sep=0.75pt]  [font=\scriptsize] [align=left] {[5,6]};
\draw (360.16,98.36) node [anchor=north west][inner sep=0.75pt]  [font=\scriptsize] [align=left] {[3,4]};
\draw (235,149) node [anchor=north west][inner sep=0.75pt]   [align=left] {$\displaystyle R$};

\end{tikzpicture}

    \caption{Highlighted in red, the trip $T_1$ on the reduced pipedream of \cref{example-algorithm-Toda}, and its corresponding soliton lines on the associated contour plot.}
    \label{fig:trips}
\end{figure}

Notice that $\mathcal{C}^+(\mathcal{M}(A))$ and $\mathcal{C}^+(\mathcal{M}(A^{(N-1)}))$ coincide in $R$. Indeed, by construction, every dominant region $J$ in $\mathcal{C}^+(\mathcal{M}(A^{(N-1)}))$ satisfies that $1\in J$. More specifically,
\begin{equation}
    \mathcal{M}(A^{(N-1)})=\mathcal{M}(A) \cap \left( \bigcup_{\{p_1, \dots, p_{N-1}\}} \{1,p_1, \dots,p_{N-1}\}
    \right).
\end{equation}

\begin{proof}[Proof (of Algorithm \ref{algorithm-TL1}).] Instead of proving the inductive step in general, we will show that assuming Algorithm \ref{algorithm-TL1} reproduces $\mathcal{C}^+(\mathcal{M}(A^{(N-1)}))$ from $L^{(N-1)}$ means the same must be true for $\mathcal{C}^+(\mathcal{M}(A))$ and $L$.

Per Proposition \ref{asymptotic}, notice how the asymptotic structure of the unbounded solitons lines in $\mathcal{C}^+(\mathcal{M}(A^{(N)}))$ agrees with that described by the reduced pipedream of $L$. Recall how the boundary of $R$ in $\mathcal{C}^+(\mathcal{M}(A))$ is comprised of solitons of the type $[1,j]$. By Proposition \ref{vis_ver}, every vertex along the trip $T_1$ of $L$ must be visible; these vertices must be the exact same vertices appearing along the boundary of $R$. Indeed, if $\partial R$ had any extra vertices, this would either contradict our induction hypothesis or disturb the asymptotics of $\mathcal{C}^+(\mathcal{M}(A))$. To prove Algorithm \ref{algorithm-TL1} produces $\mathcal{C}^+(\mathcal{M}(A))$ from $L$, all that remains to show is that the vertices $v_{1,b,c}$ along $\partial R$ and in $T_1$ appear in the same \textit{order}. This follows from the proof of Theorem 8.5 in \cite{kodama2014kp}.
\end{proof}

\begin{example}\label{example-algorithm-Toda}
    Consider the matrix

    \begin{equation}
        A=
        \begin{pmatrix}
            1 &1 &0 &-1 &0 &0\\
            0 &0 &1 &1 &0 &0\\
            0 &0 &0 &0 &1 &1
        \end{pmatrix}
        \in\mathcal P^{>0}_{v,w}\subset\text{Gr}(3,6)_{\geq0}
    \end{equation}

    where $w=s_5s_3s_4s_1s_2s_3$ and $v=11s_411s_3$. Let $n=30$ and choose generic parameters 
    
    \begin{equation}
        \{\lambda_1=e^{-1.5},\lambda_2=e^{-1},\lambda_3=e^{-0.2},\lambda_4=e^{0.5},\lambda_5=e^{1.4},\lambda_6=e^2\}.
    \end{equation}

    This example compares the output given by the algorithm with the contour plot of the solution $V_{30,A}(x_{-1},x_1)$ associated to the matrix $A$. In \cref{fig:algorithm}, the contour plot of $V_{n,A}(x_{-1},x_1)$ is displayed at the top and the \reflectbox{L}-diagram and the output of the algorithm on the second line. We use the expressions for $\tau^{(n)}(x_{-1},x_1)$ and $V_{n,A}(x_{-1},x_1)$ from equations \eqref{eq-tauexpression} and \eqref{eq-solution}, respectively. The picture at the top of \cref{fig:algorithm} was produced using \textsc{Mathematica} \cite{Mathematica}, while the picture at the bottom right was produced using \href{https://www.geogebra.org/}{GeoGebra}. In the region bounded by the lines $[1,2]$, $[1,4]$, $[3,4]$ and $[5,6]$, the exponential function indexed by the lexicographical minimum $I=\{1,3,5\}$ in $\mathcal M(A)$ dominates. Meanwhile, in the region bounded by the lines $[1,3]$ and $[5,6]$, the dominating exponential function is indexed by the lexicographical maximum $J=\{2,4,6\}$ in $\mathcal M(A)$.

    \begin{figure}
        \centering
        \includegraphics[width=0.3\linewidth]{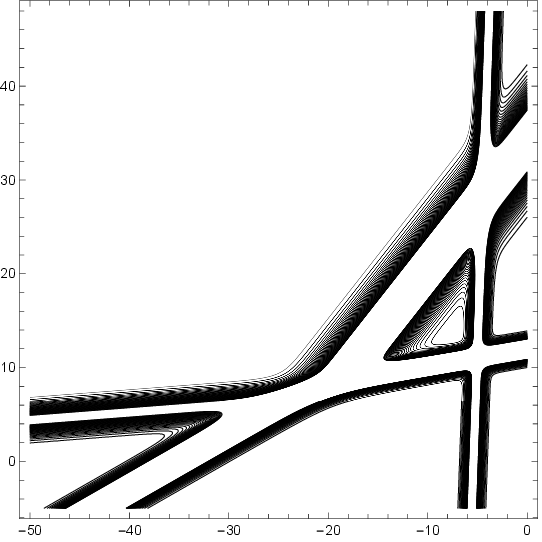}
        \tikzset{every picture/.style={line width=0.75pt}} 

        \begin{tikzpicture}[x=0.75pt,y=0.75pt,yscale=-0.75,xscale=0.75]
            
            \draw   (400,526) -- (450,526) -- (450,576) -- (400,576) -- cycle ;
            \draw   (450,526) -- (500,526) -- (500,576) -- (450,576) -- cycle ;
            \draw   (450,476) -- (500,476) -- (500,526) -- (450,526) -- cycle ;
            \draw   (400,476) -- (450,476) -- (450,526) -- (400,526) -- cycle ;
            \draw   (400,576) -- (450,576) -- (450,626) -- (400,626) -- cycle ;
            \draw   (500,476) -- (550,476) -- (550,526) -- (500,526) -- cycle ;
            \draw [color={rgb, 255:red, 211; green, 47; blue, 47 }  ,draw opacity=1 ]   (427,456) .. controls (429,540) and (422,580) .. (441,597) ;
            \draw [color={rgb, 255:red, 211; green, 47; blue, 47 }  ,draw opacity=1 ]   (384,598) .. controls (417,596) and (420,601) .. (426,609) ;
            \draw [color={rgb, 255:red, 211; green, 47; blue, 47 }  ,draw opacity=1 ]   (426,609) -- (441,597) ;
            \draw [color={rgb, 255:red, 101; green, 87; blue, 210 }  ,draw opacity=1 ]   (383,546) .. controls (432,545) and (449,542) .. (470,557) ;
            \draw [color={rgb, 255:red, 101; green, 87; blue, 210 }  ,draw opacity=1 ]   (470,557) -- (487,544) ;
            \draw [color={rgb, 255:red, 0; green, 103; blue, 88 }  ,draw opacity=1 ]   (523,460) .. controls (524,478) and (519,491) .. (536,493) ;
            \draw [color={rgb, 255:red, 0; green, 103; blue, 88 }  ,draw opacity=1 ]   (517,509) -- (536,493) ;
            \draw [color={rgb, 255:red, 128; green, 128; blue, 128 }  ,draw opacity=1 ]   (467,512) -- (476,496) ;
            \draw   (471.5,496) .. controls (471.5,493.51) and (473.51,491.5) .. (476,491.5) .. controls (478.49,491.5) and (480.5,493.51) .. (480.5,496) .. controls (480.5,498.49) and (478.49,500.5) .. (476,500.5) .. controls (473.51,500.5) and (471.5,498.49) .. (471.5,496) -- cycle ;
            \draw  [fill={rgb, 255:red, 0; green, 0; blue, 0 }  ,fill opacity=1 ] (462.5,512) .. controls (462.5,509.51) and (464.51,507.5) .. (467,507.5) .. controls (469.49,507.5) and (471.5,509.51) .. (471.5,512) .. controls (471.5,514.49) and (469.49,516.5) .. (467,516.5) .. controls (464.51,516.5) and (462.5,514.49) .. (462.5,512) -- cycle ;
            \draw [color={rgb, 255:red, 199; green, 80; blue, 0 }  ,draw opacity=1 ]   (386.33,492.67) .. controls (430.33,491.67) and (463.33,494.67) .. (467,512) ;
            \draw [color={rgb, 255:red, 101; green, 87; blue, 210 }  ,draw opacity=1 ]   (467,512) .. controls (468.33,539.67) and (465.33,538.67) .. (487,544) ;
            \draw [color={rgb, 255:red, 0; green, 103; blue, 88 }  ,draw opacity=1 ]   (476,496) .. controls (482.33,509.67) and (502.33,494.67) .. (517,509) ;
            \draw [color={rgb, 255:red, 21; green, 101; blue, 192 }  ,draw opacity=1 ]   (465.33,458.67) .. controls (465.33,482.67) and (466.33,484.67) .. (476,496) ;
                
            \draw (527,443) node [anchor=north west][inner sep=0.75pt]    {$[ 1,2]$};
            \draw (470,442) node [anchor=north west][inner sep=0.75pt]    {$[ 2,4]$};
            \draw (381,447) node [anchor=north west][inner sep=0.75pt]    {$[ 5,6]$};
            \draw (337,484) node [anchor=north west][inner sep=0.75pt]    {$[ 1,3]$};
            \draw (336,533) node [anchor=north west][inner sep=0.75pt]    {$[ 3,4]$};
            \draw (336,586) node [anchor=north west][inner sep=0.75pt]    {$[ 5,6]$};
            
        \end{tikzpicture}
        \includegraphics[width=0.3\linewidth]{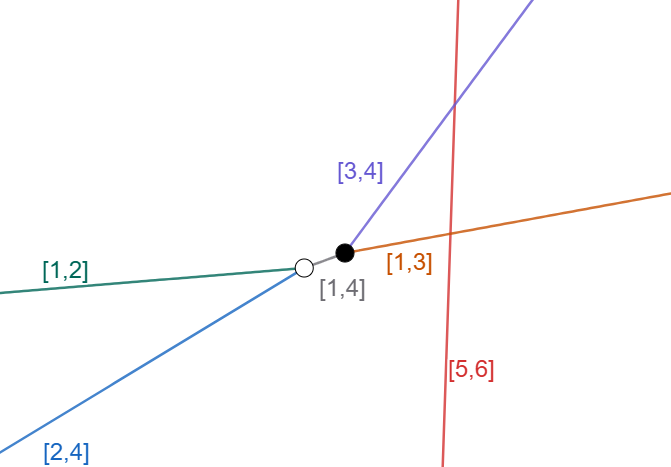}
        \caption{A comparison between the contour plot of $V_{30,A}(x_{-1},x_1)$ (top) and the output of the algorithm (bottom right).}
        \label{fig:algorithm}
    \end{figure}

\end{example}

\cref{algorithm-TL1} can also produce $\mathcal C^-(\mathcal M(A))$ in the following way: recall the definition of the dual matrix $B_A$ from \cref{sec-dual}, and consider the transformation $x_{-1}'=-x_{-1}$ and $x_1'=-x_1$. Let $n\ll0$, then

\begin{align}
    \max_{I\in\mathcal M(A)}\{\Theta_I(x_{-1},x_1,n)\}&=\max_{J\in\mathcal M(B_A)}\{\Theta_{[M]}(x_{-1},x_1,n)-\Theta_J(x_{-1},x_1,n)\}\\
    &=\max_{J\in\mathcal M(B_A)}\{\Theta_J(-x_{-1},-x_1,-n)\}\\
    &=\max_{J\in\mathcal M(B_A)}\{\Theta_J(x_{-1}',x_1',-n)\}.
\end{align}

Since $n\ll0$, it follows that $-n\gg0$, and the algorithm applied to $B_A$ in the plane $(x_{1}',x_{-1}')$ yields the contour plot $\mathcal C^-(\mathcal M(A))$.

\begin{example}\label{dual_soliton_TL}
    Here we provide the contour plot for the same $A$ and set of parameters in \cref{example-algorithm-Toda} for $n=-70$. Recall that the matrix $B_A$ is of the form 
    $\begin{pmatrix}
        -G^T|I_3
    \end{pmatrix}PD$, where
    \begin{equation}
        G=
        \begin{pmatrix}
            1 &-1 &0\\
            0 &1 &0\\
            0 &0 &1
        \end{pmatrix},
        \qquad
        P=
        \begin{pmatrix}
            1 &0 &0 &0 &0 &0\\
            0 &0 &1 &0 &0 &0\\
            0 &0 &0 &0 &1 &0\\
            0 &1 &0 &0 &0 &0\\
            0 &0 &0 &1 &0 &0\\
            0 &0 &0 &0 &0 &1
        \end{pmatrix},
        \qquad
        D=
        \begin{pmatrix}
            -1 &0 &0 &0 &0 &0\\
            0 &1 &0 &0 &0 &0\\
            0 &0 &-1 &0 &0 &0\\
            0 &0 &0 &1 &0 &0\\
            0 &0 &0 &0 &-1 &0\\
            0 &0 &0 &0 &0 &1
        \end{pmatrix}.
    \end{equation}
    The matrix $B_A$, in RREF, is the following
    \begin{equation}
        B_A=
        \begin{pmatrix}
            1 &0 &-1 &-1 &0 &0\\
            0 &1 &1 &1 &0 &0\\
            0 &0 &0 &0 &1 &1
        \end{pmatrix}.
    \end{equation}
    The reader may compare the output of the algorithm applied to the matrix $B_A$ with  top of $V_{n,A}(x_{-1},x_1)$ in \cref{fig:dualsoliton}.
    \begin{figure}[h]
        \centering
        \includegraphics[width=0.3\linewidth]{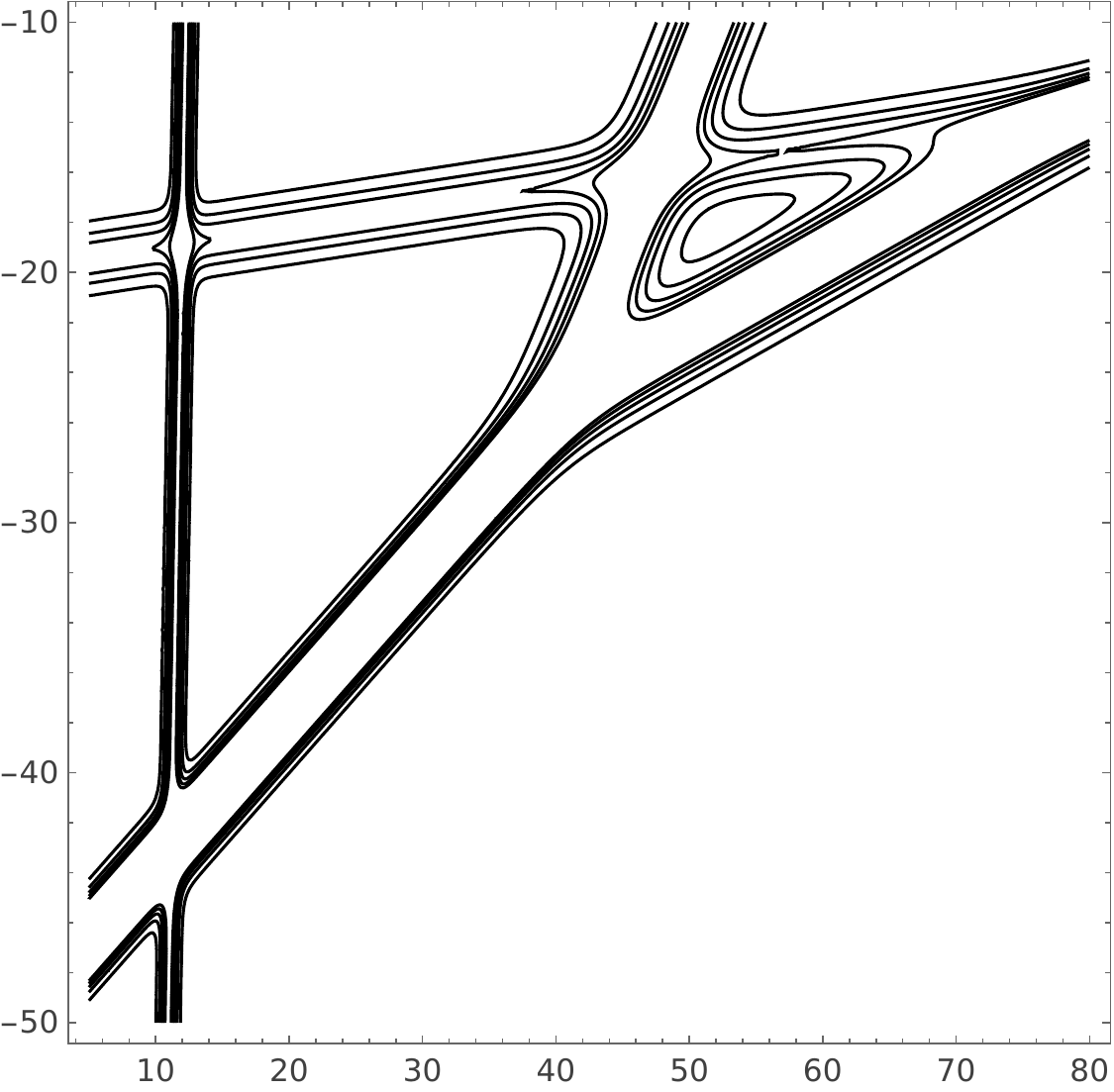}
        \tikzset{every picture/.style={line width=0.75pt}} 
        
        \begin{tikzpicture}[x=0.75pt,y=0.75pt,yscale=-0.75,xscale=0.75]

            \draw   (405,276.67) -- (446,276.67) -- (446,317.67) -- (405,317.67) -- cycle ;
            \draw   (446,276.67) -- (487,276.67) -- (487,317.67) -- (446,317.67) -- cycle ;
            \draw   (487,276.67) -- (528,276.67) -- (528,317.67) -- (487,317.67) -- cycle ;
            \draw   (405,317.67) -- (446,317.67) -- (446,358.67) -- (405,358.67) -- cycle ;
            \draw   (446,317.67) -- (487,317.67) -- (487,358.67) -- (446,358.67) -- cycle ;
            \draw   (487,317.67) -- (528,317.67) -- (528,358.67) -- (487,358.67) -- cycle ;
            \draw   (405,358.67) -- (446,358.67) -- (446,399.67) -- (405,399.67) -- cycle ;
            \draw [color={rgb, 255:red, 211; green, 47; blue, 47 }  ,draw opacity=1 ]   (419.33,255.33) .. controls (418.33,295.33) and (416,356.67) .. (435,373.67) ;
            \draw [color={rgb, 255:red, 211; green, 47; blue, 47 }  ,draw opacity=1 ]   (378,372.67) .. controls (411,370.67) and (414,375.67) .. (420,383.67) ;
            \draw [color={rgb, 255:red, 211; green, 47; blue, 47 }  ,draw opacity=1 ]   (420,383.67) -- (435,373.67) ;
            \draw [color={rgb, 255:red, 21; green, 101; blue, 192 }  ,draw opacity=1 ]   (385.33,332.33) .. controls (404.33,332.33) and (459.33,330.33) .. (462.33,343.33) ;
            \draw [color={rgb, 255:red, 21; green, 101; blue, 192 }  ,draw opacity=1 ]   (470.33,333.33) .. controls (470.33,332.33) and (462.33,343.33) .. (462.33,343.33) ;
            \draw   (465.83,333.33) .. controls (465.83,330.85) and (467.85,328.83) .. (470.33,328.83) .. controls (472.82,328.83) and (474.83,330.85) .. (474.83,333.33) .. controls (474.83,335.82) and (472.82,337.83) .. (470.33,337.83) .. controls (467.85,337.83) and (465.83,335.82) .. (465.83,333.33) -- cycle ;
            \draw [color={rgb, 255:red, 101; green, 87; blue, 210 }  ,draw opacity=1 ]   (460.33,258.33) .. controls (460.33,280.33) and (454.33,323.33) .. (470.33,333.33) ;
            \draw [color={rgb, 255:red, 199; green, 80; blue, 0 }  ,draw opacity=1 ]   (511.33,291.67) .. controls (503.33,282.33) and (504.33,272.33) .. (504.33,258.33) ;
            \draw [color={rgb, 255:red, 199; green, 80; blue, 0 }  ,draw opacity=1 ]   (502.33,304.33) .. controls (502.33,303.33) and (511.33,291.33) .. (511.33,291.67) ;
            \draw  [fill={rgb, 255:red, 0; green, 0; blue, 0 }  ,fill opacity=1 ] (497.83,304.33) .. controls (497.83,301.85) and (499.85,299.83) .. (502.33,299.83) .. controls (504.82,299.83) and (506.83,301.85) .. (506.83,304.33) .. controls (506.83,306.82) and (504.82,308.83) .. (502.33,308.83) .. controls (499.85,308.83) and (497.83,306.82) .. (497.83,304.33) -- cycle ;
            \draw [color={rgb, 255:red, 0; green, 103; blue, 88 }  ,draw opacity=1 ]   (386.33,294.33) .. controls (422.33,294.33) and (494.33,286.33) .. (502.33,304.33) ;
            \draw [color={rgb, 255:red, 128; green, 128; blue, 128 }  ,draw opacity=1 ]   (502.33,341.33) -- (513.33,331.33) ;
            \draw [color={rgb, 255:red, 128; green, 128; blue, 128 }  ,draw opacity=1 ]   (470.33,333.33) .. controls (496.33,330.33) and (494.33,342.33) .. (502.33,341.33) ;
            \draw [color={rgb, 255:red, 128; green, 128; blue, 128 }  ,draw opacity=1 ]   (502.33,304.33) .. controls (501.33,319.33) and (498.33,326.33) .. (513.33,331.33) ;
            
            \draw (345,279) node [anchor=north west][inner sep=0.75pt]    {$[ 1,2]$};
            \draw (340,318) node [anchor=north west][inner sep=0.75pt]    {$[ 2,4]$};
            \draw (377,243) node [anchor=north west][inner sep=0.75pt]    {$[ 5,6]$};
            \draw (507,245) node [anchor=north west][inner sep=0.75pt]    {$[ 1,3]$};
            \draw (442,234) node [anchor=north west][inner sep=0.75pt]    {$[ 3,4]$};
            \draw (352,379) node [anchor=north west][inner sep=0.75pt]    {$[ 5,6]$};

        \end{tikzpicture}
        \includegraphics[width=0.3\linewidth]{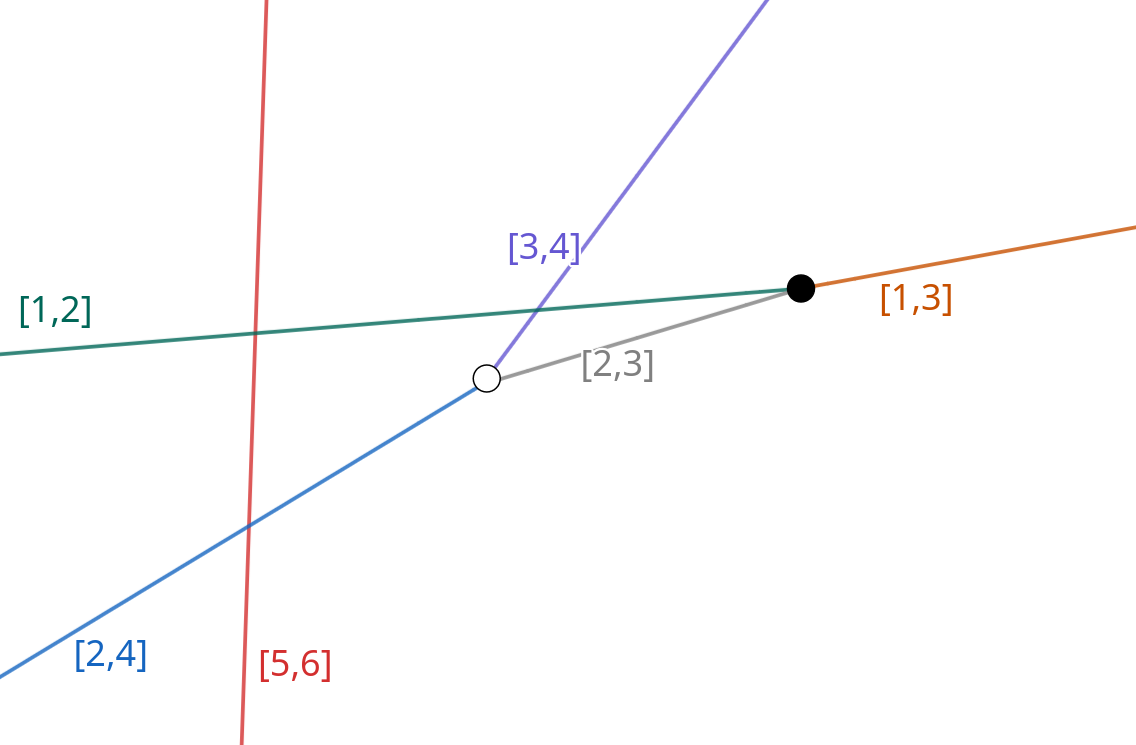}
        \caption{ A comparison between the contour plot of  $V_{-70,A}(x_{-1},x_1)$ (top) and the output of the algorithm (bottom right).}
        \label{fig:dualsoliton}
    \end{figure}
\end{example}

\begin{remark}
    Notice that the \reflectbox{L}-diagram of the matrix $B_A$ is the dual \reflectbox{L}-diagram (see \cite[Section 7.1.1]{kodama}) of the \textit{time dual \reflectbox{$\mathrm L$}-diagram} introduced in \cite[Section 8.2.2]{kodama}.
\end{remark}

\subsection{Solitons in the \texorpdfstring{$(x\textsubscript{$1$},n)$}{(x 1,n)}--plane}

In \cite{Biondini_2010}, the authors describe the asymptotic structure of 2D-Toda Lattice solitons in the $(x_1, n)$--plane, while fixing the variable $x_{-1}$. In this section, we first reproduce their results with our methods. We also provide an algorithm to obtain the corresponding asymptotic contour plots, in the same fashion as \cref{sec:algorithm}. 

Recall that, if the variables are large enough, a soliton line $[a,b]$ has equation
\begin{equation}
    L_{[a,b]}^{(n)}: x_1+n\frac{\phi_a-\phi_b}{\lambda_a-\lambda_b}-\frac{1}{\lambda_a\lambda_b}x_{-1}=0,
\end{equation}
where $\phi_i= \ln \lambda_i$, so that in the $(x_1,n)$-plane, the slope of $L_{[a,b]}^{(n)}$ is given by
\begin{equation}\label{eq-slope}
\frac{\lambda_b-\lambda_a}{\phi_a-\phi_b}.
\end{equation}
Also in this case, we will say a family of parameters $\{\lambda_1, \dots, \lambda_M\}$ is generic if the condition in Definition \ref{def-generic} is satisfied.
As before, for generic $\{\lambda_1<\dots<\lambda_M\}$, the labels of adjacent dominant phases share all but one index:
\begin{proposition}[\cite{Biondini_2010}, Theorem 3.5]
    Let $\{\lambda_1<\dots<\lambda_M\}$ be generic. Then the phases that dominate two adjacent regions of the soliton graph $\mathcal{C}^{(x_{-1})}(\mathcal{M}(A))$ are of the form $\Theta_I$ and $\Theta_J$, where $I,J\in\mathcal{M}(A)$ differ by only one element.
\end{proposition}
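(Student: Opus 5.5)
I would mirror the proof of \cref{prop-dominantphases}, now working in the $(x_1,n)$--plane with $x_{-1}$ fixed. By \cref{Theta-Toda}, $\Theta_K=x_1\sum_{k\in K}\lambda_k+n\sum_{k\in K}\phi_k+x_{-1}\sum_{k\in K}\lambda_k^{-1}$. Under the genericity assumption of \cref{def-generic}, no two distinct $\Theta_{K_1},\Theta_{K_2}$ coincide identically as affine functions, so every pair of adjacent regions is separated by a genuine line. I then argue by contradiction: suppose adjacent regions $R,R'$ are dominated by $I=\{i,j,m_3,\dots,m_N\}$ and $J=\{k,l,m_3,\dots,m_N\}$, with common boundary contained in $L:\theta_i+\theta_j=\theta_k+\theta_l$.

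The first step is to show that $\theta_i-\theta_k$ and $\theta_j-\theta_l$ cannot both vanish on $L$. Otherwise the lines $\theta_i=\theta_k$ and $\theta_j=\theta_l$ coincide in the $(x_1,n)$--plane. In that plane they have slopes given by \eqref{eq-slope}, so equality of slopes gives $\frac{\lambda_k-\lambda_i}{\phi_i-\phi_k}=\frac{\lambda_l-\lambda_j}{\phi_j-\phi_l}$. Coincidence additionally forces the $x_{-1}$--intercepts to agree, i.e.\ $\frac{\lambda_i\lambda_k}{\lambda_i-\lambda_k}\bigl(\tfrac1{\lambda_k}-\tfrac1{\lambda_i}\bigr)$-type terms must match. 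Concretely, $L_{[a,b]}$ has $x_1$--intercept $x_{-1}/(\lambda_a\lambda_b)$, so for $x_{-1}\neq0$ we need $\lambda_i\lambda_k=\lambda_j\lambda_l$, i.e.\ $\phi_i+\phi_k=\phi_j+\phi_l$. This puts us in the same situation as before. Writing $\phi_i=m-t$, $\phi_k=m+t$, $\phi_j=m-s$, $\phi_l=m+s$ (nested intervals, forced by genericity), the slope equality becomes $\sinh(t)/t=\sinh(s)/s$. This contradicts the strict monotonicity of $t/\sinh t$ established in the proof of \cref{prop-dominantphases}, since $s\ne t$ by genericity. The degenerate case $x_{-1}=0$ has all lines through the origin, and the slope condition alone is ruled out by the same monotonicity argument after the symmetric-interval reduction. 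I expect this reduction to be the main obstacle: slope equality alone does not obviously force $\phi_i+\phi_k=\phi_j+\phi_l$, and I would handle it by first invoking whichever genericity condition holds to exclude it, exactly as was done before via condition 3.

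Once simultaneous vanishing is excluded, $\theta_i-\theta_k$ and $\theta_j-\theta_l$ have opposite signs along $L$ (their sum is zero there). I would then copy the final part of the earlier proof verbatim. If both $\Delta_{\{k,j,m_3,\dots\}}(A)$ and $\Delta_{\{i,l,m_3,\dots\}}(A)$ were nonzero, dominance of $\Theta_I$ in $R$ would make both differences positive inside $R$. By continuity they would then both be nonnegative on $\partial R\cap L$, contradicting the opposite signs. So one of these minors vanishes. The three-term Plücker relation from \cref{plucker relation}, together with nonnegativity of all minors, gives $\Delta_I\Delta_J=\Delta_{\{k,j,\dots\}}\Delta_{\{i,l,\dots\}}+\Delta_{\{i,k,\dots\}}\Delta_{\{j,l,\dots\}}$, with the index ordering chosen so that the last term carries the appropriate sign. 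In the ordering produced above, I would check that the signs force $\Delta_I\Delta_J=0$, contradicting that $I,J\in\mathcal M(A)$ label dominant phases.
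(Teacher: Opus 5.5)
For $x_{-1}\neq 0$ your argument is the paper's. You reduce to the proof of \cref{prop-dominantphases}, use equality of the $x_1$--intercepts $x_{-1}/(\lambda_a\lambda_b)$ to get $\phi_i+\phi_k=\phi_j+\phi_l$, and finish with the monotonicity of $t/\sinh t$.

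The genuine gap is your claim that $x_{-1}=0$ can also be handled. There all lines pass through the origin, so coincidence is decided by slopes alone. The slope of $L_{[a,b]}$ in the $(x_1,n)$--plane is governed by the secant slope $(\lambda_b-\lambda_a)/(\phi_b-\phi_a)$ of $\exp$, and none of the three conditions in \cref{def-generic} controls it. Condition 3 helped before only because slopes in the $(x_1,x_{-1})$--plane are $1/(\lambda_a\lambda_b)$. In fact the statement fails at $x_{-1}=0$:
\begin{itemize}
\item Choose $\phi_1<\phi_2<\phi_3<\phi_4$ with $(\lambda_4-\lambda_1)/(\phi_4-\phi_1)=(\lambda_3-\lambda_2)/(\phi_3-\phi_2)$. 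For example $\phi=(0,\,0.5,\,\phi_3,\,2)$ with $\phi_3\approx 1.70$, which still satisfies condition 3 of \cref{def-generic}.
\item Take the irreducible matrix $A=\begin{pmatrix}1&0&0&-d\\0&1&c&0\end{pmatrix}$ with $c,d>0$. Its matroid is $\{\{1,2\},\{1,3\},\{2,4\},\{3,4\}\}$.
\item At $x_{-1}=0$ one has $\theta_1-\theta_4=\kappa(\theta_2-\theta_3)$ with $\kappa>0$. Hence $\Theta_{\{1,2\}}$ is the strict maximum wherever $\theta_2>\theta_3$, and $\Theta_{\{3,4\}}$ wherever $\theta_2<\theta_3$.
\end{itemize}
So the contour plot is a single line separating two labels that differ in two elements. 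You must simply assume $x_{-1}\neq 0$, as the paper's proof does.

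The closing Pl\"ucker step also fails as written. Put $S=\{m_3,\dots,m_N\}$. Your identity places $\Delta_I\Delta_J$ on the crossing side. The vanishing of $\Delta_{S\cup\{k,j\}}\Delta_{S\cup\{i,l\}}$ then only yields $\Delta_I\Delta_J=\Delta_{S\cup\{i,k\}}\Delta_{S\cup\{j,l\}}$, which is no contradiction. If $I\setminus J$ and $J\setminus I$ form the crossing pair among the four sorted indices, no single choice of pairing produces a vanishing crossing product.

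The repair is to run your opposite--sign argument for both pairings, $(i,k),(j,l)$ and $(i,l),(j,k)$. Non-coincidence of the lines $\theta_i=\theta_l$ and $\theta_j=\theta_k$ follows from the same intercept and monotonicity computation. Then both products other than $\Delta_I\Delta_J$ vanish. The three-term relation with nonnegative minors forces $\Delta_I\Delta_J=0$, whichever of the three products is the crossing one. The two-term identity quoted in the proof of \cref{prop-dominantphases} glosses over the same point.
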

\begin{proof}
    The same proof of \cref{prop-dominantphases} applies. We only have to be careful when we prove that if the lines $\theta_i=\theta_k$ and $\theta_j=\theta_l$ are parallel, they cannot coincide for $x_{-1}\neq0$. Indeed, in this case, for the lines to match we must have that
    \begin{equation}
        \frac{\phi_i-\phi_k}{\lambda_i-\lambda_k}=\frac{\phi_j-\phi_l}{\lambda_j-\lambda_l},
    \end{equation}
    \textbf{and} for there to exist $x_{-1}\in\mathbb R-\{0\}$ such that
    \begin{equation}
        \frac{x_{-1}}{\lambda_i\lambda_k}=\frac{x_{-1}}{\lambda_j\lambda_l}.
    \end{equation}
    If we assume $x_{-1}\neq0$, the last equation is equivalently written as $\phi_i+\phi_k=\phi_j+\phi_l$. The same argument used in the proof of \cref{prop-dominantphases} then holds, proving the statement.
\end{proof}

In \cite{Biondini_2010}, the authors prove a result analogous to Lemma \ref{lem-dominantrel}:
\begin{lemma}[Lemma 3.6, \cite{Biondini_2010}]
    Let $\{ \lambda_1 < \dots < \lambda_M\}$ be generic parameters, and $L_{i,j}^{(n)}$ a soliton line in the $(x_1, n)$-plane for $i<j$. Then, along $L_{i,j}^{(n)}$, we have the following relations for $n \gg 0:$
    \begin{itemize}
        \item $\theta_i=\theta_j < \theta_m$ for $i<m<j$;
        \item $\theta_i=\theta_j>\theta_m$ for $1\leq m <i$ or $j<m\leq M$.
    \end{itemize}
\end{lemma}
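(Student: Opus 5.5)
We follow the proof of \cref{lem-dominantrel} with the roles of $x_{-1}$ and $n$ exchanged. We keep $x_{-1}$ fixed and write $\phi_k=\ln\lambda_k$. The plan is to parametrise the line $L_{i,j}^{(n)}$ by $n$, substitute into $\theta_m-\theta_i$, and read off the sign from the coefficient of $n$. Once $x_{-1}$ is fixed, every other term is bounded, so this coefficient decides the sign for $n\gg 0$.

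\emph{Step 1: restrict to the line.} By \eqref{eq-line-Toda}, along $L_{i,j}^{(n)}$ we have
\begin{equation}
x_1=\frac{x_{-1}}{\lambda_i\lambda_j}-n\,\frac{\phi_i-\phi_j}{\lambda_i-\lambda_j}.
\end{equation}
For any $m\in[M]$, \cref{theta_dif} gives
\begin{equation}
\theta_m-\theta_i=(\lambda_m-\lambda_i)x_1+\Bigl(\frac{1}{\lambda_m}-\frac{1}{\lambda_i}\Bigr)x_{-1}+(\phi_m-\phi_i)n .
\end{equation}
Substituting the expression for $x_1$, we obtain
\begin{equation}
\theta_m-\theta_i=\beta(\phi_m)\,n+C_m\,x_{-1},
\end{equation}
where $C_m$ is a constant depending only on the $\lambda$'s, and
\begin{equation}
\beta(\phi)=(\phi-\phi_i)-s\,(e^{\phi}-\lambda_i),\qquad s=\frac{\phi_j-\phi_i}{\lambda_j-\lambda_i}>0 .
\end{equation}
Since $x_{-1}$ is fixed, the sign of $\theta_m-\theta_i$ for $n\gg0$ is the sign of $\beta(\phi_m)$, provided $\beta(\phi_m)\neq 0$.

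\emph{Step 2: the sign of $\beta$.} This step replaces the quadratic $\alpha$ used in \cref{lem-dominantrel}. Direct substitution shows that $\beta(\phi_i)=0$ and $\beta(\phi_j)=0$, the latter by the choice of $s$. Moreover $\beta''(\phi)=-s\,e^{\phi}<0$, so $\beta$ is strictly concave. A strictly concave function has at most two zeros, so its only zeros are $\phi_i$ and $\phi_j$. It is therefore strictly positive on $(\phi_i,\phi_j)$ and strictly negative outside $[\phi_i,\phi_j]$. Because $\phi$ is increasing in the index, this yields $\theta_m>\theta_i=\theta_j$ for $i<m<j$, and $\theta_m<\theta_i=\theta_j$ for $m<i$ or $m>j$, which is the claim.

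The main point requiring care is the justification that the $x_{-1}$ term is negligible. This holds because $x_{-1}$ is held fixed while $n\to\infty$, and because $\beta(\phi_m)\neq0$ for every $m\neq i,j$, which is guaranteed by the strict concavity in Step 2. Observe also that the sign pattern is the reverse of \cref{lem-dominantrel}. There the relevant function $\alpha$ was convex in $\lambda$, whereas here $\beta$ is concave in $\phi$; this reversal is exactly what the statement records.
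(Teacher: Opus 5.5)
Your proposal is correct and uses the same argument the paper gives for \cref{lem-dominantrel}; for this statement the paper itself only cites Biondini--Wang. You restrict to the line, isolate the coefficient of the growing variable (here $n$), and settle its sign with the strictly concave function $\beta(\phi)=(\phi-\phi_i)-s(e^{\phi}-\lambda_i)$, which vanishes at $\phi_i$ and $\phi_j$ and so gives exactly the reversed sign pattern claimed.
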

The following result follows from the previous lemma, in the same way that \cref{asymptotic} follows from \cref{lem-dominantrel}:


\begin{theorem}[Theorem 3.8 \cite{Biondini_2010}]\label{thm-asymptotic_n}
    Let $A \in \mathcal{P}_{v,w}^{>0} \subset \text{Gr}\,(N,M)_{>0}$ be an irreducible matrix, with pivot set $\{i_1, \dots i_N\}$ and non-pivot set $\{j_1, \dots j_{M-N}\}$. Assume also that the parameters $\{\lambda_1, \dots , \lambda_M\}$ of the TL soliton $V_{n,A}(x_{-1},x_1)$ are generic. Then $\mathcal{C}^{(x_{-1})}(\mathcal{M}(A))$ has the following asymptotic structure: 
    \begin{enumerate}
        \item For $n\gg 0$, there exist $M-N$ line solitons of type $[q_m, j_m]$, where $q_m<j_m$.
        \item For $n\ll 0$, there exist $N$ line solitons of type $[i_s, p_s]$, where $i_s<p_s$.
    \end{enumerate}
    Additionally, the map $\pi:[M]\to[M]$ defined by $\pi(i_s)=p_s$ and $\pi(j_m)=q_m$ satisfies $\pi=vw^{-1}$.
\end{theorem}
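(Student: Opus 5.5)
We follow the proof of \cref{asymptotic} (Appendix \ref{app-asymptotic}) and of \cref{fan_is_der}, but replace \cref{lem-dominantrel} by the $(x_1,n)$--plane version, Lemma 3.6 of \cite{Biondini_2010}. We fix $x_{-1}$ and study $\mathcal{C}^{(x_{-1})}(\mathcal{M}(A))$ on a large circle in the $(x_1,n)$--plane. By the preceding proposition, only soliton lines $L^{(n)}_{[i,j]}$ can go to infinity, and adjacent dominant labels differ by one index.

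First we find the dominant phase in each extreme direction. By \eqref{eq-slope}, all slopes have the same sign and every line is transverse to the $x_1$--axis. So for $x_1\to-\infty$ with $n$ bounded, the lexicographically minimal element of $\mathcal{M}(A)$ dominates. This is the pivot set $\{i_1,\dots,i_N\}$ (\cref{label-contour}). For $x_1\to+\infty$, the lexicographically maximal element of $\mathcal{M}(A)$ dominates.

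Next we track the labels as we sweep through the half--plane $n\gg0$, between the two horizontal ends. Each unbounded line crossed there is some $[i,j]$, along which $\theta_i=\theta_j$. Lemma 3.6 of \cite{Biondini_2010} says that for $n\gg0$ this common value beats $\theta_m$ for $m$ outside $(i,j)$ and loses to $\theta_m$ for $m$ inside. This is exactly the reverse of the pattern in \cref{lem-dominantrel} for $x_{-1}\gg0$, and it matches the pattern there for $x_{-1}\ll0$. We then run the combinatorial argument of Appendix \ref{app-asymptotic} for the $x_{-1}\ll0$ case essentially verbatim. The adjacent labels $I=K\cup\{i\}$ and $J=K\cup\{j\}$, together with the Plücker relations (\cref{plucker relation}) and the vanishing of minors lexicographically below the pivot set, show that each unbounded line for $n\gg0$ has a non-pivot index $j_m$ as its larger entry, and that each $j_m$ occurs exactly once. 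That gives $M-N$ lines of type $[q_m,j_m]$ with $q_m<j_m$.

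For $n\ll0$ the sign pattern flips back to the one in \cref{lem-dominantrel} for $x_{-1}\gg0$, and the same argument gives $N$ lines $[i_s,p_s]$ with $i_s<p_s$. Alternatively, we can invoke the duality of \cref{prop-dual}: the map $(x_{-1},x_1,n)\mapsto(-x_{-1},-x_1,-n)$ together with $A\mapsto B_A$ exchanges pivots and non-pivots, since the pivot set of $B_A$ is $\pi$ of the non-pivot set of $A$. This reduces the $n\ll0$ case to the $n\gg0$ case for $B_A$.

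Finally, to prove $\pi=vw^{-1}$ we would not redo the combinatorics. Instead we observe that the multiset of unbounded labels $[i,j]$ equals the one obtained in \cref{asymptotic}: the ends for $n\gg0$ correspond to those for $x_{-1}\ll0$, and the ends for $n\ll0$ to those for $x_{-1}\gg0$. We expect this to follow because both are determined by the same dominant labels along the far boundary, read in the same cyclic order. With that identification, \cref{fan_is_der} yields $\pi=vw^{-1}$.

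The main obstacle is justifying this identification. The unbounded lines in the two planes are different slices of the same tropical hypersurface in $(x_1,x_{-1},n)$--space, so we must check that the cyclic sequence of dominant labels at infinity is the same in both slices. We would first try to show that both sequences are forced by the pivot and lex-max labels together with the Plücker/Deodhar constraints, exactly as in the Kodama--Williams proof. That would make the identification purely combinatorial.
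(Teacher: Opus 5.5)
Your derivation of items (1) and (2) matches the paper's route. The paper says the theorem follows from Lemma 3.6 of \cite{Biondini_2010} exactly as \cref{asymptotic} follows from \cref{lem-dominantrel}, and you correctly note that the sign pattern for $n\gg0$ (resp.\ $n\ll0$) is the one for $x_{-1}\ll0$ (resp.\ $x_{-1}\gg0$).

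The gap is the last claim, $\pi=vw^{-1}$, which is the part the paper singles out as new. You leave it unproved, and the identification you want to import from \cref{fan_is_der} rests on a false premise. The two slices do \emph{not} in general show the same dominant labels at infinity in the same cyclic order. Their unbounded regions are the vertices of $\mathrm{conv}\{(\sum_{i\in I}\lambda_i,\sum_{i\in I}\lambda_i^{-1})\}$ and of $\mathrm{conv}\{(\sum_{i\in I}\lambda_i,\sum_{i\in I}\phi_i)\}$ respectively, which are two different projections of one three--dimensional Newton polytope. Concretely, on the pivot side the lines $[i,p]$ appear from left to right in decreasing order of $\lambda_i\lambda_p$ for $x_{-1}\gg0$. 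For $n\ll0$ they appear in decreasing order of the logarithmic mean $(\lambda_p-\lambda_i)/(\phi_p-\phi_i)$, and these orders can disagree for nested pairs.

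For instance, take $A=\begin{pmatrix}1&0&-a&-b\\0&1&c&0\end{pmatrix}$ with $a,b,c>0$. This matrix is irreducible, lies in a non-top Deodhar component of $\mathrm{Gr}(2,4)_{\geq0}$, and has $\mathcal M(A)=\binom{[4]}{2}\setminus\{\{1,4\}\}$. The statement's $\mathrm{Gr}(N,M)_{>0}$ must be read as $\geq0$, since otherwise only the top cell occurs. Take also $\lambda=(1,9,12,100)$.
\begin{itemize}
\item $\lambda_2\lambda_3>\lambda_1\lambda_4$, while $99/\ln100>3/\ln(4/3)$.
\item For $x_{-1}\gg0$ the unbounded regions read $\{1,2\},\{1,3\},\{3,4\}$ from left to right.
\item For $n\ll0$ they read $\{1,2\},\{2,4\},\{3,4\}$; in fact $\{1,3\}$ is not an unbounded region of the $(x_1,n)$--slice at all.
\end{itemize}
The pairing $1\mapsto4$, $2\mapsto3$ agrees, but the label sequences depend on $\lambda$. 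They are not ``forced by the pivot and lex--max labels together with the Pl\"ucker/Deodhar constraints'', so the check you plan would fail.

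The two slices share only the sign pattern of the dominance lemma, and that is all the paper's route needs. Rerun the argument of Appendix \ref{app-fan} directly in the $(x_1,n)$--plane, using Lemma 3.6 of \cite{Biondini_2010} for $n\ll0$ in place of \cref{lem-dominantrel}. Start from the pivot region at $x_1\ll0$ and cross the unbounded lines in the order in which they occur in this slice. The label on the far side of the line replacing $i_s$ must then be the lexicographically largest element of $\mathcal M(A)$ obtained from the current label by replacing $i_s$. Indeed, a replacement $q'>p_s$ with nonzero minor would beat both adjacent phases along $L_{[i_s,p_s]}$, by the $n\ll0$ sign pattern. This yields $p_s=vw^{-1}(i_s)$ without ever comparing the two slices. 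The non--pivot half follows in the same way for $n\gg0$, or by duality.
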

\begin{remark}
In \cite[Theorem 3.8]{Biondini_2010}, the authors had already described the asymptotic structure of soliton solutions in the $(x_1,n)$-plane. However, we emphasize that the result relating the permutation $\pi$ with the Deodhar component of the matrix parameter is new. 
\end{remark}

Following \cref{sec:algorithm}, in order to provide an algorithm we need to compute the coordinates of resonant interaction points $v_{[a,b,c]}=(x_{1,[a,b,c]},n_{[a,b,c]})$ and understand when they are visible. We take $A\in\text{Gr}(N,M)_{\geq0}$ and $a,b,c\in[M]$ such that $a<b<c$. The resonant interaction point $v_{[a,b,c]}$ obtained by the intersection of the line solitons labeled by $[a,b],[b,c]$, and $[a,c]$, has coordinates given by the following formulas:
\begin{align}
    &x_{1,[a,b,c]}=\frac{(\phi_c-\phi_b)\lambda_a^{-1}+(\phi_a-\phi_c)\lambda_b^{-1}+(\phi_b-\phi_a)\lambda_c^{-1}}{(\phi_b-\phi_c)\lambda_a+(\phi_c-\phi_a)\lambda_b+(\phi_a-\phi_b)\lambda_c}x_{-1},\\
    &n_{[a,b,c]}=\frac{(\lambda_b-\lambda_a)(\lambda_a-\lambda_c)(\lambda_b-\lambda_c)}{\lambda_a\lambda_b\lambda_c((\phi_c-\phi_b)\lambda_a+(\phi_a-\phi_c)\lambda_b+(\phi_b-\phi_a)\lambda_c)}x_{-1}.
\end{align}
As in the $(x_{1}, x_{-1})$-plane, we study the sign of the function
\begin{equation}
    G_{a,b,c}(\phi_d)=z_d-z_{[a,b,c]}=\frac{f_{a,b,c}(\phi_d)}{\lambda_a\lambda_b\lambda_c\lambda_d((\phi_c-\phi_b)\lambda_a+(\phi_a-\phi_c)\lambda_b+(\phi_b-\phi_a)\lambda_c)}x_{-1},
\end{equation}
where the function $f_{a,b,c}(d)$ is defined in \cref{eq-functionf}.
\begin{proposition}\label{prop-signofG}
    Let $a,b,c\in[M]$ such that $a<b<c$ and $x_{-1}>0$, then
    \begin{enumerate}
        \item $G_{a,b,c}(\phi_d)>0$ for $d<a$ and $b<d<c$;
        \item $G_{a,b,c}(\phi_d)<0$ for $a<d<b$ and $c<d$.
    \end{enumerate}
\end{proposition}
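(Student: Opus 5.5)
The plan is to reduce \cref{prop-signofG} to the sign analysis already carried out for $F_{a,b,c}$ in \cref{prop-signF}. The numerator of $G_{a,b,c}(\phi_d)$ is the same function $f_{a,b,c}(\phi_d)$ of \cref{eq-functionf}, so only the prefactor differs. Writing
\begin{equation}
    G_{a,b,c}(\phi_d)=\frac{x_{-1}}{\lambda_a\lambda_b\lambda_c\lambda_d\, D_{a,b,c}}\,f_{a,b,c}(\phi_d),\qquad D_{a,b,c}=(\phi_c-\phi_b)\lambda_a+(\phi_a-\phi_c)\lambda_b+(\phi_b-\phi_a)\lambda_c,
\end{equation}
I would first determine the sign of $D_{a,b,c}$. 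It is exactly the function $g_{a,b}(\phi_c)$ from the proof of \cref{prop-signF}. That proof showed, via convexity of $g_{a,b}$, $g_{a,b}(\phi_a)=g_{a,b}(\phi_b)=0$, and the mean value theorem bound $\ln\frac{\lambda_b-\lambda_a}{\phi_b-\phi_a}<\phi_b$, that $g_{a,b}(\phi_c)>0$ whenever $a<b<c$. Hence $D_{a,b,c}>0$. Since all $\lambda_i>0$ and $x_{-1}>0$, the prefactor is strictly positive, and $\operatorname{sign}G_{a,b,c}(\phi_d)=\operatorname{sign}f_{a,b,c}(\phi_d)$.

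Next I would recall the sign pattern of $f_{a,b,c}$ in $\phi_d$ established inside the proof of \cref{prop-signF}. There, the prefactor $-\frac{n}{\lambda_d(\lambda_a-\lambda_b)(\lambda_a-\lambda_c)(\lambda_b-\lambda_c)}$ was positive for $n>0$, so the conclusions of that proposition are statements about $\operatorname{sign} f_{a,b,c}$. Its key facts are:
\begin{itemize}
    \item $f_{a,b,c}$ has exactly three simple zeros, at $\phi_a<\phi_b<\phi_c$, because the second $\lambda_d$-derivative has at most one zero;
    \item $\lim_{\phi_d\to-\infty}f_{a,b,c}(\phi_d)=\lambda_a\lambda_b\lambda_c\,g_{a,b}(\phi_c)>0$.
\end{itemize}
Therefore $f_{a,b,c}>0$ on $(-\infty,\phi_a)$, $f_{a,b,c}<0$ on $(\phi_a,\phi_b)$, $f_{a,b,c}>0$ on $(\phi_b,\phi_c)$, and $f_{a,b,c}<0$ on $(\phi_c,\infty)$. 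Since $\phi_d$ is increasing in $d$, this is exactly statements (1) and (2).

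The main point needing care is the bookkeeping of the prefactor sign. I would double-check it by recomputing $z_d-z_{[a,b,c]}$ from the stated coordinates $x_{1,[a,b,c]}$ and $n_{[a,b,c]}$, to confirm that the denominator really is $D_{a,b,c}$ and not $-D_{a,b,c}$. The quickest check is to specialize to a simple numerical triple, or to verify that $G$ vanishes at $d\in\{a,b,c\}$ and has the same $\phi_d\to-\infty$ limit sign as claimed. The simplicity of the three zeros also deserves a word: if a zero were double, $f_{a,b,c}$ would have more zeros counted with multiplicity than the derivative bound allows, so the sign genuinely alternates.
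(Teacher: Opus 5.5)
Your proposal is correct and follows essentially the same route as the paper: you identify the denominator $(\phi_c-\phi_b)\lambda_a+(\phi_a-\phi_c)\lambda_b+(\phi_b-\phi_a)\lambda_c$ with $g_{a,b}(\phi_c)>0$ from the proof of \cref{prop-signF}, conclude that the prefactor is positive for $x_{-1}>0$, and then transfer the sign pattern of $f_{a,b,c}$. Your extra remarks on the simplicity of the three zeros and the check via the $\phi_d\to-\infty$ limit agree with the paper's formulas and only add rigor.
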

\begin{proof}
    At the end of the proof of \cref{prop-signF} we proved that
    \begin{equation}
        (\phi_c-\phi_b)\lambda_a+(\phi_a-\phi_c)\lambda_b+(\phi_b-\phi_a)\lambda_c>0\qquad\text{for }a<b<c.
    \end{equation}
    Moreover, $\lambda_a\lambda_b\lambda_c\lambda_d>0$, and therefore
    \begin{equation}
        \frac{1}{\lambda_a\lambda_b\lambda_c\lambda_d((\phi_c-\phi_b)\lambda_a+(\phi_a-\phi_c)\lambda_b+(\phi_b-\phi_a)\lambda_c)}x_{-1}>0.
    \end{equation}
    Hence, the sign of the function $G_{a,b,c}(\phi_d)$ is the same as that of the function $f_{a,b,c}(\phi_d)$; therefore, the proof follows from \cref{prop-signF}.
\end{proof}
Since the functions $G_{a,b,c}$ have exactly three zeroes, it follows that, with respect to generic parameters, asymptotic contour plots in the $(x_1,n)$-plane also only admit X-crossings and trivalent vertices as interaction points. The following result follows from \cref{prop-signofG}, analogously to how \cref{vis_ver} follows from \cref{prop-signF}:
\begin{proposition}
    Let $x_{-1}>0$ and $A\in\mathcal P^{>0}_{v,w}$ be an irreducible matrix. Then the trivalent vertices in the reduced pipedream are visible. In particular, if the vertex in the pipedream is white (black), then it corresponds to a resonant interaction point marked by a white (black) vertex.
\end{proposition}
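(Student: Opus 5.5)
The plan is to transport the proof of \cref{vis_ver} verbatim to the $(x_1,n)$--plane. The only analytic input that proof used was the sign pattern of \cref{prop-signF}. Here that input is replaced by \cref{prop-signofG}, which gives exactly the same pattern for $G_{a,b,c}(\phi_d)$ when $x_{-1}>0$. Everything else in the argument is combinatorial and concerns only the matroid $\mathcal M(A)$ and the $\Le$--diagram, so it is independent of which plane we work in. That combinatorial input consists of the pivot set being the lexicographic minimum of $\mathcal M(A)$, the lexicographic maximality of $I_a$, and the Plücker relations.

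Concretely, fix a trivalent vertex of the reduced pipedream, sitting in a box labeled by a pivot $a$ and a non-pivot $c$, and label the three adjacent regions as in \cref{fig:region-labels}. These labels are $I_a=L_0\cup\{a\}$, $I_b=L_0\cup\{b\}$ and $I_c=L_0\cup\{c\}$ with $a<b<c$, where $I_a$ is characterized intrinsically as the lexicographically maximal element of $\mathcal M(A)$ of the prescribed shape. At the point $v_{[a,b,c]}=(x_{1,[a,b,c]},n_{[a,b,c]})$ we must show that no index $d\notin\{a,b,c\}$ with $d\notin L_0$ can be substituted to produce a phase of larger height $z_d$ belonging to $\mathcal M(A)$. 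We split into the same two cases as in \cref{vis_ver}.

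\textbf{Case 1: no index outside $L_0$ lies strictly between $a$ and $c$.} Indices $d<a$ are excluded, because any label containing them would be lexicographically smaller than the pivot set and hence would have vanishing minor. Indices with $a<d<b$ or $d>c$ give $G_{a,b,c}(\phi_d)<0$ by \cref{prop-signofG}, so they lie strictly below. \textbf{Case 2: some $i\notin L_0$ lies between $a$ and $c$.} If $\theta_i$ is below every $\theta_p$ with $p\in I_a$, the vertex is again visible. Otherwise, lexicographic maximality of $I_a$ lets us assume the offending $q\in L_0$ satisfies $i<q<c$. We then pass to $J_a=(I_a\setminus\{q\})\cup\{i\}$, and $J_b,J_c$ defined analogously. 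The three-term Plücker relations show that $\Delta_{J_b},\Delta_{J_c}\neq0$ whenever $\Delta_{J_a}\neq0$, so the three regions around the vertex are still labeled by one set with $a,b,c$ adjoined, and the vertex remains visible.

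The color statement then follows from the definitions. A white pipedream vertex has its regions of the form $B\cup\{a\}$, $B\cup\{b\}$, $B\cup\{c\}$ with $B$ of size $N-1$, which is precisely the definition of a white resonant point. The black case is the same with sets of the form $T\setminus\{\cdot\}$, $|T|=N+1$; alternatively one can reduce it to the white case via the duality of \cref{prop-dual}, which exchanges the two colors. I expect the main obstacle to be the sign bookkeeping. We must check that with $x_{-1}>0$ the orientation (which side is $n\gg0$) matches the one in \cref{vis_ver}, so that the ``above/below'' inequalities are used in the same direction. If the orientation turns out reversed, the fix is to rerun Case 2 with the roles of $b<d<c$ and $a<d<b$ interchanged.
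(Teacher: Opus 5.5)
Your proposal matches the paper's proof, which also just reruns the argument of \cref{vis_ver} with \cref{prop-signofG} in place of \cref{prop-signF}; all the other input is matroid and $\Le$--diagram combinatorics that does not depend on the plane. Your closing orientation worry is unnecessary: visibility only compares $z_d$ with $z_{[a,b,c]}$, and $G_{a,b,c}$ and $F_{a,b,c}$ are the same linear function $\theta_d-\theta_a$ on the line $\theta_a=\theta_b=\theta_c$, on which $x_{-1}$ is a positive multiple of $n$. Drop the duality shortcut for black vertices, though, because \cref{prop-dual} reverses the sign of $x_{-1}$ and changes the $\Le$--diagram, so it does not bring you back to the white case you proved.
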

Recall how it was enough to guarantee the visibility of vertices in the $\reflectbox{L}$-diagram to produce an algorithm in the $(x_1,x_{-1})$--plane; we are therefore ready to write down the main result of this subsection. Let $x_{-1}\gg0$ and $A\in\mathcal P^{>0}_{v,w}$. The following procedure yields the contour plot $\mathcal{C}^+(\mathcal{M}(A))$ in the $(x_1, n)$-plane:
 \begin{algorithm}\label{algorithm-TL2}
    \begin{enumerate}
        \item[]
        \item Draw the \reflectbox{$\mathrm L$}-diagram from the pair $(v,w)$ or the derangement $\pi=vw^{-1}$.
        \item Draw the reduced pipedream.
        \item Plot in the $(x_1,n)$-plane all the trivalent vertices $v_{[a,b,c]}=(x_{1,[a,b,c]},n_{[a,b,c]})$, $a<b<c$, in the \reflectbox{$\mathrm{L}$}--diagram, where
        \begin{align}
            &x_{1,[a,b,c]}=\frac{(\phi_c-\phi_b)\lambda_a^{-1}+(\phi_a-\phi_c)\lambda_b^{-1}+(\phi_b-\phi_a)\lambda_c^{-1}}{(\phi_b-\phi_c)\lambda_a+(\phi_c-\phi_a)\lambda_b+(\phi_a-\phi_b)\lambda_c}x_{-1}\\
            &n_{[a,b,c]}=\frac{(\lambda_b-\lambda_a)(\lambda_a-\lambda_c)(\lambda_b-\lambda_c)}{\lambda_a\lambda_b\lambda_c((\phi_c-\phi_b)\lambda_a+(\phi_a-\phi_c)\lambda_b+(\phi_b-\phi_a)\lambda_c)}x_{-1}.
        \end{align}
        We have two cases:
        \begin{itemize}
            \item if the vertex is white: draw the line $L_{[a,c]}$ for $n>n_{[a,b,c]}$ and the lines $L_{[a,b]}$ and $L_{[b,c]}$ for $n<n_{[a,b,c]}$;
            \item if the vertex is black: draw the line $L_{[a,c]}$ for $n<n_{[a,b,c]}$ and the lines $L_{[a,b]}$ and $L_{[b,c]}$ for $n>n_{[a,b,c]}$.
        \end{itemize}
        The equation of the line $[a,b]$ is given in \cref{eq-line-Toda}.
        \item Draw an infinite line $L_{[i,j]}$ every time there is a path labeled $[i,j]$ that does not connect two vertices in the pipedream.
    \end{enumerate}
\end{algorithm}

\begin{example}
    Consider the same data in \cref{example-algorithm-Toda} and $x_{-1}=10$. In \cref{fig:algorithm2}, we can see how the contour plot of the solution $V_{n,A}(10,x_1)$ and the output of the algorithm match.
    \begin{figure}
        \centering

        \includegraphics[width=0.3\linewidth]{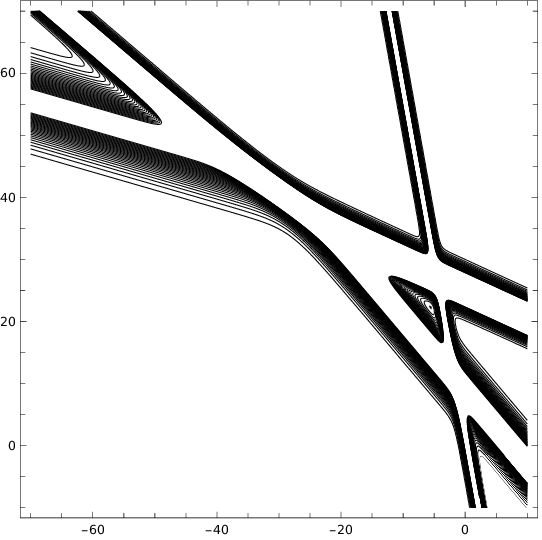}
        \tikzset{every picture/.style={line width=0.75pt}} 

        \begin{tikzpicture}[x=0.75pt,y=0.75pt,yscale=-0.75,xscale=0.75]
        
            \draw   (400,526) -- (450,526) -- (450,576) -- (400,576) -- cycle ;
            \draw   (450,526) -- (500,526) -- (500,576) -- (450,576) -- cycle ;
            \draw   (450,476) -- (500,476) -- (500,526) -- (450,526) -- cycle ;
            \draw   (400,476) -- (450,476) -- (450,526) -- (400,526) -- cycle ;
            \draw   (400,576) -- (450,576) -- (450,626) -- (400,626) -- cycle ;
            \draw   (500,476) -- (550,476) -- (550,526) -- (500,526) -- cycle ;
            \draw [color={rgb, 255:red, 211; green, 47; blue, 47 }  ,draw opacity=1 ]   (427,456) .. controls (429,540) and (422,580) .. (441,597) ;
            \draw [color={rgb, 255:red, 211; green, 47; blue, 47 }  ,draw opacity=1 ]   (384,598) .. controls (417,596) and (420,601) .. (426,609) ;
            \draw [color={rgb, 255:red, 211; green, 47; blue, 47 }  ,draw opacity=1 ]   (426,609) -- (441,597) ;
            \draw [color={rgb, 255:red, 101; green, 87; blue, 210 }  ,draw opacity=1 ]   (383,546) .. controls (432,545) and (449,542) .. (470,557) ;
            \draw [color={rgb, 255:red, 101; green, 87; blue, 210 }  ,draw opacity=1 ]   (470,557) -- (487,544) ;
            \draw [color={rgb, 255:red, 0; green, 103; blue, 88 }  ,draw opacity=1 ]   (523,460) .. controls (524,478) and (519,491) .. (536,493) ;
            \draw [color={rgb, 255:red, 0; green, 103; blue, 88 }  ,draw opacity=1 ]   (517,509) -- (536,493) ;
            \draw [color={rgb, 255:red, 128; green, 128; blue, 128 }  ,draw opacity=1 ]   (467,512) -- (476,496) ;
            \draw   (471.5,496) .. controls (471.5,493.51) and (473.51,491.5) .. (476,491.5) .. controls (478.49,491.5) and (480.5,493.51) .. (480.5,496) .. controls (480.5,498.49) and (478.49,500.5) .. (476,500.5) .. controls (473.51,500.5) and (471.5,498.49) .. (471.5,496) -- cycle ;
            \draw  [fill={rgb, 255:red, 0; green, 0; blue, 0 }  ,fill opacity=1 ] (462.5,512) .. controls (462.5,509.51) and (464.51,507.5) .. (467,507.5) .. controls (469.49,507.5) and (471.5,509.51) .. (471.5,512) .. controls (471.5,514.49) and (469.49,516.5) .. (467,516.5) .. controls (464.51,516.5) and (462.5,514.49) .. (462.5,512) -- cycle ;
            \draw [color={rgb, 255:red, 199; green, 80; blue, 0 }  ,draw opacity=1 ]   (386.33,492.67) .. controls (430.33,491.67) and (463.33,494.67) .. (467,512) ;
            \draw [color={rgb, 255:red, 101; green, 87; blue, 210 }  ,draw opacity=1 ]   (467,512) .. controls (468.33,539.67) and (465.33,538.67) .. (487,544) ;
            \draw [color={rgb, 255:red, 0; green, 103; blue, 88 }  ,draw opacity=1 ]   (476,496) .. controls (482.33,509.67) and (502.33,494.67) .. (517,509) ;
            \draw [color={rgb, 255:red, 21; green, 101; blue, 192 }  ,draw opacity=1 ]   (465.33,458.67) .. controls (465.33,482.67) and (466.33,484.67) .. (476,496) ;
            
            \draw (527,443) node [anchor=north west][inner sep=0.75pt]    {$[ 1,2]$};
            \draw (470,442) node [anchor=north west][inner sep=0.75pt]    {$[ 2,4]$};
            \draw (381,447) node [anchor=north west][inner sep=0.75pt]    {$[ 5,6]$};
            \draw (337,484) node [anchor=north west][inner sep=0.75pt]    {$[ 1,3]$};
            \draw (336,533) node [anchor=north west][inner sep=0.75pt]    {$[ 3,4]$};
            \draw (336,586) node [anchor=north west][inner sep=0.75pt]    {$[ 5,6]$};
                
        \end{tikzpicture}
        \includegraphics[width=0.3\linewidth]{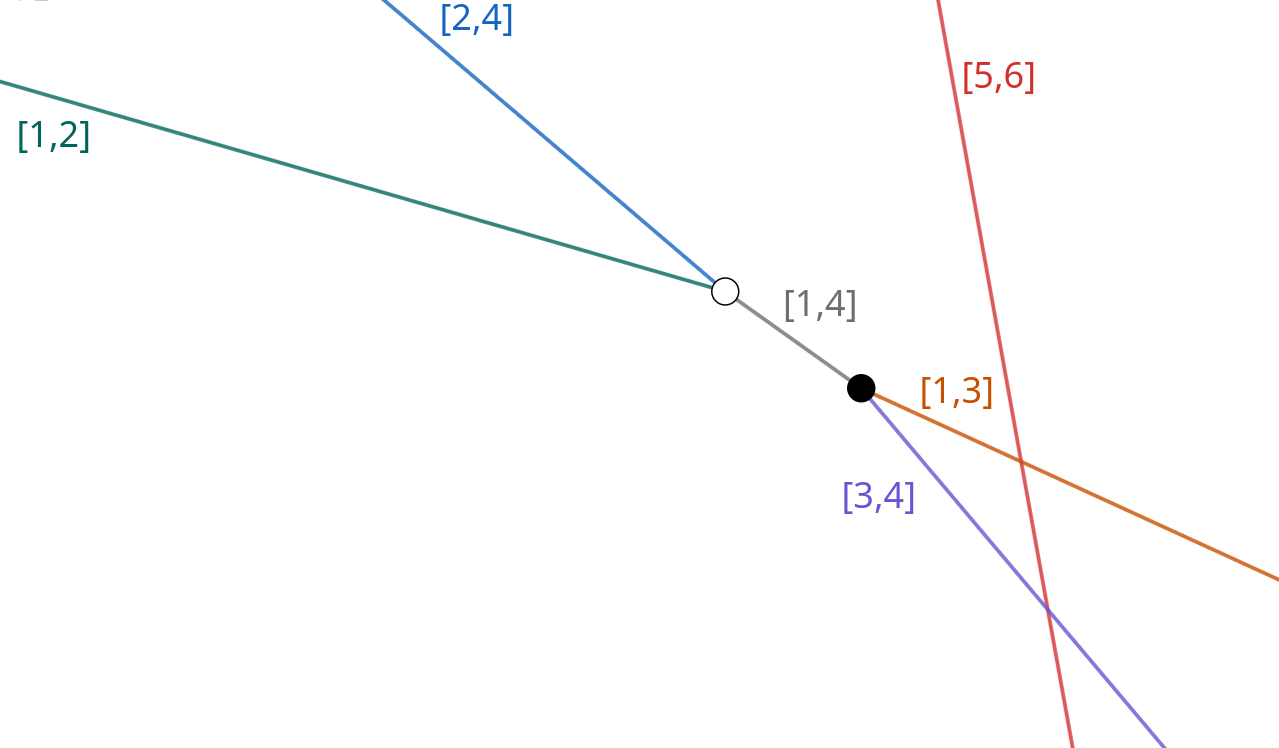}

        \caption{A comparison between the contour plot of $V_{n,A}(10,x_1)$ (top) and the output of the algorithm (bottom right).}
        \label{fig:algorithm2}
    \end{figure}
\end{example}

\section{Davey-Stewartson equation}\label{sec:Davey-Stewartson}

\subsection{Asymptotics of soliton solutions}

Analogously to when we studied Toda lattice solitons, we will assume that our Davey-Stewartson $\tau$-functions are parametrized by irreducible matrices. We will also assume that the parameters $\{\psi_1, \dots , \psi_M\}$ that appear in our equations are \textit{generic} and belong to the interval $(-\frac\pi2,\frac\pi2)$. For this equation we have a different set of genericity conditions that we write presently.
\begin{definition}\label{DS-genericity}
    A set of real parameters $\{\psi_1, \dots , \psi_M\}$ is generic if it satisfies the following conditions:
    \begin{itemize}
        \item $-\frac{\pi}{2}<\psi_i+\psi_j+\psi_k<\frac{\pi}{2}$ for all distinct $i,j,k \in [M]$;
        \item the sums $\sum_{j=1}^m\sin\psi_{i_j}$ are all distinct for $m=1,\dots,M$.
    \end{itemize}
\end{definition}

\begin{proposition}
    Suppose that $\mathcal{C}(Q_{t,A})$ depends on a set of generic parameters $\{\psi_j\}_{j=1}^M$. Then the labels of any two adjacent regions on $\mathcal{C}^{(t)}(\mathcal{M}(A))$ share $N-1$ indices.
\end{proposition}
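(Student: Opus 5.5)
I will mirror the proof of \cref{prop-dominantphases}, arguing by contradiction. Suppose two adjacent regions $R,R'$ of $\mathcal{C}^{(t)}(\mathcal{M}(A))$ are dominated by $\Theta_I$ and $\Theta_J$ with $|I\setminus J|\geq 2$. Write $I=\{i,j,m_3,\dots,m_N\}$ and $J=\{k,l,m'_3,\dots\}$; for the central case assume $I$ and $J$ differ in exactly two indices, so $J=\{k,l,m_3,\dots,m_N\}$. Then the common boundary is contained in the line
\begin{equation}
    L:\quad \theta_i+\theta_j=\theta_k+\theta_l,
\end{equation}
where $\theta_p=2(x\sin\psi_p+y\cos\psi_p-t\sin 2\psi_p)$. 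The second genericity condition of \cref{DS-genericity} ensures $\Theta_I\not\equiv\Theta_J$, since their $x$-coefficients $2\sum\sin\psi$ differ. So $L$ is an honest line and the dominance labels are unambiguous. The same condition handles larger differences $|I\setminus J|>2$: I would reduce these to the two-index case through a chain of Plücker exchanges, exactly as the Toda argument implicitly does.

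The first step is to show that the two auxiliary lines $\theta_i=\theta_k$ and $\theta_j=\theta_l$ cannot both coincide with $L$. Suppose they did. Then their normal vectors $(\sin\psi_i-\sin\psi_k,\cos\psi_i-\cos\psi_k)$ and $(\sin\psi_j-\sin\psi_l,\cos\psi_j-\cos\psi_l)$ would be parallel. By the sum-to-product formulas these vectors are proportional to $(\cos\frac{\psi_i+\psi_k}{2},-\sin\frac{\psi_i+\psi_k}{2})$ and the analogous vector for $j,l$. Parallelism then forces $\psi_i+\psi_k\equiv\psi_j+\psi_l \pmod{2\pi}$, and since all $\psi$ lie in $(-\frac\pi2,\frac\pi2)$ this gives $\psi_i+\psi_k=\psi_j+\psi_l$. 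Here is where the DS case may genuinely allow parallel lines, in line with \cref{parallel-lines}.

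It then remains to exclude coincidence of the two parallel lines at fixed $t$. Their offsets are $t\,\frac{\sin2\psi_i-\sin2\psi_k}{\|\cdot\|}$ and the analogous quantity for $j,l$. Write $\psi_i+\psi_k=2m=\psi_j+\psi_l$, $\psi_i=m-u$, $\psi_k=m+u$, $\psi_j=m-s$, $\psi_l=m+s$. Then $\sin2\psi_i-\sin2\psi_k=-2\cos 2m\sin 2u$ and $\sin\psi_i-\sin\psi_k=-2\cos m\sin u$. After normalizing, coincidence for $t\neq0$ reduces to $\frac{\sin 2u}{\sin u}=\frac{\sin 2s}{\sin s}$, that is, $\cos u=\cos s$. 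With $u,s\in(0,\frac\pi2)$ this forces $u=s$, so $\{i,k\}=\{j,l\}$, a contradiction. This replaces the monotonicity of $t/\sinh t$ used in the Toda proof; it is the step where I expect to have to be careful. The role of the first genericity bullet ($|\psi_i+\psi_j+\psi_k|<\frac\pi2$) is to guarantee that $\cos m$ and $\cos 2m$ do not vanish, so that the normalization is legitimate.

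Hence $\theta_i-\theta_k$ and $\theta_j-\theta_l$ take opposite signs along $L$. The rest of the argument repeats the end of the proof of \cref{prop-dominantphases} verbatim. If both $\Delta_{\{k,j,m_3,\dots\}}(A)$ and $\Delta_{\{i,l,m_3,\dots\}}(A)$ were nonzero, then dominance of $\Theta_I$ in $R$ would give $\theta_i>\theta_k$ and $\theta_j>\theta_l$ there. By continuity these differences cannot have opposite signs on $\partial R\subset L$, a contradiction. So one of these two minors vanishes. Total nonnegativity together with the three-term Plücker relation (\cref{plucker relation}) gives
\begin{equation}
    \Delta_I\Delta_J=\Delta_{\{k,j,\dots\}}\Delta_{\{i,l,\dots\}}+(\text{nonnegative term}),
\end{equation}
with the signs arranged according to the relative order of $i,j,k,l$. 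Choosing the labelling so that the relevant exchange appears on the appropriate side, one of the products must vanish. This contradicts $\Delta_I\Delta_J\neq0$.
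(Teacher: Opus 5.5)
Your DS-specific argument is correct and is exactly the paper's proof. The paper also reduces to showing that the exchange lines $\theta_i=\theta_k$ and $\theta_j=\theta_l$ cannot coincide, and derives $\psi_i+\psi_k=\psi_j+\psi_l$ from parallelism. It then reduces equality of offsets to $\cos\alpha=\cos\beta$, deferring the rest to the Toda proof.

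You are more careful than the paper about $\cos 2m\neq 0$; the paper divides by it silently. The condition is essential: if $\psi_i+\psi_k=\pm\frac{\pi}{2}$, the two lines coincide for every $t$. It does follow from the first genericity bullet, but you should say why. For example, $\psi_j+\psi_l=\frac{\pi}{2}$ forces $\psi_j>0$, hence $\psi_i+\psi_k+\psi_j>\frac{\pi}{2}$; the case $-\frac{\pi}{2}$ is symmetric.

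The gap is in your closing Pl\"ucker step. Let $S=I\cap J$. For a fixed matching $i\leftrightarrow k$, $j\leftrightarrow l$, your sign argument yields only $\Delta_{S\cup\{k,j\}}\Delta_{S\cup\{i,l\}}=0$.

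Suppose $I\setminus S$ and $J\setminus S$ interlace, say $I\setminus S=\{a,c\}$ and $J\setminus S=\{b,d\}$ with $a<b<c<d$. Then the relation is $\Delta_I\Delta_J=\Delta_{S\cup\{a,b\}}\Delta_{S\cup\{c,d\}}+\Delta_{S\cup\{a,d\}}\Delta_{S\cup\{b,c\}}$. Killing one summand contradicts nothing: your displayed identity $\Delta_I\Delta_J=0+(\text{nonnegative})$ is perfectly consistent, and no choice of labelling changes this. You must run the sign argument for \emph{both} matchings of $\{i,j\}$ with $\{k,l\}$. Your non-coincidence lemma is symmetric in the matching, so this is available, and it kills both other products. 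Whichever of the three pairings interlaces then gives the contradiction. (The Toda proof you are mirroring has the same defect in cruder form: $\Delta_I\Delta_J=\Delta_{\{k,j,\ldots\}}\Delta_{\{i,l,\ldots\}}$ is not a Pl\"ucker relation.)

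Your treatment of $|I\setminus J|>2$ is only a promise. The Toda proof does not perform that reduction; it simply takes $|I\setminus J|=2$.

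Both issues disappear if you use symmetric basis exchange, and total nonnegativity is then unnecessary. On the common boundary segment $\sigma$, both $\Theta_I$ and $\Theta_J$ are maximal. For $a\in I\setminus J$, pick $b\in J\setminus I$ with $(I\setminus\{a\})\cup\{b\}$ and $(J\setminus\{b\})\cup\{a\}$ in $\mathcal{M}(A)$. Maximality gives $\theta_b\le\theta_a\le\theta_b$ on $\sigma$, so the line $\theta_a=\theta_b$ equals $L$. Two distinct $a,a'\in I\setminus J$ then produce two distinct exchange lines both equal to $L$. Your lemma excludes this when $b\neq b'$; when $b=b'$, parallel normals already force $\psi_a=\psi_{a'}$.
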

\begin{proof}
    Firstly, notice that, for any $K \in \binom{[M]}{N}$,
    \begin{equation}
        \Theta_K(x,y,t)=2x\sum_{k\in K}\sin \psi_k+2y \sum_{k\in K} \cos \psi_k-2t\sum_{k\in K} \sin 2 \psi_k.
    \end{equation}
    Therefore, our genericity assumption implies that the functions $\Theta_{K_1}$ and $\Theta_{K_2}$ are never equal for $K_1 \neq K_2$.
    Looking at the proof of Proposition \ref{prop-dominantphases}, we only have to prove that the lines $\theta_i=\theta_k$ and $\theta_j=\theta_l$ cannot coincide. They coincide if the slopes are equal, i.e. $\psi_i+\psi_k=\psi_j+\psi_l$, and there exists $t\in\mathbb R$ such that
    \begin{equation}\label{translationDS}
        t\frac{\sin(2\psi_k)-\sin(2\psi_i)}{\sin(\psi_k)-\sin(\psi_i)}=t\frac{\sin(2\psi_l)-\sin(2\psi_j)}{\sin(\psi_l)-\sin(\psi_j)}.
    \end{equation}
    Following the same steps of the proof of \cref{prop-dominantphases}, assume by contradiction that there exists $t\neq0$ such that \cref{translationDS} holds true. Let $\psi_i+\psi_k=2m=\psi_j+\psi_l$ with $-\pi< 2m<\pi$. Then there exist $\alpha,\beta \in \mathbb{R}$ such that
    \begin{align}
        &\psi_i=m-\alpha,\qquad\psi_k=m+\alpha,\\
        &\psi_j=m-\beta,\qquad\psi_l=m+\beta.
    \end{align}
    Note that $-\frac{\pi}{2}< m-\alpha<\frac\pi2$ and $-\frac\pi2< m+\alpha<\frac\pi2$, therefore $-\frac\pi2<\alpha<\frac\pi2$. The same holds for $\beta$. Algebraic manipulation of \cref{translationDS} leads to the equality
    \begin{equation}
        \cos(\alpha)=\cos(\beta),
    \end{equation}
    which is satisfied if $\alpha=\beta$ or $\alpha=-\beta$ leading to $\psi_i=\psi_j$ and $\psi_k=\psi_l$ or $\psi_i=\psi_l$ and $\psi_k=\psi_j$. But this contradicts our genericity assumption from \cref{DS-genericity}.
\end{proof}
In the proof above we are using the fact that (for large parameters) the line $\theta_a=\theta_b$ has the following equation
\begin{equation}\label{DS-solitonline}
    L_{[a,b]}: x-\tan\left(\frac{\psi_a+\psi_b}{2}\right)y-t\frac{\sin(2\psi_a)-\sin(2\psi_b)}{\sin(\psi_a)-\sin(\psi_b)}=0.
\end{equation}

\begin{remark}
    An interesting feature of the Davey--Stewartson solution $Q(x,y,t)$ is the appearance of V-shaped and L-shaped solitons \cite[Section 4.3]{biondini2022soliton}. Their contour plots are not balanced because of the presence of two-valent vertices. As already mentioned in \cref{example-linesolitonDS}, this situation occurs when $Q(x,y,t)$ localizes along a line $L_{[a,b]}$ such that $\sin\psi_a=\sin\psi_b$, i.e. along a horizontal line. However, the genericity conditions imposed in this work exclude this behavior of the solution.
\end{remark}

Analogously to the Toda lattice case, understanding the asymptotics of Davey--Stewartson solitons is equivalent to understanding the dominant relations among the $\theta$-phases.
\begin{lemma}\label{lem-dominantrel-DS}
    Let $\{\psi_1<\dots<\psi_M\}$ be generic and $L_{i,j}$ be the $[i,j]$-soliton line for $i<j$.  Then,  along $L_{i,j}$, we have the following dominant relations for $y\gg 0$:
    \begin{itemize}
        \item $\theta_i=\theta_j>\theta_m$ for $1\leq m<i$ or $j<m\leq M$;
        \item $\theta_i=\theta_j<\theta_m$ for $i<m<j$.
    \end{itemize}
\end{lemma}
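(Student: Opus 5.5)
The plan is to copy the proof of \cref{lem-dominantrel}: restrict the differences $\theta_m-\theta_i$ to the line $L_{i,j}$, eliminate $x$, and read off the sign from the coefficient of $y$, which dominates for $y\gg0$ at fixed $t$.

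First, on $L_{i,j}$ the condition $\theta_i=\theta_j$ lets us solve for $x$. From \cref{DS-solitonline},
\begin{equation}
x=\tan(s)\,y+c_{ij}\,t, \qquad s=\tfrac12(\psi_i+\psi_j),
\end{equation}
where $c_{ij}$ is the constant read off from \cref{DS-solitonline}. This uses the identity $\frac{\cos\psi_j-\cos\psi_i}{\sin\psi_i-\sin\psi_j}=\tan s$ together with $\sin\psi_i\neq\sin\psi_j$, which genericity guarantees.

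Next, substitute this into
\begin{equation}
\theta_m-\theta_i=2\bigl((\sin\psi_m-\sin\psi_i)x+(\cos\psi_m-\cos\psi_i)y-(\sin2\psi_m-\sin2\psi_i)t\bigr).
\end{equation}
Along $L_{i,j}$ this gives $\theta_m-\theta_i=\beta_m\,y+\gamma_m\,t$, where
\begin{equation}
\beta_m=2\bigl((\sin\psi_m-\sin\psi_i)\tan s+\cos\psi_m-\cos\psi_i\bigr)
\end{equation}
and $\gamma_m$ is a constant. For fixed $t$ and $y\gg0$, the sign of $\theta_m-\theta_i$ equals the sign of $\beta_m$, provided $\beta_m\neq0$.

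The key simplification comes from $s\in(-\frac\pi2,\frac\pi2)$, so that $\cos s>0$. Multiplying by $\cos s$ and using the cosine subtraction formula gives
\begin{equation}
\cos(s)\,\beta_m=2\bigl(\cos(\psi_m-s)-\cos(\psi_i-s)\bigr)=2\bigl(\cos(\psi_m-s)-\cos\delta\bigr),
\qquad \delta=\tfrac12(\psi_j-\psi_i)>0.
\end{equation}
So the sign of $\beta_m$ is the sign of $\cos(\psi_m-s)-\cos\delta$.

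Now compare $|\psi_m-s|$ with $\delta$. If $i<m<j$, then $\psi_i<\psi_m<\psi_j$, hence $|\psi_m-s|<\delta$. If $m<i$ or $m>j$, then $|\psi_m-s|>\delta$. Since all $\psi$'s lie in $(-\frac\pi2,\frac\pi2)$, we have $|\psi_m-s|<\pi$, and $\cos$ is strictly decreasing in $|u|$ on $[0,\pi)$. Therefore $\beta_m>0$ in the first case and $\beta_m<0$ in the second. This gives $\theta_m>\theta_i=\theta_j$ for $i<m<j$, and $\theta_m<\theta_i=\theta_j$ for $m$ outside $[i,j]$, as claimed.

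The main obstacle is this last step. One must ensure that $|\psi_m-s|$ stays inside the range where $\cos$ is monotone, and that $\cos s>0$ so the sign is not flipped. Both follow from the assumption that the $\psi$'s lie in $(-\frac\pi2,\frac\pi2)$ (\cref{DS-genericity}). Strictness, i.e. $\beta_m\neq0$, holds because the $\psi$'s are distinct.
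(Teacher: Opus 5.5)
Your proof is correct and follows the paper's route: you substitute the equation of $L_{i,j}$ into $\theta_m-\theta_i$ and read the sign off the coefficient of $y$, which is exactly the paper's function $\alpha(\psi_m)$. The only difference is the last step: your identity $\cos(s)\,\alpha(\psi)=2\bigl(\cos(\psi-s)-\cos\delta\bigr)$ gives the sign pattern at once, replacing the paper's qualitative discussion of the zeros, maximum and concavity of $\alpha$. (A minor citation point: the hypothesis $\psi_k\in(-\frac{\pi}{2},\frac{\pi}{2})$ that you use is the section's standing assumption, not part of the genericity definition itself.)
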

\begin{proof}
    The proof proceeds analogously to that of Lemma \ref{lem-dominantrel}. Along $L_{[i,j]}$, we can use equation \eqref{DS-solitonline} to substitute for $x$ in the difference $\theta_k(x,y,t)-\theta_i(x,y,t)$. In the limit $y \gg 0$, the sign of this difference will be determined by the coefficient of $y$, which is given by evaluating the following function on $\psi_k$:
    \begin{equation}
        \begin{split}
            \alpha: \hspace{1em}\biggl(-\frac{\pi}{2}, \frac{\pi}{2}\biggr) &\longrightarrow \mathbb{R}\\
            \psi &\longmapsto 2 \left[ \tan \frac{\psi_i + \psi_j}{2} (\sin \psi -\sin \psi_i)+\cos\psi-\cos \psi_i \right].
        \end{split}
    \end{equation}
    To complete the proof, it is enough to list the behavior of the function $\alpha:$
    \begin{itemize}
        \item $\alpha(\psi_i)=\alpha(\psi_j)=0$;
        \item The function $\eta(\psi)$ has a maximum at the point $\psi=\frac{\psi_i+\psi_j}{2}$;
        \item The function $\alpha$ is concave for $-\frac{\pi}{2}<\psi\leq\frac{\psi_i+\psi_j+\pi}{2}$ and convex for $\frac{\psi_i+\psi_j+\pi}{2}<\psi<\frac{\pi}{2}$ if $\frac{\psi_i+\psi_j-\pi}{2}<-\frac{\pi}{2}$, otherwise it is convex for $-\frac{\pi}{2}<\psi\leq\frac{\psi_i+\psi_j-\pi}{2}$ and concave for $\frac{\psi_i+\psi_j-\pi}{2}\leq\psi<\frac{\pi}{2}$.\qedhere
    \end{itemize}
\end{proof}

The proof of the following Theorem is functionally that of \cref{asymptotic} and \cref{fan_is_der} , in the sense that the combinatorial arguments are identical.
\begin{theorem}\label{DS-asymptotic}
    Let $A \in \mathcal{P}_{v,w}^{>0} \subset \mathrm{Gr}\,(N,M)_{\geq0}$ be an irreducible matrix, with pivot set $\{i_1, \dots i_N\}$ and non-pivot set $\{j_1, \dots j_{M-N}\}$. Let $\{\psi_j\}_{j=1}^M$ be generic parameters. Then $\mathcal{C}^{(t)}(\mathcal{M}(A))$ has the following asymptotic structure 
        \begin{itemize}
            \item For $y\ll 0$, there exist $N$ line solitons of type $[i_k, p_k]$, where $p_k>i_k$.
            \item For $y\gg 0$, there exist $M-N$ line solitons of type $[q_l, j_l]$, where $q_l < j_l$.
        \end{itemize}
        Furthermore, the permutation $\pi\in S_M$ given by $\pi(i_k)=p_k$ for all $k=1,\dots,N$ and $\pi(j_l)=q_l$ for all $l=1,\dots,M-N$ satisfies that $\pi=vw^{-1}$.
\end{theorem}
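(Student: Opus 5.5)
We follow the combinatorial route of \cref{asymptotic} and \cref{fan_is_der} (i.e.\ the arguments of \cite[Theorem 6.1]{kodama} as reproduced in Appendices \ref{app-asymptotic} and \ref{app-fan}). The Davey--Stewartson analytics enter only through \cref{lem-dominantrel-DS}, which plays the role of \cref{lem-dominantrel}; the sign pattern there is flipped, so we also swap the roles of the two asymptotic directions.

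\emph{Step 1 (one-soliton fact).} Fix $t$ and consider a ray with $|y|\to\infty$ where two regions labeled $I$ and $J=(I\setminus\{i\})\cup\{j\}$, $i<j$, are adjacent along $L_{[i,j]}$. By \cref{lem-dominantrel-DS}, for $y\gg0$ the indices $m$ with $\theta_m>\theta_i=\theta_j$ on $L_{[i,j]}$ are exactly those with $i<m<j$. Hence the pair $I,J$ is dominant there precisely when every swap of an element of $I\cap J$ for some $m\in(i,j)$ either lowers $\Theta$ or leaves $\mathcal M(A)$. For $y\ll0$ the inequalities reverse, so the condition involves $m<i$ or $m>j$. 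We then describe the dominant index sets far out in the $x$-direction. For $x\ll0$ the phases are ordered by $\sin\psi_k$, which increases with $k$ on $(-\pi/2,\pi/2)$. So the region with $x\ll 0$ is labeled by the lexicographically minimal element of $\mathcal M(A)$, namely the pivot set, and the region with $x\gg0$ by the lexicographically maximal element, in line with \cref{label-contour}.

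\emph{Step 2 (counting).} Sweep along a large circle in the half-plane $y\ll0$, from the $x\ll0$ end to the $x\gg0$ end. Each crossing of an unbounded soliton $L_{[i,j]}$ swaps one index, by the adjacency proposition for DS. Using Step 1 together with the Plücker relations, as in the appendix proof, we show that each pivot $i_k$ is removed exactly once, and that it is replaced by some $p_k>i_k$. This yields $N$ solitons $[i_k,p_k]$. The half-plane $y\gg0$ is handled by the duality \cref{prop-dual}: the map $(x,y,t)\mapsto(-x,-y,-t)$ exchanges $A$ with $B_A$ and $y\gg0$ with $y\ll0$. Applying the first case to $B_A$, whose pivot set is the non-pivot set of $A$ permuted by $\pi$, gives $M-N$ solitons $[q_l,j_l]$ with $q_l<j_l$. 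The sign reversal of $t$ causes no trouble, because at $|y|\to\infty$ the $t$-terms are constants and do not affect which phases dominate.

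\emph{Step 3 ($\pi=vw^{-1}$).} This step only compares sets of indices. The asymptotic labels produced in Step 2 are described by lexicographic extremality conditions in $\mathcal M(A)$. By Talaska's characterization of $\mathcal P^{>0}_{v,w}$ through vanishing and nonvanishing Plücker coordinates, these labels match the labels of the boundary pipes of the reduced pipedream. Since the combinatorics are unchanged, we quote Appendix \ref{app-fan} directly.

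The main obstacle is Step 1. \cref{lem-dominantrel-DS} is proved only for $y\gg0$, and its convexity analysis depends on the genericity range $|\psi_i+\psi_j+\psi_k|<\pi/2$. We must check that the mirrored statement for $y\ll0$ holds, and that the ordering of the $\sin\psi_k$ needed at $x\to\pm\infty$ agrees with the ordering of the indices. If a direct argument fails, we will instead derive the $y\ll0$ case from the $y\gg0$ case through duality, as described in Step 2.
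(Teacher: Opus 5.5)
Your proposal is correct and follows the paper's route: the paper proves \cref{DS-asymptotic} by rerunning the combinatorial arguments of Appendices \ref{app-asymptotic} and \ref{app-fan}, with \cref{lem-dominantrel-DS} in place of \cref{lem-dominantrel}. Under this substitution $y\ll0$ plays the role of $x_{-1}\gg0$, and duality handles the other half-plane, exactly as you describe.

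The obstacle you flag is not real. Along $L_{[i,j]}$ the $y$-coefficient of $\theta_m-\theta_i$ is $\frac{2}{\cos s}\bigl(\cos(\psi_m-s)-\cos\frac{\psi_j-\psi_i}{2}\bigr)$ with $s=\frac{\psi_i+\psi_j}{2}$, so the $y\ll0$ relations are just the $y\gg0$ ones reversed. Also, in Step 1 the relevant swaps replace $i$ (or $j$) by $m\in(i,j)$; they do not replace an element of $I\cap J$.
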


\subsection{Algorithm}

As in \cref{sec:algorithm}, we provide an algorithm to produce the soliton graph of soliton solutions of the DSII equation.\\
Let $A\in\text{Gr}(N,M)_{\geq0}$, $\{\psi_1,\dots,\psi_M\}$ be a set of generic parameters, and $I,J\in\mathcal M(A)$ two index sets such that $\Theta_I$ and $\Theta_J$ are dominant in two adjacent regions. Recall that there exist $a,b\in[M]$ such that $J=(I\setminus\{a\})\cup\{b\}$ and the line bounding the regions is given by the $[a,b]$-soliton,
\begin{equation}\label{eq-line}
    L_{[a,b]}: x-\tan\left(\frac{\psi_a+\psi_b}{2}\right)y-t\frac{\sin(2\psi_a)-\sin(2\psi_b)}{\sin(\psi_a)-\sin(\psi_b)}=0.
\end{equation}
\begin{remark}
    Note that we are ignoring constant terms in the equation of the $[a,b]$-soliton since we consider $t\gg0$ or $t\ll0$.
\end{remark}
The coordinates of a resonant interaction point $v_{[a,b,c]}=(x_{[a,b,c]},y_{[a,b,c]})$ are given by the following formulas:
\begin{align}
    &x_{[a,b,c]}=t(\cos(\psi_a)+\cos(\psi_b)+\cos(\psi_c)-\cos(\psi_a+\psi_b+\psi_c)),\\
    &y_{[a,b,c]}=t(\sin(\psi_a)+\sin(\psi_b)+\sin(\psi_c)+\sin(\psi_a+\psi_b+\psi_c)).
\end{align}
Consider now the hyperplanes in the $(x,y,z)$-space given by the equations
\begin{equation}
    z=\theta_i(x,y,t)\qquad\forall i\in[M],
\end{equation}
and define $z_d=\theta_d(x_{[a,b,c]},y_{[a,b,c]},t)$.
\begin{remark}
    Since $\theta_a(x_{[a,b,c]},y_{[a,b,c]},t)=\theta_b(x_{[a,b,c]},y_{[a,b,c]},t)=\theta_c(x_{[a,b,c]},x_{[a,b,c]},t)$, we have
    \begin{equation}
        z_{[a,b,c]}=z_a=z_b=z_c=t(\sin(\psi_a+\psi_b)+\sin(\psi_a+\psi_c)+\sin(\psi_b+\psi_c)).
    \end{equation}
\end{remark}
\begin{definition}
    Let $a,b,c\in[M]$ such that $a<b<c$ and consider $v_{[a,b,c]}$. We say that $v_{[a,b,c]}$ is \textbf{visible} if $z_d<z_{[a,b,c]}$ for all $d\in[M]\setminus\{a,b,c\}$.
\end{definition}
Therefore, determining visibility is equivalent to determining the sign of the function
\begin{equation}
    F_{a,b,c}(\psi_d)=z_d-z_{[a,b,c]}=8t\sin\left(\frac{\psi_a-\psi_d}{2}\right)\sin\left(\frac{\psi_c-\psi_d}{2}\right)\sin\left(\frac{\psi_d-\psi_b}{2}\right)\cos\left(\frac{\psi_a+\psi_b+\psi_c+\psi_d}{2}\right).
\end{equation}
Notice that our genericity assumption \eqref{DS-genericity} guaranties the positivity of the rightmost factor in $F_{a,b,c}(\psi_d)$. Therefore, $F_{a,b,c}$ has only three zeros: $\psi_d=\psi_k$ for $k\in\{a,b,c\}$. In particular, this implies the only interaction points featuring in asymptotic contour plots are X-crossings and trivalent vertices. We have the following:
\begin{proposition}
    Let $a,b,c\in[M]$ such that $a<b<c$ and $t<0$, then
    \begin{enumerate}
        \item $F_{a,b,c}(\psi_d)>0$ for $d<a$ and $b<d<c$;
        \item $F_{a,b,c}(\psi_d)<0$ for $a<d<b$ and $c<d$.
    \end{enumerate}
\end{proposition}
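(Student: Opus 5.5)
The plan is to read off the sign of $F_{a,b,c}(\psi_d)$ directly from the factored expression
\[
F_{a,b,c}(\psi_d)=8t\sin\Bigl(\tfrac{\psi_a-\psi_d}{2}\Bigr)\sin\Bigl(\tfrac{\psi_c-\psi_d}{2}\Bigr)\sin\Bigl(\tfrac{\psi_d-\psi_b}{2}\Bigr)\cos\Bigl(\tfrac{\psi_a+\psi_b+\psi_c+\psi_d}{2}\Bigr)
\]
stated just before the proposition. I will assume, as in \cref{lem-dominantrel-DS}, that the parameters are ordered $\psi_1<\dots<\psi_M$, so that index order coincides with the order of the angles. If one wants to double-check the factorization, expand $z_d-z_{[a,b,c]}$ using the coordinates of $v_{[a,b,c]}$ and $\theta_d=2(x\sin\psi_d+y\cos\psi_d-t\sin 2\psi_d)$, then apply product-to-sum identities. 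I will take the formula as given.

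The first step is to pin down the sign of each factor. All $\psi_i$ lie in $(-\frac{\pi}{2},\frac{\pi}{2})$, so every half-difference $\frac{\psi_i-\psi_j}{2}$ lies in $(-\frac{\pi}{2},\frac{\pi}{2})$. Since $\sin$ is odd and increasing there, each sine factor has the sign of the corresponding difference $\psi_i-\psi_j$.

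For the cosine factor, I use the first condition of \cref{DS-genericity}: $\psi_a+\psi_b+\psi_c\in(-\frac{\pi}{2},\frac{\pi}{2})$. Together with $\psi_d\in(-\frac{\pi}{2},\frac{\pi}{2})$ this gives $\frac{\psi_a+\psi_b+\psi_c+\psi_d}{2}\in(-\frac{\pi}{2},\frac{\pi}{2})$, so the cosine is strictly positive. This is the positivity the paper already remarks on, and it is the only genuinely non-formal input. The other potential obstacle is making sure the half-angle arguments never leave $(-\frac{\pi}{2},\frac{\pi}{2})$, which the bound above settles. Finally, $8t<0$ because $t<0$.

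It then remains to do a case check on the position of $d$ relative to $a<b<c$. I record the signs of $(\psi_a-\psi_d,\ \psi_c-\psi_d,\ \psi_d-\psi_b)$ in each case:
\begin{itemize}
\item[(i)] For $d<a$ the signs are $(+,+,-)$. The product of the sines is negative, so $F>0$.
\item[(ii)] For $a<d<b$ the signs are $(-,+,-)$. The product is positive, so $F<0$.
\item[(iii)] For $b<d<c$ the signs are $(-,+,+)$. The product is negative, so $F>0$.
\item[(iv)] For $d>c$ the signs are $(-,-,+)$. The product is positive, so $F<0$.
\end{itemize}
In each case I multiply by the negative factor $8t$ and the positive cosine, which gives exactly items (1) and (2) of the statement. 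Strictness holds because $d\notin\{a,b,c\}$ and the $\psi_i$ are distinct, so no sine factor vanishes.
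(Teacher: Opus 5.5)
Your proof is correct and follows the same route as the paper. Like the paper, you do a case-by-case sign analysis of the factored formula for $F_{a,b,c}(\psi_d)$: each half-angle sine has the sign of the corresponding difference, and the first genericity condition makes the cosine factor positive. You write out all four cases and correctly use $8t<0$, whereas the paper's written proof treats only $d<a$ and contains the slip ``Since $t>0$''.
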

\begin{proof}
    The proof is a case-by-case analysis of the function $F_{a,b,c}$. If $d<a$, we have that
    \begin{align}
        &0<\frac{\psi_a-\psi_d}{2}<\frac{\pi}{2}\implies\sin\left(\frac{\psi_a-\psi_d}{2}\right)>0;\\
        &0<\frac{\psi_c-\psi_d}{2}<\frac{\pi}{2}\implies\sin\left(\frac{\psi_c-\psi_d}{2}\right)>0;\\
        &0<\frac{\psi_b-\psi_d}{2}<\frac{\pi}{2}\implies\sin\left(\frac{\psi_d-\psi_b}{2}\right)<0.
    \end{align}
    Since $t>0$, we have that $F_{a,b,c}(\psi_d)>0$ for $d<a$. Analogously, we get all the other inequalities.
\end{proof}
\begin{proposition}
    Let $t<0$ and $A\in\mathcal P^{>0}_{v,w}$ be an irreducible matrix. Then the trivalent vertices in the reduced pipedream are visible. In particular, if the vertex in the pipedream is white (black), then it corresponds to a resonant interaction point marked by a white (black) vertex.
\end{proposition}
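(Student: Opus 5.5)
The plan is to transplant the proof of \cref{vis_ver} verbatim, replacing every appeal to \cref{prop-signF} by the sign proposition just established for the Davey--Stewartson function $F_{a,b,c}(\psi_d)$ with $t<0$. The combinatorics of the argument involve only the matroid $\mathcal M(A)$, the pivot set, the Plücker relations and the lexicographic characterisation of region labels in the reduced pipedream. None of these depend on the integrable system; the analysis enters only through the sign pattern of $z_d-z_{[a,b,c]}$. Since the sign pattern is the same as in \cref{prop-signF}, namely positive for $d<a$ and $b<d<c$ and negative for $a<d<b$ and $d>c$, the case analysis carries over unchanged.

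Concretely, we fix a white vertex in the pipedream, sitting in a box with pivot label $a$ and non-pivot label $c$. The three adjacent regions carry labels $I_a=L_0\cup\{a\}$, $I_b=L_0\cup\{b\}$ and $I_c=L_0\cup\{c\}$ with $a<b<c$. We characterise $I_a$ as the lexicographically maximal element of $\mathcal M(A)$ of the prescribed shape. We then must show that at $v_{[a,b,c]}$ no $\Theta_K$ with $K\in\mathcal M(A)$ exceeds $\Theta_{I_a}=\Theta_{I_b}=\Theta_{I_c}$. It suffices to compare with single exchanges $L_0\cup\{d\}$ and $(L_0\setminus\{q\})\cup\{d,\cdot\}$.

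\begin{enumerate}
\item[] \emph{Indices $d<a$ outside $L_0$.} These are excluded because the pivot set is the lexicographic minimum of $\mathcal M(A)$, so the corresponding minors vanish.
\item[] \emph{Indices with $a<d<b$ or $d>c$.} Here $F_{a,b,c}(\psi_d)<0$, so $\theta_d$ lies below the common height.
\item[] \emph{Indices with $b<d<c$ (first case).} If no such index lies outside $L_0$, we are done.
\item[] \emph{Indices with $b<d<c$ (second case).} If such an $i$ exists, we argue as in \cref{vis_ver}. Either $\theta_i$ is dominated by every $\theta_p$, $p\in I_a$, or some $q\in L_0$ with $\theta_q<\theta_i$ exists. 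Maximality of $I_a$ kills the exchange when $q<i$, so we may assume $i<q<c$. Replacing $I_\bullet$ by $J_\bullet=(I_\bullet\setminus\{q\})\cup\{i\}$ and using the three-term Plücker relation (\cref{plucker relation}) shows that $J_a,J_b,J_c$ are all in $\mathcal M(A)$ simultaneously. This yields the required dominant triple meeting at the vertex.
\end{enumerate}

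Black vertices are handled by the same argument applied with $T\setminus\{\cdot\}$ labels. Alternatively, they follow via the duality of \cref{sec-dual}, where $(x,y,t)\mapsto(-x,-y,-t)$ together with $A\mapsto B_A$ exchanges colours.

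The colour statement then follows directly from the definition. A white pipedream vertex has three neighbouring labels sharing $N-1$ indices, $B\cup\{a\},B\cup\{b\},B\cup\{c\}$, which is exactly the definition of a white resonant point. The black case is analogous.

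The main obstacle is checking that the Davey--Stewartson hypotheses actually deliver the sign proposition in the form needed. The genericity condition $|\psi_a+\psi_b+\psi_c|<\pi/2$ controls the cosine factor only for triples. In $F_{a,b,c}$, however, the argument of the cosine involves four angles, $\cos\bigl(\tfrac{\psi_a+\psi_b+\psi_c+\psi_d}{2}\bigr)$. I would first verify positivity of this factor from $\psi_j\in(-\pi/2,\pi/2)$ and the triple bound, which give $|\psi_a+\psi_b+\psi_c+\psi_d|<\pi$. I would also check the sign convention $t<0$ consistently, since the preceding proof's text writes $t>0$. Once that is settled, the remainder is the combinatorial argument of \cite[Proposition 8.1]{kodama}, exactly as in \cref{vis_ver}.
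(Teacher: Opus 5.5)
Your proposal takes essentially the paper's approach: the paper omits the proof, stating it is identical to that of \cref{vis_ver} with the Davey--Stewartson sign proposition replacing \cref{prop-signF}. Your two added checks are right: $\cos\bigl(\tfrac{\psi_a+\psi_b+\psi_c+\psi_d}{2}\bigr)>0$ follows from $|\psi_a+\psi_b+\psi_c|<\pi/2$ and $|\psi_d|<\pi/2$, and the sign proof should read $t<0$.

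One caveat: use the direct $T\setminus\{\cdot\}$ argument for black vertices and drop the duality alternative. The map $A\mapsto B_A$ sends $t<0$ to $-t>0$, where $F_{a,b,c}\propto t$ has the reversed sign pattern. Moreover, the pipedream of $B_A$ is not simply the colour-swapped pipedream of $A$, so that route does not follow from what is proved.
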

The proof of this result is identical to the one of Proposition \ref{vis_ver} and for this reason it is omitted.\\
Let $t\ll0$ and $A\in\mathcal P^{>0}_{v,w}$. We now adapt the algorithms proposed in \cite{kodama} \cite{kodama2014kp} to construct the soliton graph $\mathcal C^-(\mathcal M(A))$ starting from the $\reflectbox L$-diagram associated to $A$:
\begin{algorithm}\label{algorithm_DS}
    \begin{enumerate}
        \item[] 
        \item Draw the \reflectbox{$\mathrm L$}-diagram from the pair $(v,w)$ or the derangement $\pi=vw^{-1}$.
        \item Draw the reduced pipedream.
        \item Plot in the $(x,y)$-plane all the trivalent vertices $v_{[a,b,c]}=(x_{[a,b,c]},y_{[a,b,c]})$, $a<b<c$, in the \reflectbox{$\mathrm{L}$}--diagram, where
        \begin{align}
            &x_{[a,b,c]}=t(\cos(\psi_a)+\cos(\psi_b)+\cos(\psi_c)-\cos(\psi_a+\psi_b+\psi_c)),\\
            &y_{[a,b,c]}=t(\sin(\psi_a)+\sin(\psi_b)+\sin(\psi_c)+\sin(\psi_a+\psi_b+\psi_c)).
        \end{align}
        We have two cases:    
        \begin{itemize}
            \item if the vertex is white: draw the line $L_{[a,c]}$ for $y<y_{[a,b,c]}$ and the lines $L_{[a,b]}$ and $L_{[b,c]}$ for $y>y_{[a,b,c]}$;
            \item if the vertex is black: draw the line $L_{[a,c]}$ for $y>y_{[a,b,c]}$ and the lines $L_{[a,b]}$ and $L_{[b,c]}$ for $y<y_{[a,b,c]}$.
        \end{itemize}
        The equation of the line $[a,b]$ is given in \cref{eq-line}.
        \item Draw an infinite line $L_{[i,j]}$ every time there is a path labeled $[i,j]$ that does not connect two vertices in the pipedream.
    \end{enumerate}
\end{algorithm}
The proof of the faithfulness of the previous algorithm is analogous to that Algorithm \ref{algorithm-TL1}, and is also omitted.

\begin{remark}
    Note an analogous argument to the one used for the 2D Toda Lattice allows us to construct the soliton graph $\mathcal{C}^+(\mathcal{M}(A))$ from the previous algorithm. Let $t\gg 0$, and consider the change of variables $x'=-x$, $y'=-y$. Since the function $\Theta_I(x, y,t)$ is linear in its variables,
    \begin{equation}
        \max_{I\in \mathcal{M}(A)} \{ \Theta_I(x,y,t)\}=\max_{J\in \mathcal{M}(B_A)} \{ -\Theta_J(x,y,t)\}=\max_{J \in \mathcal{M}(B_A)}\{\Theta(x',y',-t)\},
    \end{equation}
    where $B_A$ is the \textit{dual matrix} to $A$ defined in Section \ref{sec-dual}. Since $-t \ll 0$ by assumption, $\mathcal C^+(\mathcal{M}(A))$ can be recovered from the soliton graph $\mathcal C^-(\mathcal{M}(B_A))$ drawn in the $(x',y')$-plane.
\end{remark}

\begin{example}\label{example-algorithm}
    We show the output of \cref{algorithm_DS} applied to the matrix
    \begin{equation}
        A=
        \begin{pmatrix}
            1 &1 &0 &-1 &0 &0\\
            0 &0 &1 &1 &0 &0\\
            0 &0 &0 &0 &1 &1
        \end{pmatrix}
        \in\mathcal P^{>0}_{v,w}\subset\text{Gr}(3,6)_{\geq0}
    \end{equation}
    from \cref{example-algorithm-Toda}, where $w=s_5s_3s_4s_1s_2s_3$ and $v=11s_411s_3$. Let $t=-12$ and choose generic parameters    
    \begin{equation}
        \left\{\psi_1=-\frac{2\pi}{5},\psi_2=-\frac{\pi}{4},\psi_3=-\frac{\pi}{12},\psi_4=\frac{\pi}{6},\psi_5=\frac{2\pi}{7},\psi_6=\frac{3\pi}{7}\right\}.
    \end{equation}
    In \cref{fig:algorithm-Davey}, the reader can verify how the contour plot of the soliton solution and the output of the algorithm coincide. Recall the form of the matrix $B_A$ in \cref{dual_soliton_TL}:
    \begin{equation}
        B_A=
        \begin{pmatrix}
            1 &0 &-1 &-1 &0 &0\\
            0 &1 &1 &1 &0 &0\\
            0 &0 &0 &0 &1 &1
        \end{pmatrix}.
    \end{equation}
    In \cref{fig:dual_DS}, we instead compare the contour plot of the soliton solution $Q(x,y,20)$ and the output of the algorithm applied to the dual matrix $B_A$.
    
    \begin{figure}
        \centering

        \includegraphics[width=0.3\linewidth]{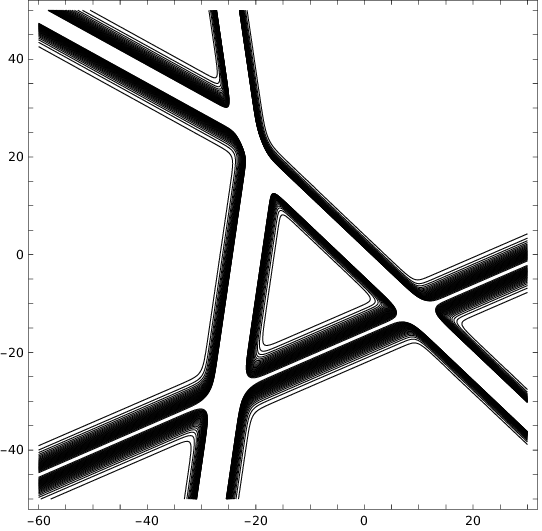}
        \tikzset{every picture/.style={line width=0.75pt}} 
        
        \begin{tikzpicture}[x=0.75pt,y=0.75pt,yscale=-0.75,xscale=0.75]
            
            \draw   (400,526) -- (450,526) -- (450,576) -- (400,576) -- cycle ;
            \draw   (450,526) -- (500,526) -- (500,576) -- (450,576) -- cycle ;
            \draw   (450,476) -- (500,476) -- (500,526) -- (450,526) -- cycle ;
            \draw   (400,476) -- (450,476) -- (450,526) -- (400,526) -- cycle ;
            \draw   (400,576) -- (450,576) -- (450,626) -- (400,626) -- cycle ;
            \draw   (500,476) -- (550,476) -- (550,526) -- (500,526) -- cycle ;
            \draw [color={rgb, 255:red, 211; green, 47; blue, 47 }  ,draw opacity=1 ]   (427,456) .. controls (429,540) and (422,580) .. (441,597) ;
            \draw [color={rgb, 255:red, 211; green, 47; blue, 47 }  ,draw opacity=1 ]   (384,598) .. controls (417,596) and (420,601) .. (426,609) ;
            \draw [color={rgb, 255:red, 211; green, 47; blue, 47 }  ,draw opacity=1 ]   (426,609) -- (441,597) ;
            \draw [color={rgb, 255:red, 101; green, 87; blue, 210 }  ,draw opacity=1 ]   (383,546) .. controls (432,545) and (449,542) .. (470,557) ;
            \draw [color={rgb, 255:red, 101; green, 87; blue, 210 }  ,draw opacity=1 ]   (470,557) -- (487,544) ;
            \draw [color={rgb, 255:red, 0; green, 103; blue, 88 }  ,draw opacity=1 ]   (523,460) .. controls (524,478) and (519,491) .. (536,493) ;
            \draw [color={rgb, 255:red, 0; green, 103; blue, 88 }  ,draw opacity=1 ]   (517,509) -- (536,493) ;
            \draw [color={rgb, 255:red, 128; green, 128; blue, 128 }  ,draw opacity=1 ]   (467,512) -- (476,496) ;
            \draw   (471.5,496) .. controls (471.5,493.51) and (473.51,491.5) .. (476,491.5) .. controls (478.49,491.5) and (480.5,493.51) .. (480.5,496) .. controls (480.5,498.49) and (478.49,500.5) .. (476,500.5) .. controls (473.51,500.5) and (471.5,498.49) .. (471.5,496) -- cycle ;
            \draw  [fill={rgb, 255:red, 0; green, 0; blue, 0 }  ,fill opacity=1 ] (462.5,512) .. controls (462.5,509.51) and (464.51,507.5) .. (467,507.5) .. controls (469.49,507.5) and (471.5,509.51) .. (471.5,512) .. controls (471.5,514.49) and (469.49,516.5) .. (467,516.5) .. controls (464.51,516.5) and (462.5,514.49) .. (462.5,512) -- cycle ;
            \draw [color={rgb, 255:red, 199; green, 80; blue, 0 }  ,draw opacity=1 ]   (386.33,492.67) .. controls (430.33,491.67) and (463.33,494.67) .. (467,512) ;
            \draw [color={rgb, 255:red, 101; green, 87; blue, 210 }  ,draw opacity=1 ]   (467,512) .. controls (468.33,539.67) and (465.33,538.67) .. (487,544) ;
            \draw [color={rgb, 255:red, 0; green, 103; blue, 88 }  ,draw opacity=1 ]   (476,496) .. controls (482.33,509.67) and (502.33,494.67) .. (517,509) ;
            \draw [color={rgb, 255:red, 21; green, 101; blue, 192 }  ,draw opacity=1 ]   (465.33,458.67) .. controls (465.33,482.67) and (466.33,484.67) .. (476,496) ;
                
            \draw (527,443) node [anchor=north west][inner sep=0.75pt]    {$[ 1,2]$};
            \draw (470,442) node [anchor=north west][inner sep=0.75pt]    {$[ 2,4]$};
            \draw (381,447) node [anchor=north west][inner sep=0.75pt]    {$[ 5,6]$};
            \draw (337,484) node [anchor=north west][inner sep=0.75pt]    {$[ 1,3]$};
            \draw (336,533) node [anchor=north west][inner sep=0.75pt]    {$[ 3,4]$};
            \draw (336,586) node [anchor=north west][inner sep=0.75pt]    {$[ 5,6]$};
            
        \end{tikzpicture}
        \includegraphics[width=0.3\linewidth]{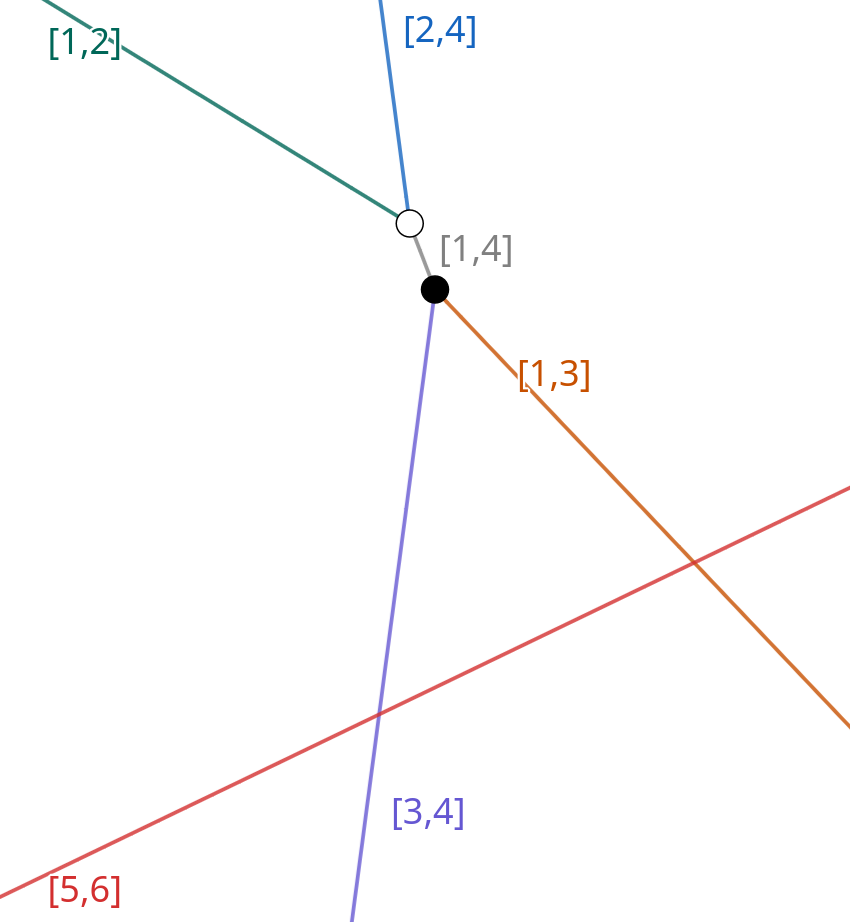}
        \caption{Comparison between the contour plot of $Q(x,y,-12)$ (top) and the output of the algorithm (bottom right).}
        \label{fig:algorithm-Davey}
    \end{figure}
    \begin{figure}
        \centering
        \includegraphics[width=0.3\linewidth]{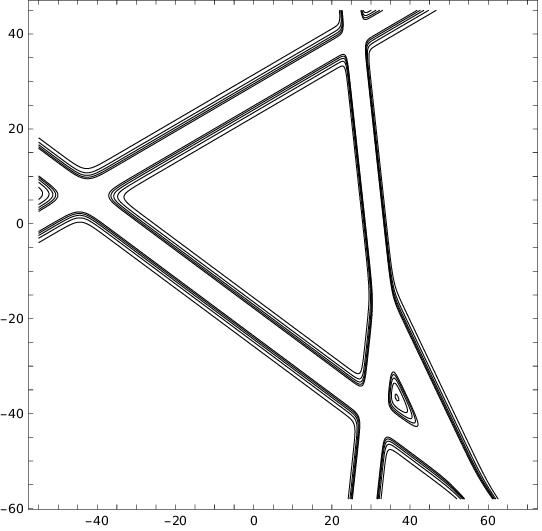}
        \tikzset{every picture/.style={line width=0.75pt}} 
        
        \begin{tikzpicture}[x=0.75pt,y=0.75pt,yscale=-0.75,xscale=0.75]

            \draw   (405,276.67) -- (446,276.67) -- (446,317.67) -- (405,317.67) -- cycle ;
            \draw   (446,276.67) -- (487,276.67) -- (487,317.67) -- (446,317.67) -- cycle ;
            \draw   (487,276.67) -- (528,276.67) -- (528,317.67) -- (487,317.67) -- cycle ;
            \draw   (405,317.67) -- (446,317.67) -- (446,358.67) -- (405,358.67) -- cycle ;
            \draw   (446,317.67) -- (487,317.67) -- (487,358.67) -- (446,358.67) -- cycle ;
            \draw   (487,317.67) -- (528,317.67) -- (528,358.67) -- (487,358.67) -- cycle ;
            \draw   (405,358.67) -- (446,358.67) -- (446,399.67) -- (405,399.67) -- cycle ;
            \draw [color={rgb, 255:red, 211; green, 47; blue, 47 }  ,draw opacity=1 ]   (419.33,255.33) .. controls (418.33,295.33) and (416,356.67) .. (435,373.67) ;
            \draw [color={rgb, 255:red, 211; green, 47; blue, 47 }  ,draw opacity=1 ]   (378,372.67) .. controls (411,370.67) and (414,375.67) .. (420,383.67) ;
            \draw [color={rgb, 255:red, 211; green, 47; blue, 47 }  ,draw opacity=1 ]   (420,383.67) -- (435,373.67) ;
            \draw [color={rgb, 255:red, 21; green, 101; blue, 192 }  ,draw opacity=1 ]   (385.33,332.33) .. controls (404.33,332.33) and (459.33,330.33) .. (462.33,343.33) ;
            \draw [color={rgb, 255:red, 21; green, 101; blue, 192 }  ,draw opacity=1 ]   (470.33,333.33) .. controls (470.33,332.33) and (462.33,343.33) .. (462.33,343.33) ;
            \draw   (465.83,333.33) .. controls (465.83,330.85) and (467.85,328.83) .. (470.33,328.83) .. controls (472.82,328.83) and (474.83,330.85) .. (474.83,333.33) .. controls (474.83,335.82) and (472.82,337.83) .. (470.33,337.83) .. controls (467.85,337.83) and (465.83,335.82) .. (465.83,333.33) -- cycle ;
            \draw [color={rgb, 255:red, 101; green, 87; blue, 210 }  ,draw opacity=1 ]   (460.33,258.33) .. controls (460.33,280.33) and (454.33,323.33) .. (470.33,333.33) ;
            \draw [color={rgb, 255:red, 199; green, 80; blue, 0 }  ,draw opacity=1 ]   (511.33,291.67) .. controls (503.33,282.33) and (504.33,272.33) .. (504.33,258.33) ;
            \draw [color={rgb, 255:red, 199; green, 80; blue, 0 }  ,draw opacity=1 ]   (502.33,304.33) .. controls (502.33,303.33) and (511.33,291.33) .. (511.33,291.67) ;
            \draw  [fill={rgb, 255:red, 0; green, 0; blue, 0 }  ,fill opacity=1 ] (497.83,304.33) .. controls (497.83,301.85) and (499.85,299.83) .. (502.33,299.83) .. controls (504.82,299.83) and (506.83,301.85) .. (506.83,304.33) .. controls (506.83,306.82) and (504.82,308.83) .. (502.33,308.83) .. controls (499.85,308.83) and (497.83,306.82) .. (497.83,304.33) -- cycle ;
            \draw [color={rgb, 255:red, 0; green, 103; blue, 88 }  ,draw opacity=1 ]   (386.33,294.33) .. controls (422.33,294.33) and (494.33,286.33) .. (502.33,304.33) ;
            \draw [color={rgb, 255:red, 128; green, 128; blue, 128 }  ,draw opacity=1 ]   (502.33,341.33) -- (513.33,331.33) ;
            \draw [color={rgb, 255:red, 128; green, 128; blue, 128 }  ,draw opacity=1 ]   (470.33,333.33) .. controls (496.33,330.33) and (494.33,342.33) .. (502.33,341.33) ;
            \draw [color={rgb, 255:red, 128; green, 128; blue, 128 }  ,draw opacity=1 ]   (502.33,304.33) .. controls (501.33,319.33) and (498.33,326.33) .. (513.33,331.33) ;
            
            \draw (345,279) node [anchor=north west][inner sep=0.75pt]    {$[ 1,2]$};
            \draw (340,318) node [anchor=north west][inner sep=0.75pt]    {$[ 2,4]$};
            \draw (377,243) node [anchor=north west][inner sep=0.75pt]    {$[ 5,6]$};
            \draw (507,245) node [anchor=north west][inner sep=0.75pt]    {$[ 1,3]$};
            \draw (442,234) node [anchor=north west][inner sep=0.75pt]    {$[ 3,4]$};
            \draw (352,379) node [anchor=north west][inner sep=0.75pt]    {$[ 5,6]$};

        \end{tikzpicture}
        \includegraphics[width=0.4\linewidth]{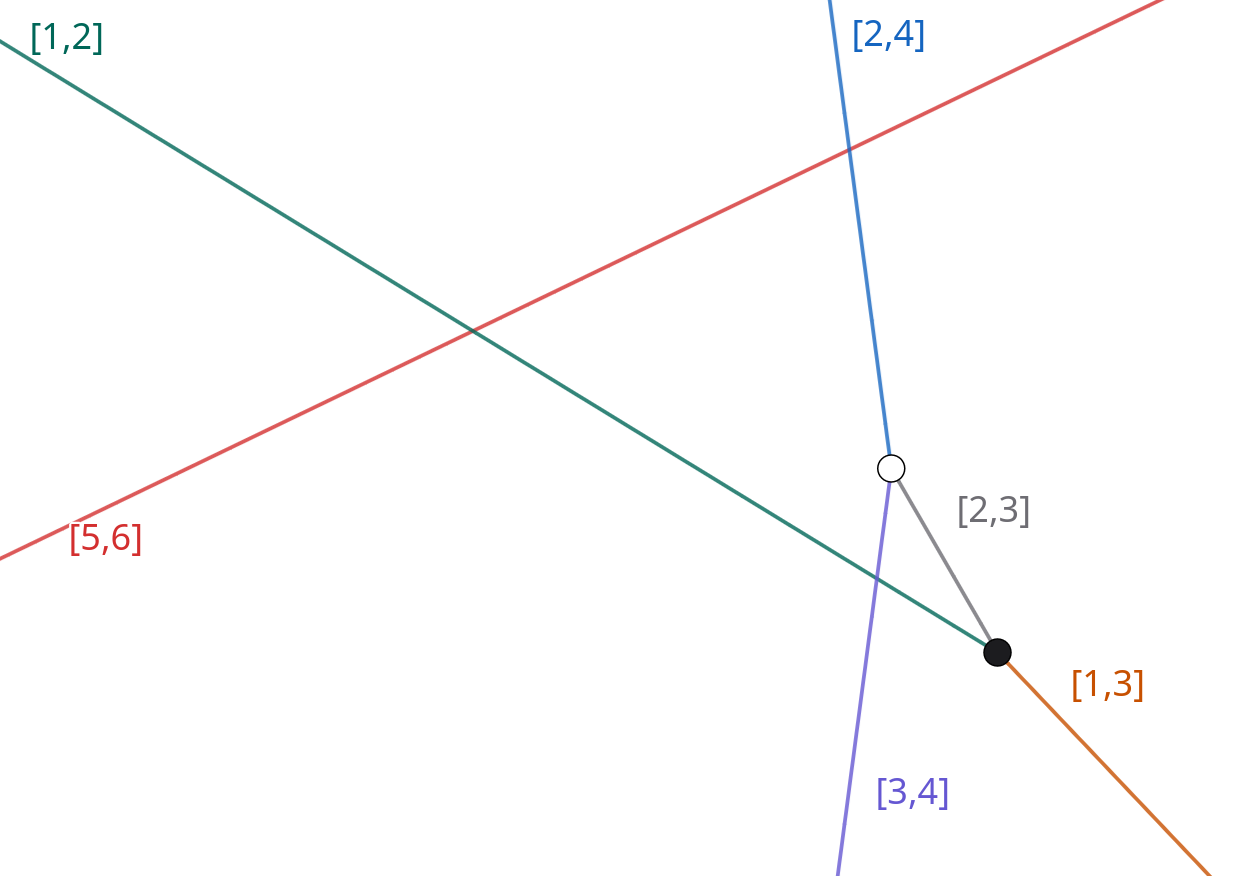}
        \caption{Comparison between the contour plot of $Q(x,y,20)$ (top) and the output of the algorithm (bottom right).}
        \label{fig:dual_DS}
    \end{figure}
\end{example}

\section{Inverse problem}
\label{sec-inverseproblem}
Let $n \in \mathbb{Z}$ be fixed. In the Toda lattice system, the goal of the inverse problem is to recover the point $A \in \mathrm{Gr}(N,M)_{>0}$ from the contour plot\footnote{Throughout this section, we always consider Toda lattice soliton plots in the $(x_1, x_{-1})$-plane} $\mathcal{C}(V_{n,A})$ and the fixed $n$. We are also implicitly assuming knowledge of the generic constants fed into $\mathcal{C}(V_{n,A})$. The inverse problem can be analogously defined in the DSII case, where the task is now to recover $A\in \mathrm{Gr}(N,M)_{>0}$ from $\mathcal{C}(Q_{t,A})$ with fixed $t$.

In \cite{kodama2014kp}, Kodama and Williams solve the inverse problem for contour plots of KP soliton solutions. The authors relied on the theory of cluster algebras, and used general arguments so that their methods could be translated to other integrable systems. In this section, we will show that Kodama and Williams' methods allow us to solve the inverse problem for the 2D Toda Lattice and Davey--Stewartson systems. In what follows, we always assume  that our contour plots depend on generic parameters (see \cref{def-generic} and \cref{DS-genericity}).

A keystone in Kodama and Williams' argument is that soliton graphs arising from the KP system satisfy the \textit{resonance property}, which we define in the following proposition:
\begin{proposition}
    Let $A\in \mathrm{Gr}(M,N)_{\geq 0}$. All vertices $v$  appearing in 
     $\mathcal{C}(V_{n,A})$ for $|n| \gg 0$ (resp. in $\mathcal{C}(Q,t)$ for $|t| \gg 0$) satisfy the \textit{resonance property}. That is, there exist $i<j<l$ such that the soliton lines incident on $v$ appear as $[i,j]$, $[j,l]$ and $[i,l]$, in clockwise (resp. counterclockwise) order.
\end{proposition}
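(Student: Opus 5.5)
We plan to reduce the statement to a local computation at a single vertex. By the genericity assumption and the remark following \cref{prop-signF}, the interaction points of $\mathcal{C}(V_{n,A})$ for $|n|\gg0$ are either $X$-crossings or trivalent vertices, and the $X$-crossings are not vertices of the soliton graph. So let $v$ be a trivalent vertex. By \cref{prop-dominantphases}, the labels of any two of the three adjacent regions differ in exactly one index. A short case check then shows that the three labels have one of two forms. Either they are $B\cup\{a\},B\cup\{b\},B\cup\{c\}$ with $|B|=N-1$ (a white vertex), or they are $T\setminus\{a\},T\setminus\{b\},T\setminus\{c\}$ with $|T|=N+1$ (a black vertex), for some $a<b<c$. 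In both cases the three incident soliton lines are $L_{[a,b]},L_{[b,c]},L_{[a,c]}$. The claim therefore reduces to showing that, going clockwise around $v$, these lines appear in the cyclic order $[a,b],[b,c],[a,c]$.

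Near $v$ only the three phases $\Theta$ above compete. In the white case, $\Theta_{B\cup\{i\}}-\Theta_{B\cup\{j\}}=\theta_i-\theta_j$, so the region labelled $B\cup\{i\}$ is exactly where $\theta_i=\max(\theta_a,\theta_b,\theta_c)$. In the black case, the region labelled $T\setminus\{i\}$ is exactly where $\theta_i=\min(\theta_a,\theta_b,\theta_c)$. The constant terms $\ln(\Delta_I Sh_I)$ only translate $v$ and do not change the gradients, so they play no role in this local picture.

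The key observation is a normal fan argument. For three affine functions with gradients $g_a,g_b,g_c$ that agree at $v$, the region where $\theta_i$ is maximal is the cone at $v$ dual to the vertex $g_i$ of the triangle $\operatorname{conv}(g_a,g_b,g_c)$. These three cones appear around $v$ in the same cyclic orientation as $g_a,g_b,g_c$ around the triangle. The boundary ray between the cones of $i$ and $j$ is the line $L_{[i,j]}$. Hence if the regions occur in the cyclic order $a,b,c$, the rays occur in the order $[a,b],[b,c],[c,a]=[a,c]$. For the minimum, one replaces $g_i$ by $-g_i$. This is a rotation by $\pi$, which preserves orientation, so black vertices give the same cyclic order.

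It then remains to determine the orientation of the triangle $g_a,g_b,g_c$ for $a<b<c$.
\begin{itemize}
\item For the 2D Toda lattice in the $(x_1,x_{-1})$-plane, $g_i=(\lambda_i,\lambda_i^{-1})$ lies on the convex hyperbola $pq=1$. Since $\lambda_a<\lambda_b<\lambda_c$, we need the sign of $\det(g_b-g_a,g_c-g_a)$, which factors as a Vandermonde-type product. We expect this sign to yield the clockwise order.
\item For Davey--Stewartson, $g_i=2(\sin\psi_i,\cos\psi_i)$ lies on the unit circle with $\psi\in(-\tfrac\pi2,\tfrac\pi2)$. Traversing the circle in this parametrization has the opposite orientation in the $(x,y)$-plane, which gives counterclockwise.
\end{itemize}

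We expect the main obstacle to be bookkeeping rather than analysis. The delicate points are the orientation convention of the plotted axes (which coordinate is horizontal), the duality between the cyclic order of regions and the cyclic order of rays, and the sign of the determinant. Each of these must be checked carefully so that it matches the stated clockwise/counterclockwise order. We plan to confirm the conventions against \cref{example-algorithm-Toda} and \cref{fig:algorithm-Davey}.
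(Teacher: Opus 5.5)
\paragraph{The last step is missing.} Your reduction to the three lines $[a,b],[b,c],[a,c]$ is the same as the paper's. Replacing the paper's one-line appeal to the slope ordering $\lambda_i\lambda_j<\lambda_i\lambda_l<\lambda_j\lambda_l$ (resp.\ of $\tan\frac{\psi_i+\psi_j}{2}$) by the normal fan of $\operatorname{conv}(g_a,g_b,g_c)$ is a legitimate and cleaner route. It shows at once why white and black vertices give the same cyclic order, which the slope argument leaves implicit. The gap is the final step, which is the entire content of the statement. You never determine the orientation, and both orientations you assert are contradicted by your own lemma. For the Toda lattice,
\[
\det(g_b-g_a,\,g_c-g_a)=\frac{(\lambda_b-\lambda_a)(\lambda_c-\lambda_a)(\lambda_c-\lambda_b)}{\lambda_a\lambda_b\lambda_c}>0 .
\]
So, with $x_1$ horizontal and $x_{-1}$ vertical, the triangle is counterclockwise: it is a convex arc of the hyperbola traversed left to right. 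Your lemma then gives $[a,b],[b,c],[a,c]$ in \emph{counterclockwise} order, not the clockwise order you expect. For Davey--Stewartson, $\psi\mapsto(\sin\psi,\cos\psi)$ runs \emph{clockwise} over the upper semicircle. The same lemma therefore gives clockwise order in the $(x,y)$-plane, and your sentence ``which gives counterclockwise'' does not follow.

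\paragraph{What the argument actually proves.} Carried out honestly, your argument yields orientations opposite to the statement's in the frames where the first listed coordinate is horizontal. This is not an artefact of your method. Three concurrent lines of slopes $s_1<s_2<s_3$ meeting at a trivalent tropical vertex always appear counterclockwise in the order $s_1,s_3,s_2$. Hence the paper's slope argument, read in the frame where $\lambda_i\lambda_j$ is the slope $dx_{-1}/dx_1$, also gives counterclockwise for Toda. Consistently, in Figure~\ref{fig:trips} the rays $[1,2],[2,4],[1,4]$ at the white vertex $v_{[1,2,4]}$ run counterclockwise, so the figures will not confirm the statement's wording. The stated orientations hold only with the axes swapped: $x_{-1}$ horizontal for Toda, and $y$ horizontal for DS (the frame in which $\tan\frac{\psi_i+\psi_j}{2}$ is the slope). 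To make this a proof, fix the axis convention explicitly, carry out the sign computation, and state the orientation that results. What your argument does establish independently of conventions is that the cyclic order equals the orientation of the gradient triangle. It is therefore the same at white and black vertices, and opposite for Toda in the $(x_1,x_{-1})$-frame versus DS in the $(x,y)$-frame.
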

\begin{proof}
    By large enough $|n|$, we mean that $\mathcal{C}(V_{n,A})$ has the same topology as the corresponding asymptotic contour plot; large enough $|t|$ is meant to be understood similarly. Therefore, the labels of adjacent dominant regions share $N-1$ indices. We can then assume that the soliton lines intersecting at $v$ are of the form $[i,j]$, $[j,l]$ and $[i,l]$. If furthermore $i<j<l$, then the result follows from the following ordering of the slopes of the soliton lines:
    \begin{equation}
        \lambda_i \lambda_j < \lambda_i \lambda_l < \lambda_j \lambda_l.
    \end{equation}
    We have an analogous ordering of the slopes for the corresponding soliton lines in $\mathcal{C}(Q,t)$, since
    \begin{equation}
        \tan \frac{\psi_i+ \psi_j}{2} < \tan \frac{\psi_i+ \psi_l}{2} < \tan \frac{\psi_j+ \psi_l}{2}.
    \end{equation}
\end{proof}

\begin{definition}
    A plabic graph is an undirected graph embedded in a disk subject to the following requirements:
\begin{itemize}
    \item There are $M$ labeled vertices $1, \dots, M$ located in the boundary of the disk, arranged in either counterclockwise or clockwise\footnote{This is a matter of convention; if the outer edges are labeled in counterclockwise (resp. clockwise) order, internal edges should satisfy the resonance property in clockwise (resp. counterclockwise) order.} order.
    \item Each vertex on the boundary of the disk is incident to a single edge.
    \item Every internal vertex is colored either black or white.
\end{itemize}
\end{definition}

\begin{figure}[h]
    \centering
    \tikzset{every picture/.style={line width=0.75pt}} 

    \begin{tikzpicture}[x=0.75pt,y=0.75pt,yscale=-0.75,xscale=0.75]
    
        \draw   (226,2381.17) .. controls (226,2318.11) and (277.11,2267) .. (340.17,2267) .. controls (403.22,2267) and (454.33,2318.11) .. (454.33,2381.17) .. controls (454.33,2444.22) and (403.22,2495.33) .. (340.17,2495.33) .. controls (277.11,2495.33) and (226,2444.22) .. (226,2381.17) -- cycle ;
        \draw  [fill={rgb, 255:red, 0; green, 0; blue, 0 }  ,fill opacity=1 ] (320,2436) .. controls (320,2433.24) and (322.24,2431) .. (325,2431) .. controls (327.76,2431) and (330,2433.24) .. (330,2436) .. controls (330,2438.76) and (327.76,2441) .. (325,2441) .. controls (322.24,2441) and (320,2438.76) .. (320,2436) -- cycle ;
        \draw  [fill={rgb, 255:red, 0; green, 0; blue, 0 }  ,fill opacity=1 ] (340,2368) .. controls (340,2365.24) and (342.24,2363) .. (345,2363) .. controls (347.76,2363) and (350,2365.24) .. (350,2368) .. controls (350,2370.76) and (347.76,2373) .. (345,2373) .. controls (342.24,2373) and (340,2370.76) .. (340,2368) -- cycle ;
        \draw  [fill={rgb, 255:red, 0; green, 0; blue, 0 }  ,fill opacity=1 ] (396,2420) .. controls (396,2417.24) and (398.24,2415) .. (401,2415) .. controls (403.76,2415) and (406,2417.24) .. (406,2420) .. controls (406,2422.76) and (403.76,2425) .. (401,2425) .. controls (398.24,2425) and (396,2422.76) .. (396,2420) -- cycle ;
        \draw  [fill={rgb, 255:red, 0; green, 0; blue, 0 }  ,fill opacity=1 ] (294,2346) .. controls (294,2343.24) and (296.24,2341) .. (299,2341) .. controls (301.76,2341) and (304,2343.24) .. (304,2346) .. controls (304,2348.76) and (301.76,2351) .. (299,2351) .. controls (296.24,2351) and (294,2348.76) .. (294,2346) -- cycle ;
        \draw  [fill={rgb, 255:red, 0; green, 0; blue, 0 }  ,fill opacity=1 ] (393,2343) .. controls (393,2340.24) and (395.24,2338) .. (398,2338) .. controls (400.76,2338) and (403,2340.24) .. (403,2343) .. controls (403,2345.76) and (400.76,2348) .. (398,2348) .. controls (395.24,2348) and (393,2345.76) .. (393,2343) -- cycle ;
        \draw  [fill={rgb, 255:red, 255; green, 255; blue, 255 }  ,fill opacity=1 ] (272,2413) .. controls (272,2410.24) and (274.24,2408) .. (277,2408) .. controls (279.76,2408) and (282,2410.24) .. (282,2413) .. controls (282,2415.76) and (279.76,2418) .. (277,2418) .. controls (274.24,2418) and (272,2415.76) .. (272,2413) -- cycle ;
        \draw  [fill={rgb, 255:red, 255; green, 255; blue, 255 }  ,fill opacity=1 ] (335,2324) .. controls (335,2321.24) and (337.24,2319) .. (340,2319) .. controls (342.76,2319) and (345,2321.24) .. (345,2324) .. controls (345,2326.76) and (342.76,2329) .. (340,2329) .. controls (337.24,2329) and (335,2326.76) .. (335,2324) -- cycle ;
        \draw  [fill={rgb, 255:red, 255; green, 255; blue, 255 }  ,fill opacity=1 ] (300,2390) .. controls (300,2387.24) and (302.24,2385) .. (305,2385) .. controls (307.76,2385) and (310,2387.24) .. (310,2390) .. controls (310,2392.76) and (307.76,2395) .. (305,2395) .. controls (302.24,2395) and (300,2392.76) .. (300,2390) -- cycle ;
        \draw  [fill={rgb, 255:red, 255; green, 255; blue, 255 }  ,fill opacity=1 ] (356,2422) .. controls (356,2419.24) and (358.24,2417) .. (361,2417) .. controls (363.76,2417) and (366,2419.24) .. (366,2422) .. controls (366,2424.76) and (363.76,2427) .. (361,2427) .. controls (358.24,2427) and (356,2424.76) .. (356,2422) -- cycle ;
        \draw  [fill={rgb, 255:red, 255; green, 255; blue, 255 }  ,fill opacity=1 ] (383,2383) .. controls (383,2380.24) and (385.24,2378) .. (388,2378) .. controls (390.76,2378) and (393,2380.24) .. (393,2383) .. controls (393,2385.76) and (390.76,2388) .. (388,2388) .. controls (385.24,2388) and (383,2385.76) .. (383,2383) -- cycle ;
        \draw    (256.33,2303) -- (299,2346) ;
        \draw    (326.33,2267) -- (340,2319) ;
        \draw    (398,2343) -- (433.33,2314) ;
        \draw    (299,2346) -- (335,2324) ;
        \draw    (345,2368) -- (340,2329) ;
        \draw    (345,2324) -- (398,2343) ;
        \draw    (398,2343) -- (388,2378) ;
        \draw    (345,2368) -- (383,2383) ;
        \draw    (310,2390) -- (345,2368) ;
        \draw    (299,2346) -- (277,2408) ;
        \draw    (299,2346) -- (305,2385) ;
        \draw    (282,2413) -- (325,2436) ;
        \draw    (305,2395) -- (325,2436) ;
        \draw    (325,2436) -- (356,2422) ;
        \draw    (366,2422) -- (401,2420) ;
        \draw    (388,2388) -- (401,2420) ;
        \draw    (401,2420) -- (426.33,2456) ;
        \draw    (361,2427) -- (354.33,2495) ;
        \draw    (240.33,2435) -- (272,2413) ;
        
        \draw (241,2281) node [anchor=north west][inner sep=0.75pt]    {$1$};
        \draw (314,2244) node [anchor=north west][inner sep=0.75pt]    {$6$};
        \draw (435,2294) node [anchor=north west][inner sep=0.75pt]    {$5$};
        \draw (431.33,2458) node [anchor=north west][inner sep=0.75pt]    {$4$};
        \draw (348,2498) node [anchor=north west][inner sep=0.75pt]    {$3$};
        \draw (221,2433) node [anchor=north west][inner sep=0.75pt]    {$2$};

    \end{tikzpicture}

    \caption{Example of plabic graph for $M=6$.}
    \label{fig:plabic_graph}
\end{figure}

For a fixed $n$ with $|n| \gg 0$ and $A \in \mathrm{Gr}(N,M)_{\geq 0}$, the contour plot $\mathcal{C}(V_{n,A})$ defines a \textit{generalized} plabic graph $G_{n,A}$ in the following way:
\begin{itemize}
    \item Embed $\mathcal{C}(V_{n,A})$ in a disk;
    \item Label the boundary vertex defined by the soliton line $\{i,\pi(i)\}$ by $\pi(i)$.
\end{itemize}
An analogous statement is true for $\mathcal{C}(Q_{t,A})$ with $|t| \gg 0$. In a generalized plabic graph, edge vertices are not necessarily ordered. We also allow for internal edges to intersect in X-crossings, which we do not consider vertices.\\
Kodama and Williams showed that if all the vertices of a plabic graph satisfy the resonance property, then it is \textit{reduced}, in the sense of \cite{postnikov2006total}. 

\begin{theorem}[\cite{kodama2014kp}] \label{theo-cluster} Let $A \in \mathrm{Gr}(N,M)_{>0}$. If $\mathcal{C}(V_{n,A})$ for $|n| \gg 0$ has no X crossings, then the set of Pl\"ucker coordinates labeling a generic soliton graph is a cluster for the cluster algebra associated to the Grassmannian. The same is true for $\mathcal{C}(Q_{t,A})$ with $|t| \gg 0$.
\end{theorem}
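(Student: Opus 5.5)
The strategy is to reduce the statement to the corresponding result of Kodama and Williams for KP solitons. Their argument relies on three inputs about a soliton graph, and the plan is to verify each of them for $\mathcal{C}(V_{n,A})$ with $|n|\gg0$ and for $\mathcal{C}(Q_{t,A})$ with $|t|\gg0$:
\begin{enumerate}
\item the labels of adjacent regions differ in exactly one index;
\item every trivalent vertex satisfies the resonance property;
\item the boundary edges realize the derangement $\pi$ of the positroid cell containing $A$.
\end{enumerate}

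First, fix $|n|\gg0$ (resp. $|t|\gg0$), so that $\mathcal{C}(V_{n,A})$ has the same topology as the asymptotic contour plot. Then \cref{prop-dominantphases} (resp. its Davey--Stewartson analogue) shows that adjacent dominant regions are labeled by $I,J\in\mathcal{M}(A)$ with $|I\cap J|=N-1$. Label every region of the generalized plabic graph $G_{n,A}$ by its dominant index set. Since $A\in\mathrm{Gr}(N,M)_{>0}$, all Pl\"ucker coordinates are nonzero, so every region label is admissible. By \cref{asymptotic}, \cref{fan_is_der} (resp. \cref{DS-asymptotic}), the boundary of $G_{n,A}$ is labeled by the derangement $\pi$. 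For the top cell this is the Grassmann necklace permutation $\pi(i)=i+N \bmod M$ (in the appropriate convention).

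Second, assume there are no X-crossings. Then every internal vertex of $G_{n,A}$ is trivalent and, by the preceding proposition, resonant. I will show that $G_{n,A}$ is a genuine reduced plabic graph, by invoking Kodama--Williams' result that a trivalent plabic graph all of whose vertices are resonant is reduced. The one point to check is that the boundary vertices appear in the cyclic order required there. This should follow from \cref{asymptotic}: the unbounded lines for $x_{-1}\gg0$ and $x_{-1}\ll0$ are ordered by slope $1/(\lambda_i\lambda_j)$, which is monotone in the indices, exactly as the KP slopes $\lambda_i+\lambda_j$ are (and similarly $\tan\frac{\psi_i+\psi_j}{2}$ for Davey--Stewartson). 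The boundary labeling therefore matches the KP one up to the clockwise/counterclockwise convention.

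Third, I will identify the region labels with the face labels of Postnikov. In a reduced plabic graph for the top cell, the face labels obtained from trips, namely the set of $i$ such that the face lies to the left of trip $T_i$, coincide with the labels produced by letting each crossed edge $[i,j]$ transpose $i$ and $j$, starting from the lex-minimal (pivot) region as in \cref{label-contour}. Postnikov and Scott then give that the face labels of a reduced plabic graph for $\mathrm{Gr}(N,M)_{>0}$ form a cluster of the Grassmannian cluster algebra. Hence $\{\Delta_I(A)\}$, with $I$ ranging over the region labels, is a cluster.

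The main obstacle is the second step: making sure that the geometric orientation conventions of the Toda and Davey--Stewartson planes match the convention in which Kodama--Williams' ``resonant implies reduced'' theorem is stated. In particular, the resonance occurs clockwise for Toda and counterclockwise for Davey--Stewartson. I would handle this by reflecting the plane ($x_{-1}\mapsto -x_{-1}$ or $y\mapsto-y$) where needed, which reverses the cyclic order and brings the graph into their convention.
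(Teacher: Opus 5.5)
Your proposal is correct and follows essentially the same route as the paper. The shared steps are: the cyclic boundary order comes from \cref{asymptotic} (resp.\ \cref{DS-asymptotic}); the absence of X-crossings makes $G_{n,A}$ an honest plabic graph; resonance gives reducedness via Kodama--Williams; and Scott's theorem gives the cluster. Your reflection of the plane is exactly the paper's ``opposite convention'' for Davey--Stewartson, and your trip/face-label identification spells out a step the paper leaves implicit. One harmless slip: in this paper's convention ($\pi(i_k)=p_k>i_k$ for pivots $i_k$) the top cell gives $\pi(i)=i+M-N \bmod M$, and $i\mapsto i+N$ is its inverse, but nothing in your argument depends on this.
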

\begin{proof}
    If $A \in \mathrm{Gr}(N,M)_{>0}$ then, per \cref{asymptotic}, the boundary edges of $G_{n,A}$ will be arranged in counterclockwise order. Furthermore, if $\mathcal{C}(V_{n,A})$ has no X-crossings then $G_{n,A}$ will be a veritable plabic graph. As the vertices of $G_{n,A}$ satisfy the resonance property, it will be a reduced plabic graph (cf. \cite[Theorem 10.5]{kodama2014kp}). The result follows, as it is known that the Pl\"ucker coordinates corresponding to the labels of a reduced plabic graph are a cluster for the coordinate ring of $\mathrm{Gr}(N,M)$ (see \cite{Scott2006}).
    
    On the other hand, per \cref{DS-asymptotic}, the boundary edges of $G_{t,A}$ will be ordered in clockwise order, so that it will be a reduced plabic graph in the opposite convention if no X-crossings are present.
\end{proof}

In \cref{x-cross}, we show that whenever $A\in \mathrm{Gr}(N,M)_{>0}$, then $\mathcal{C}(V_{n,A})$ and $\mathcal{C}(Q_{t,A})$ have no X-crossings for $|n| \gg 0$ and $|t| \gg 0$, respectively (see \cref{prop-xcross}). Kodama and Williams' methods then allow us to conclude the following:
\begin{theorem}\label{inverse-prob}
    Let $A \in \mathrm{Gr}(N,M)_{>0}
    $ and $|n| \gg 0$ be fixed. Then given a generic contour plot $\mathcal{C}(V_{n,A})$ and $n$ we are able to recover the point $A$. The same is true when given a generic $\mathcal{C}(Q_{t,A})$ and $|t| \gg 0$.
\end{theorem}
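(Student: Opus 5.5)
The plan is to follow Kodama--Williams' solution of the KP inverse problem \cite{kodama2014kp}. The contour plot supplies the values of enough Pl\"ucker coordinates to form a cluster, and the positivity test for clusters then pins down $A$. I treat the 2D Toda lattice with $n$ fixed and $|n|\gg0$; the Davey--Stewartson case is the same argument with the opposite orientation convention.

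\textbf{Step 1 (combinatorial labels).} From $\mathcal{C}(V_{n,A})$ I read off the dominant index set $I\in\mathcal{M}(A)$ of every region. By \cref{label-contour}, the unbounded region with $x_1\ll0$ carries the lexicographically minimal element of $\mathcal{M}(A)$. Each crossed soliton line $L_{[i,j]}$ swaps $i$ and $j$, by \cref{prop-dominantphases}, which holds because for $|n|\gg0$ the contour plot has the topology of the asymptotic one.

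By \cref{prop-xcross}, $A\in\mathrm{Gr}(N,M)_{>0}$ implies that the plot has no X-crossings. By the resonance property and \cref{theo-cluster}, the region labels $\{I_R\}$ therefore form a cluster of the cluster structure on the coordinate ring of $\mathrm{Gr}(N,M)$. In particular there are $N(M-N)+1$ of them, and they are algebraically independent.

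\textbf{Step 2 (recovering the values).} The contour plot is the corner locus of $f^{(n)}_{\mathcal{M}(A)}$, not of its asymptotic version, so it remembers the constant terms. Take two adjacent regions labelled $I$ and $J=(I\setminus\{a\})\cup\{b\}$. The common edge lies on the line
\[
\Theta_I+\tfrac{N-1}{2}\Lambda_I+\ln\bigl(\Delta_I(A)Sh_I(\lambda)\bigr)=\Theta_J+\tfrac{N-1}{2}\Lambda_J+\ln\bigl(\Delta_J(A)Sh_J(\lambda)\bigr).
\]
The $\lambda_i$ and $n$ are known. Measuring the position (intercept) of this line therefore gives the ratio $\Delta_J(A)/\Delta_I(A)$ explicitly.

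The dual graph of the regions is connected. So, after normalising $\Delta_{I_0}(A)=1$ for one region (allowed, since $A$ is a projective point), I obtain the exact values of all cluster variables $\Delta_{I_R}(A)$. The DS case is identical, using $\ln(\Delta_I(A)S_I(\psi))$ and the lines \eqref{eq-line}.

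\textbf{Step 3 (reconstruction).} By the Laurent phenomenon, every Pl\"ucker coordinate is a Laurent polynomial in the variables of any cluster, and for the Grassmannian these are subtraction-free expressions. So all $\Delta_I(A)$ are determined by the values from Step 2. A positive point of $\mathrm{Gr}(N,M)$ is determined by its Pl\"ucker vector, which recovers $A$. Equivalently, one can read edge weights of the reduced plabic graph $G_{n,A}$ and apply the boundary measurement map, which is a homeomorphism onto $\mathrm{Gr}(N,M)_{>0}$. Uniqueness follows because the cluster coordinates form a chart on the totally positive part.

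\textbf{Main obstacle.} The delicate step is Step 2. One must check that, for the actual (non-asymptotic) plot at large but finite $|n|$, every edge of $\mathcal{C}(V_{n,A})$ is a segment of the exact line including constants, and that the combinatorics coincides with $\mathcal{C}^{(n)}(\mathcal{M}(A))$. I would argue that the constant shifts are $O(1)$ while vertex separations grow linearly in $n$, so for $|n|$ large no region of the generic plot collapses or appears. This needs genericity, and it is where the hypothesis $|n|\gg0$ (resp.\ $|t|\gg0$) and \cref{prop-xcross} enter. The remaining steps are the cluster-algebraic arguments of \cite{kodama2014kp}, used verbatim.
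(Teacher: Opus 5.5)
Your proposal is correct and follows essentially the same route as the paper. You read off the region labels, use the constant terms in the exact soliton-line equations of $\mathcal{C}(V_{n,A})$ (resp.\ $\mathcal{C}(Q_{t,A})$) to recover ratios of the corresponding Pl\"ucker coordinates, invoke \cref{prop-xcross} and \cref{theo-cluster} to see that these labels form a cluster, and conclude via the Laurent phenomenon. The step you flag as the main obstacle (that the non-asymptotic plot has the same combinatorics as the asymptotic one for $|n|\gg0$) is simply built into the paper's meaning of ``$|n|\gg0$''. Your boundary-measurement remark plays the role of the paper's alternative matroid phrasing.
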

\begin{proof}
    Starting from $\mathcal{C}(V_{n,A})$ we may compute the label of the dominant exponentials in each region (cf. \cref{label-contour}). Furthermore, we are able to compute the Pl\"ucker coordinate associated to each region, up to a scalar factor. Indeed, notice that, since for $|n| \gg 0$ the labels of adjacent dominant regions share $N-1$ indices, a soliton line in $\mathcal{C}(V_{n,A})$ has equation
    \begin{equation}
        x_1(\lambda_i-\lambda_j) + x_{-1}\left(\frac{1}{\lambda_i}-\frac{1}{\lambda_j} \right)+n(\phi_i-\phi_j)+\frac{N-1}{2}(\Lambda_I-\Lambda_J)+\ln\left(\frac{\Delta_I(A)Sh_I(\lambda)}{\Delta_J(A)Sh_J(\lambda)} \right)=0
    \end{equation}
    for two multi-indices $I,J \in \binom{[M]}{N}$. It follows that we can recover the ratios of the pertinent Pl\"ucker coordinates if we know the generic parameters. The same is true for $\mathcal{C}(Q_{t,A})$ with $|t| \gg 0$, as soliton lines now have equation
    \begin{equation}
        x(\sin \psi_i - \sin \psi_j)+y (\cos \psi_i - \cos \psi_j)-t(\sin2\psi_i-\sin 2\psi_j)+\ln \left( \frac{\Delta_I(A)S_I(\psi)}{\Delta_J(A)S_J(A)}\right)=0.
    \end{equation}
    By \cref{theo-cluster}, Pl\"ucker coordinates corresponding to dominant exponentials form a cluster $\mathbf{c}$ for the cluster algebra structure of the coordinate ring of the Grassmannian. This concludes the proof, as it implies we can express each Pl\"ucker coordinate of $A$ as a Laurent polynomial in $\mathbf{c}$.
\end{proof}
We have phrased the previous proof in the language of cluster algebras to match with the work of Kodama and Williams. Alternatively, the proof could also be phrased in the language of \textit{matroids}. More precisely, since the contour plots corresponding to matrices $A \in \mathrm{Gr}(N,M)_{>0}$ have no X-crossings, the labels of each region match the labels of the associated $\reflectbox{L}$-diagram (see \cref{fig:region-labels}). This ends the proof, as the Pl\"ucker coordinates corresponding to the labels of the  $\reflectbox{L}$-diagram are a basis  for the matroid $\mathcal{M}(A)$ (cf. \cite[Corollary 4.2]{TALASKA201158}).

\section{Conclusion and Outlook}

Soliton solutions to the KP equation are parametrized by Grassmannian points. In seminal work, Kodama and Williams \cite{kodama2014kp} showed that the (asymptotic) contour plots of soliton solutions can be algorithmically recovered from the Deodhar component of the Grassmannian parameter. In this paper we have shown that this algorithm for KP-solitons can be extended to two more models, the 2DTL and the DSII (cf. \cref{algorithm_DS} and \cref{algorithm-TL1}). Whether this machinery is useful for other models, and how it would have to be adapted, is an open question. A notable difference between the KP case and the integrable systems we considered is that our asymptotic contour plots admit the appearance of parallel soliton lines (cf. \cref{parallel-lines}).

Our work on the asymptotics of 2DTL and DSII solitons agrees with that of \cite{Biondini_2010} and \cite{biondini2022soliton}, respectively. We emphasize that results such as \cref{fan_is_der} and \cref{DS-asymptotic}, which tie these asymptotics to Grassmannian combinatorics, are new.

As postulated by Kodama and Williams, the \textit{inverse problem} for soliton solutions involves recovering the Grassmannian parameter starting from its associated contour plot. In this paper, we solved the inverse problem for parameters in the totally positive Grassmannian, for the 2DTL as well as for the DSII (cf. \cref{inverse-prob}). Extending this to the totally non-negative setting remains an open problem, but should be accessible.

\printbibliography

\appendix
\section{Proof of Proposition \ref{asymptotic}}\label{app-asymptotic}

The proof of Proposition \ref{asymptotic} is entirely analogous to the proof of Theorem 6.1 in \cite[\S 6.2]{kodama}; we include it for the sake of completeness. We mean to prove that:

\begin{proposition}
        Let $A \in \mathcal{P}_{v,w}^{>0} \subset \mathrm{Gr}\,(N,M)_{\geq0}$ be an irreducible matrix, with pivot set $\{i_1, \dots, i_N\}$ and non-pivot set $\{j_1, \dots, j_{M-N}\}$. Assume also that the parameters $\{\phi_1, \dots , \phi_M\}$ of the TL soliton $V_{n,A}(x_{-1},x_1)$ are generic. Then $\mathcal{C}(V_{n,A})$ has the following asymptotic structure: 
        \begin{enumerate}
            \item For $x_{-1}\gg 0$, there exist $N$ line solitons of type $[i_n, p_n]$, where $p_n>i_n$.\\
            \item For $x_{-1}\ll 0$, there exist $M-N$ line solitons of type $[q_m, j_m]$, where $q_m < jm$.
        \end{enumerate}
\end{proposition}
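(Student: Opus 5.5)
The plan is to follow the strategy of \cite[Theorem 6.1]{kodama}: first analyse the dominant phase in each unbounded region, then read off the line solitons from the transitions between consecutive dominant labels. The only analytic input is \cref{lem-dominantrel}.

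First, fix $n$ and restrict to a horizontal line $x_{-1}=c$ with $c\gg0$; write $\Theta_I=x_1\sum_{i\in I}\lambda_i+O(c)$. As $x_1$ runs from $-\infty$ to $+\infty$ the dominant index set changes only at finitely many points, where by \cref{prop-dominantphases} consecutive labels differ by a single swap. For $x_1\to-\infty$ the dominant label is the lexicographically minimal element of $\mathcal M(A)$, i.e.\ the pivot set $\{i_1,\dots,i_N\}$ (\cref{label-contour}). For $x_1\to+\infty$ it is the lexicographic maximum. The unbounded solitons crossing the line $x_{-1}=c$ are exactly these transitions.

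Next, I will show that each transition for $c\gg0$ has the form $[i,p]$ with $i<p$, and that the left index $i$ can be identified with a pivot. Along a soliton $L_{i,j}$ with $i<j$, \cref{lem-dominantrel} says that for $x_{-1}\gg0$ we have $\theta_i=\theta_j>\theta_m$ exactly when $i<m<j$. So any label adjacent across $L_{i,j}$ cannot be improved by swapping in an index $m$ lying outside $[i,j]$, as long as the corresponding minor is nonzero. Combining this with the Plücker relations (as in the proof of \cref{prop-dominantphases}) and the echelon structure of $A$, I will argue by induction on the rows of the RREF of $A$, from the bottom row upward.

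The inductive claim is that for $x_{-1}\gg0$ the pivot $i_k$ is swapped out exactly once, to some $p_k>i_k$. The argument runs as follows. The index $i_k$ can only leave the label through a soliton $[i_k,p]$. Indices of the form $\{i_1,\dots,i_{k-1}\}\cup\{\text{larger}\}$ arise from minors determined by rows $k,\dots,N$. Irreducibility ensures that row $k$ has a nonzero non-pivot entry, so $i_k$ does leave, and that every column appears, so no index is stuck. This produces $N$ distinct solitons $[i_k,p_k]$. Counting then rules out any further unbounded edges: in the positive half-plane there are exactly as many unbounded edges as transitions along the line $x_{-1}=c$, and there are $N$ of these.

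For $x_{-1}\ll0$, I will use the duality of \cref{sec-dual} instead of redoing the argument. By \cref{prop-dual} and the subsequent remark, $\mathcal C^{(n)}(\mathcal M(A))$ in the region $x_{-1}\ll0$ is the reflection of $\mathcal C^{(-n)}(\mathcal M(B_A))$ in the region $x_{-1}\gg0$. Since the constant $n$ term is negligible asymptotically, the first part applies to the irreducible matrix $B_A\in\mathrm{Gr}(M-N,M)_{\ge0}$. Its pivots are $M-N$ indices, and they are related to the non-pivots of $A$ through $\pi$. The reflection $(x_1,x_{-1})\mapsto(-x_1,-x_{-1})$ preserves the soliton labels $[i,j]$, since these are unordered pairs defined by $\theta_i=\theta_j$. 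After relabeling, this yields $M-N$ solitons $[q_m,j_m]$ with $q_m<j_m$, where the $j_m$ are the non-pivots of $A$. The relabeling must be handled carefully: the non-pivots of $A$ are exactly the larger indices in the dual solitons, and this can be checked using item (4) of the duality proposition.

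The main obstacle is the combinatorial inductive step showing that every pivot $i_k$ leaves exactly once and that the partner $p_k$ is larger than $i_k$. This is where the vanishing pattern of the minors of a matrix in $\mathcal P^{>0}_{v,w}$ genuinely enters. My first attempt will be to copy Kodama's row-deletion argument verbatim, substituting \cref{lem-dominantrel} for the KP dominance lemma. I will check that the KP proof uses the dominance relations only through their sign pattern, which is identical here; the only difference is the direction $x_{-1}\gg0$ in place of $y\gg0$.
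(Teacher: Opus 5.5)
Your outer frame is the paper's: sweep the line $x_{-1}=c\gg0$ from the lexicographic minimum (the pivot set) to the lexicographic maximum, then treat $x_{-1}\ll0$ by passing to $B_A$. The step that carries all the content is that the unbounded solitons for $x_{-1}\gg0$ are exactly one $[i_k,p_k]$ per pivot. You do not argue it, and the mechanism you sketch fails.

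Your bottom-up row induction treats the labels met when $i_k$ leaves as having the form $\{i_1,\dots,i_{k-1}\}\cup\{\text{larger}\}$. That presupposes the pivots leave in the order $i_N,\dots,i_1$. In fact the order is governed by the slopes $\lambda_{i_k}\lambda_{p_k}$. Take $A=\begin{pmatrix}1&0&-a&-b\\0&1&c&0\end{pmatrix}$ with $a,b,c>0$. It is irreducible and totally nonnegative, with $\mathcal M(A)=\binom{[4]}{2}\setminus\{\{1,4\}\}$, and its solitons for $x_{-1}\gg0$ are $[1,4]$ and $[2,3]$. If $\lambda_1\lambda_4>\lambda_2\lambda_3$ (e.g.\ $(\lambda_1,\dots,\lambda_4)=(1,2,3,10)$), the labels run $\{1,2\}\to\{2,4\}\to\{3,4\}$, so the row-$1$ pivot leaves first.

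The sentences ``row $k$ has a nonzero non-pivot entry, so $i_k$ does leave'' and ``every column appears, so no index is stuck'' are not arguments. Your reading of \cref{lem-dominantrel} is also inverted as written. For $m\notin[i,j]$ one has $\theta_m>\theta_i=\theta_j$ on $L_{i,j}$, so swapping in $m$ \emph{would} raise the phase; dominance therefore forces the corresponding minors to vanish.

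That vanishing is what the paper's proof exploits, in three direct steps with no row induction.

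(a) If $[i,j]$ occurs for $x_{-1}\gg0$, then $i$ is a pivot. Otherwise $A_i=\sum_r c_rA_{i_r}$ with pivots $i_r<i$. By multilinearity some minor with $i$ replaced by an $i_r$ is nonzero, and $\theta_{i_r}>\theta_i$ on $L_{i,j}$ contradicts dominance. Without (a) your count of $N$ transitions has no basis, since a non-pivot that has entered could later leave.

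(b) Every pivot $i$ yields a soliton. Take $j^*$ maximal among $j>i$ with $\operatorname{rank}A[1,\dots,i-1,j,\dots,M]=N$ and $\operatorname{rank}A[1,\dots,i-1,j+1,\dots,M]=N-1$; in the second case of the paper's argument, use instead $r^*$, defined by a span condition. Choose the $N-1$ shared indices in $\{1,\dots,i-1,j^*+1,\dots,M\}$ to maximize $\sum\theta$, using the monotonicity of the $\theta_m$ outside $[i,j^*]$. Maximality of $j^*$ then excludes competitors.

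(c) Uniqueness. If $[i,p]$ and $[i,p']$ with $p>p'$ both occurred, $i$ would have to re-enter between them through some $[i',i]$ with $i'<i$. This is impossible because $\lambda_{i'}\lambda_i<\lambda_i\lambda_{p'}$.

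Finally, for $x_{-1}\ll0$, item (4) of the duality proposition only identifies the \emph{smaller} indices $\pi(j_m)$ of the solitons of $B_A$. To see that the larger indices are the non-pivots of $A$ you need more. For example, use item (3): complementation of bases reverses the Gale order, so the lexicographic maximum of $\mathcal M(B_A)$ is $[M]\setminus\{i_1,\dots,i_N\}$. Then (a)--(c) applied to $B_A$ show that the indices entering during its sweep are exactly this final label.
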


\begin{proof}
    First we will consider the case $x_{-1}\gg 0$. Completing the proof amounts to verifying each of the following statements: 
    \begin{enumerate}
        \item \textit{If an $[i,j]$-soliton exists, then $i$ is a pivot of $A$:} Suppose that $\Theta_I$ and $\Theta_J$ are two adjacent dominant phases, so that we can write $I=\{i, m_2, \dots , m_M\}$ and $J=\{j, m_2, \dots m_M\}$. By contradiction, assume that the $i$-th column $A_i$ of $A$ is not a pivot column. Then necessarily
        \begin{equation}
            A_i=\sum_{r=1}^{s} c_r A_{i_r}
        \end{equation}
        for $c_r \neq 0$ and some $i_r < i$. But notice now that along the line $L_{i,j}$, in accordance with Lemma \ref{lem-dominantrel}, we have
        \begin{equation}
            \overline{\theta}_{i_r}(x_{-1}, x_1,n) > \overline{\theta}_i(x_{-1},x_1,n).
        \end{equation}
        This would imply $\Theta_I < \Theta_{\{i_r, m_2, \dots m_M\}}$ along $L_{i,j}$, contradicting that $\Theta_I$ is the dominant phase.
        \item \textit{There exists an $[i,j]$-soliton for every pivot $i$ of A:} Let $i$ be a pivot of $A$. Let us define an index $j^* \in \{i+1, \dots, M\}$ as the maximum of the $j>i$ satisfying that\footnote{Such a maximum does indeed exist. Since $A$ is irreducible, there is another non-zero entry on the row of the pivot $i$. Also, since $i$ is a pivot index, $\mathrm{rank}\, A[1, \dots, i-1] <N$.}
        \begin{equation}
            \begin{cases}
                \mathrm{rank}\, A[1, \dots, i-1,j,j+1, \dots,M]=N,\\
                \mathrm{rank}\, A[1, \dots, i-1,j+1, \dots,M]=N-1,
            \end{cases}
        \end{equation}
        where $A[k_1, k_2, \dots, k_n]$ is the submatrix of $A$ comprised of the columns with indices $k_1, k_2, \dots, k_n$.
        Then by assumption, $\mathrm{rank}\, A[1, \dots, i-1,i,j^*+1, \dots M]$ equals either $N$ or $N-1$. We consider each of these two cases separately.
        \begin{itemize}
            \item Suppose that $\mathrm{rank}\, A[1, \dots, i-1,i,j^*+1, \dots M]=N$. We have chosen $j^*$ such that $\mathrm{rank}\, A[1, \dots,i-1,j^*+1, \dots, M]=N-1$, so it follows we may choose $N-1$ linearly independent columns $j_1, \dots j_{N-1} \in \{1,  \dots, i-1,j^*+1, \dots, M\}$ such that $\mathrm{rank}\, A[j_1, \dots, j_{N-1}]=N-1$. We understand the behavior of the functions $\theta_k$ along the line $L_{i,j^*}$; namely,
            \begin{equation}
                \overline{\theta}_1 > \overline{\theta}_2 > \dots >\overline{\theta}_{i-1} \hspace{1 em} \text{and} \hspace{1 em} \overline{\theta}_{j^*+1}<\overline{\theta}_{j^*+2}< \dots <\overline{\theta}_M.
            \end{equation}
            This means we may choose these indices such that
            \begin{equation}
                \sum_{l=1}^{N-1} \overline{\theta}_{j_l} \geq\sum_{l=1}^{N-1}\overline{\theta}_{k_l}
            \end{equation}
            for all $\{k_1, \dots, k_{N-1}\} \subset \{i, \dots, i-1, j^*+1, \dots, M \}$. If we define $J=\{j^*, j_1, \dots, j_{N-1}\}$ and $I=\{i, j_1, \dots, j_n\} $, then $\Theta_I$ and $\Theta_J$ are dominant phases labeling regions separated by the soliton line $L_{i,j^*}$. Indeed, since $i$ is a pivot index and the columns we have chosen are linearly independent, $\Delta_I(A) \neq0.$ Similarly, $\Delta_J(A) \neq 0$ since we assumed $\mathrm{rank}\, A[1,\dots, i-1,j^*,j^*+1, \dots, M]=N$. If there were a multi-index $J'=\{p, j_1, \dots, j_{N-1}\}$ such that $\Delta_{J'}(A) \neq 0$ and $\Theta_{J'} > \Theta_J$ along $L_{i,j^*}$, this would contradict our choice of $j^*$. Because this would mean $\overline{\theta}_p > \overline{\theta}_{j^*}$, Lemma \ref{lem-dominantrel} implies that $p >j^*$, but this contradicts the maximality of $j^*$.
            \item Suppose that $\mathrm{rank}\, A[1, \dots , i-1, i, j^*+1, \dots M]=N-1$. Since $i$ is a pivot index, the column $A[i]$ cannot be spanned by its preceding columns alone. Let $r^*$ be the maximum of the indices $r$ satisfying
            \begin{equation}
                \begin{cases}
                    A[i] \in \mathrm{span}\, \{A[1], \dots, A[i-1], A[r], \dots A[M]\},\\
                    A[i] \notin \mathrm{span}\, \{A[1], \dots, A[i-1], A[r+1], \dots A[M]\}.
                \end{cases}
            \end{equation}
            Analogously to before, we may choose indices $j_1, \dots, j_{N}-1$ such that, if we define $I= \{ i, j_1, \dots, j_{N-1}\}$, then $\Delta_I(A) \neq0$ and $\Theta_I$ is a dominant phase. With this choice of indices, $J=\{ r^*, j_1, \dots, j_{N-1}\}$ labels another dominant phase. Indeed, by assumption $\Delta_J(A) \neq 0$. If $K=\{ k, j_1, \dots, j_{N-1}\}$ were such that $\Theta_K > \Theta_{J}$, then our choice of $j's$ and Lemma \ref{lem-dominantrel} would imply that $r^* < k$. Then necessarily $\Delta_K(A)=0$, by the maximality of $r^*$. Therefore, as in the previous case, $\Theta_I$ and $\Theta_J$ are dominant phases, and their corresponding regions on the contour plot are separated by the soliton line $L_{i,r^*}$. 
        \end{itemize}
        \item \textit{Uniqueness of the $[i, p]$-soliton}: By contradiction, suppose that there are two indices $p,\, p'$ such that $[i,p]$ and $[i, p']$ -solitons are present in the contour plot. Assume $p > p'$, so that the $[i,p]$ -soliton is left of the $[i, p']$-soliton. The dominant phase of the region $x_1 \ll 0$ is indexed by the lexicographically minimal element in $\mathcal{M}(A)$. We know this phase must be $\Theta_I$, where $I=\{i_1, \dots, i_N\}$ is the set of pivot indices of $A$. Since the pivot index $i$ is replaced by the $[i,p]$-soliton, there must exist an $[i', i]$- soliton with a pivot $i' < i$ between the $[i,p]$ and $[i, p']$ solitons. However, the inequality $i'< i < p'$ implies
        \begin{equation}
            \lambda_i \lambda_{i'} < \lambda_i \lambda_{p'}.
        \end{equation}
        This is a contradiction, as it places the $[i', i]$-soliton to the right of the $[i, p']$-soliton instead of to its left.
    \end{enumerate}
    
    We have proved the first statement of Theorem \ref{asymptotic}. That is, we have shown that if $A \in \text{Gr}(N, M)$, then, in the $x_{-1} \gg 0$ limit, the contour plot of $\tau_A$ consists of $N$ solitons corresponding to the pivot indices of $A$. To prove the matching statement for the limit $x_{-1} \ll 0$, we can rely on duality.

    Keeping with the notation of Section \ref{sec-dual}, we write $B_A$ for the dual matrix of $A$. Recall that $B_A \in \text{Gr}(M-N, M)$. The pivot indices of $B_A$ are $\pi(J_0) = \{\pi(j_1), \dots , \pi(j_{M-N}) \}$, where $J_0$ are the non-pivot indices of $A$. By the first leg of the proof, we understand the arrangement of the contour plot of $\tau_{B_A}$ in the limit $x_{-1} \gg 0$. But according to Proposition \ref{prop-dual}, the soliton fan of $\tau_{B_A}^{(n)}(x_{-1},x_1)$ matches the soliton fan of $\tau_A^{(-n)}(-x_{-1},-x_1)$ in the limit $|x_{-1}| \gg 0$.

    It follows that in the limit $x_{-1}\ll 0$, the contour plot of $\tau_A^{(n))}(x_{-1},x_1)$ can be produced by reflecting with respect to the origin the contour plot of $\tau_{B_A}^{(n)}(x_{-1},x_1)$ in the limit $x_{-1}\gg  0$.
\end{proof}

\section{Proof of Proposition \ref{fan_is_der}}\label{app-fan}

\begin{proposition}
    Let $A \in \mathcal{P}_{v,w}^{>0}$ be an irreducible matrix. The permutation $\pi \in S_M$ defined by the soliton fan of $\tau_A$,
    \begin{equation}
    \begin{cases}
        \pi(i_k)=p_k & k=1, \dots, N\\
        \pi(j_k)=q_k & k=1, \dots, M-N,
    \end{cases}
    \end{equation}
   is given by $\pi = vw^{-1}$.
\end{proposition}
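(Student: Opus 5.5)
We follow the argument of \cite[Theorem 6.1]{kodama} (see also \cite{kodama2014kp}). The analytic content is already contained in \cref{lem-dominantrel} and in the proof of \cref{asymptotic} given above. What remains is a purely combinatorial identification of the pairs $(i_k,p_k)$ and $(q_k,j_k)$ with the values of $vw^{-1}$. We will express both sides in terms of the matroid $\mathcal M(A)$. The appendix proof of \cref{asymptotic} shows that for a pivot $i$ the partner $p=\pi(i)$ is the index $j^*$, or $r^*$, defined by explicit rank conditions on column submatrices of $A$. For instance, in the first case $j^*$ is the largest $j>i$ with $\operatorname{rank}A[1,\dots,i-1,j,\dots,M]=N$ and $\operatorname{rank}A[1,\dots,i-1,j+1,\dots,M]=N-1$. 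The first step is to rewrite both cases uniformly as a cyclic-rank statement: $\pi(i)$ is the first $j$ in the cyclic order $i+1,i+2,\dots$ such that $A_i$ lies in the span of the columns $A_{i+1},\dots,A_j$. This is exactly Postnikov's definition of the decorated permutation of the positroid of $A$.

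The second step is to invoke the known fact that on the Deodhar/positroid cell $\mathcal P_{v,w}^{>0}$ this Grassmann necklace permutation equals $vw^{-1}$. This is \cite{postnikov2006total} together with \cite{talaska2013network}, and in $\mathrm{Gr}_{\ge0}$ Deodhar components coincide with positroid cells. Equivalently, $vw^{-1}$ is read off from the pipes of the $\Le$--diagram as in \cref{sec:grassmannian}. To make this step self-contained, we would instead argue by induction on the number of rows, as in the proof of \cref{algorithm-TL1}. Deleting the top row, which corresponds to passing from $A$ to $A^{(N-1)}$, removes the trip $T_1$. On the contour-plot side this removes the unbounded solitons that touch the boundary of the region $R$. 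On the pipedream side it removes the first row of boxes, which changes $w$ and $v$ by the factors in that row. One then checks that the two changes agree.

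For the non-pivot values $\pi(j_k)=q_k$ we use duality. By \cref{prop-dual} and the proposition preceding it, the soliton fan for $x_{-1}\ll0$ is the reflection of the $x_{-1}\gg0$ fan of $B_A\in\mathcal P^{>0}_{\pi^{-1}}$. Applying the pivot case to $B_A$, whose pivots are $\pi(j_k)$, gives the soliton $[\pi(j_k),j_k]$. This is consistent with $q_k=\pi(j_k)$ and $\pi=vw^{-1}$ on non-pivots.

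The main obstacle is the second step: matching the analytically defined $j^*$/$r^*$ with the combinatorial $vw^{-1}$. The subtle point is the case split, rank $N$ versus $N-1$, in the appendix. We would first verify that it reduces to the single cyclic-span criterion. We expect this to hold because, for indices larger than $i$, we have $A_i\notin\operatorname{span}(A_1,\dots,A_{i-1})$. Only after that reduction would we cite Postnikov's theorem.
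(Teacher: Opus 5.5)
\textbf{There is a genuine gap, and it sits in the step you yourself single out as the crux: the uniform criterion in Step 1 scans in the wrong cyclic direction and produces the inverse of the soliton permutation.}

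The appendix's $j^*$ and $r^*$ are defined through spans of $A_1,\dots,A_{i-1}$ together with a tail $A_r,\dots,A_M$. That is the cyclic interval \emph{ending} at $i-1$, not the one starting at $i+1$. Case 1 is just the instance $r^*=j^*$, so both cases collapse to a single rule. Namely, $\pi(i)$ is the first $r$ reached when scanning $i-1,\dots,1,M,M-1,\dots$ such that $A_i\in\operatorname{span}(A_r,\dots,A_M,A_1,\dots,A_{i-1})$. Your remark that $A_i\notin\operatorname{span}(A_1,\dots,A_{i-1})$ is exactly what makes this \emph{backward} scan stop at some $r>i$.

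The forward rule you wrote down gives the inverse permutation. By the exchange lemma, $A_i\in\operatorname{span}(A_{i+1},\dots,A_j)\setminus\operatorname{span}(A_{i+1},\dots,A_{j-1})$ holds if and only if $A_j\in\operatorname{span}(A_i,\dots,A_{j-1})\setminus\operatorname{span}(A_{i+1},\dots,A_{j-1})$. So your Step 1 can only hold when $\pi$ is an involution.

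The paper's own example shows the failure concretely. In \cref{example-algorithm-Toda} we have $A_1=A_2$, so your rule gives $\pi(1)=2$. But the soliton at $x_{-1}\gg0$ is $[1,3]$, and $vw^{-1}=(1\,3\,4\,2)(5\,6)$ sends $1\mapsto 3$. For the pivot $3$, your forward scan even wraps around to $1$, since $A_3\notin\operatorname{span}(A_4,A_5,A_6)$. This contradicts $p>i$.

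\textbf{Consequence for Step 2.} The citation cannot be used in the form you state. What you need is the following: the derangement $vw^{-1}$, built with the pipedream rules of \cref{sec:grassmannian} (under which the pivots are exactly the excedances), equals the backward-span permutation, i.e.\ the inverse of the forward one. If the theorem you cite is phrased with the forward-span permutation, it delivers $(vw^{-1})^{-1}$. Matching conventions is precisely the content of this proposition, so it must be verified, not assumed. Your fallback induction on rows (``one then checks that the two changes agree'') is too sketchy to replace that check.

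\textbf{With this correction, your route works and differs from the paper's.} The paper never passes through decorated permutations. It uses that the lexicographically maximal element of $\mathcal M(A)$ is $vw^{-1}(I_1)$. It then walks across the $x_{-1}\gg0$ fan starting from the region $x_1\ll0$, using \cref{lem-dominantrel} to show that each newly entered dominant label is a lexicographic maximum. This pins down $p_k=vw^{-1}(i_k)$ one row at a time. Your duality step for the non-pivot values, applying the pivot case to $B_A\in\mathcal P^{>0}_{\pi^{-1}}$, is sound and more explicit than the paper's ``analogous.''
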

\begin{proof}
   Recall that the lexicographically minimal nonzero Pl\"ucker coordinate $\Delta_I(A)$ is given by the set of pivot indices of A, $I_1=\{i_1, \dots, i_N\}$. Since $A \in \mathcal{P}_{v,w}^{>0}$,  the lexicographically maximal Pl\"ucker coordinate $\Delta_{I'}(A)$ is given by $I'=vw^{-1}(I)$. In other words, the last nonzero entry in the $k$-th row of $A$ is given by $i_k'=vw^{-1}(i_k)$. The matrix $A$ is irreducible, so $i_k < i_k'$. 

Since it is the lexicographical minimum in $\mathcal{M}(A)$, notice how the set of pivot indices $I_1$ corresponds to the dominant phase $\Theta_{I_1}$ in the region $x_1 \ll 0$. Without loss of generality, we can assume the slopes of the asymptotic soliton solutions are ordered as
    \begin{equation}
            \lambda_{i_1}\lambda_{p_1} < \dots <\lambda_{i_N} \lambda_{p_N}.
    \end{equation}
This implies the leftmost soliton is of type $[i_N, p_N]$, and that the dominant phase $\Theta_{I_2}$ adjacent to $\Theta_{I_1}$ on the right is given by $I_2=\{i_1, \dots, i_{N-1}, p_N\}$. The key step is noticing that $I_2$ can also be characterized as the lexicographical maximum of the set 
    \begin{equation}
            \mathcal{M}(A) \cap \{ i_1, \dots i_{N-1}, q \}\, | \,  i_N < q \leq M \}.
    \end{equation}
Indeed, suppose by contradiction that there exists $p$ such that $p_N < p \leq M$ and $\Delta_{\{i_1, \dots, i_{N-1}, p\}}(A) \neq 0$. Then Lemma \ref{lem-dominantrel}, implies that along the line $L_{i_N, p_N}$ we have $\overline{\theta}_{p_N} < \overline{\theta}_{p}$, so that also
    \begin{equation}
            \Theta_{I_2}= \sum_{k=1}^{N-1} \overline{\theta}_{i_k}+\overline{\theta}_{p_N} <\sum_{k=1}^{N-1} \overline{\theta}_{i_k} + \overline{\theta}_p = \Theta_{\{i_1, \dots i_{N-1}, p\}}, 
    \end{equation}
which contradicts the assumption that $I_2$ is a dominant phase. This new characterization of $I_2$ allows us to conclude $p_N= vw^{-1}(i_N)$ must be labeling the last nonzero entry on the $N$-th row of $A$. 
        
Now the proof can proceed inductively. We assume that $I_{k-1}=\{i_1, \dots , i_k,p_{k+1},\dots , p_N \}$ and\\ 
$I_k = \{i_1, \dots , i_{k-1}, p_{k}, \dots , p_N \}$ label adjacent dominant regions $\Theta_{I_{k-1}}$ and $\Theta_{I_k}$. Lemma \ref{lem-dominantrel} still implies that that $I_k$ is the lexicographically maximal element of the set
    \begin{equation}
             \mathcal{M}(A) \cap \{ i_1, \dots, i_{k-1}, q, p_{k+1}, \dots p_N\}\, | \,  i_N < q \leq M \},
    \end{equation}
and consequently that $p_k= vw^{-1}(i_k)$. This proves $\pi(i_l)= vw^{-1}(i_l)$ for all the pivot indices of $A$. The proof for the non-pivot indices concerns the asymptotic region $x_{-1} \ll 0$, and is analogous.
\end{proof}

\section{X-crossings in asymptotic contour plots}\label{x-cross}

Our main goal is to prove the following result:
\begin{proposition}\label{prop-xcross}
    Let $A \in \mathrm{Gr}(N,M)_{>0}$. Then generic contour plots $\mathcal{C}(V_{n,A})$ and $\mathcal{C}(Q_{t,A})$ have no X-crossings for $|n| \gg 0$ and $|t| \gg 0$, respectively.
\end{proposition}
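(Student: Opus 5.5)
Our plan is to show that every interaction point other than a trivalent vertex would force a vanishing Plücker coordinate, which cannot happen for $A\in\mathrm{Gr}(N,M)_{>0}$. For $|n|\gg0$ (resp. $|t|\gg0$) the contour plot has the same topology as the asymptotic contour plot. By \cref{prop-dominantphases} and its Davey--Stewartson analogue, adjacent regions are labeled by index sets that differ in exactly one element. Suppose an X-crossing occurs at a point $v$. Going around $v$, we meet four regions, and they are separated by two straight lines $L_{[i,j]}$ and $L_{[k,l]}$ that each pass through $v$ unbroken. Tracking labels around $v$, the four regions must be labeled $K\cup\{i,k\}$, $K\cup\{j,k\}$, $K\cup\{j,l\}$, $K\cup\{i,l\}$ for some $K$ with $|K|=N-2$. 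Here $\{i,j\}\cap\{k,l\}=\varnothing$: if the two pairs shared an index, the labels around $v$ would be incompatible with two lines crossing transversally. Hence, at $v$,
\begin{equation}
\theta_i=\theta_j,\qquad \theta_k=\theta_l,
\end{equation}
and the four phases $\Theta_{K\cup\{i,k\}},\dots$ all coincide at $v$ and dominate every other $\Theta_J$, $J\in\mathcal M(A)$.

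Next we use total positivity: every $J\in\binom{[M]}{N}$ lies in $\mathcal M(A)$. The two "diagonal" sets $K\cup\{i,j\}$ and $K\cup\{k,l\}$ are therefore also present in the $\tau$-function. At $v$ we have
\begin{equation}
\Theta_{K\cup\{i,j\}}+\Theta_{K\cup\{k,l\}}=\Theta_{K\cup\{i,k\}}+\Theta_{K\cup\{j,l\}}=2z_v .
\end{equation}
Since both diagonal phases are at most the common dominant value $z_v$, they must both equal $z_v$. This gives $\theta_i=\theta_j=\theta_k=\theta_l$ at $v$, so six distinct phases are simultaneously dominant at a single point. In particular $\theta_i=\theta_j=\theta_k$ there, so $v=v_{[a,b,c]}$ for a triple drawn from $\{i,j,k\}$, and the fourth index $d=l$ satisfies $z_d=z_{[a,b,c]}$. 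This contradicts the fact, proven in \cref{prop-signF} (resp. the sign analysis of the Davey--Stewartson $F_{a,b,c}$ via the positivity of the cosine factor under \cref{DS-genericity}), that $F_{a,b,c}$ vanishes only at $\phi_a,\phi_b,\phi_c$ (resp. $\psi_a,\psi_b,\psi_c$). For $n\ll0$ or $t\gg0$ we pass to the dual matrix $B_A$ and use \cref{prop-dual}. Since $\Delta_I(A)=\Delta_J(B_A)$, the matrix $B_A$ is again totally positive, so the same argument applies in the reflected coordinates.

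The main obstacle we expect is the genericity and asymptotic bookkeeping. The asymptotic plot ignores the constants $\ln(\Delta_I Sh_I)$, whereas $\mathcal C(V_{n,A})$ includes them. We must make sure that "six phases equal at one point" is a codimension-two coincidence ruled out by \cref{def-generic}, and that it is not merely a consequence of dropping constants. We would handle this with the scaling limit $g=\lim\varepsilon f$. An X-crossing in $\mathcal C(V_{n,A})$ for all large $|n|$ persists, after rescaling, as an X-crossing of $\mathcal C^{(n)}(\mathcal M(A))$. There the equalities hold exactly, and the argument above, together with the three-zero property of $F_{a,b,c}$, gives the contradiction.
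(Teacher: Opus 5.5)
Your argument is correct, and it reaches the conclusion by a different route from the paper's. The paper derives the statement from the finer \cref{x-vanish}. For $h<i<j<l$ it uses the ordering lemma for the positions $x_{-1,[a,b,c]}$ (resp.\ $y_{[a,b,c]}$) of the four trivalent vertices to fix how the six lines sit around the crossing. From the region labels it then decides \emph{which} diagonal would overtake the four adjacent phases, and hence must be missing from $\mathcal M(A)$. For example, for a crossing of $L_{[h,j]}$ and $L_{[i,l]}$ this is $S\cup\{h,j\}$ when $n>0$ and $S\cup\{i,l\}$ when $n<0$.

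You share the core observation that a diagonal phase would beat the four adjacent ones, but you never decide which one. Instead, the identity $\Theta_{K\cup\{i,j\}}+\Theta_{K\cup\{k,l\}}=\Theta_{K\cup\{i,k\}}+\Theta_{K\cup\{j,l\}}$ shows that if both diagonals are present they must tie at $v$. That tie forces $\theta_i=\theta_j=\theta_k=\theta_l$, a coincidence excluded because $F_{a,b,c}(\phi_d)\neq0$ for $d\notin\{a,b,c\}$ (resp.\ the Davey--Stewartson $F_{a,b,c}$ is nonzero there).

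What this buys you:
\begin{itemize}
\item One uniform argument for both systems, with no vertex-ordering lemma, no picture, and no split by the sign of $n$ or $t$.
\item Your passage to $B_A$ becomes unnecessary, since $F_{a,b,c}$ only changes sign with $n$ (resp.\ $t$) and stays nonzero.
\end{itemize}
What you give up is the case-by-case information in \cref{x-vanish} about which Pl\"ucker coordinate vanishes. That information is what one would need beyond the totally positive setting. Implicitly you only learn that the missing diagonal is the pair with the larger common $\theta$-value at $v$.

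On the constants, do not claim that an X-crossing of $\mathcal C(V_{n,A})$ persists as an X-crossing after rescaling. It is cleaner to run your averaging with the constants $c_I$ included.
\begin{itemize}
\item This gives $|\theta_j-\theta_k|=O(1)$ at the crossing point; $\theta_i-\theta_j$ and $\theta_k-\theta_l$ are $O(1)$ there as well.
\item Hence the crossing point lies within $O(1)$ of $v_{[a,b,c]}$, so $\theta_l-\theta_k$ there equals $z_l-z_{[a,b,c]}+O(1)$.
\item Since $z_l-z_{[a,b,c]}$ is $n$ (resp.\ $t$) times a nonzero constant, this is impossible for $|n|$ (resp.\ $|t|$) large.
\end{itemize}
This is also more careful than the paper, which simply argues in the asymptotic plot.
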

We have relegated the proof to an appendix because the methods used are exactly those appearing in \cite[\S 9]{kodama2014kp}. \cref{prop-xcross} is a corollary of the following proposition, which states the presence of an X-crossing indicates the vanishing of a Pl\"ucker coordinate.
\begin{proposition}\label{x-vanish}
    Let $A \in \mathrm{Gr}(N,M)_{\geq 0}$. Suppose we have an X-crossing involving indices $1 \leq h <i<j<l\leq M$ in $\mathcal{C}(V_{n,A})$ for $|n| \gg 0$ (resp. in $\mathcal{C}(Q_{t,A})$ for $|t| \gg 0).$ Then there exists $S \in \binom{[M]}{N-2}$ disjoint from $\{h,i,j,l\}$ such that we land in one of the following cases:
    \begin{enumerate}
        \item The X-crossing involves the soliton lines $[i,j]$ and $[h,l]$:
        \begin{enumerate}
            \item If $\lambda_i\lambda_j > \lambda_h \lambda_l$, (resp. $\tan \frac{\psi_i +\psi_j}{2}< \tan \frac{\psi_h+ \psi_l}{2})$ then $S \cup \{h,l\} \notin \mathcal{M}(A)$ for $n > 0$ (resp. $t>0)$ and $S \cup \{i,j\} \notin \mathcal{M}(A)$ for $n<0$ (resp. $t<0)$.
            \item If $\lambda_i\lambda_j < \lambda_h \lambda_l$, (resp. $\tan \frac{\psi_i +\psi_j}{2}> \tan \frac{\psi_h+ \psi_l}{2})$ then $S \cup \{i,j\} \notin \mathcal{M}(A)$ for $n > 0$ (resp. $t>0)$ and $S \cup \{h,l\} \notin \mathcal{M}(A)$ for $n<0$ (resp. $t<0)$.
        \end{enumerate}
        \item The X-crossing involves the soliton lines $[h,i]$ and $[j,l]$: If $n >0$ (resp. $t<0$) $S\cup \{j,l\} \notin \mathcal{M}(A)$. If $n<0$ (resp. $t>0$), then $S \cup \{h,i\} \notin \mathcal{M}(A)$.
        \item The X-crossing involves the soliton lines $[h,j]$ and $[i,l]$. If $n>0$ (resp. $t<0$), then $S \cup \{h,j\} \notin \mathcal{M}(A)$. If $n<0$ (resp. $t>0)$, then $S \cup \{i,l\} \notin \mathcal{M}(A)$.
    \end{enumerate}
\end{proposition}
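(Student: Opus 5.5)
Following Kodama--Williams \cite[\S 9]{kodama2014kp}, the plan is to analyse the four regions around the X-crossing and compare them with the two "missing" index sets through a three-term Pl\"ucker relation.

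\textbf{Step 1: the local picture.} Let the two crossing lines be $[a,b]$ and $[c,d]$, where $\{a,b,c,d\}=\{h,i,j,l\}$. By \cref{prop-dominantphases} (and its Davey--Stewartson analogue), adjacent labels differ by a single transposition. The four regions around the crossing are therefore labelled
\[
S\cup\{a,c\},\quad S\cup\{b,c\},\quad S\cup\{b,d\},\quad S\cup\{a,d\}
\]
for a common $S\in\binom{[M]}{N-2}$ disjoint from $\{h,i,j,l\}$. All four sets lie in $\mathcal M(A)$. At the crossing point $p$ we have $\theta_a=\theta_b$ and $\theta_c=\theta_d$, and every other $\theta_e$ with $e\notin S$ is smaller than these, since otherwise some other phase would dominate at $p$. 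The two remaining sets built from $\{h,i,j,l\}$ are $S\cup\{a,b\}$ and $S\cup\{c,d\}$. The claim is that exactly one of them is absent from $\mathcal M(A)$, and which one is dictated by the sign of $n$ (resp.\ $t$) and by the slope ordering.

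\textbf{Step 2: the Pl\"ucker relation.} Use the three-term relation
\[
\Delta_{S\cup\{h,j\}}\Delta_{S\cup\{i,l\}}=\Delta_{S\cup\{h,i\}}\Delta_{S\cup\{j,l\}}+\Delta_{S\cup\{h,l\}}\Delta_{S\cup\{i,j\}},
\]
in which all terms are nonnegative on $\mathrm{Gr}(N,M)_{\ge0}$. In case (3), the four regions are labelled by $S\cup\{h,i\}$, $S\cup\{h,l\}$, $S\cup\{i,j\}$ and $S\cup\{j,l\}$, so both products on the right are positive. Hence both $S\cup\{h,j\}$ and $S\cup\{i,l\}$ lie in $\mathcal M(A)$, and we must show that one of them would beat all four labels on some small region near $p$. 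In cases (1) and (2), one of the positive products sits on the right-hand side. The relation then forces the left-hand side to be nonzero, so the needed membership statement must again come from a dominance argument rather than from the relation alone.

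\textbf{Step 3: the dominance comparison (main obstacle).} Since $\theta_a-\theta_b$ and $\theta_c-\theta_d$ are independent linear forms near $p$, the set $S\cup\{a,b\}$ beats all four labels on the open sector where $\theta_a>\theta_d$ and $\theta_b>\theta_c$; on this sector $\theta_a+\theta_b$ exceeds every mixed sum. The set $S\cup\{c,d\}$ beats them on the opposite sector. One of these two sectors is nonempty in the plane unless the lines are parallel, and the real issue is in which asymptotic direction (large $|x_{-1}|$ or large $|n|$) the configuration is consistent with the given labels. I would use Lemma \ref{lem-dominantrel} and Proposition \ref{prop-signF} to compute the sign of $(\theta_a+\theta_b)-(\theta_c+\theta_d)$ at $p$. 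This requires writing $p$ explicitly as the intersection of $L_{[a,b]}$ and $L_{[c,d]}$ from \eqref{eq-line-Toda}. The sign comes out proportional to $n$ times a function of the $\lambda$'s whose sign is governed by the ordering of $\lambda_a\lambda_b$ versus $\lambda_c\lambda_d$; this is exactly the computation that splits case (1) into (a) and (b). For the Davey--Stewartson case, replace \eqref{eq-line-Toda} by \eqref{eq-line} and the sign function by the sine product from $F_{a,b,c}$.

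\textbf{Conclusion.} The sign computation shows that the winning sector for $S\cup\{a,b\}$ (resp.\ $S\cup\{c,d\}$) contains points adjacent to the crossing, where that set would dominate the four labels. This contradicts the assumption that the crossing is a genuine X-crossing of $\mathcal C$, unless the corresponding Pl\"ucker coordinate vanishes. The principal difficulty is this sign bookkeeping: the Toda case with $\phi_i=\ln\lambda_i$ requires an inequality analogous to the convexity argument for $g_{a,b}$ in Proposition \ref{prop-signF}.
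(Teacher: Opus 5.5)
Your mechanism is right, and it is the same as the paper's. At the crossing point $p$ the four labels tie, and whichever of $S\cup\{a,b\}$, $S\cup\{c,d\}$ has the larger $\Theta$ at $p$ beats all four labels on a whole neighbourhood of $p$, so that set cannot lie in $\mathcal M(A)$. This is not a question of sectors or asymptotic directions: near $p$, the condition $\min(\theta_a,\theta_b)>\max(\theta_c,\theta_d)$ is decided by the sign of $\theta_a(p)-\theta_c(p)$ alone. The paper's proof treats only the $L_{[h,j]}$/$L_{[i,l]}$ crossing. It decides which pair is on top geometrically: it places $L_{[h,i]}$ and $L_{[j,l]}$ around the crossing using the ordering $x_{-1,[h,i,j]}<x_{-1,[h,i,l]}<\cdots$ of vertex heights, then reads off the $\theta$-orderings of the adjacent regions.

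The gap is that you replace this by a direct sign computation which you never carry out, and that computation is the whole content of the proof. Also, \cref{lem-dominantrel} concerns $|x_{-1}|\to\infty$ along a single line, so it is not the relevant tool. Write $u_k=(\lambda_k,\lambda_k^{-1},\phi_k)$, so that $\theta_k=u_k\cdot(x_1,x_{-1},n)$. At $p$ the vector $(x_1,x_{-1},n)$ is a multiple of $(u_a-u_b)\times(u_c-u_d)$, which gives
\[
\theta_a(p)-\theta_c(p)=\frac{n\,V(a,b,c,d)}{\omega(a,b,c,d)},\qquad \omega(a,b,c,d)=\frac{(\lambda_a-\lambda_b)(\lambda_d-\lambda_c)(\lambda_a\lambda_b-\lambda_c\lambda_d)}{\lambda_a\lambda_b\lambda_c\lambda_d},
\]
where $V(a,b,c,d)=\det(u_b-u_a,u_c-u_a,u_d-u_a)$ is the $4\times4$ determinant with rows $(1,u_k)$. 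For $a<b<c$, $F_{a,b,c}(\phi_d)$ is a positive multiple of $n\,V(a,b,c,d)$. So \cref{prop-signF} says precisely that $V(h,i,j,l)<0$ whenever $h<i<j<l$, and no new convexity inequality is needed.

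When you do the bookkeeping, it does not give the statement in case (1), contrary to your claim that this computation is ``exactly'' what produces (a)/(b). Take $n>0$. In case (2), $\omega(h,i,j,l)>0$, so $\theta_j(p)>\theta_h(p)$ and $S\cup\{j,l\}\notin\mathcal M(A)$, as stated. In case (3), $\omega(h,j,i,l)>0$ and $V(h,j,i,l)=-V(h,i,j,l)>0$, so $S\cup\{h,j\}\notin\mathcal M(A)$; with your Pl\"ucker observation, this shows case (3) never occurs. In case (1), $V(i,j,h,l)=V(h,i,j,l)<0$, while $\omega(i,j,h,l)$ has the sign of $\lambda_h\lambda_l-\lambda_i\lambda_j$. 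So for $\lambda_i\lambda_j>\lambda_h\lambda_l$ it is $\{i,j\}$ that is on top, forcing $S\cup\{i,j\}\notin\mathcal M(A)$ — the opposite of (1)(a).

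A concrete counterexample: $A=\left(\begin{smallmatrix}1&1&1&0\\0&1&1&1\end{smallmatrix}\right)\in\mathrm{Gr}(2,4)_{\geq0}$ has $\Delta_{23}=0$ and all other minors equal to $1$. With $\lambda=(1,3,4,5)$ and $n>0$, $L_{[2,3]}$ and $L_{[1,4]}$ meet at a point where $\theta_2=\theta_3>\theta_1=\theta_4$. The regions labelled $\{1,2\},\{1,3\},\{2,4\},\{3,4\}$ therefore form an X-crossing, although $\Delta_{14}\neq0$. Hence no completion of your plan can prove case (1) as written: the conclusions in (a) and (b) must be interchanged. The same computation with $u_k=(\sin\psi_k,\cos\psi_k,-\sin 2\psi_k)$ shows the same reversal in the Davey--Stewartson clauses.

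Two smaller points. In Step 1 it is not true that every other $\theta_e$ lies below the four at $p$; the corresponding index sets may simply be absent from $\mathcal M(A)$. And the argument only shows that one specified set is missing, not that ``exactly one'' is.
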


\begin{figure}[H]
    \centering
\tikzset{every picture/.style={line width=0.75pt}} 

\begin{tikzpicture}[x=0.75pt,y=0.75pt,yscale=-1,xscale=1]

\draw    (120.32,50.04) -- (210.95,241.42) ;
\draw    (109.81,238.99) -- (218.88,52.97) ;
\draw    (80.14,85.03) -- (321.07,162.09) ;
\draw    (76.4,188.93) -- (325,132.8) ;
\draw    (146.17,49.12) -- (149.12,241.74) ;
\draw    (188.42,50.95) -- (166.8,245.4) ;
\draw  [fill={rgb, 255:red, 0; green, 0; blue, 0 }  ,fill opacity=1 ] (145.19,105.79) .. controls (145.19,105.03) and (145.84,104.42) .. (146.66,104.42) .. controls (147.47,104.42) and (148.13,105.03) .. (148.13,105.79) .. controls (148.13,106.55) and (147.47,107.16) .. (146.66,107.16) .. controls (145.84,107.16) and (145.19,106.55) .. (145.19,105.79) -- cycle ;
\draw  [fill={rgb, 255:red, 0; green, 0; blue, 0 }  ,fill opacity=1 ] (179.58,116.78) .. controls (179.58,116.02) and (180.24,115.4) .. (181.05,115.4) .. controls (181.86,115.4) and (182.52,116.02) .. (182.52,116.78) .. controls (182.52,117.53) and (181.86,118.15) .. (181.05,118.15) .. controls (180.24,118.15) and (179.58,117.53) .. (179.58,116.78) -- cycle ;
\draw  [fill={rgb, 255:red, 0; green, 0; blue, 0 }  ,fill opacity=1 ] (147.15,172.62) .. controls (147.15,171.86) and (147.81,171.25) .. (148.62,171.25) .. controls (149.44,171.25) and (150.1,171.86) .. (150.1,172.62) .. controls (150.1,173.38) and (149.44,173.99) .. (148.62,173.99) .. controls (147.81,173.99) and (147.15,173.38) .. (147.15,172.62) -- cycle ;
\draw  [fill={rgb, 255:red, 0; green, 0; blue, 0 }  ,fill opacity=1 ] (174.66,166.21) .. controls (174.66,165.45) and (175.32,164.84) .. (176.14,164.84) .. controls (176.95,164.84) and (177.61,165.45) .. (177.61,166.21) .. controls (177.61,166.97) and (176.95,167.58) .. (176.14,167.58) .. controls (175.32,167.58) and (174.66,166.97) .. (174.66,166.21) -- cycle ;

\draw (109.82,134.12) node [anchor=north west][inner sep=0.75pt]  [font=\tiny] [align=left] {$hijl$};
\draw (148.81,151.91) node [anchor=north west][inner sep=0.75pt]  [font=\tiny,rotate=-271.23] [align=left] {$hjil$};
\draw (156.82,118.73) node [anchor=north west][inner sep=0.75pt]  [font=\tiny] [align=left] {$hjli$};
\draw (153.85,158.95) node [anchor=north west][inner sep=0.75pt]  [font=\tiny] [align=left] {$jhil$};
\draw (177.72,136.68) node [anchor=north west][inner sep=0.75pt]  [font=\tiny,rotate=-89.98] [align=left] {$jhli$};
\draw (217.94,52.09) node [anchor=north west][inner sep=0.75pt]  [font=\scriptsize] [align=left] {$[h,j]$};
\draw (187.33,41.41) node [anchor=north west][inner sep=0.75pt]  [font=\scriptsize] [align=left] {$[h,l]$};
\draw (144.66,37.84) node [anchor=north west][inner sep=0.75pt]  [font=\scriptsize] [align=left] {$[i,j]$};
\draw (94.57,42.3) node [anchor=north west][inner sep=0.75pt]  [font=\scriptsize] [align=left] {$[i,l]$};
\draw (293.07,119.77) node [anchor=north west][inner sep=0.75pt]  [font=\scriptsize] [align=left] {$[h,i]$};
\draw (66.74,86.82) node [anchor=north west][inner sep=0.75pt]  [font=\scriptsize] [align=left] {$[j,l]$};

\end{tikzpicture}
\quad
\begin{tikzpicture}[x=0.75pt,y=0.75pt,yscale=-1,xscale=1]

\draw    (120.32,50.04) -- (210.95,241.42) ;
\draw    (109.81,238.99) -- (218.88,52.97) ;
\draw    (80.14,85.03) -- (321.07,162.09) ;
\draw    (76.4,188.93) -- (325,132.8) ;
\draw    (146.17,49.12) -- (149.12,241.74) ;
\draw    (188.42,50.95) -- (166.8,245.4) ;
\draw  [fill={rgb, 255:red, 0; green, 0; blue, 0 }  ,fill opacity=1 ] (145.19,105.79) .. controls (145.19,105.03) and (145.84,104.42) .. (146.66,104.42) .. controls (147.47,104.42) and (148.13,105.03) .. (148.13,105.79) .. controls (148.13,106.55) and (147.47,107.16) .. (146.66,107.16) .. controls (145.84,107.16) and (145.19,106.55) .. (145.19,105.79) -- cycle ;
\draw  [fill={rgb, 255:red, 0; green, 0; blue, 0 }  ,fill opacity=1 ] (179.58,116.78) .. controls (179.58,116.02) and (180.24,115.4) .. (181.05,115.4) .. controls (181.86,115.4) and (182.52,116.02) .. (182.52,116.78) .. controls (182.52,117.53) and (181.86,118.15) .. (181.05,118.15) .. controls (180.24,118.15) and (179.58,117.53) .. (179.58,116.78) -- cycle ;
\draw  [fill={rgb, 255:red, 0; green, 0; blue, 0 }  ,fill opacity=1 ] (147.15,172.62) .. controls (147.15,171.86) and (147.81,171.25) .. (148.62,171.25) .. controls (149.44,171.25) and (150.1,171.86) .. (150.1,172.62) .. controls (150.1,173.38) and (149.44,173.99) .. (148.62,173.99) .. controls (147.81,173.99) and (147.15,173.38) .. (147.15,172.62) -- cycle ;
\draw  [fill={rgb, 255:red, 0; green, 0; blue, 0 }  ,fill opacity=1 ] (174.66,166.21) .. controls (174.66,165.45) and (175.32,164.84) .. (176.14,164.84) .. controls (176.95,164.84) and (177.61,165.45) .. (177.61,166.21) .. controls (177.61,166.97) and (176.95,167.58) .. (176.14,167.58) .. controls (175.32,167.58) and (174.66,166.97) .. (174.66,166.21) -- cycle ;

\draw (109.82,134.12) node [anchor=north west][inner sep=0.75pt]  [font=\tiny] [align=left] {$hijl$};
\draw (148.81,151.91) node [anchor=north west][inner sep=0.75pt]  [font=\tiny,rotate=-271.23] [align=left] {$hjil$};
\draw (156.82,118.73) node [anchor=north west][inner sep=0.75pt]  [font=\tiny] [align=left] {$jhil$};
\draw (153.85,158.95) node [anchor=north west][inner sep=0.75pt]  [font=\tiny] [align=left] {$hjli$};
\draw (177.72,136.68) node [anchor=north west][inner sep=0.75pt]  [font=\tiny,rotate=-89.98] [align=left] {$jhli$};
\draw (217.94,52.09) node [anchor=north west][inner sep=0.75pt]  [font=\scriptsize] [align=left] {$[i,l]$};
\draw (187.33,41.41) node [anchor=north west][inner sep=0.75pt]  [font=\scriptsize] [align=left] {$[h,l]$};
\draw (144.66,37.84) node [anchor=north west][inner sep=0.75pt]  [font=\scriptsize] [align=left] {$[i,j]$};
\draw (94.57,42.3) node [anchor=north west][inner sep=0.75pt]  [font=\scriptsize] [align=left] {$[h,j]$};
\draw (293.07,119.77) node [anchor=north west][inner sep=0.75pt]  [font=\scriptsize] [align=left] {$[j,l]$};
\draw (66.74,86.82) node [anchor=north west][inner sep=0.75pt]  [font=\scriptsize] [align=left] {$[h,i]$};

\end{tikzpicture}

    \caption{To the left (resp. right), the neighborhood of the $(L_{[h,j]}$, $L_{[i,l]})$ X-crossing  for $n \gg0$ (resp. $t \ll 0).$}
    \label{fig:X-n>0,t<0}
\end{figure}
Notice how, in the first case, the instances $(a)$ and $(b)$ are indeed exhaustive, as our assumption prohibits the equality $\lambda_i \lambda_j=\lambda_h \lambda_l$ (or $\tan \frac{\psi_i + \psi_j}{2} = \tan \frac{\psi_h+\psi_l}{2}$). Indeed, since we are assuming that the $[i,j]$ and $[h,l]$-solitons form an X-crossing, these two lines cannot possibly be parallel. We will only explicitly prove the result for Case $(2)$. We need the following auxiliary lemma:
\begin{lemma}
    Let $A \in \mathrm{Gr}(N,M)_{\geq 0}$, and take $1 \leq h < i< j<l \leq M$. For $|n| \gg 0$ (resp. $|t| \gg 0)$, we have the following relationship among trivalent vertices in $\mathcal{C}(V_{n,A})$ (resp. $\mathcal{C}(Q_{t,A})$):
    \begin{itemize}
        \item If $n>0$ (resp. $t>0$), then $x_{-1[h,i,j]} < x_{-1,[h,i,l]}< x_{-1,[h,j,l]}<x_{-1,[i,j,l]}$ (resp. $y_{[h,i,j]}<y_{[h,i,l]} < y_{[h,j,l]}< y_{[i,j,l]}$).
        \item If $n<0$ (resp. $t<0$), then $x_{-1[h,i,j]} > x_{-1,[h,i,l]}> x_{-1,[h,j,l]}>x_{-1,[i,j,l]}$ (resp. $y_{[h,i,j]}>y_{[h,i,l]} > y_{[h,j,l]}> y_{[i,j,l]}$).
    \end{itemize}
\end{lemma}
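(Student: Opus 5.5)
The plan is to compare consecutive vertices two at a time, each time along the soliton line that the two vertices share. The three pairs are $(v_{[h,i,j]},v_{[h,i,l]})$ on $L_{[h,i]}$, $(v_{[h,i,l]},v_{[h,j,l]})$ on $L_{[h,l]}$, and $(v_{[h,j,l]},v_{[i,j,l]})$ on $L_{[j,l]}$. On a fixed line $L_{[a,b]}$ with $a<b$, substitute \eqref{eq:toda-line} into $\theta_k-\theta_a$, exactly as in the proof of \cref{lem-dominantrel}. For fixed $n$ this gives a function that is affine in $x_{-1}$,
\begin{equation}
\theta_k-\theta_a=\frac{\alpha(\lambda_k)}{\lambda_k\lambda_a\lambda_b}\,x_{-1}+\beta_k n,\qquad \alpha(\lambda)=(\lambda-\lambda_a)(\lambda-\lambda_b).
\end{equation}
Its slope is positive if $k<a$ or $k>b$, and negative if $a<k<b$. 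The point where it vanishes is the vertex $v_{[a,b,k]}$, re-sorted appropriately. So to order two vertices on $L_{[a,b]}$, it suffices to know the sign of $\theta_k-\theta_a$ at the first vertex, together with the slope sign. That sign at a vertex is exactly what \cref{prop-signF} provides through $F_{a,b,c}(\phi_d)=z_d-z_{[a,b,c]}$.

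Concretely, for $n>0$ I would carry out three comparisons.
\begin{enumerate}
\item On $L_{[h,i]}$: at $v_{[h,i,j]}$, take $d=l>j$. By \cref{prop-signF}, $\theta_l-\theta_h<0$. Since $l>i$, this difference is increasing in $x_{-1}$, so it vanishes further along: $x_{-1,[h,i,j]}<x_{-1,[h,i,l]}$.
\item On $L_{[h,l]}$: at $v_{[h,i,l]}$, take $d=j$ with $i<j<l$, which is the case $b<d<c$. Then $\theta_j-\theta_h>0$. Since $h<j<l$, this difference is decreasing, so its zero lies beyond the current point: $x_{-1,[h,i,l]}<x_{-1,[h,j,l]}$.
\item On $L_{[j,l]}$: at $v_{[h,j,l]}$, take $d=i$ with $h<i<j$, which is the case $a<d<b$. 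Then $\theta_i-\theta_j<0$. Since $i<j$, this difference is increasing, so $x_{-1,[h,j,l]}<x_{-1,[i,j,l]}$.
\end{enumerate}
For $n<0$, every vertex coordinate and every $\beta_k n$ is linear and homogeneous in $n$, and $F_{a,b,c}$ is proportional to $n$. Flipping the sign of $n$ therefore reverses every inequality and gives the second chain. Equivalently, one can apply the reflection $(x_{-1},x_1,n)\mapsto(-x_{-1},-x_1,-n)$.

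The Davey--Stewartson case follows the same template. Here $y$ plays the role of $x_{-1}$. The slopes come from the function $\alpha$ in the proof of \cref{lem-dominantrel-DS}, and the vertex signs come from the DS version of $F_{a,b,c}(\psi_d)$, which is proportional to $t$. One then reads off the chain for $t>0$, and the reversed chain for $t<0$.

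The main obstacle I expect is sign bookkeeping. In the Toda case I must check that the prefactor in $F_{a,b,c}$ gives the sign claimed in \cref{prop-signF} for $n>0$, and that "vanishes later" really means larger $x_{-1}$ rather than larger $x_1$. This matters because the lines are parametrized by $x_{-1}$ with positive slope $1/(\lambda_a\lambda_b)$, so I will parametrize by $x_{-1}$ directly. In the DS case, the stated sign proposition has an internal $t<0$ versus $t>0$ inconsistency, and the concavity analysis of $\alpha$ is on $(-\pi/2,\pi/2)$. I would settle both by evaluating the explicit product formula $8t\sin(\cdot)\sin(\cdot)\sin(\cdot)\cos(\cdot)$ in each case, using $\cos>0$ from \cref{DS-genericity}. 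I would also recompute the sign of the $y$-coefficient of $\theta_k-\theta_a$ along $L_{[a,b]}$ directly, rather than trusting the lemma's orientation.
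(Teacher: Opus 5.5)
Your argument is correct, and it takes a genuinely different route from the paper's proof.

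\textbf{How the two routes differ.} The paper writes down closed forms for $x_{-1,[a,b,c]}-x_{-1,[a,b,d]}$ and $y_{[a,b,c]}-y_{[a,b,d]}$ and reads off their signs. These are differences between two vertices sharing the pair $\{a,b\}$, which are exactly your three consecutive pairs. You instead read each difference geometrically. Both vertices lie on the shared line, and $\theta_k-\theta_a$ is affine along it. Its slope sign comes from the computation in \cref{lem-dominantrel}, and its value at the first vertex is $F_{a,b,c}(\phi_d)$, whose sign is given by \cref{prop-signF}.

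\textbf{Toda case.} Your three comparisons use the cases $c<d$, $b<d<c$, $a<d<b$ of \cref{prop-signF}, paired with slopes $+,-,+$. They yield exactly the stated chain. As a numerical check, $\phi_h,\phi_i,\phi_j,\phi_l=0,1,2,3$ with $n=1$ gives $x_{-1}\approx 1.157<1.335<1.822<3.144$.

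\textbf{Davey--Stewartson case.} Your template goes through here too, and it is worth saying why, since you only sketched it. For $t>0$ the vertex signs are the reverse of the Toda $n>0$ pattern; the factor $\cos\frac{\psi_a+\psi_b+\psi_c+\psi_d}{2}$ is positive by \cref{DS-genericity}. The $y$-coefficient of $\theta_k-\theta_a$ along $L_{[a,b]}$ is a positive multiple of $\cos(\psi_k-\frac{\psi_a+\psi_b}{2})-\cos\frac{\psi_b-\psi_a}{2}$. It is therefore positive exactly when $a<k<b$, which is also the reverse of the Toda case. The two reversals cancel and give the increasing chain in $y$ for $t>0$.

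\textbf{What each route buys.} In the DS case the paper's direct formula is shorter. The sign of $y_{[a,b,c]}-y_{[a,b,d]}$ follows at once from monotonicity of $\sin$ on $(-\frac{\pi}{2},\frac{\pi}{2})$ together with $|\psi_a+\psi_b+\psi_c|<\frac{\pi}{2}$.

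In the Toda case, however, the paper never analyses the sign of its displayed expression. As printed, that expression is not the true difference: it has no denominator. Moreover, for $a<b<c<d$ with all $\lambda$'s below $e^{-1}$ it is positive when $n>0$, because $\lambda\ln\lambda$ is decreasing there. The actual difference is always negative in that case. Your reduction to the already-proved \cref{prop-signF}, plus the slope computation, supplies the sign argument the paper's Toda proof lacks.

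\textbf{The remaining steps.} Handling $n<0$ (resp.\ $t<0$) by homogeneity of the vertex coordinates, or by the reflection, is fine. Working with the asymptotic vertex coordinates, as the paper also does, is appropriate for $|n|\gg0$ (resp.\ $|t|\gg0$).
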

\begin{proof}
    The proof follows from examining the following formulas:
    \begin{gather}
        x_{-1,[a,b,c]}-x_{-1,[a,b,d]}=\frac{\lambda_a \lambda_b }{\lambda_b-\lambda_a}\left((\lambda_d -\lambda_c)(\phi_a\lambda_a - \phi_b \lambda_b)+(\lambda_a - \lambda_b)(\phi_d \lambda_d-\phi_c \lambda_c) \right)n;\\
        y_{[a,b,c]}-y_{[a,b,d]}=(\sin \psi_c - \sin \psi_d+\sin (\psi_a+\psi_b+\psi_c)-\sin(\psi_a+\psi_b+\psi_d))t.
    \end{gather}
    We just emphasize that, because of our genericity assumptions on the parameters $\{\psi_1, \dots, \psi_M\}$, $\sin (\psi_a + \psi_b + \psi_c) < \sin (\psi_a + \psi_b + \psi_d) $ if $\psi_a+\psi_b+\psi_c < \psi_a+ \psi_b+\psi_d$.
\end{proof}

\begin{proof}[Proof (of \cref{x-vanish}, Case (2)).]
    In the notation of \cref{x-vanish}, suppose that soliton lines $L_{[h,j]}$ and $L_{[i,l]}$ intersect in an X-crossing $v$ on $\mathcal{C}(V_{n,A})$ (resp. $\mathcal{C}(Q_{t,A})$) for $n \gg 0$ (resp. $t \ll 0).$ Since $\lambda_h \lambda_i < \lambda_h \lambda_j$ (resp. $\tan \frac{\psi_h+\psi_i}{2}< \tan \frac{\psi_h+\psi_j}{2}$) and $x_{-1,[h,i,j]}<x_{-1,[h,i,l]}$ (resp. $y_{[h,i,l]}<y_{h,i,j}$), the $L_{[h,i]}$ soliton line must intersect the $[h,j]$ and $[i,l]$-solitons below (resp. above) $v$. Similar considerations place the $L_{[j,l]}$ soliton line on $\mathcal{C}(V_{n,A})$ (resp. $\mathcal{C}(Q_{t,A})$) above (resp. below) $v$, in the manner of Figure \ref{fig:X-n>0,t<0}. This fixes the placement of the $L_{[i,j]}$ and $L_{[h,l]}$ as well\footnote{Figures \ref{fig:X-n>0,t<0} and \ref{fig:X-n<0,t>0} are schematic in the sense that all generic soliton lines have positive slope and the $[i,j]$ and $[h,l]$-solitons may be parallel.}.

    Notice that in the region $x_1 \ll 0$ (resp. $x \ll 0)$, $\theta_h(x_{-1}, x_1, n)>\theta_i(x_{-1}, x_1, n)>\theta_j(x_{-1}, x_1, n)>\theta_l(x_{-1}, x_1, n)$ (resp. $\theta_h(x,y,t)>\theta_i(x,y,t)>\theta_j(x,y,t)>\theta_l(x,y,t)$); to reflect this, we label this region by $hijl$. Because we know the placement of the soliton lines in a neighborhood of $v$, we can similarly label the regions immediately adjacent to the X-crossing in this fashion. Notice that, on these regions, $\theta_h + \theta_j> \theta_l + \theta_i$. This implies that, if $\{h,j\}\cup R \in \mathcal{M}(A)$ for all $R \in \binom{[M]}{N-2}$, the X-crossing $v$ would not be visible. Indeed, there would exist some multi-index $R'$ such that $\Theta_{\{j,j\}\cup R'}$ is the dominant exponential on a region containing $v$. Therefore, there must exist $S \in \binom{[M]}{N-2}$ such that $\Delta_{\{h,j\} \cup S}(A) =0$. Similar considerations for the case in which $n \ll 0$ (resp. $t \gg 0$) on $\mathcal{C}(V_{n,A})$ (resp. $\mathcal{C}(Q_{t,A}))$ ensure there must exist $S$ such that $\Delta_{\{l,i\}\cup S}(A)=0$, as can be gleaned from Figure \ref{fig:X-n<0,t>0}.
\end{proof}

\begin{figure}[H]
    \centering

\tikzset{every picture/.style={line width=0.75pt}} 

\begin{tikzpicture}[x=0.75pt,y=0.75pt,yscale=-1,xscale=1]

\draw    (265.42,52.04) -- (174.79,243.42) ;
\draw    (275.93,240.99) -- (166.86,54.97) ;
\draw    (305.6,87.03) -- (64.67,164.09) ;
\draw    (309.34,190.93) -- (60.74,134.8) ;
\draw    (239.57,51.12) -- (236.62,243.74) ;
\draw    (197.32,52.95) -- (218.94,247.4) ;
\draw  [fill={rgb, 255:red, 0; green, 0; blue, 0 }  ,fill opacity=1 ] (240.55,107.79) .. controls (240.55,107.03) and (239.89,106.42) .. (239.08,106.42) .. controls (238.27,106.42) and (237.61,107.03) .. (237.61,107.79) .. controls (237.61,108.55) and (238.27,109.16) .. (239.08,109.16) .. controls (239.89,109.16) and (240.55,108.55) .. (240.55,107.79) -- cycle ;
\draw  [fill={rgb, 255:red, 0; green, 0; blue, 0 }  ,fill opacity=1 ] (206.16,118.78) .. controls (206.16,118.02) and (205.5,117.4) .. (204.69,117.4) .. controls (203.88,117.4) and (203.22,118.02) .. (203.22,118.78) .. controls (203.22,119.53) and (203.88,120.15) .. (204.69,120.15) .. controls (205.5,120.15) and (206.16,119.53) .. (206.16,118.78) -- cycle ;
\draw  [fill={rgb, 255:red, 0; green, 0; blue, 0 }  ,fill opacity=1 ] (238.59,174.62) .. controls (238.59,173.86) and (237.93,173.25) .. (237.12,173.25) .. controls (236.3,173.25) and (235.64,173.86) .. (235.64,174.62) .. controls (235.64,175.38) and (236.3,175.99) .. (237.12,175.99) .. controls (237.93,175.99) and (238.59,175.38) .. (238.59,174.62) -- cycle ;
\draw  [fill={rgb, 255:red, 0; green, 0; blue, 0 }  ,fill opacity=1 ] (211.08,168.21) .. controls (211.08,167.45) and (210.42,166.84) .. (209.6,166.84) .. controls (208.79,166.84) and (208.13,167.45) .. (208.13,168.21) .. controls (208.13,168.97) and (208.79,169.58) .. (209.6,169.58) .. controls (210.42,169.58) and (211.08,168.97) .. (211.08,168.21) -- cycle ;

\draw (69,144) node [anchor=north west][inner sep=0.75pt]  [font=\tiny] [align=left] {hijl};
\draw (207.13,153.18) node [anchor=north west][inner sep=0.75pt]  [font=\tiny,rotate=-271.28] [align=left] {ilhj};
\draw (213,121) node [anchor=north west][inner sep=0.75pt]  [font=\tiny] [align=left] {lihj};
\draw (236.57,136.56) node [anchor=north west][inner sep=0.75pt]  [font=\tiny,rotate=-90.57] [align=left] {lijh};
\draw (215,158) node [anchor=north west][inner sep=0.75pt]  [font=\tiny] [align=left] {iljh};
\draw (56,119) node [anchor=north west][inner sep=0.75pt]  [font=\scriptsize] [align=left] {[j,l]};
\draw (147,55) node [anchor=north west][inner sep=0.75pt]  [font=\scriptsize] [align=left] {[i,l]};
\draw (174.32,47.95) node [anchor=north west][inner sep=0.75pt]  [font=\scriptsize] [align=left] {[h,l]};
\draw (221,46) node [anchor=north west][inner sep=0.75pt]  [font=\scriptsize] [align=left] {[i,j]};
\draw (265.42,49.04) node [anchor=north west][inner sep=0.75pt]  [font=\scriptsize] [align=left] {[i,j]};
\draw (301,88) node [anchor=north west][inner sep=0.75pt]  [font=\scriptsize] [align=left] {[h,i]};
\end{tikzpicture}
\quad
\begin{tikzpicture}[x=0.75pt,y=0.75pt,yscale=-1,xscale=1]

\draw    (272.42,50.04) -- (181.79,241.42) ;
\draw    (282.93,238.99) -- (173.86,52.97) ;
\draw    (312.6,85.03) -- (71.67,162.09) ;
\draw    (316.34,188.93) -- (67.74,132.8) ;
\draw    (246.57,49.12) -- (243.62,241.74) ;
\draw    (204.32,50.95) -- (225.94,245.4) ;
\draw  [fill={rgb, 255:red, 0; green, 0; blue, 0 }  ,fill opacity=1 ] (247.55,105.79) .. controls (247.55,105.03) and (246.89,104.42) .. (246.08,104.42) .. controls (245.27,104.42) and (244.61,105.03) .. (244.61,105.79) .. controls (244.61,106.55) and (245.27,107.16) .. (246.08,107.16) .. controls (246.89,107.16) and (247.55,106.55) .. (247.55,105.79) -- cycle ;
\draw  [fill={rgb, 255:red, 0; green, 0; blue, 0 }  ,fill opacity=1 ] (213.16,116.78) .. controls (213.16,116.02) and (212.5,115.4) .. (211.69,115.4) .. controls (210.88,115.4) and (210.22,116.02) .. (210.22,116.78) .. controls (210.22,117.53) and (210.88,118.15) .. (211.69,118.15) .. controls (212.5,118.15) and (213.16,117.53) .. (213.16,116.78) -- cycle ;
\draw  [fill={rgb, 255:red, 0; green, 0; blue, 0 }  ,fill opacity=1 ] (245.59,172.62) .. controls (245.59,171.86) and (244.93,171.25) .. (244.12,171.25) .. controls (243.3,171.25) and (242.64,171.86) .. (242.64,172.62) .. controls (242.64,173.38) and (243.3,173.99) .. (244.12,173.99) .. controls (244.93,173.99) and (245.59,173.38) .. (245.59,172.62) -- cycle ;
\draw  [fill={rgb, 255:red, 0; green, 0; blue, 0 }  ,fill opacity=1 ] (218.08,166.21) .. controls (218.08,165.45) and (217.42,164.84) .. (216.6,164.84) .. controls (215.79,164.84) and (215.13,165.45) .. (215.13,166.21) .. controls (215.13,166.97) and (215.79,167.58) .. (216.6,167.58) .. controls (217.42,167.58) and (218.08,166.97) .. (218.08,166.21) -- cycle ;

\draw (74,141) node [anchor=north west][inner sep=0.75pt]  [font=\tiny] [align=left] {hijl};
\draw (214.13,150.18) node [anchor=north west][inner sep=0.75pt]  [font=\tiny,rotate=-270.91] [align=left] {ilhj};
\draw (220,118) node [anchor=north west][inner sep=0.75pt]  [font=\tiny] [align=left] {iljh};
\draw (222,155) node [anchor=north west][inner sep=0.75pt]  [font=\tiny] [align=left] {lihj};
\draw (242.51,134.51) node [anchor=north west][inner sep=0.75pt]  [font=\tiny,rotate=-90.11] [align=left] {lijh};
\draw (65,117) node [anchor=north west][inner sep=0.75pt]  [font=\scriptsize] [align=left] {[h,i]};
\draw (155,53) node [anchor=north west][inner sep=0.75pt]  [font=\scriptsize] [align=left] {[h,j]};
\draw (182,50) node [anchor=north west][inner sep=0.75pt]  [font=\scriptsize] [align=left] {[h,l]};
\draw (224,48) node [anchor=north west][inner sep=0.75pt]  [font=\scriptsize] [align=left] {[i,j]};
\draw (272.42,50.04) node [anchor=north west][inner sep=0.75pt]  [font=\scriptsize] [align=left] {[i,l]};
\draw (301.42,88.04) node [anchor=north west][inner sep=0.75pt]  [font=\scriptsize] [align=left] {[j,l]};

\end{tikzpicture}

     \caption{To the left (resp. right), the neighborhood of the $(L_{[h,j]}$, $L_{[i,l]})$ X-crossing  for $n \ll0$ (resp. $t \gg 0).$ }
    \label{fig:X-n<0,t>0}
\end{figure}

\end{document}